\newenvironment{proof}{\noindent\textit{Proof: }}{$\Box $}
\newcommand{\QMA}{{\sffamily QMA}}
\newcommand{\BQP}{{\sffamily BQP}}
\newcommand{\QPIP}{{\sffamily QPIP}}
\newcommand{\BPP}{{\sffamily BPP}}
\newcommand{\Th}[1]{Theorem~\ref{#1}}
\newcommand{\Le}[1]{Lemma~\ref{#1}}
\newcommand{\Cl}[1]{Claim~\ref{#1}}
\newcommand{\EqDef}{\stackrel{\mathrm{def}}{=}}
\newcommand{\bra}[1]{\left< #1\right|}
\newcommand{\ket}[1]{\left| #1\right>}
\newcommand{\braket}[2]{\left< #1|#2\right>}
\newcommand{\tr}{\mbox{Tr}}
\newcommand{\mZ}{\mathbbm{Z}}
\newcommand{\mN}{\mathbbm{N}}
\newcommand{\mR}{\mathbbm{R}}
\newcommand{\mbP}{\mathbbm{P}}
\newcommand{\mcI}{\mathcal{I}}
\newcommand{\mcL}{\mathcal{L}}
\newcommand{\mfC}{\mathfrak{C}}
\newcommand{\ignore}[1]{}
\newtheorem{thm}{Theorem}[section]
\newtheorem{deff}[thm]{Definition}
\newtheorem{protocol}[thm]{Protocol}
\newtheorem{claim}[thm]{Claim}
\newtheorem{lem}[thm]{Lemma}
\newtheorem{notation}[thm]{Notation}
\newtheorem{corol}[thm]{Corollary}
\newenvironment{proofof}[1]{\noindent\textit {Proof of} $\mathbf{#1}$:\hspace*{1em}}{$\Box $}
\newenvironment{statement}[1]{\noindent{\textbf {#1}\hspace*{0.5em}}}{$\\$}
\def\hpic #1 #2 {\mbox{$\begin{array}[c]{l}
      \epsfig{file=#1,height=#2} \end{array}$}}
\def\vpic #1 #2 {\mbox{$\begin{array}[c]{l}
      \epsfig{file=#1,width=#2} \end{array}$}}
\newcommand{\trnq}[1]{\left[ {#1} \right]_q}
\newcommand{\lwe}{\mathrm{LWE}}
\newcommand{\vc}[1]{\mathbf{{#1}}}
\newcommand{\abs}[1]{\left\vert {#1} \right\vert}
\newcommand{\sivp}{\mathrm{SIVP}}
\newcommand{\otild}{{\widetilde{O}}}
\newcommand{\T}[2]{\left\|#1 - #2\right\|_{tr}}
\def\*#1{\mathbf{#1}}
\newcommand{\GenTrap}{\textsc{GenTrap}}
\newcommand{\Invert}{\textsc{Invert}}
\newcommand{\lossy}{\textsc{lossy}}
\newcommand{\sR}{{\mathcal{R}}}
\newcommand{\sX}{\mathcal{X}}
\newcommand{\sY}{{\mathcal{Y}}}
\newcommand{\odots}{{\otimes{\ldots}\otimes}}
\newcommand{\Prov}{\mathds{P}}
\newcommand{\Ver}{\mathds{V}}
\newcommand{\dset}{G}
\newcommand{\TV}[2]{\left\|#1 - #2\right\|_{TV}}
\def\*#1{\mathbf{#1}}
\newcommand{\supp}{\textsc{Supp}}
\newcommand{\Es}[1]{\textsc{E}_{#1}}
\newcommand{\inj}{J}
\begin{document}

\author{Urmila Mahadev\footnote{California Institute of Technology. Supported by Templeton Foundation Grant 52536, ARO Grant W911NF-12-1-0541, NSF Grant CCF-1410022 and MURI Grant FA9550-18-1-0161. Email: mahadev@caltech.edu.} \footnote{A condensed version of this paper appeared in FOCS 2018; the FOCS 2018 version is essentially sections 1 and 2 of this paper.}}

\title{Classical Verification of Quantum Computations}

\maketitle

\setcounter{secnumdepth}{4}


\begin{abstract}
We present the first protocol allowing a classical computer to interactively verify the result of an efficient quantum computation. We achieve this by constructing a \textit{measurement protocol}, which enables a classical verifier to use a quantum prover as a trusted measurement device. The protocol forces the prover to behave as follows: the prover must construct an $n$ qubit state of his choice, measure each qubit in the Hadamard or standard basis as directed by the verifier, and report the measurement results to the verifier. The soundness of this protocol is enforced based on the assumption that the Learning with Errors problem is computationally intractable for efficient quantum machines.
\end{abstract}

\section{Introduction}
We propose a solution to the open question of verifying quantum computations through purely classical means. The question is as follows: is it possible for an efficient classical verifier (a \BPP\ machine) to verify the output of an efficient quantum prover (a \BQP\ machine)? This question was first raised by Daniel Gottesman in 2004 (\cite{aaronsonpost}). In the absence of any techniques for tackling this question, two weaker formulations were considered. In the first, it was shown that if the verifier had access to a small quantum computer, verification of all efficient quantum computations was possible (\cite{broadbent2008ubq}, \cite{fk2012}, \cite{abe2008}, \cite{abem}). The second formulation considered a classical polynomial time verifier interacting with two entangled, non-communicating quantum provers (rather than just one machine), and showed that in this setting it was possible to verify the result of an arbitrary quantum computation (\cite{ruv2012}). Although both lines of work initiated extensive research efforts, the question of classical verification by interaction with a single quantum computer has remained elusive.

In this paper, we show that a classical polynomial time verifier (a \BPP\ machine) can interact with an efficient quantum prover (a \BQP\ machine) in order to verify \BQP\ computations. We rely on the additional assumption that the verifier may use post-quantum cryptography that the \BQP\ prover cannot break; it remains an open question whether such a protocol is possible without assumptions. More specifically, we rely on the assumption that the Learning with Errors problem (\cite{regev2005}) cannot be solved by an efficient quantum machine. 

\subsection{Classical Commitment to Quantum States}\label{sec:intromeasprotocol}

The core of our construction is a \textit{measurement protocol}, an interactive protocol between an efficient quantum prover and a classical verifier which is used to force the prover to behave as the verifier's trusted measurement device, or, equivalently, to classically commit to a quantum state. At a very high level, the measurement protocol operates as follows: the prover will report a classical string $y$, which is meant to serve as a classical commitment to a quantum state. The verifier then selects a measurement basis (either Hadamard or standard) for each of the $n$ qubits of the quantum state. Finally, the prover is asked to report the chosen measurement results of the committed quantum state. The guarantee of the measurement protocol is that there must exist a quantum state underlying the measurement results reported by the prover. In other words, the classical string $y$ enforces non-adaptivity on the part of the prover; the prover is \textit{committed} to a single quantum state, and cannot change his mind when he reports the requested measurement results to the verifier. Therefore, the classical string $y$ serves as a commitment to a quantum state with an exponentially large description.

To formalize the idea of a measurement protocol, we now describe its completeness and soundness conditions, beginning with a small amount of necessary notation. In our measurement protocol, an honest prover constructs an $n$ qubit quantum state $\rho$ of his choice, and the verifier would like each qubit to be measured in either the standard basis or the Hadamard basis. Denote the choice of measurement basis by an $n$ bit string $h = (h_1,\ldots, h_n)$. For a prover $\Prov$ and a measurement basis choice $h = (h_1,\ldots, h_n)$, we define $D_{\Prov,h}$ to be the resulting distribution over the measurement result $m\in\{0,1\}^n$ obtained by the verifier. For an $n$ qubit state $\rho$, we define $D_{\rho,h}$ to be the distribution obtained by measuring $\rho$ in the basis corresponding to $h$. 

Our measurement protocol is complete, in the following sense: for all efficiently computable $n$ qubit states $\rho$, there exists a prover $\Prov$ such that $D_{\Prov,h}$ is approximately equal to $D_{\rho,h}$ for all $h$. Moreover, $\Prov$ is accepted by the verifier with all but negligible probability. Our soundness notion for the measurement protocol is slightly more complex, but a simplified form is as follows: if the prover $\Prov$ is accepted by the verifier with perfect probability, there exists an efficiently computable $n$ qubit quantum state $\rho$ underlying the distribution over $m$. More precisely, for all $h\in \{0,1\}^n$, $D_{\rho,h}$ is computationally indistinguishable from $D_{\Prov,h}$. The full soundness guarantee is achieved by making this statement robust: we will show that if a prover $\Prov$ is accepted by the verifier on basis choice $h$ with probability $1 - p_h$, there exists a prover $\Prov'$ who is always accepted by the verifier and the statistical distance between $D_{\Prov,h}$ and $D_{\Prov',h}$ is approximately $\sqrt{p_h}$ for all $h$.

\subsection{Linking Commitment to Verification}\label{sec:introlink}
So far, we have described the goal of our measurement protocol, which is to force the prover to behave as the verifier's trusted standard/Hadamard basis measurement device. To link our measurement protocol to verification, we simply need to show that a classical verifier who also has access to such a trusted measurement device can verify the result of \BQP\ computations. 

To describe how such verification can be done, we briefly recall a method of verifying classical computations. Assume that, for a language $L\in \textsf{P}$ and an instance $x$, the verifier wishes to check whether $x\in L$. To do so, the verifier can reduce $x$ to a 3-SAT instance, ask the prover for a satisfying variable assignment, and verify that the assignment satisfies the instance. There is an analogous setting for quantum computations, in which, for the language $L\in $ \BQP, the instance $x$ can be reduced to a local Hamiltonian instance $H_x$, an $n$ bit variable assignment corresponds to an $n$ qubit quantum state $\ket{\psi}$, and the fraction of unsatisfied clauses corresponds to the energy of the Hamiltonian $H_x$ with respect to the state $\ket{\psi}$ (\cite{kitaev2002caq}). If $x\in L$, there exists a state with low energy with respect to $H_x$; if not, all states will have sufficiently high energy. We will rely on the fact that the energy of $\ket{\psi}$ with respect to $H_x$ can be estimated by performing only standard/Hadamard basis measurements (\cite{xzhamiltonian}). 

With this analogy, verification of a quantum computation can be performed by a classical verifier with access to a trusted standard/Hadamard basis measurement device as follows (\cite{mf2016}): the verifier first reduces the instance $x$ to be verified to a local Hamiltonian instance $H_x$, then requests an $n$ qubit state from the prover, and finally checks (via standard/Hadamard basis measurements) if the received state has low energy with respect to $H_x$. If so, the verifier is assured that $x\in L$. Note that the verifier's only quantum capability is the ability to perform standard/Hadamard basis measurements on a fixed quantum state. Therefore, in order to transition to a classical verifier, we only need to enable the verifier to use the quantum prover as a trusted measurement device; the verifier needs to be able to guarantee that there exists a quantum state underlying the measurement results reported by the prover. This is exactly the guarantee of the measurement protocol.

\subsection{Measurement Protocol Construction}

Now that we have seen how to link our measurement protocol to verification, we proceed to describing how our measurement protocol works; as stated earlier, it is essentially a protocol which allows a classical string to serve as a commitment to a quantum state. Ideally, this commitment would operate as follows: at the start of the protocol, the verifier asks the prover for a classical commitment to a quantum state. This commitment should guarantee that the prover is forced to perform either a Hadamard or standard basis measurement as directed by the verifier on the committed state. Such a commitment scheme is enough to satisfy the soundness condition of the measurement protocol described above. Note the difference between this notion of commitment and the standard cryptographic notion: our commitment differs in that it does not need to hide the quantum state, but is similar since it is binding the prover to measuring the state he has committed to. 

The actual commitment performed in our measurement protocol has a weaker guarantee than the ideal scheme described above: the prover is asked to commit to a state $\rho$, but his hands are not entirely tied when it comes to the measurement of the state. The prover must perform the standard basis measurement on the committed state $\rho$, but he can perform a specific type of deviation operator prior to Hadamard measurement: this operator must commute with standard basis measurement (for example, it could be a Pauli $Z$ operator). The key point here is that this weaker commitment protocol is still strong enough to guarantee the soundness of the measurement protocol. To see why, observe that the deviation operator could have been applied prior to the commitment, creating a different committed state $\rho'$. Due to the fact that the deviation operator commutes with standard basis measurement, the measurement distribution obtained by the verifier (for both the Hadamard and standard bases) would have been equivalent had the prover instead committed to the state $\rho'$, and honestly performed the measurement requested by the verifier on $\rho'$.

The commitment structure described above is obtained from a classical cryptographic primitive called a trapdoor claw-free function, a function $f$ which is two to one, easy to invert with access to a trapdoor, and for which it is computationally difficult to find any pair of preimages with the same image. Such a pair of preimages is called a claw, hence the name claw-free. These functions are particularly useful in the quantum setting, due to the fact that a quantum machine can create a uniform superposition over a random claw $(x_0,x_1)$: $\frac{1}{\sqrt{2}}(\ket{x_0} + \ket{x_1})$. Using this state, a quantum machine can obtain \textit{either} a string $d\neq 0$ such that $d\cdot(x_0\oplus x_1) = 0$ or one of the two preimages $x_0,x_1$. It seems difficult to obtain this information classically; if it were possible, due to rewinding, it would also be possible to hold both pieces of information at once, thereby implying that the classical machine could produce one preimage and a single bit of the other. In \cite{oneproverrandomness}, this advantage was introduced, proven to be hard for a classical machine (providing a classical test of quantum supremacy), and was also used to generate information theoretic randomness from a single quantum device.

Trapdoor claw-free functions are used in our measurement protocol as follows. The prover is first asked to commit to a state of his choice (assume the state is $\alpha_0\ket{0} + \alpha_1\ket{1}$) by entangling it with a random claw $(x_0,x_1)$, creating the state:
\begin{equation}\label{eq:introcommittedstate}
    \alpha_0\ket{0}\ket{x_0} + \alpha_1\ket{1}\ket{x_1}\nonumber
\end{equation}
The corresponding classical commitment, which is sent to the verifier, is the image $y = f(x_0) = f(x_1)$. The goal is to show that, when asked for measurement results, the prover actually measures the committed state in the basis requested by the verifier; the prover should not be able to adaptively change his state based on the verifier's measurement choice. If we imagine for a second that the prover did not have access to the second register of the state in \eqref{eq:introcommittedstate} (containing $x_0$ or $x_1$), we can see why this would be true: the prover's state is entangled with a register which he has no control over, thereby preventing him from making changes to the state. This intuition is actually what underlies the soundness proof: the committed state is entangled with a claw, and the prover knows at most one member of the claw. This greatly limits the prover's ability to modify the state after commitment. In order to force the prover to perform the desired measurement on the committed state, the claw-free property is strengthened in two different ways and is then used to randomize any operator applied by the prover which is a deviation from the requested measurement, rendering the deviation of the prover essentially useless. We provide the definition and construction of an \textit{extended trapdoor claw-free family} which satisfies the strengthened claw-free properties. This family is an extension of the family given in \cite{oneproverrandomness}. Like the construction in \cite{oneproverrandomness}, our construction relies on the computational assumption that a \BQP\ machine cannot solve the Learning with Errors problem with superpolynomial noise ratio (\cite{regev2005}). 

The main result of our paper is stated informally as follows (stated formally as Theorem \ref{thm:lwetoqpip}):
\begin{thm}\label{thm:oneprover}{\textbf{(Informal)}}
Assuming the existence of an extended trapdoor claw-free family, all decision problems which can be efficiently computed in quantum polynomial time (the class \BQP) can be verified by an efficient classical machine through interaction (formally, \BQP\ = \QPIP$_0$).
\end{thm}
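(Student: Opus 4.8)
\subsection*{Proof proposal}

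The plan is to combine two components: the measurement protocol of this paper, used as a black box with the completeness and robust soundness stated above, and the local Hamiltonian characterization of \BQP. The inclusion $\QPIP_0 \subseteq \BQP$ is the easy direction: given any \QPIP$_0$ protocol for a language $L$, a \BQP\ machine simulates internally both the honest (efficient, quantum) prover and the classical polynomial-time verifier, runs them against each other, and outputs the verdict; completeness and soundness then imply this decides $L$ with bounded error. All the work is in $\BQP \subseteq \QPIP_0$.

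For that direction, fix $L\in\BQP$ and an instance $x$. First amplify the underlying \BQP\ circuit so it errs with probability at most $2^{-n}$, then reduce $x$ to a local Hamiltonian $H_x=\sum_\ell H_\ell$ of the $XZ$ form (\cite{kitaev2002caq,xzhamiltonian}): each $H_\ell$ is a weighted tensor product of $X$'s, $Z$'s and identities; if $x\in L$ there is an efficiently preparable history state $\rho$ with $\tr(H_x\rho)\le\alpha$, where $\alpha$ is exponentially small thanks to the circuit amplification; and if $x\notin L$ then $\tr(H_x\sigma)\ge\beta$ for \emph{every} $n$-qubit state $\sigma$, with $\beta$ inverse-polynomial. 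The verification protocol: the verifier samples a term $H_\ell$ with the \cite{mf2016} weights, lets $h\in\{0,1\}^n$ encode its $X$/$Z$ pattern (arbitrary on identity positions), runs the measurement protocol on basis choice $h$, rejects if the measurement protocol rejects, and otherwise feeds the returned outcome string into the energy-estimation test of \cite{mf2016,xzhamiltonian}, accepting with the probability that makes the overall acceptance equal, up to negligible error, to $1-c\cdot\tr(H_x\tau)$ when the outcomes are drawn from $D_{\tau,h}$ for a state $\tau$ measured honestly; here $c>0$ is the inverse-polynomial normalization of that test.

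Completeness is immediate: the honest prover prepares $\rho$ and runs the measurement protocol honestly, so by completeness of the measurement protocol it is accepted with all but negligible probability and $D_{\Prov,h}$ is approximately $D_{\rho,h}$ for every $h$; hence the verifier accepts with probability at least $1-c\alpha-\mathrm{negl}$. For soundness, suppose $x\notin L$ and fix an arbitrary prover $\Prov$; write its acceptance probability in the measurement protocol on basis $h$ as $1-p_h$ and put $\epsilon=\mathbb{E}_h[p_h]$. By robust soundness there is a single prover $\Prov'$, accepted with certainty, with $\TV{D_{\Prov,h}}{D_{\Prov',h}}$ approximately $\sqrt{p_h}$ for all $h$; and since $\Prov'$ is always accepted, there is a fixed efficiently preparable state $\rho'$ with $D_{\Prov',h}$ computationally indistinguishable from $D_{\rho',h}$ for all $h$. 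Since sampling $h$ and running the energy test is an efficient classical procedure, it cannot distinguish $D_{\Prov',h}$ from $D_{\rho',h}$, so $\Prov'$ is accepted with probability within a negligible amount of $1-c\cdot\tr(H_x\rho')\le 1-c\beta$. By the triangle inequality and Cauchy--Schwarz ($\mathbb{E}_h[\sqrt{p_h}]\le\sqrt{\mathbb{E}_h[p_h]}$), $\Prov$ is accepted with probability at most $1-c\beta+O(\sqrt{\epsilon})+\mathrm{negl}$; it is also at most $1-\epsilon$, since overall acceptance requires the measurement protocol to accept. The minimum of these two bounds, maximized over $\epsilon$, is at most $1-\Omega((c\beta)^2)+\mathrm{negl}$, which — $c\beta$ being inverse-polynomial while $c\alpha$ is exponentially small — lies an inverse-polynomial amount below the completeness bound $1-c\alpha-\mathrm{negl}$. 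Sequential repetition of the whole protocol polynomially many times amplifies this gap to the usual $\tfrac13$ (or $2^{-\poly}$) soundness error, giving $L\in\QPIP_0$.

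I expect the soundness analysis to be the main obstacle, with two points deserving care. First, the robustness guarantee loses a square root ($\sqrt{p_h}$ rather than $p_h$), so one must check that the Hamiltonian parameters still leave $1-\Omega((c\beta)^2)$ strictly (inverse-polynomially) below $1-c\alpha$ — this is precisely why the error reduction on the \BQP\ circuit is performed before the reduction to $H_x$, making $\alpha$ negligible against $\beta^2$. Second, one must justify that \emph{computational} indistinguishability of $D_{\Prov',h}$ from $D_{\rho',h}$ suffices: this works because the only further use the verifier makes of the outcome string is the efficient classical energy test, so any non-negligible change in acceptance probability would yield an efficient distinguisher, contradicting the hardness assumption underlying the extended trapdoor claw-free family. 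One should also verify the bookkeeping that $\rho'$, although it depends on $\Prov$, is still a genuine $n$-qubit state, so that the universal No-case bound $\tr(H_x\rho')\ge\beta$ applies to it.
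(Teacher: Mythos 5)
Your proposal is correct and follows the same overall route as the paper: reduce the instance to an $XZ$ local Hamiltonian, use the Morimae--Fitzsimons energy test, and replace the qubit-transmission step with the measurement protocol, leaning on its completeness and robust soundness (Claims \ref{cl:measprotocolcorrectness} and \ref{cl:measprotocolsoundness}) and on the fact that the energy test is an efficient classical post-processing step so computational indistinguishability suffices. The one genuinely different choice you make is where you put the amplification. The paper samples polynomially many Hamiltonian terms \emph{inside a single run} of the measurement protocol (Protocol \ref{prot:mfamp}), so that the MF test already has negligible completeness and soundness error; the soundness calculation then directly yields a constant gap (completeness negligibly close to $1$, soundness negligibly close to $3/4$), and no further repetition is needed. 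You instead amplify the underlying \BQP\ circuit (making $\alpha$ exponentially small), run a single-term energy test to obtain an inverse-polynomial gap $1 - \Omega((c\beta)^2)$ versus $1 - c\alpha - \mathrm{negl}$, and then sequentially repeat the whole protocol to lift this to a constant gap. Both strategies are sound; the paper's avoids the extra repetition layer and the accompanying bookkeeping about sequential soundness amplification, while yours is a slightly more self-contained argument that doesn't require the amplified form of the MF protocol.

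Two small accounting points worth flagging, neither of which breaks the argument. First, the robust soundness of the measurement protocol is stated in terms of $D^C_{\Prov,h}$ (the Hadamard-round distribution conditioned on acceptance), with a bound $p_{h,H}+\sqrt{p_{h,T}}+\mu$ that treats the test-round and Hadamard-round rejection probabilities separately; your single $p_h$ absorbs both, which is fine for the order of magnitude but should be stated precisely (e.g. $p_h = \tfrac12(p_{h,T}+p_{h,H})$ and then $\TV{D_{\Prov,h}}{D_{\Prov',h}} = O(\sqrt{p_h}) + \mathrm{negl}$ by the triangle inequality and $\TV{D_{\Prov,h}}{D^C_{\Prov,h}}=p_{h,H}$). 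Second, you correctly identify and answer the delicate point about $\rho'$ being a bona fide $n$-qubit state — Claim \ref{cl:trivialprover} constructs it explicitly as a density matrix — and the point that the computational indistinguishability remains meaningful for a verifier who knows $h$, which the paper also notes in the soundness proof of Theorem \ref{thm:lwetoqpip}.
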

In Section \ref{sec:lweextended}, we provide a Learning with Errors based construction of an extended trapdoor family, providing a proof of the following theorem (stated formally as Theorem \ref{thm:lwetcfextended} in Section \ref{sec:lweextended}):
\begin{thm}\label{thm:lwetcfextendedinformal}{\textbf{(Informal)}}
Under the assumption that the Learning with Errors problem with superpolynomial noise ratio is computationally intractable for an efficient quantum machine, there exists an extended trapdoor claw-free family. 
\end{thm}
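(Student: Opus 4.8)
The plan is to construct the extended trapdoor claw-free family directly on top of the LWE-based noisy trapdoor claw-free (NTCF) family of \cite{oneproverrandomness}, adjoining an ``injective mode'' and then verifying the strengthened axioms one by one.

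I would first recall the NTCF construction and pin down parameters. Using the lattice trapdoor generator \GenTrap, sample a matrix $\mx{A}\in\Z_q^{m\times n}$ together with a trapdoor, and publish the key $(\mx{A},\,\mx{A}\vc{s}+\vc{e})$ where $\vc{s}$ is a random secret and $\vc{e}$ a short discrete Gaussian error. The associated function sends $(c,\vc{x})$, with $\vc{x}$ short, to a distribution centered at $\mx{A}\vc{x}+c\cdot(\mx{A}\vc{s})$; the pair $(0,\vc{x})$ and $(1,\vc{x}-\vc{s})$ forms a claw because the two output distributions coincide up to the shift by $\vc{e}$. The trapdoor for $\mx{A}$ gives the inversion procedure \Invert, which recovers both branches of a claw, and claw-freeness, the adaptive hardcore bit property, and the efficient range superposition all hold under $\lwe$ exactly as in \cite{oneproverrandomness}.

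Next I would define the injective branch of the family by replacing the LWE sample $\mx{A}\vc{s}+\vc{e}$ with a uniformly random vector $\vc{u}\in\Z_q^m$, leaving the rest of the key unchanged. Three points must be checked: (i) \emph{injectivity} --- with overwhelming probability over $\vc{u}$ no two short preimages $(c,\vc{x})$, $(c',\vc{x}')$ are mapped to overlapping output distributions, which follows from a union bound in which the number of short preimages times the volume of a noise ball is a negligible fraction of $q^m$; this ratio being negligible is exactly what a superpolynomial modulus-to-noise ratio buys; (ii) \emph{trapdoor invertibility} --- the same \GenTrap\ trapdoor for $\mx{A}$ recovers the (now unique) preimage; (iii) \emph{injective invariance} --- distinguishing an injective key from a claw-free key is precisely distinguishing $(\mx{A},\,\mx{A}\vc{s}+\vc{e})$ from $(\mx{A},\,\vc{u})$, hence reduces to decisional $\lwe$ \cite{regev2005}. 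The remaining ``well-behavedness'' conditions of the extended family --- preparing a quantum superposition over the range, and statistical closeness of the discretized Gaussian output distributions under small shifts of the center --- transfer unchanged from the Gaussian-sampling machinery already used in \cite{oneproverrandomness}.

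Finally I would assemble the pieces and verify that a single consistent regime $(n,m,q,\sigma)$ satisfies every axiom at once: $q$ superpolynomial in $n$, the ratio $q/\sigma$ superpolynomial so that $\lwe$ is hard and the adaptive hardcore bit survives, $m=\Omega(n\log q)$ for \GenTrap, and the leftover slack tuned so that injectivity fails only with negligible probability. I expect the main obstacle to be exactly this parameter balancing combined with the injectivity argument: one must show that a uniformly random shift vector makes the function genuinely one-to-one --- no spurious extra collisions --- which forces the effective image of the short inputs to occupy a negligible fraction of $\Z_q^m$, and threading that requirement through the discrete Gaussian error analysis while keeping $\lwe$ hard and \Invert\ functional is the delicate step. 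By contrast, the adaptive hardcore bit and trapdoor properties should carry over from \cite{oneproverrandomness} with only cosmetic modifications.
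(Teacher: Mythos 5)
Your plan correctly identifies the injective mode (replace $\*A\*s+\*e$ with a fresh uniform $\*u$) and that distinguishing the two key types reduces to an LWE-style indistinguishability, and your union-bound argument for disjointness of the two branches is a workable (if slightly different) alternative to the paper's device of simply testing $\*u$ with the trapdoor and resampling when it happens to be a valid LWE sample. However, there is a genuine gap: an extended trapdoor claw-free family (Definition~\ref{def:extendedtrapdoorclawfree}) has \emph{two} conditions beyond the base NTCF, not one. Besides injective invariance, it requires a new hardcore-bit property: there must exist a fixed string $d\in\{0,1\}^w$ such that $d\cdot(\inj(x_0)\oplus\inj(x_1))$ takes a single well-defined value $c_k$ over \emph{all} claws $(x_0,x_1)\in\mathcal{R}_k$, and no quantum polynomial-time adversary given only $k$ can output $c_k$ with non-negligible advantage. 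This is distinct from (and not implied by) the adaptive hardcore bit of the underlying NTCF, which quantifies over adversarially chosen $d$; here the quantifier is flipped, and crucially the set $H'_{k,d}$ must be a singleton so the bit is even well defined. Your proposal does not address this condition at all, and it is the one that actually does work in the soundness proof of the measurement protocol (it is what forbids the prover from phase-flipping undetected on the cross terms).

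Establishing this second condition is not cosmetic. The paper proves it (Lemmas~\ref{lem:lweadaptiveleakageG} and~\ref{lem:lweadaptivehardcoreG}) by exploiting the fact that the LWE secret $\*s$ is \emph{binary}: Claim~\ref{cl:dependenceonsecretG} shows that for any $\hat{d}\in\{0,1\}^n$, if one sets $d=\inj(\hat{\*d})$, then $d\cdot(\inj(x)\oplus\inj(x-(-1)^b\*s))=\hat{d}\cdot s\bmod 2$, independent of $x$ and $b$ --- this is what makes $H'_{k,d}$ a singleton --- and then shows that $\hat{d}\cdot s\bmod 2$ is a hardcore bit of $(\*A,\*A\*s+\*e)$ via the lossy mode of LWE. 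Relatedly, your injective-invariance step says the reduction is ``precisely decisional LWE,'' but because $\*s$ must be binary for the preceding argument to go through, what is actually needed is the hardness of LWE with binary secret (Lemma~\ref{lem:lwebinaryindist}); this is still true, but it is a separate statement with its own proof (via lossy mode or \cite{BLPRS13}), not a direct citation of plain decisional LWE. Finally, a small local slip: you refer to ``short preimages $(c,\vc{x})$,'' but in this construction the domain is all of $\Z_q^n$; the shortness lives in the output noise, not in $\vc{x}$, and your union bound should be run over the $q^n$ possible values of $\*A(x-x')$ rather than over short inputs.
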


\subsection{Related Work}
The basic idea of using trapdoor claw-free functions to constrain the power of the prover in the context of randomness generation was introduced in \cite{oneproverrandomness}. This idea was further used in a computational setting to hide information from the prover (but not constrain the prover) in \cite{mahadev2017}. What is new to our paper is that it constrains the prover in the context of carrying out a particular computation. This requires the verifier to exert a much greater degree of control over the prover than in randomness generation. To accomplish this goal, we develop a new protocol and proof techniques for strongly characterizing the state of an untrusted prover.


\section{Overview}\label{sec:overview}

We now present an overview of this paper, proceeding as follows. Our measurement protocol relies on two cryptographic primitives which give a \BPP\ verifier some leverage over a \BQP\ prover. We begin by describing these primitives in Section \ref{sec:overviewprimitives}. Using these primitives, we can describe our measurement protocol in Section \ref{sec:overviewmeas}. In Sections \ref{sec:overviewexistence} - \ref{sec:overviewreductiontotrivial}, we show how the two cryptographic primitives can be used to guarantee soundness of the measurement protocol, in the sense that all provers must essentially be creating a state and measuring it in the basis chosen by the verifier. In Section \ref{sec:overviewmeastoqpip}, we show how to extend our measurement protocol to a verification protocol for all of \BQP, as described in Section \ref{sec:introlink}. 

There is an important caveat for the overview: in the overview, we build the measurement protocol by relying on idealized cryptographic primitives. Since we do not know how to construct these idealized primitives, the rest of the paper instead relies on approximate versions. These approximations come with added technical details, and we use idealized primitives in the overview in order to cover all of the ideas of the paper while avoiding these technical details; in other words, the measurement protocol and analysis presented in the overview are the same as those presented in the main body of the paper, up to the technical details which come with the approximate primitives. 

\subsection{Cryptographic Primitives}\label{sec:overviewprimitives}

 \subsubsection{Trapdoor Claw-Free Families}\label{sec:overviewtrapdoorclawfree}
The first cryptographic primitive we will use is a function family $\mathcal{F} = \{f_{k,b}: \sX\rightarrow \sY\}$ (for $b\in\{0,1\}$) called a \textit{trapdoor claw-free} function family. For convenience, we will assume in this overview that $\sX = \{0,1\}^w$. A trapdoor claw-free function family is a family of functions which are two to one (both $f_{k,0}(\cdot)$ and $f_{k,1}(\cdot)$ are injective and their images are equal) and for which it is computationally difficult to find a \textit{claw}, i.e. a pair of points $x_0$ and $x_1$ which have the same image ($f_{k,0}(x_0) = f_{k,1}(x_1)$). Given $y$ in the image of $f_{k,0}$ or $f_{k,1}$, the trapdoor $t_k$ of the functions $f_{k,0},f_{k,1}$ allows the recovery of both preimages of $y$. The trapdoor claw-free family also satisfies two hardcore bit properties, which are stronger versions of the claw-free property. At a high level, the first states that it is computationally difficult to hold both a preimage and a single bit of the other preimage, or, equivalently, it is difficult to simultaneously hold a preimage $x_b$, a non-zero string $d$ and the bit $d\cdot (x_0\oplus x_1)$, where $(x_0,x_1)$ form a claw. The second states that there exists strings $d$ for which finding just the bit $d\cdot (x_0\oplus x_1)$ is computationally difficult. These two properties are specified as needed (in Definitions \ref{def:informaladaptivehardcore} and \ref{def:informaladaptivehardcorecross}).

In this overview, we will assume the existence of the function family described above for simplicity. However, since we do not know how to construct such a family, in the rest of the paper we will instead rely on an approximate version of this family. We provide the definition of the function family we will use, which we call an extended trapdoor claw-free family, in Section \ref{sec:functiondefinitions}, Definition \ref{def:extendedtrapdoorclawfree}. The construction of this family (from Learning with Errors) is given in Section \ref{sec:lweextended}. Both the definition and construction are extensions of those given in \cite{oneproverrandomness}.

We now describe a \BQP\ process we call \textit{state commitment}, which requires a function key $k$ corresponding to functions $f_{k,0},f_{k,1}\in\mathcal{F}$ (we assume that computing the functions $f_{k,0},f_{k,1}$ only requires access to the function key $k$). The state commitment process is performed with respect to an arbitrary single qubit state $\ket{\psi}$:  
\begin{equation}\label{eq:proverstatepsi}
\ket{\psi} = \sum_{b\in\{0,1\}}\alpha_b\ket{b}
\end{equation}
The commitment process consists of two steps. First, the functions $f_{k,0}, f_{k,1}$ are applied in superposition, using $\ket{\psi}$ to determine whether to apply $f_{k,0}$ or $f_{k,1}$ and a uniform superposition over $x\in \sX$ as the input to $f_{k,0}$ or $f_{k,1}$:
\begin{equation}\label{eq:proverstateafterfunctionapp}
\frac{1}{\sqrt{|\sX|}}\sum_{b\in\{0,1\}}\sum_{x\in\sX}\alpha_b\ket{b}\ket{x}\ket{f_{k,b}(x)}
\end{equation}
Second, the final register of the resulting state is measured, obtaining $y\in \sY$. At this point, the state is:
\begin{equation}\label{eq:committedstateoverview}
\sum_{b\in\{0,1\}} \alpha_b\ket{b}\ket{x_{b,y}}
\end{equation}
where $x_{0,y}$ and $x_{1,y}$ are the two preimages of $y$. We will call the qubit containing $b$ the \textit{committed qubit}, the register containing $x_{b,y}$ the \textit{preimage register} and the string $y$ the \textit{commitment string}. The crucial point here is that, due to the claw-free nature of the functions $f_{k,0},f_{k,1}$, it is computationally difficult for a \BQP\ machine to compute both inverses $x_{0,y}$ and $x_{1,y}$ given only $y$. However, with access to the trapdoor $t_k$, both inverses can be computed from $y$. If we think of the state commitment process in an interactive setting, in which the verifier selects the function key and the trapdoor and the \BQP\ prover performs the commitment process (sending the commitment $y$ to the verifier), the \BQP\ prover cannot compute both inverses, but the verifier can. This gives the verifier some leverage over the prover's state.

A key property of the committed state in \eqref{eq:committedstateoverview} is that it allows a logical Hadamard measurement up to an $X$ Pauli operator, which is performed as follows. First, a Hadamard transform is applied to both the committed qubit and preimage register of the state in \eqref{eq:committedstateoverview}:
\begin{equation}\label{eq:afterhadamardoverview}
\frac{1}{\sqrt{|\sX|}}\sum_{d\in\sX} X^{d\cdot (x_{0,y}\oplus x_{1,y})}H\ket{\psi}\otimes Z^{x_{0,y}}\ket{d}
\end{equation}
The next step in applying the logical Hadamard measurement is to measure the second (preimage) register, obtaining $d\in\sX$. The state at this point is:
\begin{equation}\label{eq:hadamardbeforemeasurementoverview}
 X^{d\cdot (x_{0,y}\oplus x_{1,y})}H\ket{\psi}
\end{equation}
To obtain the Hadamard measurement of $\ket{\psi}$, the operator $X^{d\cdot (x_{0,y}\oplus x_{1,y})}$ (which we call the decoding operator and requires the trapdoor) is first applied, followed by a standard basis measurement of $H\ket{\psi}$. Note that these two operations commute: it is equivalent to first perform a standard basis measurement of the state in \eqref{eq:hadamardbeforemeasurementoverview} followed by applying the $X$ decoding operator. The $X$ decoding operator applied after measurement is simply the classical XOR operation.

We can again think of this logical Hadamard transform in the interactive setting, in which the \BQP\ prover applies the Hadamard transform to obtain the state in \eqref{eq:afterhadamardoverview} and then measures the committed qubit and preimage register, sending the measurement results $b'\in\{0,1\}$ and $d\in\{0,1\}^w$ to the verifier. The verifier decodes the measurement $b'$ by XORing it with $d\cdot (x_{0,y}\oplus x_{1,y})$ (which can be computed using the trapdoor) to obtain the bit $m$, which the verifier stores as the result of the Hadamard basis measurement. 

\subsubsection{Trapdoor Injective Function Families}\label{sec:overviewtrapdoorinjective}
The second primitive is a function family $\mathcal{G} = \{g_{k,b}: \sX\rightarrow \sY\}$ (for $b\in\{0,1\}$) called a \textit{trapdoor injective} function family. A trapdoor injective function family is a family of injective functions such that the images of $g_{k,0}(\cdot)$ and $g_{k,1}(\cdot)$ are disjoint. Given $y = g_{k,b}(x_{b,y})$, the trapdoor $t_k$ of the functions $g_{k,0},g_{k,1}$ allows recovery of $b,x_{b,y}$. We will also require that the trapdoor injective family is computationally indistinguishable from the trapdoor claw-free family: given a function key $k$, it must be computationally difficult to determine whether $k$ belongs to an injective or claw-free family. As in the case of trapdoor claw-free families, we will assume the existence of the function family described above for the purpose of this overview, but in the rest of the paper we will rely on an approximate version of this function family. We define the approximate version in Section \ref{sec:functiondefinitions} (Definition \ref{def:trapdoorinjective}) and construct it from Learning with Errors in Section \ref{sec:lweinjective}.  

The state commitment process described in Section \ref{sec:overviewtrapdoorclawfree} can also be performed with a function key $k$ corresponding to functions $g_{k,0},g_{k,1}\in\mathcal{G}$. At the stage of \eqref{eq:proverstateafterfunctionapp}, the following state has been created: 
\begin{equation}
\frac{1}{\sqrt{|\sX|}}\sum_{b\in\{0,1\}}\sum_{x\in\sX}\alpha_b\ket{b}\ket{x}\ket{g_{k,b}(x)}
\end{equation}
However, in this case when the last register is measured to obtain $y\in \sY$, the superposition over $b$ collapses. This is because the images of $g_{k,0}(\cdot)$ and $g_{k,1}(\cdot)$ are disjoint. It follows that with probability $|\alpha_b|^2$, $y\in g_{k,b}(\cdot)$. In this case, the state after measurement of $y$ is: 
\begin{equation}
\ket{b}\ket{x_{b,y}}
\end{equation}
where $x_{b,y}$ is the unique preimage of $y$. 

If we think of this process in the interactive setting (in which the \BQP\ prover performs the commitment and sends the verifier $y$), the verifier can use the trapdoor $t_k$ to extract $(b,x_{b,y})$ from $y$. Therefore, the verifier can obtain $b$, which is the result of standard basis measurement of the prover's state, simply by asking for the commitment $y$.

\subsection{Measurement Protocol}\label{sec:overviewmeas}
Given the two primitives described in Section \ref{sec:overviewprimitives}, we can now describe our measurement protocol (formally given in Protocol \ref{prot:measprotocol}). Before the protocol begins, the verifier selects the basis for which he would like a measurement result (either Hadamard or standard) for each of $n$ qubits. We will represent this basis choice by a string $h\in \{0,1\}^n$. If $h_i = 0$, it indicates that the standard basis was chosen for qubit $i$. 

The protocol begins with the commitment round. For all $i$ for which $h_i = 1$, the verifier samples a key $k_i$ and a trapdoor $t_{k_i}$ corresponding to a pair of trapdoor claw-free functions $f_{k_i,0}, f_{k_i,1}\in \mathcal{F}$. For all $i$ for which $h_i = 0$, the verifier samples a key $k_i$ and a trapdoor $t_{k_i}$ corresponding to a pair of trapdoor injective functions $g_{k_i,0},g_{k_i,1}\in\mathcal{G}$. The verifier sends the prover the function keys $k_1,\ldots,k_n$. The honest prover first constructs an $n$ qubit state of his choice and then commits to each qubit of the state using the corresponding function key, as described in Section \ref{sec:overviewprimitives}. The verifier receives the measurement results $y_1,\ldots,y_n\in \sY$ from the prover. 

Once the verifier receives $y_1,\ldots,y_n$, he chooses at random either to run a test round or a Hadamard round. In the case of a test round, the verifier requests standard basis measurements of the committed qubit and preimage register for all $n$ qubits. He receives $b_i'\in \{0,1\}$ and $x_i'\in \sX$ from the prover and checks if the measurement is a preimage of $y_i$. 

In the case of a Hadamard round, the verifier requests a Hadamard basis measurement of the committed qubit and preimage register for all $i$. The verifier receives $b'_i\in \{0,1\}$ and non-zero $d_i\in \sX$ for all $i$ (the verifier rejects otherwise). For all $i$ for which $h_i = 0$, the verifier ignores the measurement results $b'_i$ and $d_i$ and uses the trapdoor of $g_{k_i}$ to invert $y_i$. The verifier stores the first bit of the inverted value as the standard basis measurement result $m_i$. For all $i$ for which $h_i = 1$, the verifier first decodes $b'_i$ by XORing it with $d_i\cdot (x_{0,y_i}\oplus x_{1,y_i})$ (this can equivalently be thought of as applying the decoding operator $X^{d\cdot (x_{0,y_i}\oplus x_{1,y_i})}$- see the end of Section \ref{sec:overviewtrapdoorclawfree}). The verifier stores the result $m_i = b'_i\oplus d_i\cdot (x_{0,y_i}\oplus x_{1,y_i})$ as the Hadamard basis measurement result. 

Completeness (as defined in the introduction) of our measurement protocol follows immediately from the description of the state commitment process given in Sections \ref{sec:overviewtrapdoorclawfree} and \ref{sec:overviewtrapdoorinjective}.

\subsection{Measurement Protocol Soundness}\label{sec:overviewexistence}
We now give an overview of our soundness proof: we describe how to show that for $n = 1$ and a prover $\Prov$ who passes the test round perfectly, there exists a quantum state underlying the distribution over measurement results obtained by the verifier. The generalization to arbitrary $n$ (given in Section \ref{sec:measprotocolsoundness}) follows easily due to the fact that all $n$ function keys are drawn independently. The generalization to provers $\Prov$ who do not pass perfectly (also given in Section \ref{sec:measprotocolsoundness}) is straightforward as well: it is done by conditioning $\Prov$ on acceptance in a test round, thereby creating an efficient prover who passes the test round perfectly as long as $\Prov$ is accepted with non negligible probability. 

As the soundness proof is quite involved, we will begin by providing a brief sketch of the proof; the remainder of this section and the next (Section \ref{sec:overviewreductiontotrivial}) will fill in the details. To start, we will state the soundness guarantee we wish to prove a bit more precisely. Recall the description in Section \ref{sec:intromeasprotocol}. There, we defined $D_{\Prov,h}$ to be the distribution over the measurement result $m\in\{0,1\}^n$ obtained by the verifier when interacting with the prover $\Prov$ with measurement basis choice $h = (h_1,\ldots, h_n)$, and $D_{\rho,h}$ to be the distribution obtained by measuring an $n$ qubit state $\rho$ in the basis corresponding to $h$. With this notation, arguing that there exists a quantum state underlying the distribution over measurement results obtained by the verifier is equivalent to showing that there exists a quantum state $\rho$ such that for all $h\in\{0,1\}^n$, $D_{\Prov,h}$ and $D_{\rho,h}$ are computationally indistinguishable. As stated in the previous paragraph, we will focus on the case of $n = 1$.

The most intuitive path forward would be to use the actions of $\Prov$ to define a state $\rho$, and then argue that $D_{\Prov,h}$ and $D_{\rho,h}$ are computationally indistinguishable, by contradiction: the goal would be to show that if they are not, there exists an efficient attacker who can break one of the two hardness guarantees underlying our function families (from Section \ref{sec:overviewprimitives}). These two hardness guarantees are the computational indistinguishability of the injective and claw-free families, and the claw-free property. The difficulty in proving such a guarantee is that computing the distribution $D_{\Prov,h}$ requires trapdoor information, since $D_{\Prov,h}$ is the distribution over measurement results obtained by the verifier. For example, in the case that $h = 1$, the verifier must invert the trapdoor claw-free functions to decode the Hadamard basis results (as described in Section \ref{sec:overviewmeas}). How can an efficient attacker use $D_{\Prov,h}$ as a tool in breaking one of our hardness guarantees, if even sampling from the distribution requires a trapdoor? At a very high level, this is possible because the verifier's decoding only involves a single bit, and we can construct two strengthened claw-free properties, called hardcore bit properties (briefly mentioned at the start of Section \ref{sec:overviewtrapdoorclawfree}), which hold even after this single bit is revealed. 

The soundness proof proceeds as follow. The first step is to characterize the state of the prover, which will allow us to both analyze the distribution $D_{\Prov,h}$ obtained by the verifier and to define the state $\rho$ appropriately. The test round serves exactly this purpose: if the prover $\Prov$ passes the test round perfectly, we know that at some point, the prover's state \textit{must} be in a superposition over the two preimages of $y$. This gives us a well-defined notion of a  committed state for an arbitrary, dishonest prover, as long as the prover passes the test round perfectly. We can then assume that when asked for measurement results in the test round, the prover simply measures the committed state in the standard basis, and when asked for measurement results in the Hadamard round, the prover applies some unitary operator $U$ to the committed state, and then measures in the Hadamard basis (for an honest prover, $U = \mcI$). The details of this characterization are fleshed out in Section \ref{sec:overviewproverbehavior}. 

We can use this characterization (the committed state as well as the operator $U$) along with the verifier's $X$ decoding operator to define a state $\rho$. The state $\rho$ is defined so that the distribution obtained by measuring $\rho$ in the Hadamard basis ($D_{\rho,1}$) is equal to the distribution $D_{\Prov,1}$ obtained by the verifier for basis choice $h = 1$. With such a definition, the only step needed to complete the proof is to argue that standard basis measurements of $\rho$ match up with the verifier's stored results (i.e. $D_{\rho,0}$ is computationally indistinguishable from $D_{\Prov,0}$). Observe that if $U$ commutes with standard basis measurement of the committed qubit, this statement follows: simply shift $U$ to be \textit{part} of the committed state. In other words, in this case, the prover may as well have committed to a different state, and then performed the Hadamard round measurements honestly. Such a statement clearly completes our proof of soundness, as there is a well defined committed state, which the prover measures in the basis requested by the verifier. This part of the argument is outlined in Section \ref{sec:overviewstateexistenceproof}.

We call an operator which commutes with standard basis measurement of the committed qubit $X$-trivial. The crux of the soundness proof is proving that, for an arbitrary prover $\Prov$ who passes the test round perfectly, we can indeed assume that $U$ is $X$-trivial, in the following sense: we can replace $U$ with an $X$-trivial operator $U'$, and we can prove that this replacement yields measurement distributions which are computationally indistinguishable from $D_{\Prov,h}$ for both choices of $h$. The reason that this computational indistinguishability holds is because $U$ is computationally randomized, by two sources: it acts on a state which is a superposition over two preimages, one of which must be unknown to the prover, and is followed by the verifier's decoding operator, which also appears computationally random to the prover. If we replace this computational randomness with information theoretic randomness, the resulting randomized operator $U'$ does in fact commute with standard basis measurement of the committed qubit. The two hardcore bit properties mentioned earlier (at the start of Section \ref{sec:overviewtrapdoorclawfree}) capture this computational randomness, and we argue that if the two distributions resulting from $U$ and $U'$ are computationally distinguishable, it must be possible to break one of these two properties. This argument is outlined in Section \ref{sec:overviewreductiontotrivial}.

We now proceed to filling in the details of the above outline, beginning by characterizing the behavior of a general prover. We then use this characterization to define a quantum state $\rho$, and show that if this characterization satisfies a certain requirement (the $X$-trivial requirement), the measurement results of the prover are indeed consistent with measuring the state $\rho$ (i.e. $D_{\Prov,h}$ and $D_{\rho,h}$ are computationally indistinguishable for both $h = 0$ and $h = 1$). In Section \ref{sec:overviewreductiontotrivial} (which is the crux of this paper), we show how to enforce this requirement on general provers.

\subsubsection{Prover Behavior}\label{sec:overviewproverbehavior}
The analysis of the measurement protocol is based on understanding and characterizing the prover's Hilbert space and operations. We will rely on the following principle behind interactive proofs between a \BQP\ prover and a \BPP\ verifier. A round of the protocol begins with the verifier's message and ends with the prover's message. A general prover is equivalent, from the verifier's perspective, to a prover who begins each round by applying an arbitrary unitary operator to his space and then behaves exactly the same as an honest prover, culminating in a measurement (the result of which is sent to the verifier). This principle implies that an arbitrary prover measures the same registers that an honest prover does in each round, which will be particularly useful in our protocol.

Let $\Prov_0$ be an honest prover in our measurement protocol and assume the unitary operator he applies in the commitment round is $U_{C,0}$, after which he measures the commitment string register in the standard basis. As described in Section \ref{sec:overviewprimitives}, $\Prov_0$ has three designated registers: the register containing the committed qubit, the preimage register, and the commitment string register. Each message of $\Prov_0$ to the verifier is the result of the measurement of one of these registers. 

It follows from the principle above that a general prover $\Prov$ has the same designated registers as $\Prov_0$ and is characterized by 3 unitary operators: the unitary $U_C$ applied in the commitment round, the unitary $U_T$ applied in the test round, and the unitary $U_H$ applied in the Hadamard round. We assume that both $U_T$ and $U_H$ do not act on the commitment string register, since it has already been measured; the measurement result could have been copied into the auxiliary space, on which $U_T$ and $U_H$ can act. 

We now use the structure of our protocol to simplify the general prover one step further. There are only two possible messages of the verifier for the second round of our protocol: the message indicates either a test or Hadamard round. Due to this property, we can assume that the test round attack $U_T$ is equal to the identity operator. To see this, we only need to make one observation: the attack $U_T$ applied in the test round commutes with the measurement of the commitment string. Therefore, it could have been applied prior to reporting the commitment string $y$.

It follows that the general prover $\Prov$ described above is identical, from the perspective of the verifier, to a prover who applies the unitary $U_0 = U_TU_C$ immediately prior to measuring the commitment string register and applies $U = U_HU_T^\dagger$ prior to performing Hadamard basis measurements of the committed qubit and preimage register in the Hadamard round. We will say that such a prover is $\textit{characterized}$ by $(U_0, U)$. For a formal statement and proof of the above argument, see Claim \ref{cl:generalprover}.

The characterization of all provers by two unitary attacks allows us to use the test round of the measurement protocol to enforce that the prover's state has a specific structure, which is derived from the cryptographic primitives in Section \ref{sec:overviewprimitives}. Let $\Prov$ be a prover who passes the test round perfectly. If $h = 1$, the state of $\Prov$ at the start of either the test or the Hadamard round (i.e. immediately after reporting $y$) can be written as follows (the two preimages of $y$ are $x_{0,y},x_{1,y}$): 
\begin{equation}\label{eq:stateformatoverview}
\sum_{b\in\{0,1\}} \ket{b}\ket{x_{b,y}}\ket{\psi_{b,x_{b,y}}}
\end{equation}
where $\ket{\psi_{b,x_{b,y}}}$ contains all additional qubits held by the prover. This is because the verifier checks, in a test round, if he receives a valid pre-image from the prover. Since the prover simply measures the requested registers when asked by the verifier in a test round (i.e. he does not apply an attack in the test round), these registers must be in a superposition over the two preimages of the reported measurement result $y$. 

If $h = 0$ and $\Prov$ reports $y$, there is only one inverse of $y$. If we assume this inverse is $x_{b,y}$ (i.e. $g_{k,b}(x_{b,y})= y$), the state of $\Prov$ at the start of the test or Hadamard round can be written as follows, due to the same reasoning used in \eqref{eq:stateformatoverview}:
\begin{equation}\label{eq:stateformatoverviewstandard}
\ket{b}\ket{x_{b,y}}\ket{\psi_{b,x_{b,y}}}
\end{equation}
This structure enforced by the test run is the key to proving the existence of an underlying quantum state, as we will see shortly. 

\subsubsection{Construction of Underlying Quantum State}\label{sec:overviewstateexistenceproof}
We begin by using the characterization of general provers in Section \ref{sec:overviewproverbehavior} to define a single qubit state $\rho$ corresponding to a prover $\Prov$ who is characterized by $(U_0,U)$. Recall that $\Prov$ has a well defined committed qubit, which he measures when the verifier asks for the measurement of a committed qubit. Let $\rho'$ be the state of the committed qubit prior to the prover's measurement in the Hadamard round in the case that $h = 1$. We can think of $\rho'$ as encoded by the operator $Z^{d\cdot (x_{0,y}\oplus x_{1,y})}$, which is determined by the prover's measurements $d$ and $y$. This $Z$ operator is derived from the verifier's $X$ decoding operator applied in the measurement protocol; we have used a $Z$ operator here since the Hadamard measurement has not yet been performed. The single qubit state $\rho$ will be the result of applying the $Z$ decoding operator to the committed qubit $\rho'$.

Define $X$-trivial operators to be those which commute with standard basis measurement of the committed qubit. We now show that if the prover's Hadamard round attack $U$ is an $X$-trivial operator, the distribution $D_{\Prov,h}$ obtained by the verifier in the measurement protocol is computationally indistinguishable from the distribution which is obtained by measuring $\rho$ in the basis specified by $h$. 

Recall that $D_{\rho,h}$ is the distribution obtained by measuring $\rho$ in the basis corresponding to $h$. By construction, $D_{\rho,1} = D_{\Prov,1}$. If $h = 0$, there are two differences between the distribution $D_{\rho,h}$ and the distribution $D_{\Prov,h}$. The first difference lies in the function sampling: in our measurement protocol, an injective function is sampled if $h = 0$, but in the state $\rho$, a claw-free function is sampled. The second difference comes from how the standard basis measurement is obtained: in $D_{\Prov,h}$ the standard basis measurement is obtained from the commitment string $y$, but in $D_{\rho,h}$ the standard basis measurement is obtained by measuring $\rho$ (the committed qubit) in the standard basis.  

We can handle the first difference by making two key observations. First, the $Z$ decoding operator has no effect if $h = 0$; in this case, the committed qubit will be measured in the standard basis immediately after application of $Z$ in order to obtain $D_{\rho,h}$. Second, if the $Z$ decoding operator is not applied, the trapdoor $t_k$ is no longer needed to construct the distribution $D_{\rho,h}$. If $D_{\rho,h}$ is only dependent on the function key $k$ (and not the trapdoor $t_k$), the function key $k$ can be replaced with a function key which corresponds to a pair of trapdoor injective functions, rather than a pair of trapdoor claw-free functions, to obtain a computationally indistinguishable distribution. This is due to the computational indistinguishability between keys drawn from the trapdoor claw-free family $\mathcal{F}$ and the trapdoor injective family $\mathcal{G}$.

Let $\rho_0$ be the committed qubit of the prover prior to measurement in the Hadamard round in the case that $h = 0$. Due to the argument above, the distribution $D_{\rho,0}$ is computationally indistinguishable from $D_{\rho_0,0}$. To address the second difference, we now show that measuring $\rho_0$ in the standard basis produces the same distribution obtained from extracting the standard basis measurement from the commitment string $y$. First, note that measuring the committed qubit prior to application of $U$ (i.e. at the start of the Hadamard round) results in the same measurement obtained from $y$; as seen in \eqref{eq:stateformatoverviewstandard}, the value of the committed qubit is equal to the value $m$ extracted from $y$, since the prover passes the test round perfectly. To complete our proof, recall that $U$ is $X$-trivial with respect to the committed qubit, and therefore commutes with standard basis measurement of the committed qubit.  

To recap, the argument above shows that there exists a quantum state underlying the distribution $D_{\Prov,h}$ as long as the prover's attack operator in the Hadamard round is an $X$-trivial operator. For a formal statement and complete proof of this argument, see Lemma \ref{lem:trivialprover}.

\subsection{Replacement of a General Attack with an X-Trivial Attack}\label{sec:overviewreductiontotrivial}
We can now proceed to the crux of this paper: assuming that $n = 1$ and $\Prov$ passes the test round perfectly, we show that there exists a prover $\Prov'$ such that $D_{\Prov,h}$ is computationally indistinguishable from $D_{\Prov',h}$ for both $h$ and $\Prov'$ attacks with an $X$-trivial operator in the Hadamard round. By the argument in Section \ref{sec:overviewstateexistenceproof} and the triangle inequality, this implies that there exists a state $\rho$ for which $D_{\Prov,h}$ and $D_{\rho,h}$ are computationally indistinguishable, thereby proving our soundness guarantee. 

Assume $\Prov$ is characterized by $(U_0,U)$. Then $\Prov'$ is characterized by $(U_0, \{U_x\}_{x\in\{0,1\}})$, where $\{U_x\}_{x\in\{0,1\}}$ is an $X$-trivial CPTP map\footnote{$U$ can be written in the form in \eqref{eq:upaulibasis} by decomposing its action on the first qubit in the Pauli basis; the matrix $U_{xz}$ is not necessarily unitary. For more detail, see Section \ref{sec:PauliNClifford}.}: 
\begin{eqnarray}
U &=& \sum_{x,z\in\{0,1\}} X^xZ^z \otimes U_{xz} \label{eq:upaulibasis}\\
U_x &=& \sum_{z\in\{0,1\}} Z^z\otimes U_{xz} \label{eq:udiagonalizedx}
\end{eqnarray}
Observe that if $h = 0$, $D_{\Prov,h} = D_{\Prov',h}$; this is simply because the standard basis measurement is obtained from the commitment $y$, which is measured prior to the Hadamard round attack $U$. This argument requires a bit more detail for $n>1$ and is given formally in Claim \ref{cl:reductiontosinglestandard}. We proceed to describing how to replace the attack $U$ in \eqref{eq:upaulibasis} with the CPTP map $\{U_x\}_{x\in \{0,1\}}$ in \eqref{eq:udiagonalizedx} in the case that the verifier chooses the Hadamard basis ($h = 1$). We will rely heavily on the structure of the prover's state, as written in \eqref{eq:stateformatoverview}. 

The replacement of $U$ with $\{U_x\}_{x\in \{0,1\}}$ will be done by using the $Z$ Pauli twirl (Corollary \ref{corol:diagonalizingmeas}). The $Z$ Pauli twirl is a technique which allows the replacement of $U$ with the CPTP map $\{U_x\}_{x\in\{0,1\}}$ by conjugating $U$ by a random $Z$ Pauli operator. More formally, Corollary \ref{corol:diagonalizingmeas} states that the following two CPTP maps are equivalent when followed by Hadamard basis measurement: 
\begin{equation}\label{eq:paulitwirloverview1}
\{\frac{1}{\sqrt{2}}(Z^r\otimes \mcI)U(Z^r\otimes \mcI)\}_{r\in \{0,1\}}
\end{equation}
\begin{equation}\label{eq:paulitwirloverview2}
\{U_x\}_{x\in \{0,1\}}
\end{equation}
To apply the $Z$ Pauli twirl in this setting, it suffices to show that replacing the prover's attack $U$ with the unitary attack $(Z\otimes \mcI)U (Z\otimes \mcI)$ results in a computationally indistinguishable distribution. 

To prove this statement, we will rely on the fact that there is already computational randomness, due to the trapdoor claw-free function, which is hiding both the $Z$ operator applied prior to $U$ and the $Z$ operator applied after. The computational randomness hiding the posterior $Z$ operator comes from the verifier's decoding operator $X^{d\cdot (x_{0,y}\oplus x_{1,y})}$ applied at the end of the measurement protocol (see Section \ref{sec:overviewmeas}); if this decoding operator is shifted prior to the Hadamard transform on the committed qubit, it acts as a $Z$ operator immediately after the attack $U$. The computational randomness hiding the anterior $Z$ operator results from the format of the prover's state. Recall that, since the prover is perfect, we can assume the prover begins the Hadamard round with a state of the form in \eqref{eq:stateformatoverview}:
\begin{equation}\label{eq:stateformatoverviewcompind}
\ket{\phi_y} = \sum_{b\in\{0,1\}}\ket{b}\ket{x_{b,y}}\ket{\psi_{b,x_{b,y}}}
\end{equation}
Intuitively, the prover's inability to determine the claw $(x_{0,y},x_{1,y})$ prevents him from being able to distinguish whether or not a $Z$ operator is applied to $\ket{\phi_y}$. More formally, we show in Section \ref{sec:overviewcross} that distinguishing between the states $\ket{\phi_y}$ and $(Z\otimes \mcI)\ket{\phi_y}$ boils down to the ability to determine the bit $d\cdot(x_{0,y}\oplus x_{1,y})$ for an arbitrary string $d$. 

In order to use these two sources of computational randomness to hide the difference between $U$ and $(Z\otimes \mcI)U(Z\otimes \mcI)$, it must be the case that the bit $d\cdot (x_{0,y}\oplus x_{1,y})$ is computationally indistinguishable from a uniformly random bit. Formalizing this requirement is a bit tricky, since $d$ is sampled from the state created by the prover. In the next section, we show how to prove computational indistinguishability between the distributions resulting from $U$ and $(Z\otimes \mcI) U (Z\otimes \mcI)$. As part of this process, we formalize the computational randomness requirement regarding $d\cdot (x_{0,y}\oplus x_{1,y})$ as two different hardcore bit conditions for the function pair $f_{k,0},f_{k,1}$.

\subsubsection{Computational Indistinguishability of Phase Flip}\label{sec:overviewcomppauli}
Let $\Prov$ be the prover characterized by $(U_0,U)$ and let $\hat{\Prov}$ be the prover characterized by $(U_0,(Z\otimes \mcI) U (Z\otimes \mcI))$. In this section, we will show that the distributions resulting from the two provers ($D_{\Prov,h}$ and $D_{\hat{\Prov},h}$) are computationally indistinguishable for all $h$. For convenience, we will instead refer to these two distributions as mixed states; let $\sigma_0$ be the mixed state corresponding to $D_{\Prov,h}$ and let $\sigma_1$ be the mixed state corresponding to $D_{\hat{\Prov},h}$, i.e.
\begin{equation}
    \sigma_0 = \sum_{m\in\{0,1\}}D_{\Prov,h}(m)\ket{m}\bra{m}
\end{equation}
\begin{equation}
    \sigma_1 = \sum_{m\in\{0,1\}}D_{\hat{\Prov},h}(m)\ket{m}\bra{m}
\end{equation}
To prove the computational indistinguishability of $\sigma_0$ and $\sigma_1$, each state is split into two terms (for $r\in\{0,1\}$):
\begin{eqnarray}
\sigma_r = \sigma_r^D + \sigma_r^C
\end{eqnarray}
By a straightforward application of the triangle inequality, we obtain that if $\sigma_0$ is computationally distinguishable from $\sigma_1$, either $\sigma_0^D$ and $\sigma_1^D$ are computationally distinguishable or $\sigma_0^C$ and $\sigma_1^C$ are. Note that even if the terms are not quantum states, the notion of computational indistinguishability (Definition \ref{def:compind}) is still well defined: to show that two terms, for example $\sigma_0^C$ and $\sigma_1^C$, are computationally indistinguishable, we need to show (informally) that there does not exist an efficiently computable CPTP map $\mathcal{S}$ such that the following expression is non negligible
\begin{equation}\label{eq:crosstermindist}
   | \tr((\ket{0}\bra{0}\otimes \mcI)\mathcal{S}(\sigma_0^C - \sigma_1^C)|
\end{equation}
In more detail, the density matrices $\sigma_0$ and $\sigma_1$ are created by beginning with the state $\ket{\phi_y}$ in \eqref{eq:stateformatoverviewcompind} and applying the operations of both the prover and verifier in the Hadamard round, followed by tracing out all but the first qubit. Therefore, to split $\sigma_0$ and $\sigma_1$ into two parts, we can equivalently split the density matrix of $\ket{\phi_y}$ into the following two parts, corresponding to the diagonal and cross terms: 
\begin{equation}\label{eq:stateformatoverviewdiagonalintro}
\sum_{b\in\{0,1\}}\ket{b}\bra{b}\otimes \ket{x_{b,y}}\bra{x_{b,y}}\otimes \ket{\psi_{b,x_{b,y}}}\bra{\psi_{b,x_{b,y}}}
\end{equation}
\begin{equation}\label{eq:stateformatoverviewcrossintro}
\sum_{b\in\{0,1\}}\ket{b}\bra{b\oplus 1}\otimes \ket{x_{b,y}}\bra{x_{b\oplus 1,y}}\otimes \ket{\psi_{b,x_{b,y}}}\bra{\psi_{b,x_{b\oplus 1,y}}}
\end{equation}
Let $\sigma_0^D$ and $\sigma_1^D$ be the result of applying the operations of both the prover and the verifier in the Hadamard round to \eqref{eq:stateformatoverviewdiagonalintro}, followed by tracing out all but the first qubit. Recall the difference between $\sigma_0^D$ and $\sigma_1^D$: in the latter, the prover's attack $U$ is conjugated by $(Z\otimes \mcI)$. Define $\sigma_0^C$ and $\sigma_1^C$ similarly, but replace \eqref{eq:stateformatoverviewdiagonalintro} with \eqref{eq:stateformatoverviewcrossintro}. In the following two sections, we show that both pairs of terms are computationally indistinguishable.

\paragraph{Diagonal Terms}\label{sec:overviewdiagonal}
In this section, we will show that if there exists a \BQP\ attacker $\mathcal{A}'$ who can distinguish between the terms $\sigma_0^D$ and $\sigma_1^D$, then there exists a \BQP\ attacker $\mathcal{A}$ who can violate the following informal hardcore bit property of the function family $\mathcal{F}$ (the formal statement is part of Definition \ref{def:trapdoorclawfree}):
\begin{deff}\label{def:informaladaptivehardcore}
Let $f_{k,0}$ and $f_{k,1}$ be sampled from a trapdoor claw-free family $\mathcal{F}$. Then there does not exist a \BQP\ attacker who, on input $k$, can produce $b\in \{0,1\}$, $x_b\in \sX$, $d\in\{0,1\}^w\setminus\{0^w\}$ and $c\in \{0,1\}$ such that $c = d\cdot (x_0\oplus x_1)$ where $f_{k,0}(x_0) = f_{k,1}(x_1)$\footnote{Note that the \BQP\ attacker only produces one preimage, $x_b$, and the other preimage $x_{b\oplus 1}$ is defined by the equality $f_{k,0}(x_0) = f_{k,1}(x_1)$.}. 
\end{deff}
We first describe the state $\sigma_0^D$, which is created by beginning with the state in \eqref{eq:stateformatoverviewdiagonalintro}, in more detail. Note that the state in \eqref{eq:stateformatoverviewdiagonalintro} can be efficiently created by following the prover's commitment process and then measuring the committed qubit and preimage register, thereby obtaining the measurement results $b,x_{b,y}$. To create $\sigma_0^D$, the attack $U$ is applied to the state in \eqref{eq:stateformatoverviewdiagonalintro}, followed by Hadamard measurement of the preimage register (obtaining a string $d$) and the committed qubit, and ending with the application of the verifier's $X$ decoding operator. Finally, all qubits but the first are traced out. $\sigma_1^D$ is almost the same as $\sigma_0^D$, except the attack $U$ is replaced with the attack $(Z\otimes \mcI) U (Z\otimes \mcI)$. Note that the initial phase operator has no effect, since it acts on the diagonal state in \eqref{eq:stateformatoverviewdiagonalintro}. The final phase flip, once it is shifted past the Hadamard transform, is equivalent to flipping the decoding bit of the verifier; it follows that $\sigma_1^D = X\sigma_0^D X$.

We now construct the \BQP\ attacker $\mathcal{A}$ who will use $\mathcal{A}'$ to violate the adaptive hardcore bit property in Definition \ref{def:informaladaptivehardcore}. Let $\sigma^D$ be the state $\sigma_0^D$ except for the verifier's decoding. Observe from the description in the previous paragraph that this state can be efficiently created, and as part of creating the state, the measurements $b,x_{b,y}$ and $d$ are obtained. The attacker $\mathcal{A}$ creates the state $\sigma^D$, obtaining the measurements $b,x_{b,y}$ and $d$ in the process. The decoding bit $d\cdot (x_{0,y}\oplus x_{1,y})$ determines which of the two states $\sigma_0^D$ and $\sigma_1^D$ $\mathcal{A}$ has created; if $d\cdot(x_{0,y}\oplus x_{1,y}) = r$, $\mathcal{A}$ has created $\sigma_r^D$. Now $\mathcal{A}$ can run $\mathcal{A}'$ on the resulting mixed state in order to learn $d\cdot (x_{0,y}\oplus x_{1,y})$. As a result, $\mathcal{A}$ holds the following information: $b,x_{b,y},d,$ and $d\cdot (x_0\oplus x_1)$, therefore violating Definition \ref{def:informaladaptivehardcore}.

\paragraph{Cross Terms}\label{sec:overviewcross}
In this section, we will show that the cross terms $\sigma_0^C$ and $\sigma_1^C$ are computationally indistinguishable. Since the cross terms are not quantum states, we first show below that if there exists a CPTP map $\mathcal{S}$ which distinguishes between $\sigma_0^C$ and $\sigma_1^C$, i.e. if the following expression is non negligible:
\begin{equation}
   | \tr((\ket{0}\bra{0}\otimes \mcI)\mathcal{S}(\sigma_0^C - \sigma_1^C)|
\end{equation}
then there exists an efficiently computable CPTP map $\mathcal{S}'$ such that the CPTP map $\mathcal{S}\mathcal{S}'$ distinguishes between the quantum states $\hat{\sigma}_0$ and $\hat{\sigma}_1$, defined as follows. The density matrix $\hat{\sigma}_r$ corresponds to the following pure state (recall $\ket{\phi_y}$ from \eqref{eq:stateformatoverviewcompind}):
\begin{equation}\label{eq:hatsigmaproof}
(Z^r\otimes \mcI)\ket{\phi_y} = (Z^r\otimes \mcI)(\sum_{b\in\{0,1\}}\ket{b}\ket{x_{b,y}}\ket{\psi_{b,x_{b,y}}})
\end{equation}
To prove this implication, it suffices to show that $\sigma_0^C - \sigma_1^C = \mathcal{S}'(\hat{\sigma}_0 - \hat{\sigma_1})$. This equality is straightforward for two reasons. First, $\frac{1}{2}(\hat{\sigma}_0 - \hat{\sigma_1})$ is equal to the cross term in \eqref{eq:stateformatoverviewcrossintro}. Second, both $\sigma_0^C$ and $\sigma_1^C$ also begin with \eqref{eq:stateformatoverviewcrossintro}, but followed by a CPTP map which is inefficient due to the verifier's use of the trapdoor in computing the decoding. To prove the existence of $\mathcal{S}'$, we show that taking the difference between $\sigma_0^C$ and $\sigma_1^C$ effectively removes the verifier's decoding, creating an efficient CPTP map $\mathcal{S}'$. 

Finally, we will show that an attacker who can distinguish between $\hat{\sigma}_0$ and $\hat{\sigma}_1$ can violate the following informal hardcore bit property of the function family $\mathcal{F}$ (the formal statement is part of Definition \ref{def:extendedtrapdoorclawfree}):
\begin{deff}\label{def:informaladaptivehardcorecross}
Let $f_{k,0}$ and $f_{k,1}$ be sampled from a trapdoor claw-free family $\mathcal{F}$. Then there exists $d\in\{0,1\}^w$ which satisfies two conditions. First, there exists a bit $c_k$ such that $d\cdot (x_0\oplus x_1) = c_k$ for all claws $(x_0,x_1)$ ($f_{k,0}(x_0) = f_{k,1}(x_1)$). Second, there does not exist a \BQP\ attacker who, on input $k$, can determine the bit $c_k$.  
\end{deff}
We begin by describing the cross term $\sigma_0^C$ (which is not a quantum state) in more detail. $\sigma_0^C$ is created by beginning with the expression in \eqref{eq:stateformatoverviewcrossintro}, copied here for reference:
\begin{equation}\label{eq:stateformatoverviewcross}
\sum_{b\in\{0,1\}}\ket{b}\bra{b\oplus 1}\otimes \ket{x_{b,y}}\bra{x_{b\oplus 1,y}}\otimes \ket{\psi_{b,x_{b,y}}}\bra{\psi_{b,x_{b\oplus 1,y}}}
\end{equation}
then applying the attack $U$, followed by Hadamard measurement of the committed qubit and preimage register and application of the verifier's $X$ decoding operator. Finally, all qubits but the first are traced out. $\sigma_1^C$ is almost the same, except the attack $U$ is replaced with the attack $(Z\otimes \mcI) U (Z\otimes \mcI)$. As in Section \ref{sec:overviewdiagonal}, the phase flip acting after $U$ is equivalent to flipping the decoding operator of the verifier (i.e. applying an $X$ operator to the matrix $\sigma_0^C$). The initial phase flip, which acts on the first qubit of \eqref{eq:stateformatoverviewcross}, results in a phase of -1. Combining these two observations yields the following equality:
\begin{equation}
    \sigma_1^C = -X\sigma_0^CX
\end{equation}
Taking the difference between $\sigma_0^C$ and $\sigma_1^C$ results in a matrix which has a uniform $X$ operator applied:
\begin{equation}\label{eq:crosstermuniformx}
    \sigma_0^C - \sigma_1^C = \sum_{r\in\{0,1\}}X^r\sigma_0^CX^r
\end{equation}
Observe that the CPTP map applied to \eqref{eq:stateformatoverviewcross} to create $\sigma_0^C$ is efficiently computable except for the verifier's $X$ decoding operator. In \eqref{eq:crosstermuniformx}, there is a uniform $X$ operator acting on $\sigma_0^C$, effectively replacing the verifier's decoding operator. Let $\mathcal{S}'$ be the resulting efficiently computable CPTP map. It follows immediately that $\sigma_0^C - \sigma_1^C = \mathcal{S}'(\hat{\sigma}_0 - \hat{\sigma}_1)$.

We now proceed to showing that an attacker $\mathcal{A}'$ who can distinguish between $\hat{\sigma}_0$ and $\hat{\sigma}_1$ can be used to violate the hardcore bit property specified in Definition \ref{def:informaladaptivehardcorecross}. Since the state $\hat{\sigma}_r$ is the state $\ket{\phi_y}$ from \eqref{eq:stateformatoverviewcompind} with the operator $Z^r$ applied to the committed qubit, an attacker who can distinguish between $\hat{\sigma}_0$ and $\hat{\sigma}_1$ can distinguish whether or not a $Z$ operator is applied to the committed qubit of $\ket{\phi_y}$. The following equality (which holds up to a global phase) shows that a $Z$ operator on the preimage register is equivalent to a $Z$ operator on the committed qubit: 
\begin{equation}\label{eq:zequality}
   (\mcI \otimes Z^d\otimes \mcI)(\sum_{b\in\{0,1\}}\ket{b}\ket{x_{b,y}}\ket{\psi_{b,x_{b,y}}}) = (Z^{d\cdot(x_{0,y}\oplus x_{1,y})}\otimes \mcI)(\sum_{b\in\{0,1\}}\ket{b}\ket{x_{b,y}}\ket{\psi_{b,x_{b,y}}}) 
\end{equation}
This equality, along with the attacker $\mathcal{A}'$, can be used to construct a \BQP\ attacker $\mathcal{A}$ who can determine $d\cdot (x_{0,y}\oplus x_{1,y})$ for an arbitrary fixed string $d$. $\mathcal{A}$ first constructs $\ket{\phi_y}$ (this is simply the prover's state after reporting the commitment string $y$, so it can be constructed efficiently). Next, $\mathcal{A}$ applies $Z^{d}$ to the preimage register of $\ket{\phi_y}$. Due to the equality in \eqref{eq:zequality}, this is equivalent to instead applying $Z^{d\cdot (x_{0,y}\oplus x_{1,y})}$ to the committed qubit. By running the attacker $\mathcal{A}'$, $\mathcal{A}$ can determine $d\cdot (x_{0,y}\oplus x_{1,y})$, therefore violating Definition \ref{def:informaladaptivehardcorecross}.

\subsection{Extension of Measurement Protocol to a Verification Protocol for \BQP}\label{sec:overviewmeastoqpip}
Our goal is to verify that an instance $x\in L$ for a language $L\in $ \BQP. Recall that each instance can be converted into a local Hamiltonian $H$ with the following property: if $x\in L$, $H$ has ground state energy at most $a$ and if $x\notin L$, $H$ has ground state energy at least $b$, where the gap $b-a$ is inverse polynomial. Therefore, to verify that an instance $x\in L$, a verifier with a quantum computer can simply ask the prover for the ground state and estimate the energy of the received state with respect to the Hamiltonian $H$. The soundness of such a protocol rests on the fact that if an instance $x\notin L$, all possible states sent by the prover will have energy $\geq b$. 

To use such a verification procedure in our setting, we need to rely on one more fact: the Hamiltonian $H$ can be written as a sum over terms which are each a product of $X$ and $Z$ operators \cite{xzhamiltonian}. Therefore, when the verifier is estimating the energy of a state sent by the prover, he only needs to perform Hadamard or standard basis measurements on each individual qubit. In \cite{mf2016}, the authors formalize the resulting protocol and use it to build a protocol in which a verifier with access to a single qubit register can verify the result of a \BQP\ computation. Their protocol achieves a completeness/ soundness gap which is negligibly close to 1 by performing polynomially many repetitions of the energy estimation process described above.  

In \cite{mf2016}, the prover sends single qubits to the verifier, who performs either Hadamard or standard basis measurements. To obtain a verification protocol for \BQP, we simply replace this step of their protocol with our measurement protocol. Completeness and soundness follow, since our measurement protocol allows the verifier to collect standard and Hadamard basis measurements of a given state, and our soundness claim guarantees that the distribution over measurement results obtained by the verifier comes from the measurement of an underlying quantum state. The extension of our measurement protocol to a verification protocol is described in Section \ref{sec:meastoqpip}.

\subsection{Paper Outline}
As noted in Section \ref{sec:overviewtrapdoorclawfree}, the trapdoor function families we use in the rest of the paper are not the ideal families used so far in the overview. We instead use approximations of these function families, which are defined in Section \ref{sec:functiondefinitions}. The protocol as described used several properties of the ideal families. We take care to define our approximate families to make sure that they satisfy these required properties, at the expense of additional notation. 

We begin with the definition of our extended trapdoor claw-free family in Section \ref{sec:functiondefinitions}. Section \ref{sec:measprotocol} presents the measurement protocol and proves completeness (Sections \ref{sec:overviewprimitives} to \ref{sec:overviewmeas} of the overview). Section \ref{sec:measprotocolsoundnessoverall} covers soundness of the measurement protocol: Sections \ref{sec:proverbehavior} and \ref{sec:underlyingstate} characterize the behavior of the prover (Section \ref{sec:overviewexistence} of the overview), Section \ref{sec:generaltotrivialhadamard} presents the argument outlined in Section \ref{sec:overviewreductiontotrivial}, in which we replace a general attack with an attack which commutes with standard basis measurement, and Section \ref{sec:measprotocolsoundness} provides a proof of soundness by combining the results from Sections \ref{sec:proverbehavior} to \ref{sec:generaltotrivialhadamard}. Finally, the extension of the measurement protocol to a \QPIP$_0$ (described in Section \ref{sec:overviewmeastoqpip}) is given in Section \ref{sec:meastoqpip}, providing the main result of this paper (the following informal statement is taken from the introduction and is stated formally as Theorem \ref{thm:lwetoqpip} in Section \ref{sec:meastoqpip}):

\vspace{2mm}
\begin{statement}{\textbf{Theorem \ref{thm:oneprover} \textit{(Informal)}}}
\textit{Assuming the existence of an extended trapdoor claw-free family as defined in Definition \ref{def:extendedtrapdoorclawfree}, \BQP\ $ = $ \QPIP$_0$.}
\end{statement}

In Section \ref{sec:lweextended}, we provide a Learning with Errors based construction of an extended trapdoor family, providing a proof of the following theorem (the following informal statement is taken from the introduction and is stated formally as Theorem \ref{thm:lwetcfextended} in Section \ref{sec:lweextended}):

\vspace{2mm}
\begin{statement}{\textbf{Theorem \ref{thm:lwetcfextendedinformal} \textit{(Informal)}}}
\textit{Under the assumption that the Learning with Errors problem with superpolynomial noise ratio is computationally intractable for an efficient quantum machine, there exists an extended trapdoor claw-free family. }
\end{statement}




\section{Preliminaries}\label{sec:verificationprelim}

\subsection{Notation}\label{sec:prelimnotation}
For all $q \in \mN$ we let $\mZ_q$ denote the ring of integers modulo $q$. We represent elements in $\mZ_q$ using numbers in the range $(-\frac{q}{2}, \frac{q}{2}] \cap \mZ$. We denote by $\trnq{x}$ the unique integer $y$ s.t.\ $y = x \pmod{q}$ and $y \in (-\frac{q}{2}, \frac{q}{2}]$. For $x\in\mZ_q$ we define $\abs{x}=|{\trnq{x}}|$. For a vector $\*u\in \mZ_q^n$, we write $\lVert \*u \rVert_{\infty}\leq \beta$ if each entry $u_i$ in $\*u$ satisfies $|u_i|\leq \beta$. Similarly, for a matrix $U\in \mZ_q^{n\times m}$, we write $\lVert U \rVert_{\infty}\leq \beta$ if each entry $u_{i,j}$ in $U$ satisfies $|u_{i,j}|\leq \beta$. When considering an $s\in \{0,1\}^n$ we sometimes also think of $s$ as an element of $\mZ_q^n$, in which case we write it as $\*s$. 

We use the terminology of polynomially bounded, super-polynomial, and negligible functions. A function $n: \mN \to \mR_+$ is \emph{polynomially bounded} if there exists a polynomial $p$ such that $n(\lambda)\leq p(\lambda)$ for all $\lambda \in \mN$. A function $n: \mN \to \mR_+$ is \emph{negligible} (resp. \emph{super-polynomial}) if for every polynomial $p$, $p(\lambda) n(\lambda)\to_{\lambda\to\infty} 0$ (resp. $ n(\lambda)/p(\lambda)\to_{\lambda\to\infty} \infty$).

We generally use the letter $D$ to denote a distribution over a finite domain $X$, and $f$ for a density on $X$, i.e. a function $f:X\to[0,1]$ such that $\sum_{x\in X} f(x)=1$. We often use the distribution and its density interchangeably. We write $U$ for the uniform distribution. We write $x\leftarrow D$ to indicate that $x$ is sampled from distribution $D$, and $x\leftarrow_U X$ to indicate that $x$ is sampled uniformly from the set $X$. 
We write $\mathcal{D}_X$ for the set of all densities on $X$.
For any $f\in\mathcal{D}_X$, $\supp(f)$ denotes the support of $f$,
\begin{equation*}
    \supp(f) \,=\, \big\{x\in X \,|\; f(x)> 0\big\}\;.
\end{equation*}
For two densities $f_1$ and $f_2$ over the same finite domain $X$, the square of the Hellinger distance between $f_1$ and $f_2$ is
\begin{equation}\label{eq:bhatt}
H^2(f_1,f_2) \,=\, 1- \sum_{x\in X}\sqrt{f_1(x)f_2(x)}\;.
\end{equation}
and the total variation distance between $f_1$ and $f_2$ is:
\begin{equation}\label{eq:deftotalvariation}
\TV{f_1}{f_2} \,=\, \frac{1}{2} \sum_{x\in X}|f_1(x) - f_2(x)|\;.
\end{equation}
The following lemma will be useful:
\begin{lem}\label{lem:projectionprob}
Let $D_0, D_1$ be distributions over a finite domain $X$. Let $X'\subseteq X$. Then:
\begin{equation}
    \Big|\Pr_{x\leftarrow D_0}[x\in X'] - \Pr_{x\leftarrow D_1}[x\in X']\Big| \,\leq\, \TV{D_0}{D_1}.
\end{equation}
\end{lem}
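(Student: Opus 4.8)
The plan is to reduce the statement to the elementary characterization of total variation distance as the largest possible gap, over all events, between the two probability masses. First I would rewrite the left-hand side without the outer absolute value as
\begin{equation*}
\Pr_{x\leftarrow D_0}[x\in X'] - \Pr_{x\leftarrow D_1}[x\in X'] \,=\, \sum_{x\in X'}\big(D_0(x)-D_1(x)\big),
\end{equation*}
identifying each distribution with its density, as set up in Section~\ref{sec:prelimnotation}.

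The key step is to bound this signed sum with the correct constant: a naive triangle inequality over $X'$ only gives $2\TV{D_0}{D_1}$, which is too weak by a factor of two. To get the sharp bound I would split the domain according to the sign of $D_0-D_1$, writing $X^{+}=\{x\in X : D_0(x)\geq D_1(x)\}$ and $X^{-}=X\setminus X^{+}$. Discarding the nonpositive contribution of $X'\cap X^{-}$ and enlarging $X'\cap X^{+}$ to all of $X^{+}$ yields
\begin{equation*}
\sum_{x\in X'}\big(D_0(x)-D_1(x)\big) \,\leq\, \sum_{x\in X^{+}}\big(D_0(x)-D_1(x)\big).
\end{equation*}
Since $\sum_{x\in X}\big(D_0(x)-D_1(x)\big)=0$, we have $\sum_{x\in X^{+}}(D_0(x)-D_1(x)) = \sum_{x\in X^{-}}(D_1(x)-D_0(x))$, so the right-hand side equals $\tfrac12\sum_{x\in X}|D_0(x)-D_1(x)| = \TV{D_0}{D_1}$. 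Swapping the roles of $D_0$ and $D_1$ gives the matching bound for $\Pr_{x\leftarrow D_1}[x\in X'] - \Pr_{x\leftarrow D_0}[x\in X']$, and combining the two inequalities proves the claim.

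There is no genuine obstacle here — this is a standard fact about statistical distance, and in the interest of brevity one could also simply invoke it. The only point that requires a little care is the domain-splitting argument, which is what yields the tight constant $\TV{D_0}{D_1}$ rather than twice that value.
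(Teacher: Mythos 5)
Your proof is correct. Note that the paper states Lemma~\ref{lem:projectionprob} without proof, treating it as a standard fact about total variation distance (cf.\ the definition in \eqref{eq:deftotalvariation}), so there is no paper proof to compare against; your domain-splitting argument via $X^{+}=\{x: D_0(x)\ge D_1(x)\}$ is the standard one and does establish the tight constant.
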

We require the following definition:
\begin{deff}{\textbf{Computational Indistinguishability of Distributions}}\label{def:compinddist}
Two families of distributions $\{D_{0,\lambda}\}_{\lambda\in\mN}$ and $\{D_{1,\lambda}\}_{\lambda\in\mN}$ (indexed by the security parameter $\lambda$) are computationally indistinguishable if for all quantum polynomial-time attackers $\mathcal{A}$ there exists a negligible function $\mu(\cdot)$ such that for all $\lambda\in\mN$
\begin{equation}
\Big|\Pr_{x\leftarrow D_{0,\lambda}}[\mathcal{A}(x) = 0] - \Pr_{x\leftarrow D_{1,\lambda}}[\mathcal{A}(x) = 0]\Big| \,\leq\, \mu(\lambda)\;.
\end{equation}
\end{deff}

\subsection{Learning with Errors and Discrete Gaussians}\label{sec:lweprelim}

For a positive real $B$ and a positive integer $q$, the truncated discrete Gaussian distribution over $\mZ_q$ with parameter $B$ is supported on $\{x\in\mZ_q:\,\|x\|\leq B\}$ and has density
\begin{equation}\label{eq:d-bounded-def}
 D_{\mZ_q,B}(x) \,=\, \frac{e^{\frac{-\pi\lVert x\rVert^2}{B^2}}}{\sum\limits_{x\in\mZ_q,\, \|x\|\leq B}e^{\frac{-\pi\lVert x\rVert^2}{B^2}}} \;.
\end{equation}
We note that for any $B>0$, the truncated and non-truncated distributions have statistical distance that is exponentially small in $B$~\cite[Lemma 1.5]{banaszczyk1993new}. For a positive integer $m$, the truncated discrete Gaussian distribution over $\mZ_q^m$ with parameter $B$ is supported on $\{x\in\mZ_q^m:\,\|x\|\leq B\sqrt{m}\}$ and has density
\begin{equation}\label{eq:d-bounded-def-m}
\forall x = (x_1,\ldots,x_m) \in \mZ_q^m\;,\qquad D_{\mZ_q^m,B}(x) \,=\, D_{\mZ_q,B}(x_1)\cdots D_{\mZ_q,B}(x_m)\;.
\end{equation}
The proof of the following lemma can be found in \cite{oneproverrandomness} (Lemma 2.4). 
\begin{lem}[\cite{oneproverrandomness}]\label{lem:distributiondistance}
Let $B$ be a positive real number and $q,m$ be positive integers. Let $\*e \in \mZ_q^m$. The Hellinger distance between the distribution $D = D_{\mZ_q^{m},B}$ and the shifted distribution $D+\*e$ satisfies
\begin{equation}
H^2(D,D+\*e) \,\leq\, 1- e^{\frac{-2\pi \sqrt{m}\|\*e\|}{B}}\;,
\end{equation}
and the statistical distance between the two distributions satisfies
\begin{equation}
\big\| D - (D+\*e) \big\|_{TV}^2 \,\leq\, 2\Big(1 - e^{\frac{-2\pi \sqrt{m}\|\*e\|}{B}}\Big)\;.
\end{equation}
\end{lem}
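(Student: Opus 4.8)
The plan is to prove the Hellinger bound first and to deduce the statistical-distance bound from it. For the deduction, recall that $\TV{f_1}{f_2}\le\sqrt{2\,H^2(f_1,f_2)}$ for any two densities on a common finite domain: writing $|f_1(x)-f_2(x)|=\big|\sqrt{f_1(x)}-\sqrt{f_2(x)}\big|\,\big|\sqrt{f_1(x)}+\sqrt{f_2(x)}\big|$ and applying Cauchy--Schwarz bounds $\TV{f_1}{f_2}$ by $\tfrac12$ times the square root of $\sum_x(\sqrt{f_1(x)}-\sqrt{f_2(x)})^2=2H^2(f_1,f_2)$ and of $\sum_x(\sqrt{f_1(x)}+\sqrt{f_2(x)})^2\le4$. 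Squaring yields $\TV{D}{(D+\*e)}^2\le2\,H^2(D,D+\*e)$, so the second inequality of the lemma follows at once from the first.

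For the Hellinger bound I would reduce to a single coordinate. By \Eq{eq:d-bounded-def-m} the distribution $D=D_{\mZ_q^m,B}$ is the product $\bigotimes_{i=1}^{m}D_{\mZ_q,B}$ and the shift $D+\*e$ is the product of the one-dimensional shifts $D_{\mZ_q,B}+e_i$; since $1-H^2(f_1,f_2)=\sum_x\sqrt{f_1(x)f_2(x)}$ factorizes over product distributions,
\[
1-H^2(D,D+\*e)\;=\;\prod_{i=1}^{m}\Big(1-H^2\big(D_{\mZ_q,B},\,D_{\mZ_q,B}+e_i\big)\Big)\,.
\]
It therefore suffices to prove the one-variable estimate $1-H^2(D_{\mZ_q,B},D_{\mZ_q,B}+e)\ge e^{-2\pi\abs{e}/B}$ for every $e\in\mZ_q$; taking the product over $i$ and using $\sum_i\abs{e_i}=\norm{\*e}_1\le\sqrt m\,\norm{\*e}$ (Cauchy--Schwarz) then gives $1-H^2(D,D+\*e)\ge e^{-2\pi\norm{\*e}_1/B}\ge e^{-2\pi\sqrt m\,\norm{\*e}/B}$, which is the claim.

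For the one-variable estimate, write $D_{\mZ_q,B}(x)=\rho_B(x)/Z$ with $\rho_B(x)=e^{-\pi\norm{x}^2/B^2}$ and $Z=\sum_{\norm{x}\le B}\rho_B(x)$, so $1-H^2(D_{\mZ_q,B},D_{\mZ_q,B}+e)=\tfrac1Z\sum_{x}\sqrt{\rho_B(x)\rho_B(x-e)}$, the sum running over $x$ with $\norm{x}\le B$ and $\norm{x-e}\le B$. Completing the square, $\norm{x}^2+\norm{x-e}^2=2\norm{x-\tfrac e2}^2+\tfrac{e^2}{2}$, hence $\sqrt{\rho_B(x)\rho_B(x-e)}=e^{-\pi e^2/(4B^2)}\,\rho_B(x-\tfrac e2)$ and
\[
1-H^2(D_{\mZ_q,B},D_{\mZ_q,B}+e)\;=\;e^{-\pi e^2/(4B^2)}\cdot\frac1Z\sum_{x\,:\,\norm{x}\le B,\ \norm{x-e}\le B}\rho_B(x-\tfrac e2)\,.
\]
Under the translation $x\mapsto x-\tfrac e2$ the index set becomes the symmetric set $\{\,y:\norm{y}\le B-\tfrac{\abs{e}}{2}\,\}$, so the trailing factor is $1/Z$ times a truncated Gaussian sum that agrees with $Z=\sum_{\norm{y}\le B}\rho_B(y)$ except for the boundary layer $B-\tfrac{\abs{e}}{2}<\norm{y}\le B$ --- at most $\abs{e}$ terms, each of size $\le e^{-\pi(B-\abs{e}/2)^2/B^2}$ --- and, when $e$ is odd, a half-unit translation of the summation lattice. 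Bounding this discrepancy against the bulk of $Z$ (which is at least $1$ and is of order $B$) gives $\tfrac1Z\sum_x\rho_B(x-\tfrac e2)\ge e^{-c\abs{e}/B}$ for an absolute constant $c$, and combining this with $e^{-\pi e^2/(4B^2)}\ge e^{-\pi\abs{e}/(4B)}$ yields $1-H^2\ge e^{-2\pi\abs{e}/B}$. I expect this last step --- quantifying the truncation loss and the parity shift relative to $Z$, and verifying the constants keep the exponent below $2\pi\abs{e}/B$ --- to be the only genuinely delicate point of the proof.
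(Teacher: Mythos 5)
This lemma is stated in the paper's preliminaries but not proved there: the entire Section 3.2 is, by the paper's own acknowledgment, ``taken directly from \cite{oneproverrandomness}'' and the lemma is cited from that source, so there is no in-paper proof to compare against. Evaluated on its own terms, your strategy is sound at the macro level: the tensorization $1-H^2(D,D+\*e)=\prod_i\big(1-H^2(D_{\mZ_q,B},D_{\mZ_q,B}+e_i)\big)$ is correct, the passage from the $\ell_1$ to the $\ell_2$ norm of $\*e$ via Cauchy--Schwarz is correct, and the deduction $\TV{D}{D+\*e}\le\sqrt{2H^2(D,D+\*e)}$ is standard and correct. The completing-the-square identity $\sqrt{\rho_B(x)\rho_B(x-e)}=e^{-\pi e^2/(4B^2)}\rho_B(x-e/2)$ is also correct and is a natural way to attack the one-dimensional case.

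However, the proposal stops short of being a proof exactly where the content is. You write that bounding the truncation loss and the half-lattice shift against $Z$ ``gives $\tfrac1Z\sum_x\rho_B(x-\tfrac e2)\ge e^{-c\abs{e}/B}$ for an absolute constant $c$'' and then conclude; but this inequality is neither proved nor even given a value of $c$, and you yourself flag it as ``the only genuinely delicate point.'' That is precisely the step one must carry out, since the final exponent has to land at or below $2\pi$ after adding the $\pi/4$ coming from the Gaussian prefactor. Two further silent assumptions are lurking. First, $e^{-\pi e^2/(4B^2)}\ge e^{-\pi\abs e/(4B)}$ only when $\abs e\le B$; nothing in the lemma's hypotheses guarantees this, and you do not separate out the case $\abs e>B$ (where one should check that the left-hand side is already small enough, or observe that the statement is effectively vacuous in the regime where it is used). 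Second, your bulk estimate tacitly assumes $Z=\sum_{\|x\|\le B}\rho_B(x)$ is ``of order $B$,'' i.e.\ $B\gtrsim 1$; for $B<1$ the truncated discrete Gaussian degenerates to a point mass and, for $\*e\neq 0$, $H^2=1$, so the claimed bound actually fails. So the lemma (as literally stated, with no constraint relating $B$, $m$, $\|\*e\|$) is false in degenerate parameter regimes, and a careful proof must either carry an implicit size restriction or show how the bound remains true for whatever range of parameters one intends. Your proposal neither closes the central estimate nor identifies the parameter regime in which the chain of inequalities is valid.
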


\begin{deff}[\cite{oneproverrandomness}]\label{def:lwehardness}
For a security parameter $\lambda$, let $n,m,q\in \mN$ be integer functions of $\lambda$. Let $\chi = \chi(\lambda)$ be a distribution over $\mZ_q$. The $\lwe_{n,m,q,\chi}$ problem is to distinguish between the distributions $(\*A, \*A\*s + \*e \pmod{q})$ and $(\*A, \*u)$, where $\*A$ is uniformly random in $\mZ_q^{m \times n}$, $\*s$ is a uniformly random row vector in $\mZ_q^n$, $\vc{e}$ is a  row vector drawn at random from the distribution $\chi^m$, and $\*u$ is a uniformly random vector in $\mZ_q^m$. Often we consider the hardness of solving $\lwe$ for {any} function $m$ such that $m$ is at most a polynomial in $n \log q$. This problem is denoted $\lwe_{n,q,\chi}$. When we write that we make the $\lwe_{n,q,\chi}$ assumption, our assumption is that no quantum polynomial-time procedure can solve the $\lwe_{n,q,\chi}$ problem with more than a negligible advantage in $\lambda$. 
\end{deff}

We use two additional properties of the LWE problem. The first is that it is possible to generate LWE samples $(\*A,\*A\*s+\*e)$ such that there is a trapdoor allowing recovery of $\*s$ from the samples.

\begin{thm}[Theorem 5.1 in~\cite{miccancio2012}]\label{thm:trapdoor}
Let $n,m\geq 1$ and $q\geq 2$ be such that $m = \Omega(n\log q)$. There is an efficient randomized algorithm $\GenTrap(1^n,1^m,q)$ that returns a matrix $\*A \in \mZ_q^{m\times n}$ and a trapdoor $t_{\*A}$ such that the distribution of $\*A$ is negligibly (in $n$) close to the uniform distribution. Moreover, there is an efficient algorithm $\Invert$ that, on input $\*A, t_{\*A}$ and $\*A\*s+\*e$ where $\|\*e\| \leq q/(C_T\sqrt{n\log q})$ and $C_T$ is a universal constant, returns $\*s$ and $\*e$ with overwhelming probability over $(\*A,t_{\*A})\leftarrow \GenTrap$. \end{thm}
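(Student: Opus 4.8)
The plan is to reproduce the ``gadget trapdoor'' construction of Micciancio and Peikert. \textit{First}, I would set up the gadget matrix: fix $k=\lceil\log q\rceil$, let $\mx{g}=(1,2,4,\ldots,2^{k-1})^T$ be the primitive column vector, and set $\mx{G}=\mx{I}_n\otimes\mx{g}\in\mZ_q^{nk\times n}$. The key feature is that bounded-error decoding against $\mx{G}$ is trivial: from $\mx{G}\mx{s}+\mx{e}'$ with $\norm{\mx{e}'}_\infty<q/4$ the $k$ coordinates of the $i$-th block are $s_i,2s_i,\ldots,2^{k-1}s_i$ perturbed by $\mx{e}'$, so a coordinatewise bit-by-bit rounding recovers $\mx{s}$ and then $\mx{e}'$; call this efficient deterministic routine $\Invert_{\mx{G}}$.

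\textit{Next}, I would define $\GenTrap(1^n,1^m,q)$. Put $\bar m=m-nk$; since $m=\Omega(n\log q)$ we may take $\bar m=\Theta(n\log q)$ (if $m$ is larger, pad $\mx{A}$ with extra uniform rows, which changes nothing below). Sample $\bar{\mx{A}}\leftarrow_U\mZ_q^{\bar m\times n}$ and an $nk\times\bar m$ matrix $\mx{R}$ with i.i.d.\ entries uniform on $\{-1,0,1\}$ (an $O(1)$-subgaussian distribution), and output
\[
\mx{A}=\begin{pmatrix}\bar{\mx{A}}\\ \mx{G}-\mx{R}\bar{\mx{A}}\end{pmatrix}\in\mZ_q^{m\times n},\qquad t_{\mx{A}}=\mx{R}.
\]
By construction $\begin{pmatrix}\mx{R}&\mx{I}\end{pmatrix}\mx{A}=\mx{G}$, the relation $\Invert$ will exploit. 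To prove that $\mx{A}$ is negligibly close to uniform, note the top block is already uniform, so conditioned on it one only needs $(\bar{\mx{A}},\mx{R}\bar{\mx{A}})$ close to uniform; this is the $q$-ary leftover hash lemma (regularity lemma), whose hypothesis holds because each column of $\mx{R}$ has min-entropy $\Omega(\bar m)>(n+\Omega(1))\log q$, giving statistical distance $2^{-\Omega(n)}$.

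\textit{Then}, for $\Invert(\mx{A},t_{\mx{A}}=\mx{R},\mx{b})$ with $\mx{b}=\mx{A}\mx{s}+\mx{e}$: split $\mx{b}=(\mx{b}_1;\mx{b}_2)$ and $\mx{e}=(\mx{e}_1;\mx{e}_2)$ along the block structure and compute
\[
\begin{pmatrix}\mx{R}&\mx{I}\end{pmatrix}\mx{b}=\mx{G}\mx{s}+\mx{R}\mx{e}_1+\mx{e}_2.
\]
Because $\mx{R}$ is subgaussian, $\norm{\mx{R}}_{\mathrm{op}}=O(\sqrt{\bar m})$ except with probability $2^{-\Omega(n)}$ over the choice of $\mx{R}$ (this is exactly the ``overwhelming probability over $(\mx{A},t_{\mx{A}})\leftarrow\GenTrap$'' in the statement), hence $\norm{\mx{R}\mx{e}_1+\mx{e}_2}_\infty\le(1+O(\sqrt{\bar m}))\norm{\mx{e}}$. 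Since $\bar m=\Theta(n\log q)$, picking the universal constant $C_T$ large enough to swallow the hidden constants makes this strictly below $q/4$ whenever $\norm{\mx{e}}\le q/(C_T\sqrt{n\log q})$; then $\Invert_{\mx{G}}$ outputs $\mx{s}$, and finally $\mx{e}_1=\mx{b}_1-\bar{\mx{A}}\mx{s}$ and $\mx{e}_2=\mx{b}_2-(\mx{G}-\mx{R}\bar{\mx{A}})\mx{s}$ recover $\mx{e}$.

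The only real work — and the step I expect to be the main obstacle — is calibrating the distribution of $\mx{R}$ so that it is simultaneously high-entropy enough for the leftover hash lemma to apply with only $\Theta(n\log q)$ rows and small enough that the $O(\sqrt{n\log q})$ operator-norm blow-up in $\Invert$ keeps the effective error $\mx{R}\mx{e}_1+\mx{e}_2$ under the gadget-decoding threshold, and then folding both constraints into the single clean bound $q/(C_T\sqrt{n\log q})$. The $q$-ary leftover hash lemma and the subgaussian operator-norm tail bound are standard and I would cite them rather than reprove them.
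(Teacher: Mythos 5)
The paper states this as Theorem~5.1 of Micciancio--Peikert and cites it without proof, so there is no internal argument to compare against; your sketch is a faithful reconstruction of the MP12 gadget-trapdoor construction in its statistical (leftover-hash) instantiation, matching the external source's proof in all essentials: the gadget $\mx{G} = \mx{I}_n\otimes\mx{g}$, the structured sample $\mx{A}=(\bar{\mx{A}};\mx{G}-\mx{R}\bar{\mx{A}})$ with the tag identity $(\mx{R}\;\mx{I})\mx{A}=\mx{G}$, leftover-hash closeness to uniform, and the noise bound $\norm{\mx{R}\mx{e}_1+\mx{e}_2}\le(1+\norm{\mx{R}}_{\mathrm{op}})\norm{\mx{e}}$ with the subgaussian operator-norm tail supplying the ``overwhelming probability over $\GenTrap$.'' Two small slips that do not affect the structure: the leftover-hash hybrid runs over the \emph{rows} of $\mx{R}$, not its columns, since each row of $\mx{R}\bar{\mx{A}}$ depends on one row of $\mx{R}$ at a time; and the $q/4$ decoding radius for $\Invert_{\mx{G}}$ is exact only when $q$ is a power of two --- for a general (here prime) modulus, nearest-plane decoding against a short basis of $\Lambda^\perp(\mx{g}^T)$ gives a slightly smaller universal constant, which is harmlessly folded into $C_T$.
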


The second property is the existence of a ``lossy mode'' for LWE. 
\begin{deff}[Definition 3.1 in~\cite{lwr}]\label{def:lossy}
Let $\chi = \chi(\lambda)$ be an efficiently sampleable distribution over $\mZ_q$. Define a lossy sampler $\tilde{\*A} \leftarrow \lossy(1^n,1^m,1^\ell,q,\chi)$ by  $\tilde{\*A} = \*B\*C +\*F$, where $\*B\leftarrow_U \mZ_q^{m\times \ell}$, $\*C\leftarrow_U \mZ_q^{\ell \times n}$, $\*F\leftarrow \chi^{m\times n}$. 
\end{deff}

\begin{thm}[Lemma 3.2 in~\cite{lwr}]\label{thm:lossy}
Under the $\lwe_{\ell,q,\chi}$ assumption, the distribution of $\tilde{\*A} \leftarrow \lossy(1^n,1^m,1^\ell,q,\chi)$ is computationally indistinguishable from $\*A\leftarrow_U \mZ_q^{m\times n}$. 
\end{thm}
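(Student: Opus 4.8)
The plan is to prove the statement by a standard hybrid argument that views the $n$ columns of $\tilde{\*A}=\*B\*C+\*F$ as $n$ independent $\lwe_{\ell,q,\chi}$ samples sharing a common public matrix, and reduces the pseudorandomness of the whole matrix to the single-sample assumption of \Def{def:lwehardness}.

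First I would record the structure of the lossy sampler (\Def{def:lossy}): writing $\*c_j$ for the $j$-th column of $\*C$ and $\*f_j$ for the $j$-th column of $\*F$, the $j$-th column of $\tilde{\*A}$ equals $\*B\*c_j+\*f_j$, where, conditioned on $\*B\leftarrow_U \mZ_q^{m\times\ell}$, the pairs $(\*c_j,\*f_j)$ are mutually independent with $\*c_j\leftarrow_U \mZ_q^\ell$ and $\*f_j\leftarrow\chi^m$. Thus $\tilde{\*A}$ is exactly a list of $n$ LWE samples (each a block of $m$ scalar samples) of secret-dimension $\ell$, all using the same matrix $\*B$ but with $n$ mutually independent secrets and error vectors.

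Next I would introduce hybrid distributions $H_0,\ldots,H_n$ over $\mZ_q^{m\times n}$: in $H_j$ one draws a single $\*B\leftarrow_U \mZ_q^{m\times\ell}$, sets the first $j$ columns of the output to fresh uniform vectors in $\mZ_q^m$, and sets each remaining column $i>j$ to $\*B\*c_i+\*f_i$ with fresh $\*c_i\leftarrow_U\mZ_q^\ell$, $\*f_i\leftarrow\chi^m$. Then $H_0$ is distributed exactly as $\tilde{\*A}$ and $H_n$ is uniform over $\mZ_q^{m\times n}$, so it suffices to show that $H_j$ and $H_{j+1}$ are computationally indistinguishable for each $j$. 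For this I would give a reduction to $\lwe_{\ell,q,\chi}$: on an LWE challenge $(\*A,\*v)$ — where $\*v$ is either a genuine LWE vector for public matrix $\*A$ or uniform — the reduction uses $\*A$, oriented as an $m\times\ell$ matrix (up to the transpose convention of \Def{def:lwehardness}), as the common matrix $\*B$, places $\*v$ in column $j+1$ of its output, fills columns $1,\ldots,j$ with fresh uniform vectors, and fills columns $j+2,\ldots,n$ with fresh samples $\*B\*c_i+\*f_i$, which it can compute since it holds $\*B$. It then runs the hypothesized $H_j$-vs-$H_{j+1}$ distinguisher on this matrix. When $\*v$ is a genuine LWE vector the matrix is distributed as $H_j$, and when $\*v$ is uniform it is distributed as $H_{j+1}$, so the distinguishing advantage is preserved. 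Summing over the $n$ hybrid steps by the triangle inequality, and using that $n$ is polynomially bounded in the security parameter while each $\lwe_{\ell,q,\chi}$ advantage is negligible, the overall advantage is negligible; this is the desired computational indistinguishability in the sense of \Def{def:compinddist}.

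The one point I would take care with — and essentially the only place where there is anything to check — is the translation between the single-secret, many-sample formulation of LWE in \Def{def:lwehardness} and the many-independent-secret (``matrix'') form appearing here. Concretely, one must verify that in the reduction every column of the output other than the challenge column $j+1$ can be produced knowing only the reduction's own public matrix (true: the remaining LWE columns reuse $\*B=\*A$ with secrets and errors that the reduction samples itself, and the uniform columns need no structure at all), that the error distribution $\chi^m$ used to build $\*F$ is exactly the one in the assumption, and that both the number of scalar samples $m$ and the number of hybrid steps $n$ are polynomial so that the $\lwe_{\ell,q,\chi}$ assumption applies and the hybrid loss remains negligible. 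Everything else is routine bookkeeping.
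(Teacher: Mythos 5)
The paper states this result as a black-box citation to Lemma~3.2 of \cite{lwr} and gives no proof of its own, so there is nothing to compare against in the paper itself. Your column-by-column hybrid argument is correct and is in fact the standard proof of this lemma (it is essentially how it is proved in the cited reference): you correctly observe that conditioned on $\*B$, the $n$ columns of $\tilde{\*A}=\*B\*C+\*F$ are independent $\lwe_{\ell,q,\chi}$ samples sharing the public matrix $\*B$, that all non-challenge columns of each hybrid can be simulated from $\*B$ alone, that each hybrid step loses only one invocation of the $\lwe_{\ell,q,\chi}$ assumption with error distribution $\chi$ and $m$ scalar samples, and that the total loss over $n$ polynomially many hybrids remains negligible.
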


\subsection{Quantum Computation Preliminaries}

\subsubsection{Quantum Operations}\label{sec:PauliNClifford}
We will use the $X, Y$ and $Z$ Pauli operators: $X = \left(\begin{array}{ll}
0&1\\1&0\end{array}\right)$, $Z=\left(\begin{array}{ll}
1&0\\0&-1\end{array}\right)$ and $Y = iXZ$. The $l$-qubit Pauli group consists of all elements of the form
$P=P_1\otimes P_2\odots  P_l$ where $P_i \in \{\mcI,X,Y,Z\}$, together with the multiplicative factors $-1$ and $\pm i$. We will use a subset of this group, which we denote as $\mbP_l$, which includes all operators $P = P_1\otimes P_2\odots P_l$ but not the multiplicative factors. We will use the fact that Pauli operators anti commute; $ZX = - XZ$. The subset $\mbP_l$ is a basis of the matrices acting on $l$ qubits. We can write any matrix $U$ over a vector space $A\otimes B$ (where $A$ is the space of $l$ qubits) as $\sum_{P\in \mbP_l}P\otimes U_P$ where $U_P$ is some (not necessarily unitary) matrix on $B$. 

Let $\mfC_l$ denote the $l$-qubit Clifford group. Recall that it is a finite subgroup of unitaries acting on $l$ qubits generated by the Hadamard matrix $H=\frac{1}{\sqrt{2}}\left(\begin{array}{ll}
1&1\\1&-1\end{array}\right)$, by $K=\left(\begin{array}{ll}
1&0\\0&i\end{array}\right)$, and by controlled-NOT (CNOT) which maps $\ket{a,b}$ to $\ket{a,a\oplus b}$ (for bits $a,b$). The Clifford group is characterized by the property that it maps the
Pauli group $\mbP_l$ to itself, up to a phase $\alpha\in\{\pm 1,\pm i\}$. That is:
$\forall C\in\mfC_l ,  \forall P\in \mbP_l, \exists \alpha\in\{\pm 1,\pm i\}: ~\alpha CPC^\dagger \in \mbP_l$.

The Toffoli gate $T$ maps  $\ket{a,b,c}$ to $\ket{a,b,c\oplus ab}$ (for $a,b,c\in\{0,1\}$). We will use the fact that the set consisting of the Toffoli gate and the Hadamard gate is a universal gate set for quantum circuits (\cite{shiuniversal}).

We will use completely positive trace preserving (CPTP) maps to represent general quantum operations. A CPTP map $\mathcal{S}$ can be represented by its Kraus operators, $\{B_{\tau}\}_{\tau}$. The result of applying $\mathcal{S}$ to a state $\rho$ is:  
\begin{equation}
\mathcal{S}(\rho) = \sum_{\tau} B_{\tau} \rho B_{\tau}^\dagger
\end{equation}
We say that two CPTP maps $\mathcal{S}$ and $\mathcal{S}'$ are equal if, for all density matrices $\rho$, $\mathcal{S}(\rho) = \mathcal{S}'(\rho)$. 

\subsubsection{Trace Distance}\label{sec:prelimtrace}
For density matrices $\rho,\sigma$, the trace distance $\T{\rho}{\sigma}$ is equal to:
\begin{eqnarray}\label{eq:deftracedistance}
\T{\rho}{\sigma} = \frac{1}{2}\tr(\sqrt{(\rho - \sigma)^2})
\end{eqnarray}
We will use the following fact (\cite{tracedistance}):
\begin{equation}
\T{\rho}{\sigma} = \max_P \tr(P(\rho - \sigma))
\end{equation}
where the maximization is carried over all projectors $P$. We will also use the fact that the trace distance is contractive under completely positive trace preserving maps (\cite{tracedistance}). The following lemma relates the Hellinger distance as given in \eqref{eq:bhatt} and the trace distance of superpositions (the proof of the lemma follows immediately from the definitions of Hellinger distance and trace distance):
\begin{lem}\label{lem:hellingertotrace}
Let $X$ be a finite set and $f_1,f_2\in\mathcal{D}_x$. Let 
$$ \ket{\psi_1}=\sum_{x\in X}\sqrt{f_1(x)}\ket{x}\qquad\text{and}\qquad  \ket{\psi_2}=\sum_{x\in X}\sqrt{f_2(x)}\ket{x}\;.$$
 Then 
 $$\T{\ket{\psi_1}\bra{\psi_1}}{\ket{\psi_2}\bra{\psi_2}}\,=\, \sqrt{ 1 - (1-H^2(f_1,f_2))^2}\;.$$
\end{lem}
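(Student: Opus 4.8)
The plan is to reduce the statement to the standard formula for the trace distance between two pure states, namely $\T{\ket{\psi_1}\bra{\psi_1}}{\ket{\psi_2}\bra{\psi_2}} = \sqrt{1 - |\braket{\psi_1}{\psi_2}|^2}$, and then evaluate the inner product $\braket{\psi_1}{\psi_2}$ using the definition of the Hellinger distance in \eqref{eq:bhatt}.

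First I would establish the pure-state formula. Set $\Delta = \ket{\psi_1}\bra{\psi_1} - \ket{\psi_2}\bra{\psi_2}$; this operator is Hermitian, traceless, and supported on the (at most) two-dimensional subspace $V = \mathrm{span}\{\ket{\psi_1},\ket{\psi_2}\}$. If $\ket{\psi_1}$ and $\ket{\psi_2}$ are parallel the claim is immediate, since then $\Delta = 0$ and $|\braket{\psi_1}{\psi_2}| = 1$; so assume $\dim V = 2$. Choosing an orthonormal basis $\{\ket{e_0},\ket{e_1}\}$ of $V$ with $\ket{e_0} = \ket{\psi_1}$, we may write $\ket{\psi_2} = \alpha\ket{e_0} + \beta\ket{e_1}$ with $\alpha = \braket{\psi_1}{\psi_2}$ and $|\alpha|^2 + |\beta|^2 = 1$. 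A direct computation expresses $\Delta$ restricted to $V$ as a $2\times 2$ Hermitian matrix with trace $0$ and determinant $-|\beta|^2 = -(1-|\alpha|^2)$; hence its eigenvalues are $\pm\sqrt{1 - |\alpha|^2}$, and by \eqref{eq:deftracedistance} we get $\T{\ket{\psi_1}\bra{\psi_1}}{\ket{\psi_2}\bra{\psi_2}} = \frac{1}{2}\tr\sqrt{\Delta^2} = \sqrt{1 - |\braket{\psi_1}{\psi_2}|^2}$.

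Next I would compute the inner product. Since each $\sqrt{f_1(x)}$ and $\sqrt{f_2(x)}$ is a nonnegative real number, $\braket{\psi_1}{\psi_2} = \sum_{x\in X}\sqrt{f_1(x)f_2(x)}$, which by \eqref{eq:bhatt} equals $1 - H^2(f_1,f_2)$, a quantity lying in $[0,1]$. Therefore $|\braket{\psi_1}{\psi_2}| = 1 - H^2(f_1,f_2)$, and substituting into the pure-state formula gives $\T{\ket{\psi_1}\bra{\psi_1}}{\ket{\psi_2}\bra{\psi_2}} = \sqrt{1 - (1 - H^2(f_1,f_2))^2}$, as desired. There is no serious obstacle here: the only mildly technical point is the diagonalization of the rank-at-most-two traceless operator $\Delta$, which can alternatively be handled by the elementary fact that a traceless $2\times 2$ Hermitian matrix has eigenvalues $\pm\sqrt{-\det}$, and everything else is a direct substitution.
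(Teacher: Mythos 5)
Your proof is correct. The paper states this lemma without proof, treating it as a standard fact; the argument you give — establish the pure-state identity $\T{\ket{\psi_1}\bra{\psi_1}}{\ket{\psi_2}\bra{\psi_2}} = \sqrt{1-|\braket{\psi_1}{\psi_2}|^2}$ by diagonalizing the rank-$\le 2$, traceless Hermitian difference on the two-dimensional span, and then observe that since the amplitudes $\sqrt{f_i(x)}$ are real and nonnegative, $\braket{\psi_1}{\psi_2} = \sum_{x}\sqrt{f_1(x)f_2(x)} = 1-H^2(f_1,f_2)$ by \eqref{eq:bhatt} — is exactly the canonical derivation the paper implicitly relies on, and all the steps (in particular $\det\Delta = -|\beta|^2$, hence eigenvalues $\pm\sqrt{1-|\alpha|^2}$) check out.
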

We require the following definition, which extends Definition \ref{def:compinddist} to quantum states:
\begin{deff}{\textbf{Computational Indistinguishability of Quantum States}}\label{def:compind}
Two families of density matrices $\{\rho_{0,\lambda}\}_{\lambda\in\mN}$ and $\{\rho_{1,\lambda}\}_{\lambda\in\mN}$ (indexed by the security parameter $\lambda$) are computationally indistinguishable if for all efficiently implementable CPTP maps $\mathcal{S}$ there exists a negligible function $\mu(\cdot)$ such that for all $\lambda\in\mN$:
\begin{equation}
\Big| \tr((\ket{0}\bra{0}\otimes \mcI)\mathcal{S}(\rho_{0,\lambda} - \rho_{1,\lambda}))\Big| \,\leq\, \mu(\lambda)\;.
\end{equation}
\end{deff}

\subsubsection{Pauli Twirl}\label{sec:paulitwirl}
We call the conjugation of a unitary operator (or a CPTP map) by a random Pauli a Pauli twirl (\cite{paulitwirl}). The twirled version of a unitary $U$ is the CPTP map $\{(X^xZ^z)^\dagger U(X^xZ^z)\}_{x,z}$. If the Pauli is a random $X$ (or $Z$) Pauli operator, we call the conjugation an $X$ (or $Z$) Pauli twirl. A $Z$ Pauli twirl has the following effect:
\begin{lem}\label{lem:diagonalizing}{\textbf{Z Pauli Twirl}}
For a CPTP map with Kraus operators $\{B_{\tau}\}_{\tau}$, the following two CPTP maps are equal:
\begin{equation}\label{eq:equivsuperoperatortwirlproof}
    \bigg\{\frac{1}{\sqrt{2}}(Z^r\otimes \mcI)B_{\tau}(Z^r\otimes \mcI)\bigg\}_{r\in\{0,1\},\tau} = \{(X^x\otimes \mcI)B'_{x,\tau}\}_{x\in\{0,1\},\tau}
\end{equation}
where $B_{\tau} = \sum\limits_{x,z\in \{0,1\}} X^xZ^z \otimes B_{xz\tau}$ and the CPTP map $\{B'_{x,\tau}\}_{x\in\{0,1\},\tau}$ is defined as $B'_{x,\tau} = \sum\limits_{z\in \{0,1\}} Z^z \otimes B_{xz\tau}$.
\end{lem}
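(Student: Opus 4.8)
The plan is to prove the identity by a direct Kraus-operator computation: write down the Kraus operators of both CPTP maps, apply each map to an arbitrary density matrix $\rho$, and check that the two results agree (which is the definition of equality of CPTP maps given in Section~\ref{sec:PauliNClifford}).

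First I would simplify the conjugated Kraus operators on the left-hand side. Decomposing each $B_\tau$ on its first qubit as $B_\tau = \sum_{x,z\in\{0,1\}} X^xZ^z\otimes B_{xz\tau}$ and using the anticommutation relation $ZX=-XZ$ (so that $Z^r X^x Z^r = (-1)^{rx}X^x$, while $Z^r Z^z Z^r = Z^z$), one gets
\[ (Z^r\otimes\mcI)B_\tau(Z^r\otimes\mcI) \;=\; \sum_{x,z\in\{0,1\}} (-1)^{rx}\, X^xZ^z\otimes B_{xz\tau}\;. \]
Consequently the left-hand map applied to $\rho$ is
\[ \frac{1}{2}\sum_{r,\tau}\Big(\sum_{x,z}(-1)^{rx}X^xZ^z\otimes B_{xz\tau}\Big)\,\rho\,\Big(\sum_{x',z'}(-1)^{rx'}(X^{x'}Z^{z'})^\dagger\otimes B_{x'z'\tau}^\dagger\Big)\;. \]

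The key step is the sum over $r$: the index $r$ appears only through the phase $(-1)^{r(x\oplus x')}$, and $\sum_{r\in\{0,1\}}(-1)^{r(x\oplus x')} = 2\,\delta_{x,x'}$. This annihilates every term with $x\neq x'$ and cancels the prefactor $\frac{1}{2}$, leaving
\[ \sum_\tau\sum_{x}\sum_{z,z'}\big(X^xZ^z\otimes B_{xz\tau}\big)\,\rho\,\big(X^xZ^{z'}\otimes B_{xz'\tau}\big)^\dagger\;. \]
Factoring $X^x\otimes\mcI$ out on the left of each summand and its adjoint out on the right — using $X^xZ^z\otimes B_{xz\tau} = (X^x\otimes\mcI)(Z^z\otimes B_{xz\tau})$ — and recognizing the remaining inner double sum over $z,z'$ as $B'_{x,\tau}\,\rho\,(B'_{x,\tau})^\dagger$ with $B'_{x,\tau} = \sum_z Z^z\otimes B_{xz\tau}$, this becomes $\sum_{x,\tau}(X^x\otimes\mcI)B'_{x,\tau}\,\rho\,(B'_{x,\tau})^\dagger(X^x\otimes\mcI)^\dagger$, which is precisely the right-hand map applied to $\rho$.

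There is no real obstacle here beyond careful bookkeeping of the Pauli conjugation signs; the one point I would state explicitly is that both collections of operators are genuine CPTP maps. For the left-hand side this is immediate, since conjugating the Kraus operators of a CPTP map by the unitaries $Z^r\otimes\mcI$ and averaging over $r$ preserves the completeness relation $\sum_\tau B_\tau^\dagger B_\tau = \mcI$; once the equality is established the right-hand side is automatically CPTP as well (alternatively, $\sum_{x,\tau}(B'_{x,\tau})^\dagger B'_{x,\tau} = \mcI$ can be checked directly by the same cross-term cancellation).
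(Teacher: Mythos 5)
Your proof is correct and takes essentially the same route as the paper's: decompose $B_\tau$ in the Pauli basis on the first qubit, use $Z^rX^xZ^r = (-1)^{rx}X^x$ to extract the phase $(-1)^{r(x\oplus x')}$, sum over $r$ to force $x=x'$, and factor out $X^x\otimes\mcI$. The only cosmetic difference is that you simplify the conjugated Kraus operator before inserting $\rho$, whereas the paper expands after, and you add a brief (correct) remark that the left-hand side is manifestly CPTP.
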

\begin{proof}
To prove the lemma, we show that applying either of the two CPTP maps in equation \eqref{eq:equivsuperoperatortwirlproof} on an arbitrary density matrix $\rho$ results in the same state. We begin with the CPTP map on the left of \eqref{eq:equivsuperoperatortwirlproof}:
\begin{equation}\label{eq:diagproof1}
\frac{1}{2}\sum_{r\in\{0,1\}, \tau} (Z^r\otimes \mcI) B_{\tau} (Z^r\otimes \mcI) \rho (Z^r\otimes \mcI)^\dagger B_{\tau}^\dagger (Z^r \otimes \mcI)^\dagger
\end{equation}
Using the fact that $B_{\tau} = \sum\limits_{x,z\in \{0,1\}} X^xZ^z \otimes B_{xz\tau}$, we can rewrite the expression from \eqref{eq:diagproof1} as:
\begin{equation}
\frac{1}{2}\sum_{\substack{r\in\{0,1\},\tau\\ x,z,x',z'\in\{0,1\}}}(Z^r X^xZ^z Z^r \otimes B_{xz\tau})\rho (Z^r X^{x'}Z^{z'} Z^r \otimes B_{x'z'\tau})^\dagger
\end{equation}
Next, we use the anti commutation properties of Pauli operators (by commuting $Z^r$ with both $X^x$ and $X^{x'}$) to obtain the following state: 
\begin{equation}
\frac{1}{2}\sum_{\substack{r\in\{0,1\},\tau\\ x,z,x',z'\in\{0,1\}}}(-1)^{r\cdot (x\oplus x')}(X^x Z^z \otimes B_{xz\tau})\rho ( X^{x'}Z^{z'} \otimes B_{x'z'\tau})^\dagger
\end{equation}
At this point, we can sum over $r$ to obtain $x = x'$, resulting in the following expression: 
\begin{equation}
\sum_{\substack{\tau\\ x,z,z'\in\{0,1\}}}(X^xZ^z  \otimes B_{xz\tau})\rho (X^{x}Z^{z'} \otimes B_{xz'\tau})^\dagger
\end{equation}
\begin{equation}
= \sum_{x\in\{0,1\},\tau} (X^x\otimes \mcI)B'_{x,\tau}\rho ((X^x\otimes \mcI)B'_{x,\tau})^\dagger
\end{equation}
\end{proof}

The following corollary follows from Lemma \ref{lem:diagonalizing} and captures the effect of a $Z$ Pauli twirl followed by Hadamard basis measurement:
\begin{corol}\label{corol:diagonalizingmeas}{\textbf{Z Pauli Twirl with Measurement}}
For a CPTP map with Kraus operators $\{B_{\tau}\}_{\tau}$, the following two CPTP maps are equal:
\begin{equation}
    \bigg\{\frac{1}{\sqrt{2}}(\ket{b}\bra{b}HZ^r\otimes \mcI)B_{\tau}(Z^r\otimes \mcI)\bigg\}_{b,r\in\{0,1\},\tau} = \{(\ket{b}\bra{b}H\otimes \mcI)B'_{x,\tau}\}_{b,x\in\{0,1\},\tau}
\end{equation}
where $B_{\tau} = \sum\limits_{x,z\in \{0,1\}} X^xZ^z \otimes B_{xz\tau}$ and the CPTP map $\{B'_{x,\tau}\}_{x\in\{0,1\},\tau}$ is defined as $B'_{x,\tau} = \sum\limits_{z\in \{0,1\}} Z^z \otimes B_{xz\tau}$. 
\end{corol}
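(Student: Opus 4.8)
The plan is to derive Corollary~\ref{corol:diagonalizingmeas} directly from Lemma~\ref{lem:diagonalizing}. The corollary is simply the statement of the lemma with an additional, identical post-processing map --- namely, applying the Hadamard gate $H$ to the first qubit and then measuring it in the standard basis --- appended to both sides. Since Lemma~\ref{lem:diagonalizing} asserts equality of the two CPTP maps as maps (i.e. they agree on every density matrix), composing each side on the left with the same CPTP map preserves the equality.

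Concretely, first I would observe that the CPTP map $\{ (\ket{b}\bra{b}H \otimes \mcI) \}_{b\in\{0,1\}}$ is indeed a valid CPTP map: its Kraus operators are $\ket{0}\bra{0}H\otimes\mcI$ and $\ket{1}\bra{1}H\otimes\mcI$, and $\sum_{b} (\ket{b}\bra{b}H\otimes\mcI)^\dagger(\ket{b}\bra{b}H\otimes\mcI) = H^\dagger(\sum_b \ket{b}\bra{b})H \otimes \mcI = \mcI$, so it is trace preserving. Call this map $\mcM$. Then I would take the left-hand side of Lemma~\ref{lem:diagonalizing}, which is the CPTP map with Kraus operators $\{\frac{1}{\sqrt 2}(Z^r\otimes\mcI)B_\tau(Z^r\otimes\mcI)\}_{r,\tau}$, and compose $\mcM$ after it: the composite map has Kraus operators indexed by $(b,r,\tau)$, namely $\frac{1}{\sqrt 2}(\ket{b}\bra{b}H Z^r\otimes\mcI) B_\tau (Z^r\otimes\mcI)$, which is exactly the left-hand side of the corollary. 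Doing the same on the right-hand side of the lemma gives Kraus operators $(\ket{b}\bra{b}H\otimes\mcI)B'_{x,\tau}$ indexed by $(b,x,\tau)$, which is the right-hand side of the corollary. Since $\mcM$ composed with equal CPTP maps yields equal CPTP maps, the corollary follows.

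There is essentially no obstacle here --- the only thing to be slightly careful about is bookkeeping of the Kraus operator index sets (the measurement outcome $b$ is a genuinely new index on both sides, so the two composite maps have matching index structure) and noting that composition of CPTP maps respects the equivalence relation ``$\mcS = \mcS'$ iff $\mcS(\rho)=\mcS'(\rho)$ for all $\rho$'' defined in Section~\ref{sec:PauliNClifford}. This last point is immediate: if $\mcS(\rho)=\mcS'(\rho)$ for all $\rho$, then $(\mcM\circ\mcS)(\rho) = \mcM(\mcS(\rho)) = \mcM(\mcS'(\rho)) = (\mcM\circ\mcS')(\rho)$ for all $\rho$. Hence the proof is a one-line appeal: apply the CPTP map $\mcM$ to both sides of the equality in Lemma~\ref{lem:diagonalizing}.
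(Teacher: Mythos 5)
Your overall strategy (compose Lemma~\ref{lem:diagonalizing} with the post-processing map $\mcM = \{\ket{b}\bra{b}H\otimes\mcI\}_b$) is the right one, and your handling of the left-hand side is correct. But there is a genuine gap on the right-hand side. The right-hand side of Lemma~\ref{lem:diagonalizing} has Kraus operators $(X^x\otimes\mcI)B'_{x,\tau}$, not $B'_{x,\tau}$; you appear to have dropped the $X^x$ factor. Composing $\mcM$ with that side actually yields Kraus operators $(\ket{b}\bra{b}HX^x\otimes\mcI)B'_{x,\tau}$, which is \emph{not} the right-hand side of the corollary, so the argument ``append the same CPTP map to both sides'' does not immediately close the proof.

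The missing step, which is precisely the content of the paper's short proof of the corollary, is to argue that the leftover $X^x$ is harmless once followed by $H$ and a standard-basis measurement: using $HX = ZH$, one has $\ket{b}\bra{b}HX^x = \ket{b}\bra{b}Z^xH = (-1)^{bx}\,\ket{b}\bra{b}H$, and the phase $(-1)^{bx}$ cancels in $K\rho K^\dagger$. Only after this observation do the two CPTP maps coincide. So your proof needs one more explicit line; as written, the claimed equality of Kraus operators on the right-hand side is false.
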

\begin{proof}
We begin by applying Lemma \ref{lem:diagonalizing} to obtain the following equality:
\begin{equation}
    \bigg\{\frac{1}{\sqrt{2}}(\ket{b}\bra{b}HZ^r\otimes \mcI)B_{\tau}(Z^r\otimes \mcI)\bigg\}_{b,r\in\{0,1\},\tau} = \{(\ket{b}\bra{b}HX^x\otimes \mcI)B'_{x,\tau}\}_{b,x\in\{0,1\},\tau}
\end{equation}
To prove the corollary, we show that
\begin{equation}\label{eq:diagonalizingmeasproof}
\{(\ket{b}\bra{b}HX^x\otimes \mcI)B'_{x,\tau}\}_{b,x\in\{0,1\},\tau} = \{(\ket{b}\bra{b}H\otimes \mcI)B'_{x,\tau}\}_{b,x\in\{0,1\},\tau}
\end{equation}
This is straightforward and only requires one intermediate step:
\begin{equation}\label{eq:diagonalizingmeasproof1}
\{(\ket{b}\bra{b}HX^x\otimes \mcI)B'_{x,\tau}\}_{b,x\in\{0,1\},\tau} = \{(\ket{b}\bra{b}Z^xH\otimes \mcI)B'_{x,\tau}\}_{b,x\in\{0,1\},\tau}
\end{equation}
The second CPTP map in \eqref{eq:diagonalizingmeasproof1} is equal to the second CPTP map in \eqref{eq:diagonalizingmeasproof}, since a $Z$ operator applied prior to standard basis measurement has no effect, and can be replaced with the identity operator. 
\end{proof}

\subsection{\QPIP\ Definition}
A \QPIP\ is defined as follows:
\begin{deff}\label{def:QPIP}{\textbf{\cite{abe2008}/ \cite{abem}}} A language $\mcL$ is said to have a 
Quantum Prover Interactive Proof (\QPIP$_{\tau}$) with completeness $c$ 
and soundness $s$ (where $c-s$ is at least a constant) if there exists a pair of algorithms $(\Prov,\mathds{V})$, where $\Prov$ is the prover and $\mathds{V}$ is the verifier, 
with the following properties:
\begin{itemize}
\item The prover $\Prov$ is a \BQP\ machine, which also has access to a quantum channel which can transmit $\tau$ qubits. 
\item The verifier $\mathds{V}$ is a hybrid quantum-classical machine. 
Its classical part is a \BPP\ machine. The quantum part is a register of $\tau$ qubits, on which the verifier can perform arbitrary quantum operations and which has access to a quantum channel which can transmit $\tau$ qubits. At any given time, the verifier is not allowed to possess more than $\tau$ qubits. The interaction between the quantum and classical parts of the verifier is the usual one: the classical part controls which operations are to be performed on the quantum register, and outcomes of measurements of the quantum register can be used as input to the classical machine.
\item There is also a classical communication channel between the 
prover and the verifier, which can transmit polynomially many bits at 
any step. 
\item At any given step, either the verifier or the prover perform computations on their registers and send bits and qubits through the relevant channels to the other party.
\end{itemize}
We require:
\begin{itemize}
\item \textbf{Completeness}: if $x\in\mcL$, then after interacting with $\Prov$, $\mathds{V}$ accepts with probability $\ge c$. 
\item \textbf{Soundness}: if $x\notin \mcL$, then the verifier rejects with probability $\ge 1-s$ regardless of the prover $\Prov'$ (who has the same description as $\Prov$) with whom he is interacting. 
\end{itemize} 
\end{deff}
Abusing notation, we denote the class of languages for which such a proof exists also by \QPIP$_{\tau}$.



\section{Function Definitions}\label{sec:functiondefinitions}
The following function definitions are motivated by the function constructions from Learning with Errors (given in Section \ref{sec:lweextended}). Unfortunately, these constructions do not satisfy the ideal function definitions given in the overview (Section \ref{sec:overviewprimitives}); they instead approximately satisfy these definitions. The technicalities of the definitions below arise from the approximate nature of the constructions. For a summary of the differences between the approximate and ideal function definitions, see Section 3 of \cite{oneproverrandomness}. For an intuitive description of the LWE-based function construction, see page 4 of \cite{oneproverrandomness}.

\subsection{Noisy Trapdoor Claw-Free Functions}

\begin{deff}[NTCF Family \cite{oneproverrandomness}]\label{def:trapdoorclawfree}
Let $\lambda$ be a security parameter. Let $\sX$ and $\sY$ be finite sets.
 Let $\mathcal{K}_{\mathcal{F}}$ be a finite set of keys. A family of functions 
$$\mathcal{F} \,=\, \big\{f_{k,b} : \sX\rightarrow \mathcal{D}_{\sY} \big\}_{k\in \mathcal{K}_{\mathcal{F}},b\in\{0,1\}}$$
is called a \textbf{noisy trapdoor claw-free (NTCF) family} if the following conditions hold:

\begin{enumerate}
\item{\textbf{Efficient Function Generation.}} There exists an efficient probabilistic algorithm $\textrm{GEN}_{\mathcal{F}}$ which generates a key $k\in \mathcal{K}_{\mathcal{F}}$ together with a trapdoor $t_k$: 
$$(k,t_k) \leftarrow \textrm{GEN}_{\mathcal{F}}(1^\lambda)\;.$$
\item{\textbf{Trapdoor Injective Pair.}} For all keys $k\in \mathcal{K}_{\mathcal{F}}$ the following conditions hold. 
\begin{enumerate}
\item \textit{Trapdoor}: For all $b\in\{0,1\}$ and $x\neq x' \in \sX$, $\supp(f_{k,b}(x))\cap \supp(f_{k,b}(x')) = \emptyset$. Moreover, there exists an efficient deterministic algorithm $\textrm{INV}_{\mathcal{F}}$ such that for all $b\in \{0,1\}$,  $x\in \sX$ and $y\in \supp(f_{k,b}(x))$, $\textrm{INV}_{\mathcal{F}}(t_k,b,y) = x$. 
\item \textit{Injective pair}: There exists a perfect matching $\sR_k \subseteq \sX \times \sX$ such that $f_{k,0}(x_0) = f_{k,1}(x_1)$ if and only if $(x_0,x_1)\in \sR_k$. \end{enumerate}

\item{\textbf{Efficient Range Superposition.}}
For all keys $k\in \mathcal{K}_{\mathcal{F}}$ and $b\in \{0,1\}$ there exists a function $f'_{k,b}:\sX\mapsto \mathcal{D}_{\sY}$ such that
\begin{enumerate} 
\item For all $(x_0,x_1)\in \mathcal{R}_k$ and $y\in \supp(f'_{k,b}(x_b))$, INV$_{\mathcal{F}}(t_k,b,y) = x_b$ and INV$_{\mathcal{F}}(t_k,b\oplus 1,y) = x_{b\oplus 1}$. 
\item There exists an efficient deterministic procedure CHK$_{\mathcal{F}}$ that, on input $k$, $b\in \{0,1\}$, $x\in \sX$ and $y\in \sY$, returns $1$ if  $y\in \supp(f'_{k,b}(x))$ and $0$ otherwise. Note that CHK$_{\mathcal{F}}$ is not provided the trapdoor $t_k$. 
\item For every $k$ and $b\in\{0,1\}$, 
$$ \Es{x\leftarrow_U \sX} \big[\,H^2(f_{k,b}(x),\,f'_{k,b}(x))\,\big] \,\leq\, \mu(\lambda)\;,$$
 for some negligible function $\mu(\cdot)$. Here $H^2$ is the Hellinger distance; see~\eqref{eq:bhatt}. Moreover, there exists an efficient procedure  SAMP$_{\mathcal{F}}$ that on input $k$ and $b\in\{0,1\}$ prepares the state
\begin{equation}\label{eq:sampledefinitionstate}
    \frac{1}{\sqrt{|\sX|}}\sum_{x\in \sX,y\in \sY}\sqrt{(f'_{k,b}(x))(y)}\ket{x}\ket{y}\;.
\end{equation}
The procedure SAMP$_{\mathcal{F}}$ can be implemented with a unitary operator which acts on $k, b\in\{0,1\}$ and the appropriate number of qubits initialized to $\ket{0}$, and produces the state in \eqref{eq:sampledefinitionstate} along with auxiliary qubits in the state $\ket{0}$.

\end{enumerate}

\item{\textbf{Adaptive Hardcore Bit.}}
For all keys $k\in \mathcal{K}_{\mathcal{F}}$ the following conditions hold, for some integer $w$ that is a polynomially bounded function of $\lambda$. 
\begin{enumerate}
\item For all $b\in \{0,1\}$ and $x\in \sX$, there exists a set $\dset_{k,b,x}\subseteq \{0,1\}^{w}$ such that $\Pr_{d\leftarrow_U \{0,1\}^w}[d\notin \dset_{k,b,x}]$ is negligible, and moreover there exists an efficient algorithm that checks for membership in $\dset_{k,b,x}$ given $k,b,x$ and the trapdoor $t_k$. 
\item There is an efficiently computable injection $\inj:\sX\to \{0,1\}^w$, such that $\inj$ can be inverted efficiently on its range, and such that the following holds. If
\begin{eqnarray*}\label{eq:defsetsH}
H_k &=& \big\{(b,x_b,d,d\cdot(\inj(x_0)\oplus \inj(x_1)))\,|\; b\in \{0,1\},\; (x_0,x_1)\in \mathcal{R}_k,\; d\in \dset_{k,0,x_0}\cap \dset_{k,1,x_1}\big\}\;,\text{\footnotemark}\\
\overline{H}_k &=& \{(b,x_b,d,c)\,|\; (b,x,d,c\oplus 1) \in H_k\big\}\;,
\end{eqnarray*}
\footnotetext{Note that although both $x_0$ and $x_1$ are referred to to define the set $H_k$, only one of them, $x_b$, is explicitly specified in any $4$-tuple that lies in $H_k$.}
then for any quantum polynomial-time procedure $\mathcal{A}$ there exists a negligible function $\mu(\cdot)$ such that 
\begin{equation}\label{eq:adaptive-hardcore}
\Big|\Pr_{(k,t_k)\leftarrow \textrm{GEN}_{\mathcal{F}}(1^{\lambda})}[\mathcal{A}(k) \in H_k] - \Pr_{(k,t_k)\leftarrow \textrm{GEN}_{\mathcal{F}}(1^{\lambda})}[\mathcal{A}(k) \in\overline{H}_k]\Big| \,\leq\, \mu(\lambda)\;.
\end{equation}
\end{enumerate}

\end{enumerate}
\end{deff}


\subsection{Extended Trapdoor Claw-Free Functions}
In this section, we define the extended trapdoor claw-free family we will use in this paper, which is an NTCF family (Definition \ref{def:trapdoorclawfree}) with two additional properties. In order to define an extended trapdoor claw-free family, we must first define a trapdoor injective family. A trapdoor injective family differs from an NTCF family in two ways: the function pairs have disjoint images (rather than perfectly overlapping images) and there is no adaptive hardcore bit condition.

\begin{deff}[Trapdoor Injective Function Family]\label{def:trapdoorinjective}
Let $\lambda$ be a security parameter. Let $\sX$ and $\sY$ be finite sets. Let $\mathcal{K}_{\mathcal{G}}$ be a finite set of keys. A family of functions 
$$\mathcal{G} \,=\, \big\{g_{k,b} : \sX\rightarrow \mathcal{D}_{\sY} \big\}_{b\in\{0,1\},k\in \mathcal{K}_{\mathcal{G}}}$$
is called a \textbf{trapdoor injective family} if the following conditions hold:

\begin{enumerate}
\item{\textbf{Efficient Function Generation.}} There exists an efficient probabilistic algorithm $\textrm{GEN}_{\mathcal{G}}$ which generates a key $k\in \mathcal{K}_{\mathcal{G}}$ together with a trapdoor $t_k$: 
$$(k,t_k) \leftarrow \textrm{GEN}_{\mathcal{G}}(1^\lambda)\;.$$

\item{\textbf{Disjoint Trapdoor Injective Pair.}} For all keys $k\in \mathcal{K}_{\mathcal{G}}$, for all $b, b'\in\{0,1\}$ and $x,x' \in \sX$, if $(b,x)\neq (b',x')$, $\supp(g_{k,b}(x))\cap \supp(g_{k,b'}(x')) = \emptyset$. Moreover, there exists an efficient deterministic algorithm $\textrm{INV}_{\mathcal{F}}$ such that for all $b\in \{0,1\}$,  $x\in \sX$ and $y\in \supp(g_{k,b}(x))$, $\textrm{INV}_{\mathcal{G}}(t_k,y) = (b,x)$.  

\item{\textbf{Efficient Range Superposition.}}
For all keys $k\in \mathcal{K}_{\mathcal{G}}$ and $b\in \{0,1\}$ 
\begin{enumerate} 
 
\item There exists an efficient deterministic procedure CHK$_{\mathcal{G}}$ that, on input $k$, $b\in \{0,1\}$, $x\in \sX$ and $y\in \sY$, outputs $1$ if  $y\in \supp(g_{k,b}(x))$ and $0$ otherwise. Note that CHK$_{\mathcal{G}}$ is not provided the trapdoor $t_k$. 
\item There exists an efficient procedure  SAMP$_{\mathcal{G}}$ that on input $k$ and $b\in\{0,1\}$ returns the state
\begin{equation}
    \frac{1}{\sqrt{|\sX|}}\sum_{x\in \sX,y\in \sY}\sqrt{(g_{k,b}(x))(y)}\ket{x}\ket{y}\;.
\end{equation}

\end{enumerate}

\end{enumerate}
\end{deff}

\begin{deff}[Injective Invariance]\label{def:injectiveinvariant}
A noisy trapdoor claw-free family $\mathcal{F}$ is \textbf{injective invariant} if there exists a trapdoor injective family $\mathcal{G}$ such that:
\begin{enumerate}
\item The algorithms CHK$_{\mathcal{F}}$ and SAMP$_{\mathcal{F}}$ are the same as the algorithms CHK$_{\mathcal{G}}$ and SAMP$_{\mathcal{G}}$.
\item For all quantum polynomial-time procedures $\mathcal{A}$, there exists a negligible function $\mu(\cdot)$ such that
\begin{equation}
\Big|\Pr_{(k,t_k)\leftarrow \textrm{GEN}_{\mathcal{F}}(1^{\lambda})}[\mathcal{A}(k) = 0] - \Pr_{(k,t_k)\leftarrow \textrm{GEN}_{\mathcal{G}}(1^{\lambda})}[\mathcal{A}(k) = 0]\Big|\leq \mu(\lambda)
\end{equation}
\end{enumerate}
\end{deff}

\begin{deff}[Extended Trapdoor Claw-Free Family]\label{def:extendedtrapdoorclawfree}
A noisy trapdoor claw-free family $\mathcal{F}$ is an \textbf{extended trapdoor claw-free family} if:
\begin{enumerate}
\item It is injective invariant.
\item For all $k\in \mathcal{K}_{\mathcal{F}}$, $d\in \{0,1\}^{w}$ and $J,\mathcal{R}_k$ as defined in Definition \ref{def:trapdoorclawfree}, let:
\begin{eqnarray}
H'_{k,d} &=& \{d\cdot (J(x_0)\oplus J(x_1))|  (x_0,x_1)\in\mathcal{R}_k \}
\end{eqnarray}
There exists a string $d\in \{0,1\}^{w}$ such that for all quantum polynomial-time procedures $\mathcal{A}$, there exists a negligible function $\mu(\cdot)$ such that 
\begin{equation}
\Big|\Pr_{(k,t_k)\leftarrow \textrm{GEN}_{\mathcal{F}}(1^{\lambda})}[\mathcal{A}(k) \in H'_{k,d}] - \frac{1}{2}\Big| \leq \mu(\lambda)
\end{equation}

\end{enumerate}
\end{deff}




\section{Measurement Protocol}\label{sec:measprotocol}
We begin by introducing the state commitment process (described in Section \ref{sec:overviewprimitives}) followed by the measurement protocol (given in Section \ref{sec:overviewmeas}). The presentation below is slightly more involved than the overview, since we are not using the perfect trapdoor claw-free/ injective families used in the overview; we are instead using the families given in Definitions \ref{def:trapdoorclawfree}, \ref{def:trapdoorinjective} and \ref{def:extendedtrapdoorclawfree}. After presenting the measurement protocol, we provide notation which will be used throughout the rest of the paper. We conclude this section by proving completeness of our measurement protocol in Lemma \ref{lem:measprotocolcorrectness}.

\subsection{How to Commit}\label{sec:statecommitment}
Here we describe the process of state commitment with an NTCF function (as defined in Definition \ref{def:trapdoorclawfree}). The state commitment process requires a function key $k\in\mathcal{K}_{\mathcal{F}}$ (which corresponds to functions $f_{k,0},f_{k,1}\in\mathcal{F}$) and is performed with respect to the first qubit of an arbitrary state $\ket{\psi}$:
\begin{equation}\label{eq:commitstartingstate}
\ket{\psi} = \sum_{b\in\{0,1\}} \alpha_b\ket{b}\ket{\psi_b}
\end{equation}
The first step of the commitment process is to apply the SAMP$_{\mathcal{F}}$ procedure in superposition, with $k$ and the first qubit containing $b$ as input:
\begin{eqnarray}\label{eq:commitbeforemeas0}
\frac{1}{\sqrt{|\sX|}}\sum_{\substack{b\in\{0,1\} \\x\in \sX,y\in \sY}} \alpha_b\sqrt{f'_{k,b}(x)(y)}\ket{b}\ket{x}\ket{\psi_b}\ket{y}\label{eq:superpositionoverT}
\end{eqnarray}
By condition 3(c) of Definition \ref{def:trapdoorclawfree} and Lemma \ref{lem:hellingertotrace} \eqref{eq:commitbeforemeas0} is within negligible trace distance of the following state:
\begin{eqnarray}\label{eq:commitbeforemeas}
\frac{1}{\sqrt{|\sX|}}\sum_{\substack{b\in\{0,1\} \\x\in \sX,y\in \sY}} \alpha_b\sqrt{f_{k,b}(x)(y)}\ket{b}\ket{x}\ket{\psi_b}\ket{y}
\end{eqnarray}
The second step of the commitment process is to measure the last register, obtaining the \textit{commitment string} $y\in \sY$. Let $x_{b,y} = \textrm{INV}_{\mathcal{F}}(t_k,b,y)$ ($t_k$ is the trapdoor corresponding to the key $k$). The remaining state at this point is within negligible trace distance of the following state
\begin{equation}\label{eq:finalcommitstate}
\sum_{b\in\{0,1\}} \alpha_b\ket{b}\ket{x_{b,y}} \ket{\psi_b}
\end{equation}
The fact that the superposition collapses in this manner is due to both the trapdoor and injective pair conditions in Definition \ref{def:trapdoorclawfree}. The trapdoor condition implies that for each $b$, there can be at most one remaining element $x\in \sX$ in the superposition after measuring $y$. The injective pair condition states that for all $x_{b,y}\in \sX$, there exists exactly one $x_{b\oplus 1,y}\in \sX$ such that $(x_{0,y},x_{1,y})\in\mathcal{R}_k$ (i.e. $f_{k,b}(x_{b,y}) = f_{k,b}(x_{b\oplus 1,y})$). Therefore, if $y\in \supp(f_{k,b}(x_{b,y}))$, it follows that $y\in \supp(f_{k,b\oplus 1}(x_{b\oplus 1,y}))$. We will call the first qubit of \eqref{eq:finalcommitstate} the \textit{committed qubit} and the second register (containing $x_{b,y}$) the \textit{preimage register}. 

\subsubsection{Hadamard Measurement of a Committed State}\label{sec:measofcommitted}

The first step in measuring a committed state in the Hadamard basis is to apply the unitary $U_{\inj}$, which uses the injective map $\inj$ defined in condition 4(b) of Definition \ref{def:trapdoorclawfree}:
\begin{equation}\label{eq:defuinj}
    U_{\inj}(\sum_{b\in\{0,1\}} \alpha_b\ket{b}\ket{x_{b,y}} \ket{\psi_b}\ket{0}^{e_{\inj}}) =  \sum_{b\in\{0,1\}} \alpha_b\ket{b}\ket{\inj(x_{b,y})} \ket{\psi_b}\ket{0}^{e'_{\inj}}
\end{equation}
where the number of auxiliary qubits ($e_{\inj}$ and $e'_{\inj}$) is determined by the map $\inj$. The map $U_{\inj}$ is unitary since $\inj$ is both efficiently computable and efficiently invertible. The second step is to apply the Hadamard transform $H^{\otimes w + 1}$ to the first two registers of the state in \eqref{eq:defuinj}. The resulting state is:
\begin{eqnarray}
\frac{1}{\sqrt{2^w}}\sum_{\substack{d\in\{0,1\}^w\\ b\in \{0,1\}}} \alpha_bX^{d\cdot(\inj(x_{0,y})\oplus \inj(x_{1,y}))}H \ket{b} \otimes (-1)^{d\cdot \inj(x_{0,y})}\ket{d} \otimes \ket{\psi_b}\otimes \ket{0}^{e'_{\inj}}
\end{eqnarray}
The third step is measurement of the preimage register, obtaining a string $d\in\{0,1\}^w$ and resulting in the following state (recall the state $\ket{\psi}$ from \eqref{eq:commitstartingstate}):
\begin{eqnarray}
(X^{d\cdot(\inj(x_{0,y})\oplus \inj(x_{1,y}))}H\otimes \mcI)\ket{\psi} \ket{0}^{e'_{\inj}}
\end{eqnarray}
The final step is measuring the committed qubit to obtain a bit $b'$. The Hadamard measurement result of the first qubit of $\ket{\psi}$ is $b'\oplus d\cdot (\inj(x_{0,y})\oplus \inj(x_{1,y}))\in \{0,1\}$ ($x_{0,y}$ and $x_{1,y}$ can be recovered from $y$ using the trapdoor $t_k$ and the function INV$_{\mathcal{F}}$). 

\subsubsection{How to Commit Using a Trapdoor Injective Family}\label{sec:injectivecommit}
The commitment process described in Section \ref{sec:statecommitment} can also be performed using a key $k\in\mathcal{K}_{\mathcal{G}}$ corresponding to trapdoor injective functions $g_{k,0}, g_{k,1}\in \mathcal{G}$ (see Definition \ref{def:trapdoorinjective}). Prior to measuring $y$ (at the stage of \eqref{eq:commitbeforemeas0}), the state is:
\begin{eqnarray}\label{eq:commitbeforemeasinjective}
\frac{1}{\sqrt{|\sX|}}\sum_{\substack{b\in\{0,1\} \\x\in \sX,y\in \sY}} \alpha_b\sqrt{g_{k,b}(x)(y)}\ket{b}\ket{x}\ket{\psi_b}\ket{y}
\end{eqnarray}
Now the last register is measured to obtain $y\in\sY$. Since the sets $\supp(g_{k,b}(x))$ and $\supp(g_{k,b'}(x'))$ are disjoint for all $(b,x)\neq (b',x')$ (see the trapdoor condition of Definition \ref{def:trapdoorinjective}), $y\in \mathop{\cup}\limits_{x\in \sX}\supp(g_{k,b}(x))$ with probability $|\alpha_b|^2$. Let $(b,x_{b,y}) =\textrm{INV}_{\mathcal{G}}(t_k,y)$. The remaining state after measurement is:
\begin{equation}
\ket{b}\ket{x_{b,y}}\ket{\psi_b}
\end{equation}
Therefore, measuring $y$ acts as a standard basis measurement of the first qubit of the state $\ket{\psi}$ in \eqref{eq:commitstartingstate}. The standard basis measurement $b$ can be obtained (with access to only the trapdoor $t_k$ of $g_{k,0},g_{k,1}$ and $y$) by running the function INV$_{\mathcal{G}}$.

\subsection{Measurement Protocol}\label{sec:lwemeasprotocol}
We now use the commitment process in Section \ref{sec:statecommitment} to construct our measurement protocol for $n$ qubits, where $n$ is polynomial in the security parameter $\lambda$. We require an extended trapdoor claw-free family $\mathcal{F}$ (Definition \ref{def:extendedtrapdoorclawfree}) as well as its corresponding trapdoor injective family $\mathcal{G}$ (Definition \ref{def:trapdoorinjective}). The measurement protocol depends on a string $h\in \{0,1\}^n$, called the basis choice, which represents the basis for which the verifier would like measurement results of the $n$ qubits; $h_i = 0$ indicates the standard basis and $h_i = 1$ indicates the Hadamard basis. We now provide the measurement protocol:
\begin{protocol}{\textbf{Measurement Protocol (for $h\in \{0,1\}^n$)}}\label{prot:measprotocol}
\begin{enumerate}
\item The verifier performs the following operations for $1\leq i\leq n$:
\begin{enumerate}

\item If the standard basis is chosen for qubit $i$ (i.e. $h_i = 0$), the verifier runs GEN$_{\mathcal{G}}(1^{\lambda})$ to produce a function key $k_i\in \mathcal{K}_{\mathcal{G}}$ and its corresponding trapdoor $t_{k_i}$.

\item If the Hadamard basis is chosen for qubit $i$ ($h_i$ = 1), the verifier runs GEN$_{\mathcal{F}}(1^{\lambda})$ to produce a function key $k_i\in \mathcal{K}_{\mathcal{F}}$ and its corresponding trapdoor $t_{k_i}$.
\end{enumerate}
Let $k' = (k_1,\ldots,k_n)$. The verifier sends the function choice $k'$ to the prover.
\item The verifier receives $y' = (y_1,\ldots,y_n)\in \sY^n $ from the prover.
\item The verifier chooses at random to run a test round or a Hadamard round (each is chosen with probability $\frac{1}{2}$).
\item For a test round: 
\begin{enumerate}
\item The verifier asks the prover for standard basis measurements of committed qubit $i$ and preimage register $i$, for $1 \leq i \leq n$. 

\item For $1\leq i\leq n$, the verifier receives a bit $b'_i$ and a string $x'_i\in \sX$. For all $i$ such that $h_i = 0$, the verifier rejects if CHK$_{\mathcal{G}}(k_i,b'_i,x'_i,y_i) = 0$. For all $i$ such that $h_i = 1$, the verifier rejects if CHK$_{\mathcal{F}}(k_i,b'_i,x'_i,y_i) = 0$. 

\end{enumerate}
\item For a Hadamard round:
\begin{enumerate}

\item The verifier asks the prover for Hadamard measurements of committed qubit $i$ and preimage register $i$ for $1\leq i \leq n$. 

\item For $1\leq i\leq n$, the verifier receives a bit $b'_i$ and a string $d_i\in \{0,1\}^w$. 

\item For qubits $i$ for which $h_i$ = 0, the results $(b'_i,d_i)$ are ignored. The verifier computes
\begin{eqnarray}
 (m_i,x_{m_i,y_i}) &=& \textrm{INV}_{\mathcal{G}}(t_{k_i},y_i)
\end{eqnarray} 
If the inverse does not exist, the verifier stores a random bit as the measurement result and rejects. Otherwise the verifier stores $m_i$ as the standard basis measurement result. 

\item For qubits $i$ for which $h_i = 1$, the verifier computes
\begin{eqnarray}
x_{0,y_i} &=& \textrm{INV}_{\mathcal{F}}(t_{k_i},0,y_i)\\
x_{1,y_i} &=&  \textrm{INV}_{\mathcal{F}}(t_{k_i},1,y_i) 
\end{eqnarray} 
If either of the inverses does not exist, the verifier stores a random bit as the measurement result and rejects. The verifier uses $t_{k_i}$ to check if $d_i \in \dset_{k_i,0,x_{0,y_i}}\cap \dset_{k_i,1,x_{1,y_i}}$. If not, the verifier stores a random bit as the measurement result and rejects. Otherwise, the verifier stores $m_i = b'_i\oplus d_i\cdot (\inj(x_{0,y_i})\oplus \inj(x_{1,y_i}))$ as the Hadamard basis measurement result. 

\end{enumerate}
\end{enumerate}
\end{protocol}

\subsubsection{Honest Prover}
We now provide an honest prover's behavior in Protocol \ref{prot:measprotocol}, assuming the prover would like to report measurement results of an $n$ qubit state $\rho$:
\begin{protocol}{\textbf{Honest Prover in Measurement Protocol (for an efficiently computable $n$ qubit state $\rho$)}}\label{prot:measprotocolprover}
\begin{enumerate}
\item The prover creates the state $\rho$. Upon receipt of $k'$ from the verifier, the prover commits to qubit $i$ of $\rho$ using $k_i$ as described in Section \ref{sec:statecommitment}. The prover reports the measurement results $y' = (y_1,\ldots,y_n) \in \sY^n$ obtained from each commitment process to the verifier. 
\item For a test round: 
\begin{enumerate}
\item The prover measures each of the $n$ committed qubits and preimage registers in the standard basis, sending the verifier the resulting bit $b_i'$ and string $x_i'\in \sX$ for $1\leq i\leq n$.
\end{enumerate}
\item For a Hadamard round:
\begin{enumerate}
\item The prover first applies the unitary $U_{\inj}$ to all $n$ preimage registers. The prover then measures each of the $n$ committed qubits and preimage registers in the Hadamard basis, sending the verifier the resulting bit $b_i'$ and string $d_i\in \{0,1\}^w$ for $1\leq i\leq n$.
\end{enumerate}
\end{enumerate}

\end{protocol}

\subsection{Notation}\label{sec:notation}
We now introduce some notation (and provide reminders of previously used notation) which will be useful throughout the rest of the paper.
\begin{enumerate}
\item The string $h\in\{0,1\}^n$ is called the basis choice; $h_i = 0$ indicates the standard basis and $h_i = 1$ indicates the Hadamard basis.
\item We will call $k'$ (produced in step 1 of Protocol \ref{prot:measprotocol}) the \textit{function choice} of the verifier. Let $D_{\Ver,h}$ be the distribution which it is sampled from (this is the distribution produced by GEN$_{\mathcal{F}}$ and GEN$_{\mathcal{G}}$). 
\item A \textit{perfect} prover is a prover who is always accepted by the verifier on the test round. 
\item For a density matrix $\rho$ on $n$ qubits and a string $h\in \{0,1\}^n$, let $D_{\rho,h}$ be the distribution over $\{0,1\}^n$ which results from measuring all qubits of $\rho$ in the basis specified by $h$. 
\item For every prover $\Prov$ and basis choice $h\in \{0,1\}^n$, let $D_{\Prov,h}$ be the distribution over measurement results $m\in\{0,1\}^n$ obtained by the verifier when interacting with $\Prov$ on basis choice $h$ in a Hadamard round. Let $D_{\Prov,h}^C$ be the same distribution, but conditioned on acceptance by the verifier (in a Hadamard round). Let $\sigma_{\Prov,h}$ be the density matrix corresponding to the distribution $D_{\Prov,h}$:
\begin{eqnarray}
\sigma_{\Prov,h} &\EqDef& \sum_{m\in \{0,1\}^n}D_{\Prov,h}(m)\ket{m}\bra{m}\label{eq:verifierstate}
\end{eqnarray}
We will frequently use the fact that for provers $\Prov$ and $\Prov'$, $\sigma_{\Prov,h}$ and $\sigma_{\Prov',h}$ are computationally indistinguishable if and only if $D_{\Prov,h}$ and $D_{\Prov',h}$ are computationally indistinguishable, by definition of computational indistinguishability of distributions (Definition \ref{def:compinddist}) and of density matrices (Definition \ref{def:compind}). Also note that by definition of trace distance in \eqref{eq:deftracedistance} and total variation distance in \eqref{eq:deftotalvariation}:
\begin{eqnarray}\label{eq:equivalenceofdistandstate}
\T{\sigma_{\Prov,h}}{\sigma_{\Prov',h}} = \TV{D_{\Prov,h}}{ D_{\Prov',h}}
\end{eqnarray}

\item As introduced in Section \ref{sec:statecommitment}, a \textit{committed qubit} is the qubit which is used to determine whether to apply $f_{k,0}$ or $f_{k,1}$ (or $g_{k,0}$ or $g_{k,1}$) and the \textit{preimage register} is the register which contains the inverse after the measurement; e.g. in the following state from \eqref{eq:finalcommitstate}:
\begin{equation}
\sum_b\alpha_b\ket{b}\ket{x_{b,y}}\ket{\psi_b}
\end{equation}
the first qubit is the committed qubit and the second register (containing $x_{b,y}$) is the preimage register. The \textit{commitment string} is the string $y\in \sY$.
\end{enumerate}

\subsection{Completeness of Measurement Protocol}\label{sec:measprotocolcorrectness}
\begin{lem}\label{lem:measprotocolcorrectness}{\textbf{Completeness of Measurement Protocol (Protocol \ref{prot:measprotocol})}}
For all $n$ qubit states $\rho$ and for all basis choices $h\in \{0,1\}^n$, the prover $\Prov$ described in Protocol \ref{prot:measprotocolprover} is a perfect prover ($\Prov$ is accepted by the verifier in a test round for basis choice $h$ with perfect probability). There exists a negligible function $\mu$ such that in the Hadamard round for basis choice $h$, the verifier accepts $\Prov$ with probability $\geq 1 - \mu$ and $\TV{D_{\Prov,h}^C}{D_{\rho,h}}\leq \mu$. 
\end{lem}
\begin{proofof}{\Le{lem:measprotocolcorrectness}}
First, assume that the prover could produce the ideal states in the commitment procedure, as written in \eqref{eq:commitbeforemeas} for the Hadamard basis and \eqref{eq:commitbeforemeasinjective} for the standard basis. Call such a prover $\Prov'$. The distribution over measurement results obtained by the verifier when interacting with $\Prov'$ (prior to conditioning on acceptance) is equal to the distribution over measurement results obtained by measuring $\rho$ in the basis specified by $h$, i.e.:  
\begin{equation}\label{eq:idealprovercomputationtrace}
    \TV{D_{\Prov',h}}{D_{\rho,h}} = 0
\end{equation} 
We now return to analyzing the prover $\Prov$ given in Protocol \ref{prot:measprotocolprover}. First note that $\Prov$ is a perfect prover: when measured, the superpositions created by $\Prov$ during the commitment process (in \eqref{eq:superpositionoverT} and \eqref{eq:commitbeforemeasinjective}) pass the CHK procedure perfectly, by definition (see Definition \ref{def:trapdoorclawfree} and Definition \ref{def:trapdoorinjective}). Moving on to the Hadamard round, $\Prov$ is rejected by the verifier only if there exists an $i$ such that the measurement result $d_i$ is not in the set $\dset_{k_i,0,x_{0,y_i}}\cap \dset_{k_i,1,x_{1,y_i}}$ (and $h_i = 1$). The adaptive hardcore bit clause of Definition \ref{def:trapdoorclawfree} (item 4(a)) implies that since $d_i$ is sampled uniformly, there exists a negligible function $\mu_H$ such that the probability that the verifier rejects $\Prov$ in the Hadamard round is at most $\mu_H$:
\begin{equation}\label{eq:honestproveracceptance}
    \TV{D_{\Prov,h}}{D^C_{\Prov,h}} \leq \mu_H
\end{equation} 
Next, observe that the prover $\Prov$ in Protocol \ref{prot:measprotocolprover} can produce the state in \eqref{eq:commitbeforemeasinjective} (which is used by the prover $\Prov'$), but can only create a state within negligible trace distance of the state in \eqref{eq:commitbeforemeas}. It follows that there exists a negligible function $\mu'$ such that:
\begin{equation}\label{eq:tracedistanceidealtohonest}
\TV{D_{\Prov,h}}{D_{\Prov',h}}   \leq \mu' 
\end{equation}
Using the triangle inequality, we obtain: 
\begin{eqnarray}
\TV{D_{\Prov,h}^C}{D_{\rho,h}} &\leq&  \TV{D_{\Prov,h}^C}{ D_{\Prov,h}} + \TV{D_{\Prov,h}}{D_{\Prov',h}} + \TV{D_{\Prov',h}}{D_{\rho,h}}
\end{eqnarray}
We complete the calculation by plugging in \eqref{eq:idealprovercomputationtrace}, \eqref{eq:honestproveracceptance} and \eqref{eq:tracedistanceidealtohonest}:
\begin{eqnarray}
\TV{D_{\Prov,h}^C}{D_{\rho,h}}  &\leq& \mu_H + \mu'
\end{eqnarray}
To complete the proof of the claim, set $\mu = \mu_H + \mu'$.
\end{proofof}

\section{Measurement Protocol Soundness}\label{sec:measprotocolsoundnessoverall}

 This section covers the soundness of the measurement protocol. We begin, in Section \ref{sec:proverbehavior}, by characterizing the behavior of a general prover, as in Section \ref{sec:overviewproverbehavior} of the overview. In Section \ref{sec:underlyingstate}, we show that if this characterization satisfies a certain requirement (i.e. the prover is \textit{trivial}), then the measurement results provided by the prover are actually consistent with an underlying quantum state, therefore satisfying the soundness requirement of a measurement protocol (this section corresponds to Section \ref{sec:overviewstateexistenceproof} of the overview). Section \ref{sec:generaltotrivialhadamard} deals with the crux of the soundness argument (as outlined in Section \ref{sec:overviewreductiontotrivial} of the overview): showing that a general attack of a prover can be replaced with an attack which commutes with standard basis measurement (thus representing a trivial prover). Finally, Section \ref{sec:measprotocolsoundness} provides a proof of soundness by combining the results from Sections \ref{sec:proverbehavior} to \ref{sec:generaltotrivialhadamard}.

\subsection{Prover Behavior}\label{sec:proverbehavior}
We now give a claim which characterizes the behavior of a general prover in Protocol \ref{prot:measprotocol} (the overview of this claim and its proof are given in Section \ref{sec:overviewproverbehavior}). The only difference between the following claim and the version given in the overview is the inclusion of the operator $U_{\inj}$ (as defined in \eqref{eq:defuinj}): 
\begin{claim}{\textbf{Prover Behavior}}\label{cl:generalprover}
For all \BQP\ provers $\Prov$ in Protocol \ref{prot:measprotocol}, there exist two efficiently computable unitary operators $U_0,U$ and a prover $\Prov'$ (described below) such that for all basis choices $h\in \{0,1\}^n$, $\Prov$ and $\Prov'$ are accepted by the verifier with the same probability in a test round and the distribution over measurement results $D_{\Prov,h}$  produced by the prover and verifier as a result of Protocol \ref{prot:measprotocol} is equal to the distribution $D_{\Prov',h}$ corresponding to the prover $\Prov'$. We say that $\Prov$ is characterized by $(U_0,U)$.  
\begin{enumerate}
\item $\Prov'$ designates his first $n$ qubits as committed qubits, the next $n$ registers as preimage registers and the final $n$ registers as commitment string registers. All other registers contain auxiliary space.
\item Upon receipt of the function choice $k'$ from the verifier, the prover $\Prov'$ applies $U_0$ to his initial state:
\begin{equation}
\ket{0}^{\otimes e}\otimes \ket{k'}
\end{equation}
where $e$ is determined by $(U_0, U)$ and $U_0$ uses the last register (containing $k'$) as a control register; i.e. there exists a unitary $U_{0,k'}$ such that
\begin{equation}\label{eq:u0trivialcontrol}
U_0(\ket{0}^{\otimes e}\otimes \ket{k'}) = U_{0,k'}(\ket{0}^{\otimes e}) \otimes \ket{k'}
\end{equation}
\item $\Prov'$ measures the  commitment string registers to obtain $y' = (y_1,\ldots,y_n)\in \sY^n$, which is sent to the verifier. 
\item For a Hadamard round:
\begin{enumerate}
    \item $\Prov'$ appends $e_{\inj}\cdot n$ auxiliary 0 qubits to his state and applies the unitary $U_{\inj}$ to all $n$ preimage registers, followed by application of the unitary $U$ to his entire state. 
    \item $\Prov'$ measures the $n$ committed qubits and preimage registers in the Hadamard basis. $\Prov'$ sends the verifier the resulting bit $b_i'$ and string $d_i\in \{0,1\}^w$ for $1\leq i\leq n$.
    \end{enumerate}
\item For a test round, $\Prov'$ measures each of the $n$ committed qubits and preimage registers in the standard basis, sending the verifier the resulting bit $b_i'$ and string $x_i'\in \sX$ for $1\leq i\leq n$.  

\end{enumerate}
\end{claim}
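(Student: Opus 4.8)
The plan is to prove Claim~\ref{cl:generalprover} in three stages: reduce an arbitrary \BQP\ prover to a canonical unitary form, use a commutation argument to absorb the test-round attack into the commitment round, and then check that the resulting prover is exactly the $\Prov'$ described in the statement and reproduces the whole transcript distribution.

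\textbf{Canonical form.} First I would invoke the standard deferred-measurement principle for interactive protocols: since $\Prov$ receives no input before the protocol, we may assume it starts in the all-zero state, and that in each round it applies a unitary to its workspace (together with the verifier's last message, used as a control) and then performs exactly the designated measurements of the honest prover of Protocol~\ref{prot:measprotocolprover}, sending the outcomes; deferring intermediate measurements only polynomially increases the ancilla and gate count, so if $\Prov$ is \BQP\ the resulting unitaries are efficiently computable. This makes $\Prov$ characterized by three efficiently computable unitaries: $U_C$ applied in the commitment round (after which the commitment-string registers are measured and sent), $U_T$ applied in the test round (after which the committed-qubit and preimage registers are measured in the standard basis), and $U_H$ applied in the Hadamard round (after which $U_{\inj}$ is applied to the preimage registers and the committed-qubit and preimage registers are measured in the Hadamard basis). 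Since the commitment-string registers have already been measured and their value may be copied into the workspace, we may take $U_T$ and $U_H$ to act trivially on them.

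\textbf{Absorbing $U_T$.} This is the heart of the argument. Because $U_T$ acts trivially on the commitment-string registers, it commutes with the projective measurement of those registers in the standard basis; hence on any state the joint law of the commitment string $y'$ together with the outcomes of any later measurement that $U_T$ precedes is unchanged whether one first measures the commitment-string registers and then applies $U_T$ or first applies $U_T$ and then measures. Since the round type is chosen only after $y'$ has been sent, I can move $U_T$ to immediately before the commitment-string measurement at no cost, provided it is undone at the start of the Hadamard round. Concretely, set $U_0 \EqDef U_T U_{C,0} U_C$ where $U_{C,0}$ is the honest commitment unitary; $U_0$ is efficiently computable, and since $U_C,U_{C,0},U_T$ are all trivial on (and controlled by) the $k'$ register, so is $U_0$, giving \eqref{eq:u0trivialcontrol}. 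For the Hadamard round define $U \EqDef U_{\inj} U_H U_T^\dagger U_{\inj}^\dagger$ (with $U_H,U_T$ extended trivially to the ancillas introduced by $U_{\inj}$ as in \eqref{eq:defuinj}), so that applying $U_{\inj}$ followed by $U$ to the post-$U_T$ state reproduces applying $U_H$ followed by $U_{\inj}$ to the pre-$U_T$ state; again $U$ is a composition of efficiently computable unitaries.

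\textbf{Matching the transcripts.} Finally I would verify that $\Prov'$ as specified — apply $U_0$ to $\ket{0}^{\otimes e}\otimes\ket{k'}$, measure and send the commitment-string registers, then either (test round) measure the committed-qubit and preimage registers in the standard basis, or (Hadamard round) append the $U_{\inj}$-ancillas, apply $U_{\inj}$, apply $U$, and measure the committed-qubit and preimage registers in the Hadamard basis — produces, for each fixed $(k',t_{k'})$, the same joint distribution of $(y',\text{second message})$ as the canonical $\Prov$. For a test round this is immediate from the commutation identity, so the CHK-based acceptance event has the same probability for $\Prov$ and $\Prov'$ on every $h$. For a Hadamard round, sliding $U_T$ back through the commitment-string measurement shows $\Prov'$'s post-measurement operations are $U\circ U_{\inj}\circ U_T = U_{\inj}\circ U_H$, exactly the canonical prover's; hence the joint law of $(k',y',b',d)$ agrees, and since the verifier's stored outcome $m$ is a fixed (trapdoor-dependent) function of this data, averaging over $(k',t_{k'})\leftarrow D_{\Ver,h}$ gives $D_{\Prov,h}=D_{\Prov',h}$. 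The only genuinely subtle point is the absorption step — arguing cleanly that an attack living in a branch chosen \emph{after} a message is sent can be pulled back before that message, and reconciling this with what is forced in the other branch — where one must be careful about operator ordering and the ancilla register created by $U_{\inj}$; everything else is routine bookkeeping about designated registers.
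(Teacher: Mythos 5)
Your proposal is correct and follows essentially the same argument as the paper: reduce to the canonical unitary-prover form with attacks $U',U_T,U_H$ (the paper's $U_C$), use that $U_T$ acts trivially on the commitment-string registers to commute it past the $y'$ measurement, and then conjugate by $U_{\inj}$ to land on $U_0 = U_TU'$ and $U = U_{\inj}U_HU_T^\dagger U_{\inj}^\dagger$. The only cosmetic difference is that you factor the commitment-round unitary as $U_{C,0}U_C$ (following the overview's notation) where the formal proof lumps it into a single $U'$; the underlying computation is identical.
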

\begin{notation}\label{notation:superoperator}
We will also frequently say that a prover $\Prov$ is characterized by two CPTP maps $(\mathcal{S}_0,\mathcal{S})$. This means that for all basis choices $h\in \{0,1\}^n$, $\Prov$ and $\Prov'$ are accepted with the same probability in a test round and $D_{\Prov,h} = D_{\Prov',h}$, and the prover $\Prov'$ follows steps 1 - 5 in Claim \ref{cl:generalprover}, but uses the CPTP maps $\mathcal{S}_0,\mathcal{S}$ rather than the unitary operators $U_0,U$. 
\end{notation}
\begin{proofof}{\Cl{cl:generalprover}}
We will follow the principle given in Section \ref{sec:overviewproverbehavior}: a general prover is equivalent from the verifier's perspective to a prover $\Prov$ who begins each round by applying an arbitrary unitary attack and then behaves honestly. The first implication of the principle is that $\Prov$ measures the same registers as an honest prover; therefore, like the honest prover, $\Prov$ designates the first $n$ qubits as committed qubits, the next $n$ registers as preimage registers, and the final $n$ registers as commitment string registers. All other registers of $\Prov$ contain the auxiliary space. 

The second implication is that there exist unitary operators $U',U_T$ and $U_C$ such that $\Prov$ acts as follows. $\Prov$ begins with the initial state $\ket{0}^{\otimes e}\otimes \ket{k'}$ and then applies a unitary operator $U'$ to his state, followed by standard basis measurement of the commitment string registers to obtain $y'$. If the verifier chooses a test round, the prover applies another unitary $U_T$ followed by standard basis measurements of the committed qubit and preimage registers to obtain the requested measurement results. If the verifier chooses a Hadamard round, the prover first appends $e_{\inj}\cdot n$ auxiliary 0 qubits to his state. Next, the prover applies a unitary $U_C$ to his state. He then applies the unitary $U_{\inj}$ (see Section \ref{sec:measofcommitted}) to all $n$ preimage registers. Finally, the prover measures all $n$ committed qubits and preimage registers in the Hadamard basis to obtain the requested measurement results. We can assume both $U_T$ and $U_C$ do not act on the register containing $y'$. This is because $y'$ could have been copied into the prover's auxiliary space prior to measurement, and $U_T$ and $U_C$ can instead act on this space. It follows that both $U_T$ and $U_C$ commute with the measurement of $y'$. 

To obtain the attacks $U_0$ and $U$ which characterize $\Prov$, we make two changes. First, we use the fact that $U_T$ commutes with measurement of $y'$ to shift it prior to the measurement. Due to this change, we also need to append $U_T^\dagger$ to the start of the Hadamard round attack. Our second change is to shift the unitary $U_{\inj}$ so that it is prior to the Hadamard round attack; this can be done by conjugating the attack by $U_{\inj}$. It follows that if we let $U_0 = U_TU'$, $U = U_{\inj}^{\otimes n}U_CU_T^\dagger (U_{\inj}^{\otimes n})^\dagger$ and consider the prover $\Prov'$ described in the statement of Claim \ref{cl:generalprover} (with respect to $U_0$ and $U$), $\Prov$ and $\Prov'$ are accepted with the same probability in a test round and $D_{\Prov,h} = D_{\Prov',h}$ for all basis choices $h$.  

\end{proofof}

\subsection{Construction of Underlying Quantum State}\label{sec:underlyingstate}
We require the following definition: 
\begin{deff}{\textbf{Trivial Prover}}\label{def:trivial}
A perfect prover $\Prov$ in Protocol \ref{prot:measprotocol} characterized by $(U_0,\mathcal{S})$ (where $U_0$ is a unitary, $\mathcal{S}$ is a CPTP map and both are efficiently computable) is called trivial if $\mathcal{S}$ commutes with standard basis measurement on the first $n$ qubits.
\end{deff}
In this section, we prove that, for trivial provers, the distribution over measurement results produced in Protocol \ref{prot:measprotocol} actually corresponds to the measurement of a quantum state; in other words, there exists a quantum state underlying the measurement results (the overview of this claim is given in Section \ref{sec:overviewstateexistenceproof}):
\begin{lem}\label{lem:trivialprover}
For all trivial provers $\Prov$, there exists an $n$ qubit state $\rho$ (which can be created using a \BQP\ circuit) such that for all $h\in\{0,1\}^n$, the distribution over measurement results $D_{\Prov,h}$ produced in Protocol \ref{prot:measprotocol} with respect to $\Prov$ for basis choice $h$ is computationally indistinguishable from the distribution $D_{\rho,h}$ which results from measuring $\rho$ in the basis determined by $h$.  
\end{lem}

\begin{proof}
For a unitary $U_0$ and CPTP map $\mathcal{S}$, let the prover $\Prov$ be characterized by $(U_0,\mathcal{S})$. The state $\rho$ is constructed as follows: 
\begin{protocol}{\textbf{Construction of $\rho$ corresponding to $\Prov$}}\label{prot:rhoconstruction}
\begin{enumerate}
    \item For $1\leq i\leq n$: sample $(k_i, t_{k_i})\leftarrow \textrm{GEN}_{\mathcal{F}}(1^{\lambda})$.
    \item Follow steps 1-4(a) in Claim \ref{cl:generalprover} (with respect to $U_0, \mathcal{S}$).
    \item Measure all preimage registers in the Hadamard basis to obtain $d_1,\ldots,d_n\in\{0,1\}^w$.
    \item For $1\leq i\leq n$, use the trapdoor $t_{k_i}$ to apply $Z^{d_i\cdot (x_{0,y_i}\oplus x_{1,y_i})}$ to the $i^{th}$ committed qubit.
    \item Trace out all qubits except $n$ committed qubits. 
\end{enumerate}
\end{protocol}
We now argue that, for all $h\in\{0,1\}^n$, $D_{\rho,h}$ is computationally indistinguishable from $D_{\Prov,h}$. We will proceed through two families of hybrid states which are dependent on the basis choice $h$. In the first family $\{\rho_h^{(1)}\}_{h\in\{0,1\}^n}$, we simply remove the $Z$ decoding operator (step 4 of Protocol \ref{prot:rhoconstruction}) if $h_i = 0$. This also eliminates the need for the trapdoor $t_{k_i}$ if $h_i = 0$:
\begin{protocol}{\textbf{Construction of $\rho_h^{(1)}$ corresponding to $\Prov$}}\label{prot:rhoconstruction1}
\begin{enumerate}
    \item For $1\leq i\leq n$: sample $(k_i, t_{k_i})\leftarrow \textrm{GEN}_{\mathcal{F}}(1^{\lambda})$. If $h_i = 0$, discard the trapdoor $t_{k_i}$. 
    \item Apply steps 2-3 of Protocol \ref{prot:rhoconstruction}.
    \item For $1\leq i\leq n$, if $h_i = 1$, use the trapdoor $t_{k_i}$ to apply $Z^{d_i\cdot (x_{0,y_i}\oplus x_{1,y_i})}$ to the $i^{th}$ committed qubit.
    \item Trace out all qubits except the $n$ committed qubits. 
\end{enumerate}
\end{protocol}
The distributions $D_{\rho,h}$ and $D_{\rho_h^{(1)},h}$ differ only on $i$ for which $h_i = 0$. To address this difference, note that if $h_i = 0$, the $Z$ operator applied in step 4 of Protocol \ref{prot:rhoconstruction} has no effect on $D_{\rho,h}$: to obtain $D_{\rho,h}$ the $i^{th}$ committed qubit is measured in the standard basis immediately after application of the $Z$ operator. Therefore, $D_{\rho,h} = D_{\rho_h^{(1)},h}$ for all $h$.

Our next hybrid is:
\begin{protocol}{\textbf{Construction of $\rho_h^{(2)}$ corresponding to $\Prov$}}\label{prot:rhoconstruction2}
\begin{enumerate}
    \item For $1\leq i\leq n$: if $h_i = 1$, sample $(k_i, t_{k_i})\leftarrow \textrm{GEN}_{\mathcal{F}}(1^{\lambda})$. If $h_i = 0$, sample $(k_i, t_{k_i})\leftarrow \textrm{GEN}_{\mathcal{G}}(1^{\lambda})$ and discard the trapdoor $t_{k_i}$.
    \item Apply steps 2-4 of Protocol \ref{prot:rhoconstruction1}.
\end{enumerate}
\end{protocol}
The computational indistinguishability of $D_{\rho_h^{(1)},h}$ and $D_{\rho_h^{(2)},h}$ follows due to the injective invariance (Definition \ref{def:injectiveinvariant}) of $\mathcal{F}$ with respect to $\mathcal{G}$: as long as the trapdoor $t_{k_i}$ is unknown, a key $k_i$ sampled from $\mathcal{K}_{\mathcal{F}}$ is computationally indistinguishable from a key $k_i$ sampled from $\mathcal{K}_{\mathcal{G}}$. We can apply this argument for all $i$ such that $h_i = 0$ since the trapdoor $t_{k_i}$ was discarded for all such $i$ in Protocols \ref{prot:rhoconstruction1} and \ref{prot:rhoconstruction2}.

We have so far shown that $D_{\rho,h}$ is computationally indistinguishable from $D_{\rho_h^{(2)},h}$ for all $h\in\{0,1\}^n$. To complete our proof, we now show that $D_{\rho_h^{(2)},h} = D_{\Prov,h}$. The two distributions differ as follows: if $h_i = 0$, the distribution of the $i^{th}$ bit of $D_{\Prov,h}$ is obtained from the commitment string $y_i$ (see step 5(c) of Protocol \ref{prot:measprotocol}), but the distribution  of the $i^{th}$ bit of $D_{\rho_h^{(2)},h}$ is obtained from measuring the $i^{th}$ committed qubit of $\rho_h^{(2)}$ in the standard basis. 

To see that these two distributions are equal, we begin by observing that since the prover $\Prov$ is perfect, if $h_i = 0$, measuring the $i^{th}$ committed qubit prior to the attack $\mathcal{S}$ (i.e. at the start of the Hadamard round) results in the same outcome as extracting the measurement outcome from $y_i$. To complete our proof, recall that since the prover is trivial, the attack $\mathcal{S}$ commutes with standard basis measurement.  
\end{proof}




\subsection{Replacement of a General Attack with an X-Trivial Attack for Hadamard Basis}\label{sec:generaltotrivialhadamard}
In the previous section (Section \ref{sec:underlyingstate}), we showed that for trivial provers, there exists a quantum state underlying the measurement results produced in Protocol \ref{prot:measprotocol}; in other words, such provers satisfy the soundness condition of a measurement protocol. To prove soundness, we must prove the same result, i.e. the existence of an underlying quantum state, for all perfect provers (provers who passes the test round of Protocol \ref{prot:measprotocol} with perfect probability). To do so, we will show that for every perfect prover, there exists a trivial prover who produces computationally indistinguishable measurement results in Protocol \ref{prot:measprotocol}. This statement, described in Section \ref{sec:overviewreductiontotrivial} of the overview and stated formally in in Section \ref{sec:measprotocolsoundness}, Claim \ref{cl:perfecttotrivialprover}, is quite easy to see for standard basis measurement results, i.e. when $h = 0$ (this will also be shown in Section \ref{sec:measprotocolsoundness}, Claim \ref{cl:reductiontosinglestandard}). The difficulty lies in proving indistinguishablility in the case that $h = 1$, which is stated in the following lemma and will be the focus of this section: 
\begin{lem}{\textbf{General to $X$-Trivial Attack for Hadamard Basis}}\label{lem:singlesecurity}
Let $1\leq j\leq n$. Let $\mathcal{S} = \{B_{\tau}\}_{\tau}$ and $\mathcal{S}_j = \{B'_{j,x,\tau}\}_{x\in \{0,1\},\tau}$ be CPTP maps written in terms of their Kraus operators: 
\begin{equation}
B_{\tau} = \sum_{x,z\in\{0,1\}} X^xZ^z\otimes B_{jxz\tau}
\end{equation}
\begin{equation}
B'_{j,x,\tau} = \sum_{z\in\{0,1\}} Z^z\otimes B_{jxz\tau}
\end{equation}
where $B_{\tau}$ and $B'_{j,x,\tau}$ have been rearranged so that $X^xZ^z$ and $Z^z$ act on the $j^{th}$ qubit. For a unitary operator $U_0$, let $\Prov$ be a perfect prover characterized by $(U_0,\mathcal{S})$ (see Claim \ref{cl:generalprover} and notation \ref{notation:superoperator}). Let $\Prov_j$ be a perfect prover characterized by $(U_0,\mathcal{S}_j)$. If $h_j = 1$, $D_{\Prov,h}$ and $D_{\Prov_j,h}$ are computationally indistinguishable. 
\end{lem}
The overview of the proof of Lemma \ref{lem:singlesecurity} is given in Section \ref{sec:overviewreductiontotrivial}. Lemma \ref{lem:singlesecurity} is slightly more general than the statement in the overview: $n$ does not have to be equal to 1, and we are proving that we can replace the attack $\mathcal{S}$ with an attack which acts trivially on any one of the committed qubits $j$ for which $h_j = 1$. We begin by writing out the state $\sigma_{\Prov,h}$ which corresponds to the distribution $D_{\Prov,h}$  (as defined in \eqref{eq:verifierstate}). This requires some care, since we need to go through the steps of Protocol \ref{prot:measprotocol} in order to construct $\sigma_{\Prov,h}$. Once we write down the state $\sigma_{\Prov,h}$, we can proceed to proving computational indistinguishability between $\sigma_{\Prov,h}$ and $\sigma_{\Prov_j,h}$. As written below \eqref{eq:verifierstate}, proving computational indistinguishability between $\sigma_{\Prov,h}$ and $\sigma_{\Prov_j,h}$ is equivalent to proving indistinguishability between $D_{\Prov,h}$ and $D_{\Prov_j,h}$.

\begin{proofof}{\textbf{Lemma \ref{lem:singlesecurity}}} 
 We will assume for convenience that $j = 1$; the proof for all other values of $j$ is equivalent. To analyze the state $\sigma_{\Prov,h}$, we will assume that $\Prov$ follows steps 1-4 in Claim \ref{cl:generalprover}. We can do this since $\Prov$ is characterized by $U_0,\mathcal{S}$; therefore, the state $\sigma_{\Prov,h}$ can be obtained by following the steps in Claim \ref{cl:generalprover}.

We first provide some notation we will require. Let $k = k_1\in\mathcal{K}_{\mathcal{F}}$ be the first function key received by the prover $\Prov$ in Protocol \ref{prot:measprotocol}. Throughout this proof, we will only be focusing on the first committed qubit (since $j = 1$). Therefore, for notational convenience, we will drop the subscript of 1 for values pertaining to the first committed qubit (i.e. the basis choice, function key, commitment string, etc.). Let $h_{> 1} = (h_2,\ldots,h_n)$, let $k_{> 1} = (k_2,\ldots,k_n)$ and define $t_{k> 1}$ similarly. We will also require the following mixed state, which contains the distribution over all function keys and trapdoors except the first (as sampled by the verifier). 
\begin{equation}\label{eq:statewithotherkeys}
    \sum_{k_{> 1}}D_{\Ver,h_{> 1}}\ket{k_{> 1}}\bra{k_{> 1}} \otimes \ket{t_{k> 1}}\bra{t_{k> 1}}
\end{equation}
This mixed state is required to create $\sigma_{\Prov,h}$: the function keys are part of the prover's input and the trapdoors are used for the verifier's decoding. For convenience, let $\ket{\phi_{k>1}}$ be a purification of the above state; when analyzing the state $\sigma_{\Prov,h}$, we can consider a purification since we will eventually be tracing out all but the committed qubits. For $b\in\{0,1\}$, let $T_{k,b} = \mathop{\cup}\limits_{x\in\sX}\supp( f'_{k,b}(x))$ ($\supp( f'_{k,b}(x))$ is the support of the probability density function $f'_{k,b}(x)$ - see Definition \ref{def:trapdoorclawfree} for a reminder). Let $T_k = T_{k,0}\cup T_{k,1}$. 

We begin by writing the state of $\Prov$ after application of $U_0$. Recall from Claim \ref{cl:generalprover} that when the verifier requests test round measurement results from $\Prov$, $\Prov$ simply measures the requested registers in the standard basis and sends the results to the verifier. Since $\Prov$ is a perfect prover, it follows that the state of $\Prov$ after applying $U_0$ must yield measurement results which pass the test round perfectly. The state of $\Prov$ after applying $U_0$ can therefore be written as:  
\begin{equation}\label{eq:afterapplyingu0beforesimplify}
U_0\ket{0}^{\otimes e}\ket{k}\ket{\phi_{k > 1}} = \sum_{\substack{b\in \{0,1\}\\ y\in T_{k,b}}} \ket{b,x_{b,y}}\ket{\psi_{b,y,k}}\ket{y}
\end{equation}
where $x_{b,y}$ is the output of INV$_{\mathcal{F}}(t_k,b,y)$. We have suppressed the dependence of $x_{b,y}$ on $k$ for convenience. The state in \eqref{eq:afterapplyingu0beforesimplify} can be written in this format since if the prover returns $y\in \sY$ in the commitment stage (step 2 of Protocol \ref{prot:measprotocol}), in the test round he must return $b\in \{0,1\}$ and $x\in \sX$ such that CHK$_{\mathcal{F}}(t_k,b,x,y) = 1$. Conditions 3(a) and 3(b) of Definition \ref{def:trapdoorclawfree} imply that only $x_{b,y} = \textrm{INV}_{\mathcal{F}}(t_k,b,y)$ satisfies this condition. The auxiliary register represented by $\ket{\psi_{b,y,k}}$ includes the remaining $n - 1$ committed qubits, preimage registers and commitment strings as well as the state $\ket{\phi_{k > 1}}$. 

For convenience, we will instead write the state in \eqref{eq:afterapplyingu0beforesimplify} as follows:
\begin{equation}\label{eq:afterapplyingu0}
U_0\ket{0}^{\otimes e}\ket{k}\ket{\phi_{k > 1}} = \sum_{\substack{b\in \{0,1\}\\ y\in T_k}} \ket{b,x_{b,y}}\ket{\psi_{b,y,k}}\ket{y}
\end{equation}
The only change we have made is we have replaced the summation over $y\in T_{k,b}$ with the summation over $y\in T_k = T_{k,0}\cup T_{k,1}$. Note that the existence of two inverses of $y$ ($x_{0,y}$ and $x_{1,y}$) is guaranteed since $y\in T_k$ - see Definition \ref{def:trapdoorclawfree}, item 3(a). For $b,y$ for which $y\notin T_{k,b}$, let $\ket{\psi_{b,y,k}} = 0$. 

After the prover measures $y$ and sends it to the verifier, the state shared between the prover and verifier is: 
\begin{equation}
\sum_{y\in T_k} (\sum_{b\in \{0,1\}}\ket{b,x_{b,y}}\ket{\psi_{b,y,k}})(\sum_{b\in \{0,1\}}\ket{b,x_{b,y}}\ket{\psi_{b,y,k}})^\dagger\otimes \ket{y}\bra{y}
\end{equation}
and the last register (containing $y$) is held by the verifier. Next, the prover applies the injective map $U_{\inj}$ (see Section \ref{sec:measofcommitted}) to all $n$ preimage registers along with auxiliary 0 qubits, which we assume have already been included in the extra space $\ket{\psi_{b,y,k}}$. At this point, the state shared between the prover and verifier is:
\begin{equation}\label{eq:stateaftercommitment}
\rho_k = \sum_{y\in T_k} \rho_{y,k} 
\end{equation}
where
\begin{equation}\label{eq:stateforintersection}
\rho_{y,k} = (\sum_{b\in \{0,1\}}\ket{b,\inj(x_{b,y})}\ket{\psi'_{b,y,k}})(\sum_{b\in \{0,1\}}\ket{b,\inj(x_{b,y})}\ket{\psi'_{b,y,k}})^\dagger\otimes \ket{y}\bra{y}
\end{equation}
The auxiliary space $\ket{\psi_{b,y,k}}$ has changed to $\ket{\psi'_{b,y,k}}$ to account for the fact that the commitment strings corresponding to indices $i > 1$ were measured and the unitary $U_{\inj}$ was applied to the corresponding preimage registers in the auxiliary space (we are considering a purification of the auxiliary space for convenience). 

The prover then applies his CPTP map $\mathcal{S} = \{B_{\tau}\}_{\tau}$ followed by Hadamard basis measurement of the first committed qubit and preimage register of the state in \eqref{eq:stateaftercommitment}. The state shared between the prover and verifier at this point is:
\begin{equation}
   \sum_{\substack{b'\in \{0,1\}, \tau\\ d\in\{0,1\}^w}} (\ket{b'}\bra{b'}\otimes \ket{d}\bra{d}\otimes \mcI) (H^{\otimes l+1}\otimes \mcI)B_{\tau}\rho_{k}B_{\tau}^\dagger(H^{\otimes l+1}\otimes \mcI)^\dagger (\ket{b'}\bra{b'}\otimes \ket{d}\bra{d}\otimes \mcI)^\dagger
\end{equation}
Next, if the measurement result $d\in \dset_{k,0,x_{0,y}} \cap \dset_{k,1,x_{1,y}}$, the verifier decodes the first qubit by applying the operator $X^{d\cdot (\inj(x_{0,y})\oplus \inj(x_{1,y}))}$ (see step 5(d) of Protocol \ref{prot:measprotocol}). If not, the verifier stores a random bit as his measurement result; we can equivalently assume the verifier decodes the first qubit by applying a random $X$ operator. Note that the verifier's decoding (the application of the $X$ operator) commutes with the prover's measurement of the first qubit. Therefore, the entire state, including the verifier's decoding, can be written as:
\begin{equation}
\sigma_{0,k} = \sum_{\substack{b',c\in \{0,1\}, \tau\\ d\in\{0,1\}^w,y\in R_{c,d,k}}} \delta_{d,y}(\ket{b'}\bra{b'}X^{c}\otimes \ket{d}\bra{d}\otimes \mcI) (H^{\otimes l+1}\otimes \mcI)B_{\tau}\rho_{y,k}B_{\tau}^\dagger(H^{\otimes l+1}\otimes \mcI)^\dagger (\ket{b'}\bra{b'}X^{c}\otimes \ket{d}\bra{d}\otimes \mcI)^\dagger\label{eq:entire0}
\end{equation}
where $\delta_{d,y} = \frac{1}{2}$ if $d\notin \dset_{k,0,x_{0,y}} \cap \dset_{k,1,x_{1,y}}$ and $1$ if $d\in \dset_{k,0,x_{0,y}} \cap \dset_{k,1,x_{1,y}}$ and 
\begin{equation}\label{eq:defsetr}
R_{c,d,k} = \{y\in T_k | (d\in \dset_{k,0,x_{0,y}} \cap \dset_{k,1,x_{1,y}} \land d\cdot (\inj(x_{0,y})\oplus \inj(x_{1,y}))= c) \lor (d\notin \dset_{k,0,x_{0,y}} \cap \dset_{k,1,x_{1,y}})\}
\end{equation}
The set $R_{c,d,k}$ is defined to ensure that the verifier's decoding operator is correct. For ease of notation, we will instead write the state in \eqref{eq:entire0} as:
\begin{equation}\label{eq:entire}
\sigma_{0,k} = \sum_{\substack{b',c\in \{0,1\},\tau\\ d\in\{0,1\}^w,y\in R_{c,d,k}}}   \delta_{d,y}O_{b',c,d,\tau} \rho_{y,k}  O_{b',c,d,\tau}^\dagger
\end{equation}
where
\begin{eqnarray}\label{eq:defsuperoperatorO}
O_{b',c,d,\tau} &=& (\ket{b'}\bra{b'}X^c \otimes \ket{d}\bra{d}\otimes \mcI )(H^{\otimes l + 1}\otimes \mcI) B_{\tau}
\end{eqnarray}
Let $\mathcal{S}_{> 1}$ be the CPTP map which contains all operations done on the remaining $n - 1$ committed qubits and preimage registers after application of the attack $\mathcal{S}$: $\mathcal{S}_{> 1}$ consists of the Hadamard measurement of the remaining $n - 1$ committed qubits and preimage registers as well as the verifier decoding of those committed qubits. $\mathcal{S}_{> 1}$ is independent of the function key $k$ and trapdoor $t_k$; it is only dependent on the remaining $n - 1$ function keys and trapdoors, which are drawn independently and included in the auxiliary space of $\sigma_{0,k}$. Given this, the state $\sigma_{\Prov,h}$ is obtained by applying $\mathcal{S}_{> 1}$, and then tracing out all but the first $n$ qubits (the committed qubits): 
\begin{equation}\label{eq:firstproverextendedstate0}
\sigma_{\Prov,h} = \tr_{> n}(\mathcal{S}_{> 1}(\sum_{k\in\mathcal{K}_{\mathcal{F}}} D_{\Ver,h}(k)  \sigma_{0,k}))
\end{equation}
where $D_{\Ver,h}$ is the distribution over the set of function keys $\mathcal{K}_{\mathcal{F}}$ (since $h = 1$) produced by $\textrm{GEN}_{\mathcal{F}}$.

To prove the claim, we need to show that $\sigma_{\Prov,h}$ is computationally indistinguishable from $\sigma_{\Prov_1,h}$. Let
\begin{equation}\label{eq:firstproverextendedstate}
\sigma_{\Prov,h,E} = \sum_{k\in\mathcal{K}_{\mathcal{F}}} D_{\Ver,h}(k)  \sigma_{0,k}
\end{equation}
We will instead prove the stronger statement that $\sigma_{\Prov,h,E}$ is computationally indistinguishable from $\sigma_{\Prov_1,h,E}$ (to see why this is stronger, observe from \eqref{eq:firstproverextendedstate0} that $\sigma_{\Prov,h}$ can be obtained from $\sigma_{\Prov,h,E}$ by applying the efficiently computable superoperator $\mathcal{S}_{> 1}$ and tracing out all but the first $n$ qubits). Recall from the statement of the claim that $\Prov_1$ is characterized by $(U_0,\mathcal{S}_1)$. Since $\mathcal{S}_1$ is followed by Hadamard basis measurement, Corollary \ref{corol:diagonalizingmeas} implies that $\Prov_1$ is also characterized by $(U_0,\{\frac{1}{\sqrt{2}}(Z^r\otimes \mcI)\mathcal{S}(Z^r\otimes \mcI)\}_{r\in\{0,1\}})$. If we let $\hat{\Prov}_1$ be the prover characterized by $(U_0, (Z\otimes \mcI)\mathcal{S}(Z\otimes \mcI))$, it follows by linearity that
\begin{eqnarray}
\frac{1}{2}(\sigma_{\Prov,h,E} + \sigma_{\hat{\Prov}_1,h,E}) = \sigma_{\Prov_1,h,E}
\end{eqnarray}
Therefore, to complete the proof of Lemma \ref{lem:singlesecurity}, we can instead show that $\sigma_{\Prov,h,E}$ is computationally indistinguishable from $\sigma_{\hat{\Prov}_1,h,E}$, which implies that $\sigma_{\Prov,h,E}$ is computationally indistinguishable from $\sigma_{\Prov_1,h,E}$. 
\\~\\
\textbf{Computational Indistinguishability}
For convenience, let $\sigma_{\Prov,h,E} = \sigma_0$ and $\sigma_{\hat{\Prov}_1,h,E} = \sigma_1$. We now prove that $\sigma_0$ and $\sigma_1$ are computationally indistinguishable; our proof follows the outline given in Section \ref{sec:overviewcomppauli}. As given in \eqref{eq:firstproverextendedstate}:
\begin{equation}\label{eq:startingstatesum}
\sigma_r = \sum_{k\in\mathcal{K}_{\mathcal{F}}} D_{\Ver,h}(k)  \sigma_{r,k}
\end{equation}
and using \eqref{eq:entire}:
\begin{equation}\label{eq:startingstateindclaim}
\sigma_{r,k} = \sum_{\substack{b',c\in \{0,1\},\tau\\ d\in\{0,1\}^w,y\in R_{c,d,k}}} \delta_{d,y} O_{b',c\oplus r,d,\tau} (Z^r\otimes \mcI) \rho_{y,k}(Z^r\otimes \mcI)  O_{b',c\oplus r,d,\tau}^\dagger
\end{equation}
Note that, in the case that $r = 1$, the operator $(Z\otimes \mcI)$ acting after $B_{\tau}$ was absorbed into $O_{b',c, d,\tau}$ to create $O_{b',c\oplus r, d,\tau}$. Recall from \eqref{eq:stateforintersection} that:
\begin{equation}
\rho_{y,k} = (\sum_{b\in \{0,1\}}\ket{b,\inj(x_{b,y})}\ket{\psi'_{b,y,k}})(\sum_{b\in \{0,1\}}\ket{b,\inj(x_{b,y})}\ket{\psi'_{b,y,k}})^\dagger\otimes \ket{y}\bra{y}
\end{equation}
We can break down the state $\rho_{y,k}$ into two components: 
\begin{eqnarray}\label{eq:componentbreakdown}
\rho_{y,k} = \rho^D_{y,k} + \rho^{C}_{y,k}
\end{eqnarray} The components are as follows:
\begin{eqnarray}
\rho^D_{y,k} &=& \sum_{b\in \{0,1\}}\ket{b}\bra{b}\otimes \ket{\inj(x_{b,y})}\bra{\inj(x_{b,y})} \otimes \ket{\psi'_{b,y,k}} \bra{\psi'_{b,y,k}} \otimes \ket{y}\bra{y}\label{eq:diagonalcomponent} \\
\rho^{C}_{y,k} &=& \sum_{b\in \{0,1\}}\ket{b}\bra{b\oplus 1}\otimes \ket{\inj(x_{b,y})}\bra{\inj(x_{b\oplus 1, y})} \otimes \ket{\psi'_{b,y,k}} \bra{\psi'_{b\oplus 1,y}}\otimes \ket{y}\bra{y}\label{eq:crosscomponent}
\end{eqnarray}
Since $Z$ operators acting on the first qubit have no effect on \eqref{eq:diagonalcomponent} and add a phase of -1 to \eqref{eq:crosscomponent}, we can rewrite \eqref{eq:startingstateindclaim} as:
\begin{eqnarray}\label{eq:startingstatesimplified}
\sigma_{r,k} &=& \sum_{\substack{b',c\in \{0,1\},\tau\\ d\in\{0,1\}^w,y\in R_{c,d,k}}} \delta_{d,y}O_{b',c\oplus r,d,\tau}(Z^r\otimes \mcI)(\rho_{y,k}^D + \rho_{y,k}^C)(Z^r\otimes\mcI) O_{b',c\oplus r,d,\tau}^\dagger\\
&=& \sum_{\substack{b',c\in \{0,1\},\tau\\ d\in\{0,1\}^w,y\in R_{c,d,k}}} \delta_{d,y}O_{b',c\oplus r,d,\tau}(\rho_{y,k}^D + (-1)^r\rho_{y,k}^C)O_{b',c\oplus r,d,\tau}^\dagger\label{eq:sigmabreakdown}
\end{eqnarray}
To show that $\sigma_0$ and $\sigma_1$ are computationally indistinguishable, we reduce the problem to showing that the components corresponding to the diagonal and cross terms of the committed state $\rho_{y,k}$ are computationally indistinguishable:
\begin{claim}\label{cl:reductiontodiagonalcross}
If $\sigma_0$ is computationally distinguishable from $\sigma_1$, then one of the following must hold:
\begin{enumerate}
\item For $r\in\{0,1\}$, let
\begin{eqnarray}
\sigma_r^D &\EqDef& \sum_{k\in\mathcal{K}_{\mathcal{F}}} D_{\Ver,h}(k)  \sigma_{r,k}^D\label{eq:diagonaltermsumdef}\\
\sigma_{r,k}^D &\EqDef& \sum_{\substack{b',c\in \{0,1\},\tau\\ d\in\{0,1\}^w,y\in R_{c,d,k}}} \delta_{d,y}O_{b',c\oplus r,d,\tau}(\rho_{y,k}^D)O_{b',c\oplus r,d,\tau}^\dagger\label{eq:diagonaltermdef}
\end{eqnarray}
The density matrices $\sigma_0^D$ and $\sigma_1^D$ are computationally distinguishable. 

\item For $r\in\{0,1\}$, let
\begin{eqnarray}
\hat{\sigma}_r &\EqDef& \sum_{k\in\mathcal{K}_{\mathcal{F}}} D_{\Ver,h}(k)  \hat{\sigma}_{r,k}\label{eq:crosstermsumdef}\\
\hat{\sigma}_{r,k} &\EqDef& (Z^r\otimes \mcI)(\sum_{y\in T_k}\rho_{y,k})(Z^r\otimes \mcI)\label{eq:crosstermdef}
\end{eqnarray}

The density matrices $\hat{\sigma}_0$ and $\hat{\sigma}_1$ are computationally distinguishable.

\end{enumerate}
\end{claim}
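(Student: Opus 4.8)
The plan is to split each $\sigma_r$ into a \emph{diagonal} and a \emph{cross} contribution, coming from the decomposition $\rho_{y,k}=\rho^D_{y,k}+\rho^C_{y,k}$ of \eqref{eq:componentbreakdown}; apply the triangle inequality to the distinguishing functional of Definition \ref{def:compind}; and then show that the cross contribution is obtained from $\hat\sigma_0-\hat\sigma_1$ by an efficiently computable, trapdoor-free CPTP map. Note that $\rho^C_{y,k}$, and hence $\sigma_r^C$ below, are not density matrices, but as observed in Section \ref{sec:overviewcomppauli} the functional in Definition \ref{def:compind} is still well defined on differences of Hermitian matrices.

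First I would define $\sigma_r^C=\sum_{k\in\mathcal{K}_{\mathcal{F}}}D_{\Ver,h}(k)\,\sigma_{r,k}^C$, with $\sigma_{r,k}^C$ exactly as in \eqref{eq:diagonaltermdef} but with $\rho^D_{y,k}$ replaced by $\rho^C_{y,k}$. By \eqref{eq:sigmabreakdown} we then have $\sigma_{r,k}=\sigma_{r,k}^D+(-1)^r\sigma_{r,k}^C$, hence $\sigma_0-\sigma_1=(\sigma_0^D-\sigma_1^D)+(\sigma_0^C+\sigma_1^C)$. Now suppose $\mathcal{S}$ is an efficiently computable CPTP map for which $|\tr((\ket{0}\bra{0}\otimes\mcI)\mathcal{S}(\sigma_0-\sigma_1))|$ is not negligible. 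By linearity of $\tr$ and the triangle inequality, at least one of $|\tr((\ket{0}\bra{0}\otimes\mcI)\mathcal{S}(\sigma_0^D-\sigma_1^D))|$ and $|\tr((\ket{0}\bra{0}\otimes\mcI)\mathcal{S}(\sigma_0^C+\sigma_1^C))|$ is not negligible. The first case is exactly item 1 of the claim.

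For the second case I would exhibit an efficiently computable CPTP map $\mathcal{S}'$, independent of the keys and trapdoors, with $\sigma_0^C+\sigma_1^C=\mathcal{S}'(\hat\sigma_0-\hat\sigma_1)$; then $\mathcal{S}\circ\mathcal{S}'$ witnesses computational distinguishability of the density matrices $\hat\sigma_0,\hat\sigma_1$, which is item 2. Two observations do it. First, a $Z$ on the committed qubit fixes $\rho^D_{y,k}$ and multiplies $\rho^C_{y,k}$ by $-1$, so by \eqref{eq:crosstermdef}, $\hat\sigma_0-\hat\sigma_1=2\sum_k D_{\Ver,h}(k)\sum_{y\in T_k}\rho^C_{y,k}$. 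Second, the crux: the only difference between $\sigma_{0,k}^C$ and $\sigma_{1,k}^C$ is that $O_{b',c,d,\tau}$ is replaced by $O_{b',c\oplus 1,d,\tau}$, i.e. the verifier's $X$-decoding bit is flipped, so that adding the two replaces the single operator $X^{c}$ (with $c$ the trapdoor-computed bit $d\cdot(\inj(x_{0,y})\oplus\inj(x_{1,y}))$) by the sum over $X^{c'}$, $c'\in\{0,1\}$. Inspecting the two regimes in \eqref{eq:defsetr} and in the weight $\delta_{d,y}$ defined after \eqref{eq:entire0} — the case $d\in\dset_{k,0,x_{0,y}}\cap\dset_{k,1,x_{1,y}}$, where $\delta_{d,y}=1$ and $y$ lies in a unique $R_{c,d,k}$, and the case $d\notin\dset_{k,0,x_{0,y}}\cap\dset_{k,1,x_{1,y}}$, where $\delta_{d,y}=\tfrac12$ and $y$ lies in both — one checks that in both regimes the combined sum becomes $\sum_{c'\in\{0,1\}}O_{b',c',d,\tau}\rho^C_{y,k}O_{b',c',d,\tau}^\dagger$ and the remaining $y$-sum ranges over all of $T_k$. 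Hence $\sigma_0^C+\sigma_1^C=2\sum_k D_{\Ver,h}(k)\sum_{y\in T_k}\mathcal{S}'(\rho^C_{y,k})$, where $\mathcal{S}'$ applies a uniformly random $X$ to the committed qubit, then the prover's Hadamard-round attack $\mathcal{S}$, then the Hadamard transform on the committed qubit and preimage register, and finally standard-basis measurements of those registers — all efficiently computable and trapdoor-free. Since $\tfrac12\sum_{c'}X^{c'}(\cdot)X^{c'}$ composed with a CPTP map, a unitary and measurements is trace preserving, $\mathcal{S}'$ is a legitimate CPTP map, and comparing with the first observation yields $\sigma_0^C+\sigma_1^C=\mathcal{S}'(\hat\sigma_0-\hat\sigma_1)$.

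I expect the main obstacle to be precisely the bookkeeping in the second observation: verifying that after summing the $r=0$ and $r=1$ cross terms, the double sum over $c$ and over $y\in R_{c,d,k}$ weighted by $\delta_{d,y}$ collapses to a single sum over all of $T_k$ with a $c$-independent (hence trapdoor-free) Kraus operator, treating the two regimes of $d$ separately and keeping track of the factor $2$ that matches the $\tfrac12$ one folds into $\mathcal{S}'$. Everything else is linearity of $\tr$ and the triangle inequality.
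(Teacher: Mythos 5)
Your proposal is correct and follows essentially the same route as the paper: split $\sigma_0-\sigma_1$ via the triangle inequality into a diagonal piece and a cross piece, then show the cross piece equals an efficiently computable, trapdoor-free CPTP map $\mathcal{S}'$ applied to $\hat\sigma_0-\hat\sigma_1$, with the crucial observation being that combining the $r=0$ and $r=1$ cross terms turns the trapdoor-dependent decoding bit $c$ into a uniformly random bit $c'$, so the double sum over $c$ and $y\in R_{c,d,k}$ weighted by $\delta_{d,y}$ collapses to a single sum over $y\in T_k$. The sign convention differs cosmetically — you keep $(-1)^r$ outside $\sigma_{r,k}^C$ and hence work with $\sigma_0^C+\sigma_1^C$, where the paper folds it in and writes $\sigma_0^C-\sigma_1^C$ — but the content is identical.

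One small inaccuracy in the prose: you describe $\mathcal{S}'$ as applying a uniformly random $X$ to the committed qubit \emph{before} the prover's Hadamard-round attack, then the Hadamard transform, then measurement. That is the wrong placement: in the Kraus operators $O_{b',c',d,\tau}=(\ket{b'}\bra{b'}X^{c'}\otimes\ket{d}\bra{d}\otimes\mcI)(H^{\otimes l+1}\otimes\mcI)B_\tau$, the $X^{c'}$ sits between the Hadamard transform and the standard-basis projection on the committed qubit (equivalently, it is a random XOR of the reported bit $b'$ after measurement). Putting the random $X$ before $B_\tau$ yields a genuinely different CPTP map. The formula you derive, $\sum_{c'}O_{b',c',d,\tau}\rho^C_{y,k}O_{b',c',d,\tau}^\dagger$, is the right one — it is only the informal description of the circuit order that misplaces the $X$ — so this is a prose slip rather than a gap, but it is worth fixing, since a reader reconstructing $\mathcal{S}'$ from your description would obtain the wrong map.
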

It should be immediately apparent that the first pair of density matrices is equal to the terms of $\sigma_0$ and $\sigma_1$ resulting from the diagonal terms of the committed state. It turns out that the second pair of density matrices represents the cross terms; this will be shown in the proof of Claim \ref{cl:reductiontodiagonalcross}, which is a simple application of the triangle inequality and given in Section \ref{sec:proofofreductiontodiagonalcross}. We complete the proof of Lemma \ref{lem:singlesecurity} with the following two claims:
\begin{claim}\label{cl:stateind1}
If $\sigma_0^D$ is computationally distinguishable from $\sigma_1^D$, then there exists a \BQP\ attacker $\mathcal{A}$ who can violate the hardcore bit clause of Definition \ref{def:trapdoorclawfree}. 
\end{claim}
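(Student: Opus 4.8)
The plan is to assume for contradiction that $\sigma_0^D$ and $\sigma_1^D$ are computationally distinguishable, i.e.\ (by Definition~\ref{def:compind}, flipping the distinguisher's output bit if necessary) that there are an efficiently computable CPTP map $\mathcal{S}'$ and a polynomial $p$ with $\delta(\lambda)\EqDef\tr((\ket0\bra0\otimes\mcI)\mathcal{S}'(\sigma_0^D-\sigma_1^D))\geq 1/p(\lambda)$ for infinitely many $\lambda$, and to turn $\mathcal{S}'$ into a \BQP\ attacker $\mathcal{A}$ violating the adaptive hardcore bit inequality~\eqref{eq:adaptive-hardcore} of Definition~\ref{def:trapdoorclawfree}. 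The starting observation, read off from \eqref{eq:diagonaltermsumdef}, \eqref{eq:diagonaltermdef}, \eqref{eq:defsuperoperatorO} and \eqref{eq:defsetr}, is that $\sigma_0^D$ and $\sigma_1^D$ are produced by one and the same procedure: prepare the diagonal committed state $\rho^D_{y,k}$ of \eqref{eq:diagonalcomponent} --- which, by the form \eqref{eq:afterapplyingu0} of a perfect prover's post-$U_0$ state, amounts to running $\Prov$'s commitment under $k\leftarrow\textrm{GEN}_{\mathcal{F}}(1^\lambda)$, measuring the commitment string $y$, measuring the committed qubit and preimage register in the standard basis (obtaining a bit $b$ and a string $\inj(x_{b,y})$), and applying $U_{\inj}$ (see \eqref{eq:defuinj}) --- then applying the attack $\mathcal{S}$ followed by a Hadamard measurement of the committed qubit and preimage register (obtaining $b'$ and $d$), and finally applying $X^{c}$, resp.\ $X^{c\oplus1}$, to the committed qubit, where $c=d\cdot(\inj(x_{0,y})\oplus\inj(x_{1,y}))$ when $d\in\dset_{k,0,x_{0,y}}\cap\dset_{k,1,x_{1,y}}$ and a uniformly random bit otherwise. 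Every step except this final decoding is efficiently computable from $k$ alone; only the decoding uses the trapdoor $t_k$. Let $\sigma_D$ be the normalized state obtained by skipping the decoding altogether (storing $b'$ directly). Using $O_{b',c\oplus1,d,\tau}=(X\otimes\mcI)\,O_{b'\oplus1,c,d,\tau}$ and relabeling $b'$ one checks that $\sigma_1^D=(X\otimes\mcI)\sigma_0^D(X\otimes\mcI)$, and --- crucially --- that on a fixed execution $\sigma_D$ coincides with $\sigma_0^D$ if the (trapdoor-dependent, hence unknown to $\mathcal{A}$) decoding bit $c$ equals $0$ and with $\sigma_1^D$ if $c$ equals $1$.

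I would then construct $\mathcal{A}$ as follows: on input $k\leftarrow\textrm{GEN}_{\mathcal{F}}(1^\lambda)$, run the preparation procedure for $\sigma_D$, harvesting along the way the classical values $b$, $x_{b,y}$ (obtained by inverting $\inj$), and $d$; sample $s\leftarrow_U\{0,1\}$ and apply $X^s$ to the committed qubit; run $\mathcal{S}'$ and measure its designated output qubit, obtaining a bit $a$; output the $4$-tuple $(b,\,x_{b,y},\,d,\,a\oplus s)$. Since $y\in T_k$, the pair $(x_{0,y},x_{1,y})$ recovered by $\textrm{INV}_{\mathcal{F}}$ is a claw in $\mathcal{R}_k$, so this tuple lies in $H_k$ precisely when $d\in\dset_{k,0,x_{0,y}}\cap\dset_{k,1,x_{1,y}}$ and $a\oplus s=c$, and in $\overline{H}_k$ when $d$ lies in that intersection and $a\oplus s=c\oplus1$.

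For the analysis I would decompose $\sigma_D$ into the part supported on executions with $d\notin\dset_{k,0,x_{0,y}}\cap\dset_{k,1,x_{1,y}}$ (contributing nothing to $\Pr[\mathcal{A}(k)\in H_k]-\Pr[\mathcal{A}(k)\in\overline{H}_k]$, since the output then lies in neither set) together with the two sub-normalized states $\xi_0,\xi_1$ obtained by further conditioning on $c=0$, resp.\ $c=1$; thus $\sigma_D=\xi_0+\xi_1+(\text{the bad-}d\text{ part})$, while $\sigma_0^D=\xi_0+(X\otimes\mcI)\xi_1(X\otimes\mcI)+(\text{the bad-}d\text{ part})$ and $\sigma_1^D=(X\otimes\mcI)\xi_0(X\otimes\mcI)+\xi_1+(\text{the bad-}d\text{ part})$, in accordance with $\sigma_1^D=(X\otimes\mcI)\sigma_0^D(X\otimes\mcI)$. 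Writing $\tr((\ket0\bra0\otimes\mcI)\mathcal{S}'(\cdot))$ for the probability that $\mathcal{A}$'s measurement returns $a=0$ and carrying out the short computation over the two values of $s$, the contribution of the $c=0$ executions to $\Pr[\mathcal{A}(k)\in H_k]-\Pr[\mathcal{A}(k)\in\overline{H}_k]$ works out to $\tr((\ket0\bra0\otimes\mcI)\mathcal{S}'(\xi_0-(X\otimes\mcI)\xi_0(X\otimes\mcI)))$ and that of the $c=1$ executions to $\tr((\ket0\bra0\otimes\mcI)\mathcal{S}'((X\otimes\mcI)\xi_1(X\otimes\mcI)-\xi_1))$; adding these and using linearity gives $\Pr[\mathcal{A}(k)\in H_k]-\Pr[\mathcal{A}(k)\in\overline{H}_k]=\tr((\ket0\bra0\otimes\mcI)\mathcal{S}'(\sigma_0^D-\sigma_1^D))=\delta(\lambda)\geq1/p(\lambda)$ infinitely often, contradicting~\eqref{eq:adaptive-hardcore}.

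I expect the main obstacle to be precisely this last bookkeeping. Two points need care. First, upgrading the informal statement ``$\sigma_D$ is $\sigma_0^D$ or $\sigma_1^D$ depending on $c$'' to the clean decomposition above, and verifying $\sigma_1^D=(X\otimes\mcI)\sigma_0^D(X\otimes\mcI)$, requires tracking the operators $O_{b',c\oplus r,d,\tau}$ and the sets $R_{c,d,k}$ --- in particular the harmless coincidence of the bad-$d$ contributions of $\sigma_0^D$ and $\sigma_1^D$, which lets them drop out of the difference. Second, the random $X^s$-conjugation ($X$-twirl) is what makes the two contributions telescope to exactly $\delta$: feeding $\sigma_D$ to $\mathcal{S}'$ \emph{without} the twirl and outputting $a$ produces a bias that in general differs from $\delta$, because $\sigma_D$ is the mixture of one ``half'' of $\sigma_0^D$ with one ``half'' of $\sigma_1^D$ rather than their balanced mixture. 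The remaining ingredients --- that $\sigma_D$ together with $(b,x_{b,y},d)$ is preparable in quantum polynomial time from $k$ alone, the trapdoor entering only through the omitted decoding, and that $(x_{0,y},x_{1,y})\in\mathcal{R}_k$ --- are routine.
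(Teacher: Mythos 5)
Your proposal is correct and takes essentially the same route as the paper's proof: you efficiently prepare the (undecoded) diagonal state together with $(b,x_{b,y},d)$, apply a random $X$-twirl, feed the result to the distinguisher, and output the twirl-corrected guess, with the decomposition into $\xi_0,\xi_1$ and the bad-$d$ part playing exactly the role of the paper's projections $P^D_{c,k}$ and the identity $\sum_r(-1)^r\sigma_{r,k}^D=\sum_r(-1)^r\tilde\sigma_{r,k}^D$. The bookkeeping that the two contributions telescope to $\tr((\ket0\bra0\otimes\mcI)\mathcal{S}'(\sigma_0^D-\sigma_1^D))$ is the same computation the paper carries out in \eqref{eq:probainhk}--\eqref{eq:finalexpressdiag}.
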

\begin{claim}\label{cl:stateind2}
If $\hat{\sigma}_0$ is computationally distinguishable from $\hat{\sigma}_1$, then there exists a \BQP\ attacker $\mathcal{A}$ who can violate the hardcore bit clause of Definition \ref{def:extendedtrapdoorclawfree}. 
\end{claim}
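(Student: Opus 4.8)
The plan is to convert a distinguisher $\mathcal{S}$ for the cross-term states $\hat{\sigma}_0$ and $\hat{\sigma}_1$ of \Cl{cl:reductiontodiagonalcross} into a \BQP\ attacker that guesses the hardcore bit of item 2 of \Def{def:extendedtrapdoorclawfree}. First I would unpack the two states: by \eqref{eq:crosstermdef}, \eqref{eq:stateaftercommitment} and \eqref{eq:stateforintersection}, $\hat{\sigma}_{0,k}$ is exactly the post-commitment state $\rho_k$ of the prover characterized by $(U_0,\mathcal{S})$ — obtained by applying $U_0$ to $\ket{0}^{\otimes e}\ket{k}$ tensored with a coherently prepared purification of the remaining $n-1$ keys and trapdoors, measuring the commitment-string registers, and applying $U_{\inj}$ to all $n$ preimage registers — and $\hat{\sigma}_{1,k}$ is $\rho_k$ with a $Z$ on the first committed qubit. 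Since $U_0$ is efficiently computable and $\textrm{GEN}_{\mathcal{F}}$, $\textrm{GEN}_{\mathcal{G}}$ can be run by the attacker itself, both $\hat{\sigma}_{0,k}$ and $\hat{\sigma}_{1,k}$ are efficiently preparable from the single input $k$.

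The crux is the phase-transfer identity \eqref{eq:zequality}: within each $y$-sector of $\rho_k$ the first preimage register holds $\inj(x_{b,y})$, so applying $Z^d$ to that register is, up to a sector-wise global phase (irrelevant once $y$ is measured), the same as applying $Z^{d\cdot(\inj(x_{0,y})\oplus \inj(x_{1,y}))}$ to the first committed qubit. Taking $d$ to be the distinguished string of item 2 of \Def{def:extendedtrapdoorclawfree} (equivalently \Cl{cl:informaladaptivehardcorecross}), for which $d\cdot(\inj(x_0)\oplus \inj(x_1))$ takes one fixed value $c_k$ over all claws $(x_0,x_1)\in\mathcal{R}_k$ for all but a negligible fraction of $k$, this exponent equals $c_k$ for every $y$; hence applying $Z^d$ to the first preimage register of $\rho_k$ produces exactly $\hat{\sigma}_{c_k,k}$, which is $\hat{\sigma}_{0,k}$ if $c_k=0$ and $\hat{\sigma}_{1,k}$ if $c_k=1$.

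The attacker $\mathcal{A}$ is then: on input $k$, prepare $\rho_k$, apply $Z^d$ to the first preimage register (obtaining $\hat{\sigma}_{c_k,k}$), sample a fresh bit $s$ and, if $s=1$, also apply $Z$ to the first committed qubit (obtaining $\hat{\sigma}_{c_k\oplus s,k}$); then run the efficient distinguishing CPTP map $\mathcal{S}$, measure its first output qubit to get $g$, and output $s\oplus g$ as a guess for $c_k$. Writing $p_r(k)=\tr((\ket{0}\bra{0}\otimes\mcI)\mathcal{S}\,\hat{\sigma}_{r,k})$ and averaging over $k$, a short computation gives $\Pr_k[\mathcal{A}(k)=c_k]=\frac{1}{2}+\frac{1}{2}\big(\sum_k D_{\Ver,h}(k)(p_0(k)-p_1(k))\big)$, and by linearity of $\mathcal{S}$ together with \eqref{eq:crosstermsumdef} the parenthesized quantity equals $\tr((\ket{0}\bra{0}\otimes\mcI)\mathcal{S}(\hat{\sigma}_0-\hat{\sigma}_1))$ (pre-composing $\mathcal{S}$ with an $X$ if needed so that this is nonnegative). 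Since $\hat{\sigma}_0$ and $\hat{\sigma}_1$ are assumed computationally distinguishable, this trace term — hence the advantage of $\mathcal{A}$ over $\frac{1}{2}$ in guessing $c_k$ — is non-negligible; as $c_k$ is the unique element of $H'_{k,d}$, $\mathcal{A}$ predicts membership in $H'_{k,d}$ with non-negligible advantage, contradicting item 2 of \Def{def:extendedtrapdoorclawfree}.

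I expect the main difficulty to be purely bookkeeping. The two points needing care are: (i) confirming that the $Z^d$ move genuinely yields $\hat{\sigma}_{c_k,k}$ and not a $y$-dependent mixture — this is exactly where the constancy of $d\cdot(\inj(x_0)\oplus \inj(x_1))$ over claws, i.e. the singleton property of $H'_{k,d}$ for the distinguished $d$, is used; and (ii) the symmetrization step, which turns the family-level quantity $\tr((\ket{0}\bra{0}\otimes\mcI)\mathcal{S}(\hat{\sigma}_0-\hat{\sigma}_1))$ into a clean success probability for guessing $c_k$ and removes any need for $\mathcal{A}$ to know the sign of $p_0(k)-p_1(k)$ for individual keys (without it, the state $\mathcal{A}$ feeds to $\mathcal{S}$, averaged over $k$, is a $c_k$-weighted mixture of $\hat{\sigma}_{0,k}$ and $\hat{\sigma}_{1,k}$, which is neither $\hat{\sigma}_0$ nor $\hat{\sigma}_1$).
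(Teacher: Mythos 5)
Your proposal constructs the same attacker as the paper's proof: prepare $\rho_k=\sum_{y\in T_k}\rho_{y,k}$ by running the prover's initial unitary and measuring the commitment registers, apply $Z^{d}$ to the first preimage register, symmetrize with a uniformly random $Z^{s}$ on the first committed qubit (your $s$ is the paper's $c'$), run the distinguishing map $\mathcal{S}$, and output $s\oplus g$; the success-probability computation and the final expression $\frac{1}{2}+\frac{1}{2}\tr\big((\ket{0}\bra{0}\otimes\mcI)\mathcal{S}(\hat{\sigma}_{0}-\hat{\sigma}_{1})\big)$ also match. The one place your analysis departs from the paper's formal proof is the step ``applying $Z^{d}$ yields exactly $\hat{\sigma}_{c_k,k}$.'' You justify this by invoking constancy of $d\cdot(\inj(x_{0,y})\oplus\inj(x_{1,y}))$ over claws — the singleton property of $H'_{k,d}$ — which is stated in the informal Claim~\ref{cl:informaladaptivehardcorecross} but is not actually part of the formal hardcore bit clause in Definition~\ref{def:extendedtrapdoorclawfree}. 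The paper's formal proof avoids assuming it: it partitions $T_k$ into the sets $R^C_{c,d,k}$ on which the exponent equals $c$, introduces the commitment-string projections $P^C_{c,k}=\mcI\otimes\sum_{y\in R^C_{c,d,k}}\ket{y}\bra{y}$, writes the attacker's pre-$\mathcal{S}$ state as $\frac{1}{2}\sum_{c,c'}P^C_{c,k}\,\hat{\sigma}_{c\oplus c',k}\,P^C_{c,k}\otimes\ket{c'}\bra{c'}$, and recombines sectors using $\sum_{c}P^C_{c,k}\hat{\sigma}_{r,k}P^C_{c,k}=\hat{\sigma}_{r,k}$ (the $\rho_{y,k}$ are block-diagonal in $y$). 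This makes the reduction work for every $d$ rather than only for a distinguished $d$ with the singleton property. Your shortcut coincides with the overview argument in Section~\ref{sec:overviewcross} and is essentially harmless — if $H'_{k,d}=\{0,1\}$ on a non-negligible fraction of keys, a constant-output attacker already beats $\frac{1}{2}$ trivially, so the satisfying $d$'s must have near-constancy — but a complete proof against Definition~\ref{def:extendedtrapdoorclawfree} as written would either add that one-line case analysis or carry out the paper's sector-wise bookkeeping.
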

We prove Claims \ref{cl:stateind1} and \ref{cl:stateind2} in the next two sections. 

\end{proofof}

\subsubsection{Indistinguishability of Diagonal Terms (Proof of Claim \ref{cl:stateind1})}\label{sec:proofofstateind1}

The overview of the following proof is given in Section \ref{sec:overviewdiagonal}.

\begin{proofof}{\textbf{Claim \ref{cl:stateind1}}}
Recall that we would like to show that the density matrices $\sigma_0^D$ and $\sigma_1^D$ are computationally indistinguishable. We will proceed by contradiction. Assume $\sigma_0^D$ and $\sigma_1^D$ are distinguishable using an efficiently computable CPTP map $\mathcal{O}$. It follows by the definition of computational indistinguishability (Definition \ref{def:compind}) and by the expression for $\sigma_r^D$ in \eqref{eq:diagonaltermsumdef} that the following expression is non negligible:
\begin{equation}\label{eq:compinddefstateind1}
|\sum_{k\in\mathcal{K}_{\mathcal{F}}} D_{\Ver,h}(k)\cdot \tr((\ket{0}\bra{0}\otimes \mcI)\mathcal{O}(\sum_{r\in \{0,1\}} (-1)^r \sigma_{r,k}^D))|
\end{equation}
We will use the CPTP map $\mathcal{O}$ to construct an attacker $\mathcal{A}$ who violates the hardcore bit clause of Definition \ref{def:trapdoorclawfree}. 

Let
\begin{eqnarray}\label{eq:defsetrdiagonal}
R^D_{c,d,k} &=& \{y\in T_k | d\in \dset_{k,0,x_{0,y}} \cap \dset_{k,1,x_{1,y}} \land d\cdot (\inj(x_{0,y})\oplus \inj(x_{1,y}))= c\}
\end{eqnarray}
We will require the unnormalized state $\tilde{\sigma}_{r,k}^D$, which is the state $\sigma_{r,k}^D$ from \eqref{eq:diagonaltermdef} conditioned on obtaining measurements $y,d$ such that $d\in \dset_{k,0,x_{0,y}} \cap \dset_{k,1,x_{1,y}}$: 
\begin{eqnarray}\label{eq:defhatsigmaD}
\tilde{\sigma}_{r,k}^D &=& \sum_{\substack{b',c\in \{0,1\},\tau\\ d\in\{0,1\}^w,y\in R^D_{c,d,k}}} O_{b',c\oplus r,d,\tau}(\rho_{y,k}^D)O_{b',c\oplus r,d,\tau}^\dagger
\end{eqnarray}
Observe that for all $k\in\mathcal{K}_{\mathcal{F}}$
\begin{eqnarray}\label{eq:distaftercondition}
\sum\limits_{r\in\{0,1\}}(-1)^r\sigma_{r,k}^D = \sum\limits_{r\in\{0,1\}}(-1)^r\tilde{\sigma}_{r,k}^D
\end{eqnarray}
This is because $\sigma_{0,k}^D$ and $\sigma_{1,k}^D$ are identical when conditioned on $d\notin \dset_{k,0,x_{0,y}} \cap \dset_{k,1,x_{1,y}}$ (both have a uniform $X$ decoding operator applied). It follows that the expression in \eqref{eq:compinddefstateind1} is equal to:
\begin{eqnarray}
|\sum_{k\in\mathcal{K}_{\mathcal{F}}} D_{\Ver,h}(k)\cdot \tr((\ket{0}\bra{0}\otimes \mcI)\mathcal{O}(\sum_{r\in \{0,1\}} (-1)^r\tilde{\sigma}_{r,k}^D))|
\end{eqnarray}
The attacker $\mathcal{A}$ (on input $k\in\mathcal{K}_{\mathcal{F}}$) first constructs the following state from \eqref{eq:stateaftercommitment}: 
\begin{equation}
\sum_{y\in T_k}\rho_{y,k}
\end{equation}
He does this exactly as the \BQP\ prover $\Prov$ would have: by applying the prover's initial operator $U_0$ to the input state and measuring the last register to obtain $y$. Then $\mathcal{A}$ measures both the committed qubit and preimage register, obtaining $(b,x_{b,y})$. The resulting state is a summation over the state in \eqref{eq:diagonalcomponent}:
\begin{equation}
\sum_{y\in T_k}\rho_{y,k}^D
\end{equation}
$\mathcal{A}$ now continues as $\Prov$ would have: he applies the CPTP map $\{B_{\tau}\}_{\tau}$, followed by a Hadamard measurement of the committed qubit and preimage register, obtaining results $b'\in \{0,1\}$ and $d\in\{0,1\}^w$. $\mathcal{A}$ then chooses a bit $c'$ at random, stores the bit $c'$ in an auxiliary register, and applies $X^{c'}$ to the committed qubit (this operation commutes with measurement of the committed qubit). The unnormalized state created by $\mathcal{A}$ (conditioned on $d,y$ such that $d\in \dset_{k,0,x_{0,y}} \cap \dset_{k,1,x_{1,y}}$) is equal to: 
\begin{eqnarray}
\frac{1}{2}\sum_{\substack{b',c'\in \{0,1\},\tau,y\in T_k\\ d\in\dset_{k,0,x_{0,y}} \cap \dset_{k,1,x_{1,y}}}} O_{b',c',d,\tau}(\rho_{y,k}^D)O_{b',c',d,\tau}^\dagger\otimes \ket{c'}\bra{c'}\label{eq:secondtolastdistinguishingdiag}
\end{eqnarray}
We will partition the above state into components using the following projection (the set $R^D_{c,d,k}$ is defined in \eqref{eq:defsetrdiagonal}):
\begin{eqnarray}
P_{c,k}^D &=& \mcI\otimes \sum_{d\in \{0,1\}^w, y\in R^D_{c,d,k}}\ket{d}\bra{d}\otimes \mcI\otimes \ket{y}\bra{y}
\end{eqnarray}
The state of $\mathcal{A}$ in \eqref{eq:secondtolastdistinguishingdiag} can now be written in terms of the state $\tilde{\sigma}_{r,k}^D$, as defined in \eqref{eq:defhatsigmaD}:
\begin{eqnarray}
&=& \frac{1}{2}\sum_{c'\in \{0,1\}} P_{c',k}^D\tilde{\sigma}_{0,k}^D P_{c',k}^D \otimes \ket{c'}\bra{c'} + \frac{1}{2}\sum_{c'\in \{0,1\}} P_{c'\oplus 1,k}^D\tilde{\sigma}_{1,k}^D P_{c'\oplus 1,k}^D \otimes \ket{c'}\bra{c'}\\
&=& \frac{1}{2}\sum_{c,c'\in \{0,1\}} P_{c,k}^D\tilde{\sigma}_{c\oplus c',k}^D P_{c,k}^D \otimes \ket{c'}\bra{c'}\label{eq:finalstatexdistinguishing}
\end{eqnarray}
Finally, $\mathcal{A}$ applies the efficiently computable CPTP map $\mathcal{O}$ (which is used to distinguish between $\sigma_0^D$ and $\sigma_1^D$) to the state in \eqref{eq:finalstatexdistinguishing} and measures the first qubit. If the measurement result of the first qubit is $r\in\{0,1\}$, $\mathcal{A}$ outputs $b,x_{b,y},d,c'\oplus r$. 

In order to violate the hardcore bit clause of Definition \ref{def:trapdoorclawfree}, $\mathcal{A}$ must output $(b,x_b,d,d\cdot(\inj(x_0)\oplus \inj(x_1)))$ with non negligible advantage (over outputting $(b,x_b,d,d\cdot(\inj(x_0)\oplus \inj(x_1))\oplus 1)$). More formally, we need to show that the following advantage of $\mathcal{A}$ (taken from Definition \ref{def:trapdoorclawfree}) is non negligible:
\begin{equation}\label{eq:diagonalhardcoreviolation}
\Big|\Pr_{(k,t_k)\leftarrow \textrm{GEN}_{\mathcal{F}}(1^{\lambda})}[\mathcal{A}(k) \in H_k] - \Pr_{(k,t_k)\leftarrow \textrm{GEN}_{\mathcal{F}}(1^{\lambda})}[\mathcal{A}(k) \in\overline{H}_k]\Big|
\end{equation}
where
\begin{eqnarray}
H_k &=& \{(b,x_b,d,d\cdot(\inj(x_0)\oplus \inj(x_1)))| b\in \{0,1\},(x_0,x_1)\in \mathcal{R}_k, d\in \dset_{k,0,x_0}\cap \dset_{k,1,x_1}\}\nonumber\\
\overline{H}_k &=& \{(b,x_b,d,c)| (b,x_b,d,c\oplus 1)\in H_k\}\nonumber
\end{eqnarray}
$\mathcal{A}$ outputs a string in $H_k$ if, on components $P_{c,k}^D\tilde{\sigma}^D_{0,k}P_{c,k}^D$ and $P_{c,k}^D\tilde{\sigma}^D_{1,k}P_{c,k}^D$, the final bit of $\mathcal{A}$'s output is $c$. This occurs as long as the distinguishing operator $\mathcal{O}$ outputs $r$ on components $P_{0,k}^D\tilde{\sigma}^D_{r,k}P_{0,k}^D$ and $P_{1,k}^D\tilde{\sigma}^D_{r,k}P_{1,k}^D$. It follows that the probability that $\mathcal{A}$ outputs a string in $H_k$ is equal to:
\begin{eqnarray}\label{eq:probainhk}
\Pr_{(k,t_k)\leftarrow \textrm{GEN}_{\mathcal{F}}(1^{\lambda})}[\mathcal{A}(k) \in H_k] &=& \frac{1}{2}\sum_{\substack{k\in \mathcal{K}_{\mathcal{F}}\\ r\in\{0,1\}}} D_{\Ver,h}(k)\cdot \tr((\ket{r}\bra{r}\otimes \mcI) \mathcal{O} (\sum_{c\in \{0,1\}}P_{c,k}^D\tilde{\sigma}_{r,k}^DP_{c,k}^D))\nonumber\\
&=& \frac{1}{2}\sum_{\substack{k\in \mathcal{K}_{\mathcal{F}}\\ r\in\{0,1\}}} D_{\Ver,h}(k)\cdot \tr((\ket{r}\bra{r}\otimes \mcI) \mathcal{O} (\tilde{\sigma}_{r,k}^D))
\end{eqnarray}
By similar reasoning, 
\begin{equation}\label{eq:probainhkbar}
\Pr_{(k,t_k)\leftarrow \textrm{GEN}_{\mathcal{F}}(1^{\lambda})}[\mathcal{A}(k) \in \overline{H}_k] 
= \frac{1}{2}\sum_{\substack{k\in \mathcal{K}_{\mathcal{F}}\\ r\in\{0,1\}}} D_{\Ver,h}(k)\cdot \tr((\ket{r\oplus 1}\bra{r\oplus 1}\otimes \mcI) \mathcal{O} (\tilde{\sigma}_{r,k}^D))
\end{equation}
By combining \eqref{eq:probainhk} and \eqref{eq:probainhkbar} and then using the equality in \eqref{eq:distaftercondition}, we obtain that the advantage of $\mathcal{A}$ in \eqref{eq:diagonalhardcoreviolation} is equal to:
\begin{equation}
\Big|\sum_{k\in \mathcal{K}_{\mathcal{F}}} D_{\Ver,h}(k)\cdot \tr((\ket{0}\bra{0}\otimes \mcI) \mathcal{O} (\tilde{\sigma}_{0,k}^D - \tilde{\sigma}_{1,k}^D ) ) \Big|= \Big|\sum_{k\in \mathcal{K}_{\mathcal{F}}} D_{\Ver,h}(k)\cdot \tr((\ket{0}\bra{0}\otimes \mcI) \mathcal{O} (\sigma_{0,k}^D - \sigma_{1,k}^D ) )\Big |\label{eq:finalexpressdiag}
\end{equation}
The expression in \eqref{eq:finalexpressdiag} is non negligible, due to our initial assumption that the CPTP map $\mathcal{O}$ can distinguish between $\sigma_0^D$ and $\sigma_1^D$ (see \eqref{eq:compinddefstateind1}).

\end{proofof}

\subsubsection{Indistinguishability of Cross Terms (Proof of Claim \ref{cl:stateind2})}\label{sec:proofofstateind2}
The overview of the following proof is given in Section \ref{sec:overviewcross}.

\begin{proofof}{\textbf{Claim \ref{cl:stateind2}}}
Recall that we would like to show that the density matrices $\hat{\sigma}_0$ and $\hat{\sigma}_1$ are computationally indistinguishable. We will proceed by contradiction. Assume the two matrices $\hat{\sigma}_0$ and $\hat{\sigma}_1$ are computationally distinguishable using the efficiently computable CPTP map $\mathcal{O}$. It follows by the definition of computational indistinguishability (Definition \ref{def:compind}) and the expression for $\hat{\sigma}_r$ in \eqref{eq:crosstermsumdef} that the following expression is non negligible:
\begin{equation}\label{eq:compinddefstateind2}
|\sum_{k\in\mathcal{K}_{\mathcal{F}}} D_{\Ver,h}(k)\cdot \tr((\ket{0}\bra{0}\otimes \mcI)\mathcal{O}(\sum_{r\in \{0,1\}} (-1)^r\hat{\sigma}_{r,k}))|
\end{equation}
We will use the CPTP map $\mathcal{O}$ to construct a \BQP\ attacker $\mathcal{A}$ who will violate the hardcore bit clause of Definition \ref{def:extendedtrapdoorclawfree} (item 2). 

Let $d\in\{0,1\}^w$ be the string referred to in item 2 of Definition \ref{def:extendedtrapdoorclawfree}. The attacker $\mathcal{A}$ (on input $k\in\mathcal{K}_{\mathcal{F}}$) first constructs the following state from \eqref{eq:stateaftercommitment}: 
\begin{equation}
\sum_{y\in T_k}\rho_{y,k}
\end{equation}
He does this exactly as the \BQP\ prover $\Prov$ would have: by applying the prover's initial operator $U_0$ to the input state and measuring the last register to obtain $y$. Then $\mathcal{A}$ applies $Z^{d}$ to the preimage register. The resulting state is:  
\begin{equation}
(\mcI \otimes Z^{d} \otimes \mcI)(\sum_{y\in T_k}\rho_{y,k}) (\mcI \otimes Z^{d} \otimes \mcI)= \sum_{y\in T_k} (Z^{d\cdot(\inj(x_{0,y})\oplus \inj(x_{1,y}))} \otimes \mcI)\rho_{y,k} (Z^{d\cdot(\inj(x_{0,y})\oplus \inj(x_{1,y}))} \otimes \mcI)\label{eq:zoncommitment}
\end{equation}
The equality is due to the format of the state $\rho_{y,k}$, as written in \eqref{eq:stateforintersection}:
\begin{equation}
\rho_{y,k} = (\sum_{b\in \{0,1\}}\ket{b,\inj(x_{b,y})}\ket{\psi'_{b,y,k}})(\sum_{b\in \{0,1\}}\ket{b,\inj(x_{b,y})}\ket{\psi'_{b,y,k}})^\dagger\otimes \ket{y}\bra{y}
\end{equation}
If we let 
\begin{equation}
R_{c,d,k}^C = \{y\in T_k | d\cdot (\inj(x_{0,y})\oplus \inj(x_{1,y})) = c\}
\end{equation}
 the expression in \eqref{eq:zoncommitment} is equal to:
\begin{eqnarray}
 \sum_{c\in \{0,1\}, y\in R_{c,d,k}^C} (Z^c \otimes \mcI)\rho_{y,k} (Z^c \otimes \mcI)\label{eq:secondtolastdistinguishingcross}
\end{eqnarray}
Finally, $\mathcal{A}$ chooses a random bit $c'$, applies $Z^{c'}$ to the committed qubit and stores $c'$ in an auxiliary register. Continuing from \eqref{eq:secondtolastdistinguishingcross}, the state of $\mathcal{A}$ at this point is equal to: 
\begin{eqnarray}\label{eq:stateind2beforepartition}
\frac{1}{2}\sum_{c,c'\in \{0,1\}, y\in R_{c,d,k}^C} (Z^{c\oplus c'} \otimes \mcI)\rho_{y,k} (Z^{c\oplus c'} \otimes \mcI)\otimes \ket{c'}\bra{c'}
\end{eqnarray}
We partition the state in \eqref{eq:stateind2beforepartition} into components using the following projection:
\begin{eqnarray}
P_{c,k}^C &=& \mcI \otimes \sum_{y\in R^C_{c,d,k}}\ket{y}\bra{y}
\end{eqnarray}
The state in \eqref{eq:stateind2beforepartition} can now be written in terms of $\hat{\sigma}_{r,k}$, as defined in \eqref{eq:crosstermdef}
\begin{equation}
\hat{\sigma}_{r,k} = (Z^r\otimes \mcI)(\sum_{y\in T_k}\rho_{y,k})(Z^r\otimes \mcI)
\end{equation}
as follows:
\begin{eqnarray} 
&=& \frac{1}{2}\sum_{c,c'\in \{0,1\}} P_{c,k}^C(Z^{c\oplus c'} \otimes \mcI)(\sum_{y\in T_k} \rho_{y,k}) (Z^{c\oplus c'} \otimes \mcI)P_{c,k}^C\otimes \ket{c'}\bra{c'}\\
&=& \frac{1}{2}\sum_{c,c'\in \{0,1\}} P_{c,k}^C(\hat{\sigma}_{c\oplus c',k}) P_{c,k}^C\otimes \ket{c'}\bra{c'}\label{eq:finalstatezdistinguishing}
\end{eqnarray}
Finally, $\mathcal{A}$ applies the CPTP map $\mathcal{O}$ (which is used to distinguish between $\hat{\sigma}_{0}$ and $\hat{\sigma}_{1}$) to the state in \eqref{eq:finalstatezdistinguishing} and measures the first qubit. If the result of the measurement is $r\in \{0,1\}$, $\mathcal{A}$ outputs $c'\oplus r$.

In order to violate the hardcore bit clause of Definition \ref{def:extendedtrapdoorclawfree}, $\mathcal{A}$ must guess the value of $d\cdot (\inj(x_{0,y})\oplus \inj(x_{1,y}))$ with non negligible advantage. More formally, we need to show that the following advantage of $\mathcal{A}$ (taken from Definition \ref{def:extendedtrapdoorclawfree}) is non negligible:
\begin{equation}\label{eq:advantagecrossterms}
\Big|\Pr_{(k,t_k)\leftarrow \textrm{GEN}_{\mathcal{F}}(1^{\lambda})}[\mathcal{A}(k) \in H'_{k,d}] - \frac{1}{2}\Big|
\end{equation}
where
\begin{eqnarray}
H'_{k,d} &=& \{d\cdot (\inj(x_0)\oplus \inj(x_1))|  (x_0,x_1)\in\mathcal{R}_k \}
\end{eqnarray}
The bit output by $\mathcal{A}$ is in $H'_{k,d}$ if, on components $P_{c,k}^C\hat{\sigma}_{0,k}P_{c,k}^C$ and $P_{c,k}^C\hat{\sigma}_{1,k}P_{c,k}^C$, $\mathcal{A}$ outputs $c$. This occurs as long as the distinguishing operator $\mathcal{O}$ outputs $r$ on components $P_{0,k}^C\hat{\sigma}_{r,k}P_{0,k}^C$ and $P_{1,k}^C\hat{\sigma}_{r,k}P_{1,k}^C$. It follows that the probability that $\mathcal{A}$ outputs a value in $H'_{k,d}$ is equal to: 
\begin{eqnarray}
\Pr_{(k,t_k)\leftarrow \textrm{GEN}_{\mathcal{F}}(1^{\lambda})}[\mathcal{A}(k,d) \in H'_{k,d}] &=& \frac{1}{2}\sum_{\substack{k\in \mathcal{K}_{\mathcal{F}}\\ r\in\{0,1\}}} D_{\Ver,h}(k)\cdot \tr((\ket{r}\bra{r}\otimes \mcI)\mathcal{O}(\sum_{c\in \{0,1\}} P_{c,k}^C\hat{\sigma}_{r,k}P_{c,k}^C ))\nonumber\\
&=& \frac{1}{2}\sum_{\substack{k\in \mathcal{K}_{\mathcal{F}}\\ r\in\{0,1\}}} D_{\Ver,h}(k)\cdot \tr((\ket{r}\bra{r}\otimes \mcI)\mathcal{O}(\hat{\sigma}_{r,k} ))\nonumber\\
&=& \frac{1}{2}\sum_{k\in\mathcal{K}_{\mathcal{F}}} D_{\Ver,h}(k)\cdot \tr((\ket{0}\bra{0}\otimes \mcI)\mathcal{O}(\hat{\sigma}_{0,k} - \hat{\sigma}_{1,k} )) + \frac{1}{2} \nonumber\label{eq:attackerprobinhprime}
\end{eqnarray}
We can use the above expression to write the advantage of $\mathcal{A}$ from \eqref{eq:advantagecrossterms} as:
\begin{eqnarray}
\frac{1}{2}\Big|\sum_{k\in\mathcal{K}_{\mathcal{F}}} D_{\Ver,h}(k)\cdot \tr((\ket{0}\bra{0}\otimes \mcI)\mathcal{O}(\hat{\sigma}_{0,k} - \hat{\sigma}_{1,k}))\Big|\label{eq:finalexpresscross}
\end{eqnarray}
The expression in \eqref{eq:finalexpresscross} is non negligible, due to our initial assumption that the CPTP map $\mathcal{O}$ can distinguish between $\hat{\sigma}_0$ and $\hat{\sigma}_1$ (see \eqref{eq:compinddefstateind2}).

\end{proofof}

\subsubsection{Reduction to Diagonal/Cross Terms (Proof of \Cl{cl:reductiontodiagonalcross})}\label{sec:proofofreductiontodiagonalcross}
\begin{proof}
If $\sigma_0$ is computationally distinguishable from $\sigma_1$, by Definition \ref{def:compind} and the expression for $\sigma_r$ in \eqref{eq:startingstatesum} there exists an efficiently computable CPTP map $\mathcal{O}$ for which the following expression is non negligible:
\begin{equation}\label{eq:compdistneg}
\Big|\sum_{k\in\mathcal{K}_{\mathcal{F}}} D_{\Ver,h}(k)\cdot \tr((\ket{0}\bra{0}\otimes \mcI)\mathcal{O}(\sum_{r\in \{0,1\}} (-1)^r\sigma_{r,k}))\Big|
\end{equation}
We use the expression for $\sigma_{r,k}$ from  \eqref{eq:sigmabreakdown} and the matrix $\sigma_{r,k}^D$ from \eqref{eq:diagonaltermdef} to define the matrix $\sigma_{r,k}^C$:
\begin{eqnarray}
\sigma_{r,k} &=&  \sum_{\substack{b',c\in \{0,1\},\tau\\ d\in\{0,1\}^w,y\in R_{c,d,k}}} \delta_{d,y}O_{b',c\oplus r,d,\tau}(\rho_{y,k}^D + (-1)^r\rho_{y,k}^C)O_{b',c\oplus r,d,\tau}^\dagger\\
&=&  \sigma_{r,k}^D + \sigma_{r,k}^C\label{eq:sigmacrosstermproof}
\end{eqnarray}
By combining the following equality
\begin{equation}
\sum_{r\in \{0,1\}} (-1)^r\sigma_{r,k} = \sum_{r\in \{0,1\}} (-1)^r \sigma_{r,k}^D + \sum_{r\in \{0,1\}} (-1)^r\sigma_{r,k}^C
\end{equation}
with the triangle inequality, it follows that if the quantity in \eqref{eq:compdistneg} is non negligible, one of the following two quantities must be non negligible: 
\begin{eqnarray}
\Big|\sum_{k\in\mathcal{K}_{\mathcal{F}}} D_{\Ver,h}(k)\cdot \tr((\ket{0}\bra{0}\otimes \mcI)\mathcal{O} (\sum_{r\in \{0,1\}} (-1)^r\sigma_{r,k}^D))\Big|\label{eq:diagonaltermbeforesimplify}\\
\Big|\sum_{k\in\mathcal{K}_{\mathcal{F}}} D_{\Ver,h}(k)\cdot\tr((\ket{0}\bra{0}\otimes \mcI)\mathcal{O} (\sum_{r\in \{0,1\}} (-1)^r\sigma_{r,k}^C))\Big|\label{eq:crosstermbeforesimplify}
\end{eqnarray}
If the quantity in \eqref{eq:diagonaltermbeforesimplify} is non negligible, $\sigma_0^D$ is computationally distinguishable from $\sigma_1^D$. To complete the proof of Claim \ref{cl:reductiontodiagonalcross}, we will show that if the quantity in \ref{eq:crosstermbeforesimplify} is non negligible, $\hat{\sigma}_0$ is computationally distinguishable from $\hat{\sigma}_1$ ($\hat{\sigma}_r$ is defined in \eqref{eq:crosstermdef} and copied below in \eqref{eq:crosstermdeftriangleproof}). To do this, we show below that for the efficiently computable CPTP map $\mathcal{S}' = \{\frac{1}{\sqrt{2}}O_{b',c,d,\tau}\}_{b',c,d,\tau}$ ($O_{b',c,d,\tau}$ is introduced in \eqref{eq:defsuperoperatorO}):
\begin{equation}\label{eq:crosstermdifference}
\sum_{r\in \{0,1\}}(-1)^r\sigma_{r,k}^C = \mathcal{S}'(\sum_{r\in \{0,1\}}(-1)^r\hat{\sigma}_{r,k})
\end{equation}
Therefore, if the quantity in \eqref{eq:crosstermbeforesimplify} is non negligible, $\hat{\sigma}_0$ is computationally distinguishable from $\hat{\sigma}_1$ by using the CPTP map $\mathcal{O}\mathcal{S}'$. 

We now prove \eqref{eq:crosstermdifference}, beginning with the expression for $\sigma_{r,k}^C$ in \eqref{eq:sigmacrosstermproof}. First, we observe that
\begin{equation}
    \sigma_{1,k}^C = -(X\otimes \mcI)\sigma_{0,k}^C(X\otimes \mcI)
\end{equation}
Therefore: 
\begin{eqnarray}
\sigma_{0,k}^C - \sigma_{1,k}^C &=& \sum_{r\in\{0,1\}} (X^r\otimes \mcI)\sigma_{0,k}^C(X^r\otimes \mcI)\\
&=& \sum_{\substack{b',c,r\in \{0,1\},\tau\\ d\in\{0,1\}^w, y\in R_{c,d,k}}} \delta_{d,y}O_{b',c\oplus r,d,\tau}(\rho_{y,k}^C)O_{b',c\oplus r,d,\tau}^\dagger\\
&=& \sum_{\substack{b',c\in \{0,1\},\tau\\ d\in\{0,1\}^w,y\in T_k}} O_{b',c,d,\tau}(\rho_{y,k}^C)O_{b',c,d,\tau}^\dagger\label{eq:claimstateind20}
\end{eqnarray}
The third equality follows since $\delta_{d,y} = \frac{1}{2}$ if $d\notin \dset_{k,0,x_{0,y}} \cap \dset_{k,1,x_{1,y}}$ and $1$ if $d\in \dset_{k,0,x_{0,y}} \cap \dset_{k,1,x_{1,y}}$ and (from \eqref{eq:defsetr})
\begin{equation}
R_{c,d,k} = \{y\in T_k | (d\in \dset_{k,0,x_{0,y}} \cap \dset_{k,1,x_{1,y}} \land d\cdot (\inj(x_{0,y})\oplus \inj(x_{1,y}))= c) \lor (d\notin \dset_{k,0,x_{0,y}} \cap \dset_{k,1,x_{1,y}})\}
\end{equation}
Recall the state $\hat{\sigma}_{r,k}$ from \eqref{eq:crosstermdef}:
\begin{equation}\label{eq:crosstermdeftriangleproof}
\hat{\sigma}_{r,k} = (Z^r\otimes \mcI)(\sum_{y\in T_k}\rho_{y,k})(Z^r\otimes \mcI)
\end{equation}
We use the equality $\rho_{y,k} = \rho_{y,k}^D + \rho_{y,k}^C$ from \eqref{eq:componentbreakdown}:
\begin{eqnarray}
\hat{\sigma}_{0,k} - \hat{\sigma}_{1,k} &=& \sum_{y\in T_k}(\rho_{y,k}^D + \rho_{y,k}^C) - (Z\otimes \mcI) \sum_{y\in T_k}(\rho_{y,k}^D + \rho_{y,k}^C)(Z\otimes \mcI)\\
&=& 2\sum_{y\in T_k}\rho_{y,k}^C \label{eq:crosstermgoodstate}
\end{eqnarray}
Plugging the equality in \eqref{eq:crosstermgoodstate} into \eqref{eq:claimstateind20} yields \eqref{eq:crosstermdifference}.

\end{proof}




\subsection{Proof of Measurement Protocol Soundness}\label{sec:measprotocolsoundness}
In this section, we combine the results from Sections \ref{sec:proverbehavior} to \ref{sec:generaltotrivialhadamard} to prove soundness of the measurement protocol. To do so, we will first transition from general to perfect provers, and then from perfect to trivial provers. The final result is stated in the following lemma:
\begin{lem}{\textbf{Soundness of Protocol \ref{prot:measprotocol}}}\label{lem:measprotocolsoundness}
For a prover $\Prov$ in Protocol \ref{prot:measprotocol}, let $1 - p_{h,H}$ be the probability that the verifier accepts $\Prov$ on basis choice $h$ in the Hadamard round and $1 - p_{h,T}$ be the probability that the verifier accepts $\Prov$ in the test round. There exists a state $\rho$, a prover $\Prov'$ and a negligible function $\mu$ such that for all $h$, $\TV{D_{\Prov,h}}{ D_{\Prov',h}}\leq p_{h,H} + \sqrt{p_{h,T}} + \mu$ and $D_{\Prov',h}$ is computationally indistinguishable from the distribution $D_{\rho,h}$ which results from measuring $\rho$ in the basis determined by $h$.
\end{lem}
Throughout this section, we will use notation introduced in Section \ref{sec:notation}. We first state the two claims required to prove Lemma \ref{lem:measprotocolsoundness}, and then assume correctness of these claims to prove Lemma \ref{lem:measprotocolsoundness}. In the rest of this section, we prove the required claims. 

The first required claim (proven in Section \ref{sec:proofofgeneraltoperfect}) transitions from a general prover to a \textit{perfect} prover, i.e. a prover who is always accepted in the test run by the verifier:

\begin{claim}\label{cl:generaltoperfect}
For a prover $\Prov$ in Protocol \ref{prot:measprotocol}, let $1 - p_{h,H}$ be the probability that the verifier accepts $\Prov$ on basis choice $h$ in the Hadamard round and $1 - p_{h,T}$ be the probability that the verifier accepts $\Prov$ on basis choice $h$ in the test round. There exists a perfect prover $\Prov'$ and a negligible function $\mu$ such that for all $h$, $\TV{D_{\Prov,h}^C}{D_{\Prov',h}}\leq p_{h,H} + \sqrt{p_{h,T}} + \mu$.
\end{claim}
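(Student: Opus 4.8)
The plan is to convert a general prover $\Prov$ into a perfect prover $\Prov'$ by conditioning on success in a test round, and then to control the error introduced by this conditioning. First I would invoke Claim \ref{cl:generalprover} to write $\Prov$ as a prover characterized by $(U_0,U)$ (or, more generally, $(\mathcal{S}_0,\mathcal{S})$): upon receiving $k'$ it applies $U_0$, measures the commitment string registers to obtain $y'$, and then in a Hadamard round applies $U$ before the Hadamard-basis measurements, while in a test round it simply measures in the standard basis with no attack. The key object is the state $\tau$ of the prover immediately after reporting $y'$. Define the projector $\Pi_{\text{acc}}$ onto the subspace on which the test round is accepted: this is the projector onto preimage registers lying in $\supp(f'_{k_i,b})$ (resp.\ $g$) for the reported $y_i$, which by the Trapdoor Injective Pair / Efficient Range Superposition conditions (Definition \ref{def:trapdoorclawfree}, Definition \ref{def:trapdoorinjective}) means the committed qubit and preimage register are in a superposition over valid preimages of $y_i$ — exactly the format of \eqref{eq:stateformatoverview}/\eqref{eq:stateformatoverviewstandard}. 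Since $\Prov$ does not attack in the test round, the test-acceptance probability is $1-p_{h,T} = \|\Pi_{\text{acc}}\,\tau^{1/2}\|^2$ averaged over $k',y'$, i.e.\ $\tr(\Pi_{\text{acc}}\tau) = 1-p_{h,T}$ in the appropriate averaged sense.

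The prover $\Prov'$ is then defined to be the ``conditioned'' prover: it runs $U_0$, measures $y'$, and then applies the projection $\Pi_{\text{acc}}$ (implemented coherently — append an ancilla, compute whether the test would accept using the trapdoor... but wait, the prover has no trapdoor). The right move is instead the standard ``gentle measurement'' / state-replacement argument: let $\Prov'$ prepare, for each $y'$, the renormalized post-selected state $\Pi_{\text{acc}}\tau\Pi_{\text{acc}}/\tr(\Pi_{\text{acc}}\tau)$ — this is efficiently preparable because checking membership in $\supp(f'_{k_i,b})$ uses only $\textrm{CHK}$, which needs only $k_i$, not the trapdoor (conditions 3(b) of Definitions \ref{def:trapdoorclawfree} and \ref{def:trapdoorinjective}). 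So $\Prov'$ commits, measures $y'$, then measures the CHK-predicate for each $i$ and, if it fails, restarts (or aborts and retries) — on the event that all CHK predicates pass, $\Prov'$ holds exactly the post-selected state and is a perfect prover by construction. Then $\Prov'$ applies the same Hadamard-round attack $U$ as $\Prov$ and reports accordingly.

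The error analysis proceeds in two steps. First, the Gentle Measurement Lemma (or directly: $\T{\tau}{\Pi_{\text{acc}}\tau\Pi_{\text{acc}}/\tr(\Pi_{\text{acc}}\tau)} \le \sqrt{1-\tr(\Pi_{\text{acc}}\tau)}$, averaged and using concavity of $\sqrt{\cdot}$) gives that the trace distance between the Hadamard-round output of $\Prov$ before conditioning and that of $\Prov'$ is at most $\sqrt{p_{h,T}}$, since trace distance is contractive under the common CPTP map ``apply $U$, measure, decode''; this yields $\TV{D_{\Prov,h}}{D_{\Prov',h}} \le \sqrt{p_{h,T}}$. Second, I relate $D_{\Prov,h}$ to $D^C_{\Prov,h}$: conditioning on Hadamard-round acceptance moves the distribution by at most $\TV{D_{\Prov,h}}{D^C_{\Prov,h}} \le p_{h,H}$ (conditioning a distribution on an event of probability $1-p_{h,H}$ changes total variation distance by at most $p_{h,H}$). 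Combining via the triangle inequality, $\TV{D^C_{\Prov,h}}{D_{\Prov',h}} \le p_{h,H} + \sqrt{p_{h,T}}$, with a negligible $\mu$ absorbing the negligible trace-distance errors from the approximate range superposition (the $f'$ vs.\ $f$ discrepancy, Lemma \ref{lem:hellingertotrace}) and from any restart-probability slack.

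The main obstacle I expect is making precise that $\Prov'$ is genuinely a perfect prover \emph{and} genuinely efficient: the post-selected state must be preparable without the trapdoor (this is exactly why $\textrm{CHK}$ is trapdoor-free in the definitions), and one must argue that conditioning on all $n$ CHK-predicates passing leaves the committed qubits/preimage registers in the exact ``superposition over valid preimages'' format \eqref{eq:stateformatoverview}, so that $\Prov'$ passes a subsequent test round with certainty. A subsidiary subtlety is bookkeeping the $\sqrt{\cdot}$ via concavity when averaging over $(k',y')$ — the averaged test-failure probability $p_{h,T}$ controls $\mathbb{E}[\sqrt{1-\tr(\Pi_{\text{acc}}\tau)}] \le \sqrt{p_{h,T}}$ by Jensen — and checking that the Hadamard-round attack $U$ acts identically in both provers so that contractivity of trace distance applies cleanly.
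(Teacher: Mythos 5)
Your proposal follows essentially the same approach as the paper. Both condition $\Prov$ on test-round acceptance via the trapdoor-free CHK predicate, bound the induced trace distance by $\sqrt{p_{h,T}}$ (a pure-state fidelity argument applied per key $k'$, rather than the general gentle-measurement lemma, which is what makes the constant exactly $1$), bound $\TV{D_{\Prov,h}}{D_{\Prov,h}^C} \le p_{h,H}$, and combine by the triangle inequality.

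One detail the paper makes explicit that your sketch leaves open is termination of the restart loop. The paper defines the CPTP map $\mathcal{S}_0$ to apply the commit-then-CHK unitary $V$ at most $n'$ times, where $n'$ is a fixed polynomial chosen so that failure probability is negligible whenever $1-p_{h,T}$ is non-negligible; on exhausting all $n'$ tries it falls back to the honest prover's strategy of Protocol \ref{prot:measprotocolprover}, which is itself perfect, so $\Prov'$ is a well-defined perfect prover on every branch and the fallback branch contributes only the negligible $\mu$. The paper also notes that for $h$ with $1-p_{h,T}$ negligible the bound $\sqrt{p_{h,T}}+\mu$ is trivially satisfied, so the restart-efficiency argument is only needed in the non-negligible case. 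Your wording ``aborts and retries'' together with pushing ``restart-probability slack'' into $\mu$ gestures at the right idea, but to have a genuinely perfect prover you must specify what $\Prov'$ does on the failure branch.
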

The second required claim (proven in Section \ref{sec:proofofperfecttotrivial}) transitions from a perfect prover to a trivial prover (as defined in Definition \ref{def:trivial}, a trivial prover's Hadamard round attack commutes with standard basis measurement on the $n$ committed qubits):
\begin{claim}\label{cl:perfecttotrivialprover}
For all perfect provers $\Prov$, there exists a trivial prover $\hat{\Prov}$ such that for all $h$, $D_{\Prov,h}$ is computationally indistinguishable from $D_{\hat{\Prov},h}$. 
\end{claim}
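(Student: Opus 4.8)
The plan is to convert the Hadamard-round attack of $\Prov$ into an $X$-trivial one by a hybrid argument that processes the committed qubits one at a time. By Claim~\ref{cl:generalprover} and Notation~\ref{notation:superoperator}, I may assume $\Prov$ is characterized by a pair $(U_0,\mathcal{S})$ with $U_0$ an efficiently computable unitary and $\mathcal{S}$ an efficiently computable CPTP map (taking $\mathcal{S}$ to have the single Kraus operator $U$ if $\Prov$ is characterized by unitaries $(U_0,U)$). Since the verifier's acceptance in a test round depends only on $U_0$, every prover characterized by $(U_0,\cdot)$ is again perfect. For $0\le j\le n$ let $\Prov^{(j)}$ be the prover characterized by $(U_0,\mathcal{S}^{(j)})$, where $\mathcal{S}^{(0)}=\mathcal{S}$ and $\mathcal{S}^{(j)}$ is obtained from $\mathcal{S}^{(j-1)}$ by writing each Kraus operator in the Pauli basis on committed qubit $j$, $B_\tau=\sum_{x,z\in\{0,1\}}X^xZ^z\otimes B_{jxz\tau}$, and deleting the $X$-component: $B'_{j,x,\tau}=\sum_z Z^z\otimes B_{jxz\tau}$. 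This is exactly the operation relating $\mathcal{S}$ to $\mathcal{S}_j$ in the statement of Claim~\ref{cl:singlesecurity}, so $\Prov^{(0)}=\Prov$ and I will set $\hat{\Prov}=\Prov^{(n)}$.

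Fix $h\in\{0,1\}^n$. I would bound the distinguishing advantage between $D_{\Prov,h}$ and $D_{\hat{\Prov},h}$ by telescoping through the hybrids, step $j$ being the passage from $\Prov^{(j-1)}$ to $\Prov^{(j)}$. If $h_j=1$, then Claim~\ref{cl:singlesecurity} applied to the perfect prover $\Prov^{(j-1)}$ with index $j$ gives directly that $D_{\Prov^{(j-1)},h}$ and $D_{\Prov^{(j)},h}$ are computationally indistinguishable. If $h_j=0$, I claim $D_{\Prov^{(j-1)},h}=D_{\Prov^{(j)},h}$ exactly: the verifier then samples an injective key $k_j$, so for a perfect prover the $j$-th committed qubit is in a definite standard-basis state at the start of the Hadamard round and $m_j$ is recovered from $y_j$ rather than from $(b'_j,d_j)$; discarding $(b'_j,d_j)$ is the same as tracing out committed qubit $j$ and preimage register $j$ after the attack, and the partial trace over a qubit that begins in a standard-basis state annihilates precisely the $X$-component acting on it, so the reduced states after $\mathcal{S}^{(j-1)}$ and $\mathcal{S}^{(j)}$ coincide and nothing else in the protocol changes (this is the standard-basis counterpart of Claim~\ref{cl:singlesecurity}; cf.\ Claim~\ref{cl:reductiontosinglestandard}). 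Since there are at most $n=\poly(\lambda)$ indices with $h_j=1$, each contributing a negligible advantage, and the indices with $h_j=0$ contribute none, the triangle inequality for computational indistinguishability (Definition~\ref{def:compind}) gives that $D_{\Prov,h}$ is computationally indistinguishable from $D_{\hat{\Prov},h}$, for every $h$.

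Finally I would verify that $\hat{\Prov}$ really is a trivial prover in the sense of Definition~\ref{def:trivial}. It is perfect, since it shares $U_0$ with $\Prov$. It is efficient: by Corollary~\ref{corol:diagonalizingmeas}, in the protocol $\Prov^{(j)}$ produces the same distribution as the prover who, in the Hadamard round, conjugates $\mathcal{S}^{(j-1)}$ by a uniformly random $Z$ on committed qubit $j$ and then performs the honest measurements --- a trailing $X$ on a qubit being harmless immediately before that qubit's Hadamard measurement --- so $\hat{\Prov}$ is realized by conjugating $\mathcal{S}$ by independent uniform $Z$'s on all $n$ committed qubits, which keeps $\hat{\Prov}$ efficiently computable. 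That same description identifies the characterizing CPTP map of $\hat{\Prov}$ with $\mathcal{M}\circ\mathcal{S}\circ\mathcal{M}$, where $\mathcal{M}$ is standard-basis dephasing on the first $n$ qubits (conjugation by a uniform $Z$ \emph{is} dephasing); as $\mathcal{M}$ is idempotent, $\mathcal{M}\circ\mathcal{S}\circ\mathcal{M}$ commutes with $\mathcal{M}$, i.e.\ with standard-basis measurement on the first $n$ qubits, so $\hat{\Prov}$ is $X$-trivial. I expect the main obstacle to be precisely this bookkeeping: confirming that trivializing committed qubit $j$ does not undo the $X$-triviality already obtained on qubits $1,\dots,j-1$ (which the dephasing description handles cleanly) and treating the $h_j=0$ coordinates, to which Claim~\ref{cl:singlesecurity} does not apply; the accumulation of negligible advantages over the $n$ hybrids is then routine.
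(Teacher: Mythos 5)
Your proof follows the same hybrid strategy as the paper: pass one committed qubit at a time from the Pauli decomposition $B_\tau=\sum_{x,z}X^xZ^z\otimes B_{jxz\tau}$ to its $X$-truncation $B'_{j,x,\tau}=\sum_z Z^z\otimes B_{jxz\tau}$, invoking Claim~\ref{cl:singlesecurity} when $h_j=1$ and the standard-basis argument of Claim~\ref{cl:reductiontosinglestandard} when $h_j=0$, and close by the triangle inequality for computational indistinguishability. The only cosmetic difference from the paper is that you sweep through $j=1,\dots,n$ in fixed order, interleaving the two cases, whereas the paper relabels so that all $h_j=1$ come first; this is immaterial because both Claims~\ref{cl:singlesecurity} and \ref{cl:reductiontosinglestandard} hold for an arbitrary CPTP attack and an arbitrary single index, and the terminal prover $\hat{\Prov}=\Prov^{(n)}$, like the paper's $\hat{\Prov}_n$, is independent of $h$. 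Your inline treatment of the $h_j=0$ step (the partial trace over a committed qubit that is in a definite standard-basis state suppresses the $X$-component of the attack on that qubit) is a correct direct computation and matches what Claim~\ref{cl:reductiontosinglestandard} proves via the $Z$-Pauli twirl.

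There is one genuine error in the last paragraph. You assert that because $\hat{\Prov}$ is realized by conjugating $\mathcal{S}$ with uniform $Z$'s on the committed qubits, its characterizing CPTP map must be $\mathcal{M}\circ\mathcal{S}\circ\mathcal{M}$ with $\mathcal{M}$ the standard-basis dephasing channel, and you then use idempotence of $\mathcal{M}$ to conclude $X$-triviality. This identification is false: the $Z$ Pauli twirl applies the \emph{same} $Z^r$ before and after each Kraus operator, i.e.\ $\rho\mapsto\frac{1}{2^n}\sum_r (Z^r\otimes\mcI)\mathcal{S}\big((Z^r\otimes\mcI)\rho(Z^r\otimes\mcI)\big)(Z^r\otimes\mcI)$, whereas $\mathcal{M}\circ\mathcal{S}\circ\mathcal{M}$ averages over \emph{independent} pre- and post-conjugation bits. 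These are different channels; for instance, for $n=1$ and $\mathcal{S}$ equal to conjugation by $Y$, the twirl is the identity channel on $\mathcal{S}$ (it returns $Y\rho Y$), while $\mathcal{M}\circ\mathcal{S}\circ\mathcal{M}$ gives $\frac{1}{2}(X\rho X+Y\rho Y)$. Worse, $\mathcal{M}\circ\mathcal{S}\circ\mathcal{M}$ dephases the committed qubits \emph{before} the attack, which would destroy the coherence $\sum_b\alpha_b\ket{b}\ket{x_{b,y}}$ in the committed state and hence would \emph{not} reproduce $D_{\hat{\Prov},h}$; so $\mathcal{M}\circ\mathcal{S}\circ\mathcal{M}$ does not characterize $\hat{\Prov}$ at all. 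The fix is to work, as the paper does, with the Pauli-truncated map $\mathcal{S}_n=\{U_x\}_{x\in\{0,1\}^n}$ with $U_x=\sum_z Z^z\otimes U_{xz}$; this visibly commutes with standard-basis measurement on the first $n$ qubits, and it is efficiently realizable (introduce $n$ ancillas in $\ket{+}^{\otimes n}$, controlled-$Z$ them onto the committed qubits, apply $\mathcal{S}$, controlled-$Z$ again, Hadamard and measure the ancillas to learn $x$, and undo the residual $X^x$ on the committed qubits), even though your Lemma~\ref{lem:diagonalizing}-based realization with the random-$Z$ conjugation has a trailing $X^x$ in its Kraus operators and so does not itself commute with standard-basis measurement.
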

Assuming Claim \ref{cl:generaltoperfect} and Claim \ref{cl:perfecttotrivialprover}, the proof of Lemma \ref{lem:measprotocolsoundness} is straightforward:

\begin{proofof}{\Le{lem:measprotocolsoundness}}
We begin with a prover $\Prov$ and apply Claim \ref{cl:generaltoperfect} to transition to a perfect prover $\Prov'$ such that for all $h$, $\TV{D_{\Prov,h}^C}{D_{\Prov',h}}\leq p_{h,H} + \sqrt{p_{h,T}} + \mu$ for a negligible function $\mu$. Combining Claim \ref{cl:perfecttotrivialprover} and Lemma \ref{lem:trivialprover} tells us that there exists a state $\rho$ such that for all $h$, $D_{\Prov',h}$ is computationally indistinguishable from $D_{\rho,h}$. In more detail, Claim \ref{cl:perfecttotrivialprover} shows that there exists a trivial prover $\hat{\Prov}$ such that for all $h$, $D_{\Prov',h}$ is computationally indistinguishable from $D_{\hat{\Prov},h}$. Next, Lemma \ref{lem:trivialprover} shows that there exists a state $\rho$ such that for all $h$, $D_{\hat{\Prov},h}$ is computationally indistinguishable from $D_{\rho,h}$. 

\end{proofof}

\subsubsection{General to Perfect Prover (Proof of \Cl{cl:generaltoperfect})}\label{sec:proofofgeneraltoperfect}
Before proceeding to the proof of Claim \ref{cl:generaltoperfect}, we provide some intuition. The prover $\Prov'$ in the statement of the claim is created by conditioning $\Prov$ on acceptance in the test round. This argument is straightforward, but does have one delicate aspect: we need to ensure that the perfect prover is still efficient, even though we have conditioned on acceptance in the test run. This is taken care of by recalling that in the test round, the verifier computes the procedure CHK$_{\mathcal{F}}$ or CHK$_{\mathcal{G}}$ (see Protocol \ref{prot:measprotocol}). By definition (Definitions \ref{def:trapdoorclawfree} and \ref{def:trapdoorinjective}), both of these procedures require only the function key and not the trapdoor, which implies that the procedures can be computed efficiently by the prover. 
 
\begin{proof}
We begin by observing that by definition of total variation distance (see \eqref{eq:deftotalvariation}):
\begin{equation}
    \TV{D_{\Prov,h}^C}{ D_{\Prov,h}} = p_{h,H}
\end{equation}
It follows by \eqref{eq:equivalenceofdistandstate} that $\T{\sigma_{\Prov,h}^C}{\sigma_{\Prov,h}} = p_{h,H}$. The rest of this proof focuses on showing that there exists a perfect prover $\Prov'$ and a negligible function $\mu$ such that for all $h$ for which $1 - p_{h,T}$ is non negligible, the trace distance between $\sigma_{\Prov,h}$ and $\sigma_{\Prov',h}$ is $\leq \sqrt{p_{h,T}} + \mu$. We note that for all $h$ for which $1 - p_{h,T}$ is negligible, the trace distance bound of $\sqrt{p_{h,T}} + \mu$ is trivial and therefore satisfied. It follows by the triangle inequality that for all $h$, the trace distance between  $\sigma_{\Prov,h}^C$ and $\sigma_{\Prov',h}$ is $\leq p_{h,H} + \sqrt{p_{h,T}} + \mu$ (to complete the proof of the claim, recall from \eqref{eq:equivalenceofdistandstate} that $\T{\sigma_{\Prov,h}^C}{\sigma_{\Prov',h} } = \TV{D_{\Prov,h}^C}{ D_{\Prov',h} }$).
\\~\\
\textbf{Distance from perfect prover (between $\sigma_{\Prov,h}$ and $\sigma_{\Prov',h}$)}
By \Cl{cl:generalprover}, there exist unitaries $U_0,U$ such that the prover $\Prov$ is characterized by $(U_0,U)$ and the state $\sigma_{\Prov,h}$ can be created by following steps 1-4 in the statement of the claim. The state after application of $U_0$ (step 1 of Claim \ref{cl:generalprover}) is:  
\begin{equation}\label{eq:gentoperfstarting}
\sigma_{\Prov,h,0} = \sum_{k'} D_{\Ver,h}(k') \cdot U_0(\ket{0}\bra{0}^{\otimes e} \otimes \ket{k'}\bra{k'}) U_0^\dagger
\end{equation}
The state above is a mixed state over the verifier's choice of the function key $k'$, which is sampled according to the distribution $D_{\Ver,h}$ (see Section \ref{sec:notation} for a notation reminder). To create $\sigma_{\Prov,h}$ (i.e. the state resulting from the Hadamard round defined in \eqref{eq:verifierstate}), the prover will measure the second to last register (obtaining $y'\in \sY^n$), apply his attack $U$, measure all committed qubits and preimage registers in the Hadamard basis, send the results to the verifier and discard all other qubits. The verifier will decode the appropriate registers and discard all other measurement results (as described in Protocol \ref{prot:measprotocol}). Note that for all provers $\Prov, \Prov'$:
\begin{eqnarray}\label{eq:tracedistupperboundperfectextended}
\T{\sigma_{\Prov,h}}{\sigma_{\Prov',h}} \leq \T{\sigma_{\Prov,h,0}}{\sigma_{\Prov',h,0}}
\end{eqnarray}
This is because the operators described above which are applied to $\sigma_{\Prov,h,0}$ to create $\sigma_{\Prov,h}$ represent a CPTP map. 

We now construct a perfect prover $\Prov'$ and provide an upper bound for $\T{\sigma_{\Prov,h,0}}{ \sigma_{\Prov',h,0}}$. We begin by partitioning the state $\sigma_{\Prov,h,0}$ from \eqref{eq:gentoperfstarting} according to acceptance or rejection by the verifier in the test round:
\begin{equation}
\sigma_{\Prov,h,0} = \sum_{k'} D_{\Ver,h}(k') \cdot (\ket{\psi_{ACC,k'}} + \ket{\psi_{REJ,k'}})(\bra{\psi_{ACC,k'}} + \bra{\psi_{REJ,k'}})\otimes \ket{k'}\bra{k'}
\end{equation}
where 
\begin{equation}\label{eq:u0breakdown}
    U_0\ket{0}^{\otimes e}\ket{k'} = U_{0,k'}(\ket{0}^{\otimes e})\otimes \ket{k'} = (\ket{\psi_{ACC,k'}} + \ket{\psi_{REJ,k'}})\otimes \ket{k'}
\end{equation}
The first equality in \eqref{eq:u0breakdown} is given in \eqref{eq:u0trivialcontrol} in Claim \ref{cl:generalprover}, $\ket{\psi_{ACC,k'}}$ (resp. $\ket{\psi_{REJ,k'}}$) is the part of the state $U_{0,k'}(\ket{0}^{\otimes e})$ which will be accepted (resp. rejected) by the verifier in the test round for function choice $k'$, and $\braket{\psi_{ACC,k'}}{\psi_{REJ,k'}} = 0$. Consider the following state:
\begin{equation}\label{eq:sigmaperfect}
\sigma_{perfect} = \frac{1}{1 - p_{h,T}}\sum_{k'} D_{\Ver,h}(k') \cdot (\ket{\psi_{ACC,k'}})(\bra{\psi_{ACC,k'}})\otimes \ket{k'}\bra{k'}
\end{equation}
The trace distance between $\sigma_{perfect}$ and $\sigma_{\Prov,h,0}$ is at most $\sqrt{p_{h,T}}$. We show below that there exists a CPTP map $\mathcal{S}_0$, a perfect prover $\Prov'$ characterized by $(\mathcal{S}_0,U)$ and a negligible function $\mu$ such that for all $h$ for which $1 - p_{h,T}$ is non negligible, $\T{\sigma_{\Prov',h,0}}{\sigma_{perfect}} \leq \mu$. It follows by the triangle inequality that:
\begin{eqnarray}
\T{\sigma_{\Prov,h,0}}{\sigma_{\Prov',h,0}} &\leq& \T{\sigma_{\Prov,h,0}}{\sigma_{perfect}} + \T{\sigma_{perfect}}{\sigma_{\Prov',h,0}}\\
&\leq& \sqrt{p_{h,T}} + \mu
\end{eqnarray}
It then follows by \eqref{eq:tracedistupperboundperfectextended} that $\T{\sigma_{\Prov,h}}{\sigma_{\Prov',h}} \leq \sqrt{p_{h,T}} + \mu$, which completes the proof of Claim \ref{cl:generaltoperfect}.

We now describe the CPTP map $\mathcal{S}_0$. To do so, we will require the unitary $V$, which consists of first applying $U_0$, then applying the verifier's test round check (the algorithm CHK$_{\mathcal{F}}$) in superposition, and storing the result in an extra register. The result of applying $V$ is: 
\begin{eqnarray}\label{eq:inputtoV}
 V(\sum_{k'} D_{\Ver,h}(k')\cdot  \ket{0}\bra{0}^{\otimes e}\otimes \ket{k'}\bra{k'})V^{\dagger}
 \end{eqnarray}
 \begin{equation}
= \sum_{k'} D_{\Ver,h}(k')\cdot  (\ket{1}\ket{\psi_{ACC,k'}} + \ket{0}\ket{\psi_{REJ,k'}})(\ket{1}\ket{\psi_{ACC,k'}} + \ket{0}\ket{\psi_{REJ,k'}})^\dagger \otimes \ket{k'}\bra{k'}
\end{equation}
where the first register contains the result of the algorithm CHK$_{\mathcal{F}}$. If the first register is measured to obtain a bit $b$, the probability of obtaining $b = 1$ is $1 - p_{h,T}$. Therefore, for all $h$ for which $1 - p_{h,T}$ is non negligible, there exists a polynomial $n'_h$ such that if $V$ is applied $n'_h$ times to the input state in \eqref{eq:inputtoV} the probability of never obtaining $b = 1$ over $n'_h$ iterations is negligible. 

Let $n' = \max_h n'_h$, where the maximum is taken over all $h$ such that $1 - p_{h,T}$ is non negligible. The CPTP map $\mathcal{S}_0$ will apply the unitary $V$ followed by measurement of the first qubit to the input $\ket{0}^{\otimes e}\ket{k'}$ at most $n'$ times or until measuring $b = 1$. In the case that the measurement result $b = 1$ is never obtained, the CPTP map $\mathcal{S}_0$ applies a unitary operator which would be applied by an honest prover (described in Protocol \ref{prot:measprotocolprover}). Since the prover in Protocol \ref{prot:measprotocolprover} is perfect (see Lemma \ref{lem:measprotocolcorrectness}), it follows that the prover $\Prov'$ characterized by $(\mathcal{S}_0,U)$ is perfect.

Due to the choice of $n'$, it follows that there exists a negligible function $\mu$ such that for all $h$ for which $1 - p_{h,T}$ is non negligible
\begin{eqnarray}\label{eq:resultofsuperoperators0}
 \sigma_{\Prov',h,0} = \mathcal{S}_0(\sum_{k'} D_{\Ver,h}(k')\cdot  (\ket{0}\bra{0}^{\otimes e}\otimes \ket{k'}\bra{k'})) = (1-\mu)\sigma_{perfect} + \mu\sigma_{fail}
\end{eqnarray}
where $\sigma_{fail}$ is the state created in the case that all $n'$ applications of $V$ do not yield 1 as the measurement result. It follows by the convexity of trace distance that for all $h$ for which $1 - p_{h,T}$ is non negligible the trace distance between $\sigma_{\Prov',h,0}$ in \eqref{eq:resultofsuperoperators0} and $\sigma_{perfect}$ is $\leq\mu$.

\end{proof}

\subsubsection{Perfect to Trivial Prover (Proof of \Cl{cl:perfecttotrivialprover})}\label{sec:proofofperfecttotrivial}
In this section, we prove Claim \ref{cl:perfecttotrivialprover} as follows. First, in Claim \ref{cl:reductiontosinglestandard}, we prove a statement analogous to Lemma \ref{lem:singlesecurity} for the standard basis: we show that for an index $j$ such that $h_j = 0$, we can replace the prover's Hadamard round attack with an attack which acts $X$-trivially on the $j^{th}$ committed qubit to obtain the same distribution over measurements. The proof of Claim \ref{cl:reductiontosinglestandard} is quite straightforward, since the prover's Hadamard round attack has no effect on the standard basis measurement obtained by the verifier; this measurement is obtained from the commitment $y_j$. To prove Claim \ref{cl:perfecttotrivialprover}, we sequentially apply Lemma \ref{lem:singlesecurity} and Claim \ref{cl:reductiontosinglestandard} to each of the $n$ committed qubits, ending with an attack which commutes with standard basis measurement on all of the committed qubits. 

\begin{claim}\label{cl:reductiontosinglestandard}
Let $1\leq j\leq n$. Let $\mathcal{S} = \{B_{\tau}\}_{\tau}$ and $\mathcal{S}_j = \{B'_{j,x,\tau}\}_{x\in \{0,1\},\tau}$ be CPTP maps written in terms of their Kraus operators: 
\begin{equation}
B_{\tau} = \sum_{x,z\in\{0,1\}} X^xZ^z\otimes B_{jxz\tau}
\end{equation}
\begin{equation}
B'_{j,x,\tau} = \sum_{z\in\{0,1\}} Z^z\otimes B_{jxz\tau}
\end{equation}
where $B_{\tau}$ and $B'_{j,x,\tau}$ have been rearranged so that $X^xZ^z$ and $Z^z$ act on the $j^{th}$ qubit. For a unitary operator $U_0$, let $\Prov$ be a perfect prover characterized by $(U_0,\mathcal{S})$ (see Claim \ref{cl:generalprover} and notation \ref{notation:superoperator}). Let $\Prov_j$ be a perfect prover characterized by $(U_0,\mathcal{S}_j)$. If $h_j = 0$, $D_{\Prov,h}=D_{\Prov_j,h}$. 
\end{claim}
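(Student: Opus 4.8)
The plan is to compare the measurement distributions produced by $\Prov$ and $\Prov_j$ by tracing through Protocol \ref{prot:measprotocol}, in the same spirit as the proof of Claim \ref{cl:singlesecurity}, and to exploit the fact that when $h_j=0$ the verifier never consults the prover's Hadamard-round report on the $j$-th committed qubit and preimage register. First I would invoke Claim \ref{cl:generalprover} to assume $\Prov$ acts as the characterized prover: apply $U_0$ to $\ket{0}^{\otimes e}\ket{k'}$, measure the commitment-string registers to obtain $y'=(y_1,\dots,y_n)$, and in a Hadamard round append the auxiliary qubits, apply $U_{\inj}$ to the preimage registers, apply $\mathcal{S}$, measure all $n$ committed qubits and preimage registers in the Hadamard basis, and report the outcomes; $\Prov_j$ behaves identically with $\mathcal{S}_j$ in place of $\mathcal{S}$ (and perfectness, which depends only on $U_0$, holds for both). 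Since $\Prov$ is perfect and $h_j=0$, the key $k_j$ is drawn from the injective family $\mathcal{G}$ with disjoint images, so a test-round acceptance forces the $j$-th committed qubit and preimage register, measured in the standard basis, to yield with certainty the unique preimage $(b_j,x_{b_j,y_j})=\textrm{INV}_{\mathcal{G}}(t_{k_j},y_j)$ of $y_j$. Hence, conditioned on $y'$, the state after $U_0$ — and after $U_{\inj}$, which touches only preimage registers — has the $j$-th committed qubit $J$ in the computational-basis state $\ket{b_j}$ (a deterministic standard-basis outcome forces a pure computational-basis state, hence no entanglement with the rest).

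Two consequences follow. First, the $j$-th output bit stored by the verifier is $b_j$, a function of $y_j$ and $t_{k_j}$ only, with no dependence on the Hadamard-round attack, since for $h_j=0$ the verifier discards $(b'_j,d_j)$. Second, because the verifier also discards the prover's Hadamard measurements of the $j$-th committed qubit and preimage register, and ``measure-and-discard'' on a subsystem equals the partial trace over it, the joint distribution of the remaining output bits is obtained by: applying the attack to the post-$U_{\inj}$ state, tracing out the $j$-th committed qubit and preimage register, then performing the Hadamard measurements on the registers $i\neq j$ and the verifier's (fixed, possibly randomized) decoding. So it suffices to show the pre-decoding states feeding the verifier coincide for $\Prov$ and $\Prov_j$.

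The crux is the identity $\tr_J(\mathcal{S}(\rho))=\tr_J(\mathcal{S}_j(\rho))$ for every state $\rho$ whose $J$-component is diagonal in the computational basis. By linearity it is enough to check $\rho=\ket{b}\bra{b}_J\otimes\rho_R$; writing $B_\tau=\sum_{x,z}X^xZ^z\otimes B_{jxz\tau}$ one has $X^xZ^z\ket{b}=(-1)^{zb}\ket{b\oplus x}$ and $\tr_J(\ket{b\oplus x}\bra{b\oplus x'})=\delta_{x,x'}$, so both sides collapse to $\sum_{\tau,x,z,z'}(-1)^{(z+z')b}\,B_{jxz\tau}\,\rho_R\,B_{jxz'\tau}^\dagger$. (Equivalently one can route this through Lemma \ref{lem:diagonalizing}: $\mathcal{S}_j$ differs from the $Z$-Pauli twirl of $\mathcal{S}$ on $J$ only by an $X^x$ applied to $J$ right before the discarded Hadamard measurement, which is immaterial; meanwhile the twirl's anterior $Z^r$ acts as a global phase on the computational-basis state $\ket{b_j}$, and its posterior $Z^r$ can be cycled under $\tr_J$.) Applying this identity with $\rho$ the post-$U_{\inj}$ state conditioned on $y'$, and then tracing out the $j$-th preimage register as well, shows the pre-decoding states are identical; together with the first consequence above this gives $D_{\Prov,h}=D_{\Prov_j,h}$.

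The only point requiring real care — rather than a genuine difficulty — is the step asserting that the $j$-th committed qubit is literally classical after the commitment round; this is exactly where perfectness and the disjoint-image property of $\mathcal{G}$ enter, and without it $\mathcal{S}$ and $\mathcal{S}_j$ would in general differ even after $\tr_J$. Everything else is bookkeeping through the protocol together with the one-line Pauli computation above.
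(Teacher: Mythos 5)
Your proof is correct and rests on the same two facts the paper's proof uses: perfectness together with $h_j=0$ (hence an injective key for index $j$) forces the $j$-th committed qubit into a computational-basis state conditioned on $y_j$, and the verifier discards the $j$-th Hadamard-round outcomes, so that attack on $J$ is only seen through a partial trace. The paper packages the resulting algebra as inserting a free posterior $Z$-twirl, a free anterior $Z$ (global phase on a classical qubit), and then invoking Corollary \ref{corol:diagonalizingmeas}, whereas you verify the identity $\tr_J(\mathcal{S}(\rho))=\tr_J(\mathcal{S}_j(\rho))$ for $J$-diagonal $\rho$ by a direct Kraus/Pauli computation (and note the Lemma route yourself) — these are equivalent, so this is essentially the same approach.
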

\begin{proofof}{\textbf{Claim \ref{cl:reductiontosinglestandard}}}
For convenience, we will assume that $h_1 = 0$ and we will prove the claim for $j = 1$. The proof is identical for all other values of $j$. The key observation is that we can change the CPTP map $\mathcal{S}$ to act in an arbitrary way on the first qubit (which is also the first committed qubit), as long as its action on the remaining qubits is unchanged. This is because the measurement of the first committed qubit will be ignored by the verifier; the verifier will obtain the standard basis measurement for the first qubit from the commitment $y_1$. In other words, the first committed qubit is traced out after application of $\mathcal{S}$ and standard basis measurement (it is not part of the distribution $D_{\Prov,h}$).

To prove the claim, we will show that the distribution over measurement results remains the same if $\mathcal{S} = \{B_{\tau}\}_{\tau}$ is replaced with $\mathcal{S}_1 = \{B'_{1,x,\tau}\}_{x\in \{0,1\},\tau}$. Our first step is to replace $\mathcal{S}$ with $\mathcal{S}^{(1)} = \{\frac{1}{\sqrt{2}}(Z^r\otimes \mcI)B_{\tau}\}_{r\in\{0,1\},\tau}$. As described above, this change has no impact since the measurement of the first committed qubit is ignored. Therefore, if we let $\Prov^{(1)}$ be a perfect prover characterized by $(U_0,\mathcal{S}^{(1)})$, $D_{\Prov,h} = D_{\Prov^{(1)},h}$. Next, we replace $\mathcal{S}^{(1)}$ with $\mathcal{S}^{(2)} = \{\frac{1}{\sqrt{2}}(Z^r\otimes \mcI)B_{\tau}(Z^r\otimes \mcI)\}_{r\in\{0,1\},\tau}$. Observe that the added $Z$ operator (acting prior to $B_{\tau}$) has no effect on the state of a perfect prover: it acts on the first committed qubit, which must be in a standard basis state since $h_1 = 0$ (the function key $k_1$ corresponds to a pair of injective functions $g_{k,0},g_{k,1}$ for which there is only one valid preimage for each commitment string $y$ - see equation \ref{eq:stateformatoverviewstandard} for more details). It follows that if we let $\Prov^{(2)}$ be the prover characterized by $(U_0,\mathcal{S}^{(2)})$, $D_{\Prov^{(1)},h} = D_{\Prov^{(2)},h}$. Finally, since $\mathcal{S}^{(2)}$ is followed by Hadamard basis measurement, we can apply Corollary \ref{corol:diagonalizingmeas} to replace $\mathcal{S}^{(2)}$ with $\mathcal{S}_1$, showing that $D_{\Prov^{(2)},h} = D_{\Prov_1,h}$ (therefore $D_{\Prov,h} = D_{\Prov_1,h}$). \end{proofof}

Using Lemma \ref{lem:singlesecurity} and Claim \ref{cl:reductiontosinglestandard}, we now prove Claim \ref{cl:perfecttotrivialprover}, which is copied here for reference: 
\\~\\
\begin{statement}{\textbf{Claim \ref{cl:perfecttotrivialprover}}}
\textit{
For all perfect provers $\Prov$, there exists a trivial prover $\hat{\Prov}$ such that for all $h$, $D_{\Prov,h}$ is computationally indistinguishable from $D_{\hat{\Prov},h}$.} 
\end{statement}

\begin{proofof}{\textbf{Claim \ref{cl:perfecttotrivialprover}}}
Fix a basis choice $h\in \{0,1\}^n$. Assume for convenience that for indices $1\leq i\leq t$, $h_i = 1$ and for indices $t+1\leq i\leq n$, $h_i = 0$. We apply Lemma \ref{lem:singlesecurity} $t$ times, beginning with the prover $\Prov$ (who is characterized by $(U_0,U)$). Let $\hat{\Prov}_0 = \Prov$ and let $\hat{\Prov}_j$ be the prover characterized by $(U_0, \{U_{x}\}_{x\in \{0,1\}^j})$: 
\begin{eqnarray}
U &=& \sum_{x,z\in \{0,1\}^j} X^xZ^z\otimes U^j_{xz}\\
U_{x} &=& \sum_{z\in\{0,1\}^j} Z^{z}\otimes U^j_{xz}
\end{eqnarray}
where $X^xZ^z$ acts on qubits $1,\ldots,j$ (which are also committed qubits $1,\ldots, j$). The $i^{th}$ application of Lemma \ref{lem:singlesecurity} shows that $D_{\hat{\Prov}_{i-1},h}$ is computationally indistinguishable from $D_{\hat{\Prov}_i,h}$. It follows by the triangle inequality that $D_{\Prov,h}$ is computationally indistinguishable from $D_{\hat{\Prov}_t,h}$. We complete the proof of Claim \ref{cl:perfecttotrivialprover} by applying Claim \ref{cl:reductiontosinglestandard} $n - t$ times, which tells us that $D_{\hat{\Prov}_{t},h} = D_{\hat{\Prov},h}$, where $\hat{\Prov} = \hat{\Prov}_{n}$ is a trivial prover characterized by $(U_0, \{U_x\}_{x\in\{0,1\}^n})$. 

\end{proofof}




\section{Extension of Measurement Protocol to a Verification Protocol for \BQP}\label{sec:meastoqpip}

In this section, we extend the measurement protocol (Protocol \ref{prot:measprotocol}) to a \QPIP$_0$, as described in Section \ref{sec:overviewmeastoqpip} of the overview. To do so, we will use the \QPIP$_1$ protocol in \cite{mf2016}. We begin by presenting this protocol.

\subsection{Morimae-Fitzsimons Protocol}\label{sec:mf2016}
Most of this description is taken directly from \cite{mf2016}; please see \cite{mf2016} for a full description of the protocol. Let $L$ be a language in \BQP. Since \BQP\ is contained in \QMA, for all instances $x$, there exists a local Hamiltonian $H$ such that 
\begin{enumerate}
\item if $x\in L$, then the ground energy of $H$ is $\leq a$
\item if $x\notin L$, then the ground energy of $H$ is $\geq b$
\end{enumerate}
where $b - a\geq \frac{1}{poly(|x|)}$. It is known that $H$ can be a 2-local Hamiltonian with only $X$ and $Z$ operators (\cite{xzhamiltonian}). 

Let us write the 2-local Hamiltonian as $H = \sum_S d_S S$, where $d_S$ is a real number and $S$ is a tensor product of Pauli operators, where only two operators are $Z$ or $X$ and others are $\mcI$. We define the rescaled Hamiltonian:
\begin{equation}\label{eq:rescaledhamiltonian}
H' =  \sum_S \pi_S P_S,
\end{equation}
where $\pi_S = \frac{|d_S|}{\sum_S |d_S|}\geq 0$ and $P_S = \frac{\mcI + sign(d_S)S}{2}$.

We now present the protocol:
\begin{protocol}\label{prot:mf}{\cite{mf2016}}
This protocol is used to verify that an instance $x\in L$ for a language $L\in $ \BQP. Let $H$ be the Hamiltonian which corresponds to $x$, and define $H'$ as in \ref{eq:rescaledhamiltonian}.
\begin{enumerate}
\item The verifier randomly chooses $S$ with probability $\pi_S$. 
\item The prover sends the verifier a state $\rho$, sending one qubit at a time.
\item The verifier measures $S$ by performing single qubit measurements of only two qubits of $\rho$ in the $X$ or $Z$ basis, discarding all other qubits of $\rho$ without measuring them.
\item The verifier computes the product of the measurement results. If the verifier obtains the result $-sign(d_S)$ the verifier accepts. 
\end{enumerate}
\end{protocol}

An honest prover would simply send the ground state of $H'$. 
\begin{thm}\label{thm:mfsoundness}{\cite{mf2016}}
Protocol \ref{prot:mf} is a \QPIP$_1$ for all languages in \BQP\ with completeness $c$ and soundness $s$, where $c - s$ is inverse polynomial in $|x|$. 
\end{thm}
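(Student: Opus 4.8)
The plan is to reduce the analysis of \Prot{prot:mf} to a single Bernoulli trial governed by the rescaled Hamiltonian $H'$ of \Eq{eq:rescaledhamiltonian}. First I would fix an arbitrary prover, which (since the verifier only ever touches two qubits) may be taken to send some $n$-qubit state $\rho$, and compute the acceptance probability. When the verifier chooses $S$, performing the prescribed single-qubit $X$/$Z$ measurements on the two qubits on which $S$ acts nontrivially and multiplying the outcomes realizes exactly a projective measurement of the $\pm1$-valued observable $S$; the outcome $-\mathrm{sign}(d_S)$ then occurs with probability $\tr\!\big(\tfrac{\mcI-\mathrm{sign}(d_S)S}{2}\rho\big) = 1-\tr(P_S\rho)$. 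Averaging over $S\sim\pi_S$ gives $\Pr[\text{accept}] = 1-\tr(H'\rho)$, and a one-line computation using $|d_S|\,\mathrm{sign}(d_S)=d_S$ and $N:=\sum_S|d_S|$ gives $\tr(H'\rho)=\tfrac12+\tfrac1{2N}\tr(H\rho)$, so
\begin{equation*}
\Pr[\text{accept}] \;=\; \tfrac12 - \tfrac1{2N}\,\tr(H\rho).
\end{equation*}

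Completeness and soundness then fall out of the promise on the ground energy of $H$. If $x\in L$ the honest prover sends a ground state $\rho$ with $\tr(H\rho)\le a$, giving acceptance probability $c\ge \tfrac12-\tfrac{a}{2N}$; here I would note that for $L\in\BQP$ the Hamiltonian $H$ comes from the circuit-to-$2$-local-$XZ$-Hamiltonian reduction (\cite{kitaev2002caq}, \cite{xzhamiltonian}), whose ground state is a history state preparable by a \BQP\ circuit, so the honest prover is legitimate. If $x\notin L$ then $\tr(H\rho)\ge b$ for \emph{every} $\rho$, so the acceptance probability is at most $s\le\tfrac12-\tfrac{b}{2N}$ regardless of the prover, and hence $c-s\ge\tfrac{b-a}{2N}$. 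This is inverse polynomial in $|x|$ because $b-a\ge1/\poly(|x|)$ and $N=\sum_S|d_S|$ is polynomially bounded ($H$ has polynomially many local terms with bounded coefficients). Finally I would check $\tau=1$: the prover sends the qubits one at a time, and the verifier measures each of the (at most two) qubits relevant to $S$ the instant it arrives, in the prescribed basis, and discards all other qubits immediately, so it never holds more than one qubit.

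The argument is essentially routine linear algebra, and I do not expect a genuine obstacle. The only points needing care are: (i) confirming that the product of the two single-qubit $X$/$Z$ outcomes really does sample the eigenvalue of $S$, and hence that the per-$S$ acceptance probability is $1-\tr(P_S\rho)$; (ii) making sure $N=\sum_S|d_S|$ stays polynomially bounded so the (inverse-polynomial) gap $c-s$ does not collapse; and (iii) invoking the Kitaev-style reduction to secure both the promise gap $b-a\ge1/\poly$ and the efficient preparability of the honest prover's ground state. The subsequent amplification of $c-s$ to a constant (indeed to negligibly close to $1$) by sequential repetition, which is what \Sec{sec:meastoqpip} actually uses, is standard and is not part of this statement.
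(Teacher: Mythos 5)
Your proposal is correct and takes essentially the same route as the paper: the paper cites \cite{mns2015} for the acceptance-probability formula $p_{acc} = 1 - \frac{1}{2\sum_S|d_S|}(\tr(H\rho)+\sum_S|d_S|)$, which is algebraically identical to your $\frac12 - \frac{1}{2N}\tr(H\rho)$, and then reads off completeness and soundness from the ground-energy promise exactly as you do. The extra bookkeeping you flag (per-$S$ measurement sampling the eigenvalue, polynomial bound on $N$, preparability of the history state, the $\tau=1$ accounting) is all sound and consistent with what the cited references supply; the paper simply leaves it implicit.
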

\begin{proofof}{\Th{thm:mfsoundness}}
It was shown (in \cite{mns2015}) that the acceptance probability of Protocol \ref{prot:mf} is
\begin{equation}
p_{acc} = 1 - \frac{1}{\sum_S 2|d_S|}(\tr(H\rho) + \sum_S |d_S|)
\end{equation}
which is
\begin{equation}
p_{acc}\geq \frac{1}{2} - \frac{a}{\sum_S 2|d_S|}
\end{equation}
when $x\in L$, and
\begin{equation}
p_{acc}\leq \frac{1}{2} - \frac{b}{\sum_S 2|d_S|}
 \end{equation}
 when $x\notin L$. Their difference is $\frac{1}{poly(|x|)}$.
\end{proofof}

We require the following version of Protocol \ref{prot:mf}, which amplifies the completeness/soundness gap by repeating the protocol polynomially many times: 
\begin{protocol}\label{prot:mfamp}{\cite{mf2016}}
This protocol is used to verify that an instance $x\in L$ for a language $L\in $ \BQP. Let $H$ be the Hamiltonian which corresponds to $x$, and define $H'$ as in \ref{eq:rescaledhamiltonian}. Let $k'$ be a polynomial in $|x|$. 
\begin{enumerate}
\item The verifier randomly chooses $S_1,\ldots,S_{k'}$ independently, each with probability $\pi_{S_i}$. 
\item The prover sends the verifier a state $\rho'$, sending one qubit at a time.
\item The verifier measures each $S_i$ (for $1\leq i\leq k'$) by performing single qubit measurements of only two qubits of $\rho'$ in the $X$ or $Z$ basis, discarding all other qubits of $\rho'$ without measuring them.
\item The verifier computes the product of the measurement results for each $S_i$. If the verifier obtains the result $-sign(d_{S_i})$ more than $ \frac{1}{2} - \frac{a+b}{2\sum_S 2|d_S|}$ of the time the verifier accepts. 
\end{enumerate}
\end{protocol}
In Protocol \ref{prot:mfamp}, an honest prover would simply send $k'$ copies of the ground state of $H'$. 
\begin{thm}\label{thm:mfsoundnessamp}{\cite{mf2016}}
Protocol \ref{prot:mf} is a \QPIP$_1$ for all languages in \BQP\ with completeness $1 - \mu$ and soundness $\mu$, where $\mu$ is negligible in the size of the instance. 
\end{thm}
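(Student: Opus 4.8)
\begin{proofof}{\Th{thm:mfsoundnessamp}}
The plan is a routine amplification of the inverse-polynomial gap of \Th{thm:mfsoundness} by independent repetition and a majority vote; the only point requiring genuine care is that a cheating prover commits to a \emph{single} state $\rho'$, which may be entangled across the $k'$ blocks, so the per-repetition acceptance events need not be independent and one must pass through a conditional (rather than plain) concentration inequality.

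First I would record the single-round guarantee in the form established in the proof of \Th{thm:mfsoundness}: one execution of \Prot{prot:mf} on a state $\rho$ accepts with probability $\frac12-\frac{\tr(H\rho)}{\sum_S 2|d_S|}$. Since $\sum_S|d_S|$ is polynomially bounded in $|x|$ and $b-a\ge\frac{1}{\poly(|x|)}$, this is $\ge c$ when $x\in\mcL$ (the honest prover sends the ground state of $H'$) and $\le s$ for \emph{every} $\rho$ when $x\notin\mcL$ (because $\tr(H\rho)\ge b$ for all $\rho$), where $c-s\ge\frac{1}{\poly(|x|)}$ and, by the normalization of the $X$/$Z$-local Hamiltonian used in \cite{xzhamiltonian,mf2016} (this is exactly what makes the ``more than half'' threshold of \Prot{prot:mfamp} meaningful), $s<\frac12<c$ with $\delta:=\min(c-\frac12,\frac12-s)$ still inverse-polynomial in $|x|$.

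Next I would analyze \Prot{prot:mfamp}. The key structural observation is that the verifier's choices $S_1,\dots,S_{k'}$ are sampled independently and that the measurement performed in repetition $i$ touches only the qubits of block $i$ of $\rho'$. In the completeness case the honest prover sends $k'$ independent copies of the ground state, so the number $X$ of accepting repetitions is $\mathrm{Binomial}(k',p)$ with $p\ge\frac12+\delta$, and a Hoeffding bound gives $\Pr[X\le k'/2]\le e^{-2\delta^2 k'}$. In the soundness case, condition on the settings $S_1,\dots,S_{k'}$ and on the outcomes of repetitions $1,\dots,i-1$; the induced reduced state on block $i$ is some state $\tilde\rho_i$, and the conditional acceptance probability of repetition $i$ equals $\frac12-\frac{\tr(H\tilde\rho_i)}{\sum_S 2|d_S|}\le s\le\frac12-\delta$ because the no-case bound holds uniformly over all states of block $i$'s register. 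Hence $X$ is stochastically dominated by $\mathrm{Binomial}(k',\frac12-\delta)$ (equivalently, apply Azuma--Hoeffding to the martingale $X_i-\mathbb{E}[X_i\mid\mathcal{F}_{i-1}]$), giving $\Pr[X>k'/2]\le e^{-2\delta^2 k'}$ for \emph{every} prover.

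Finally, choosing $k'$ to be a sufficiently large polynomial in $|x|$ (e.g. $k'=\lceil|x|/\delta^2\rceil$) makes both error probabilities negligible in $|x|$, yielding completeness $1-\mu$ and soundness $\mu$ with $\mu$ negligible. I would also note that \Prot{prot:mfamp} is still a \QPIP$_1$: the verifier receives the prover's qubits one at a time, applies a single-qubit $X$- or $Z$-basis measurement only to the at most two qubits of the current block designated by $S_i$, discards the rest, and keeps only the running product for the current block and the running count of accepting blocks, so a single-qubit quantum register together with a \BPP\ controller suffices. The main (and only real) obstacle is the conditional-independence step in the soundness analysis; everything else is bookkeeping together with a standard concentration bound.
\end{proofof}
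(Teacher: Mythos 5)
Your proof is correct and follows the same route as the paper: independent repetition with a majority-vote threshold and a concentration bound, which is exactly the one-sentence argument the paper gives citing ``polynomially many repetitions.'' You add the useful (and genuinely needed) observation that a cheating prover's $k'$ blocks may be entangled, so one should use a conditional/martingale bound rather than plain Hoeffding; the paper glosses over this, and your treatment, plus the explicit remark about the normalization $a<0<b$ that makes the ``more than half'' threshold meaningful, is a welcome filling-in of detail. One small slip: in the soundness step you write ``condition on the settings $S_1,\dots,S_{k'}$ and on the outcomes of repetitions $1,\dots,i-1$,'' but then evaluate the conditional acceptance probability of round $i$ as $\frac12-\tr(H\tilde\rho_i)/\sum_S 2|d_S|$, which is the probability \emph{averaged over the random choice of $S_i$}; you should condition only on $\mathcal F_{i-1}$ (i.e.\ on $S_1,\dots,S_{i-1}$ and their outcomes), not on $S_i$ itself. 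Your parenthetical martingale formulation $X_i-\mathbb E[X_i\mid\mathcal F_{i-1}]$ is the correct one and the rest of the argument goes through unchanged.
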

\begin{proof}
This theorem follows from \Th{thm:mfsoundness}; since $c - s$ is $\frac{1}{poly(|x|)}$ the verifier can distinguish the case where $x\in L$ from the case where $x \notin L$ with probability of error bounded to be exponentially small with only polynomially many repetitions.
\end{proof}

\subsection{Extending the Measurement Protocol}\label{sec:meastoqpipprotocol}

We now present the extension of the measurement protocol (Protocol \ref{prot:measprotocol}) to a \QPIP$_0$. To do this, we use the structure of Protocol \ref{prot:mfamp}, but we replace the prover sending qubits to the verifier to perform the measurement (i.e. steps 2 and 3 of Protocol \ref{prot:mfamp}) with Protocol \ref{prot:measprotocol}. Assume that in Protocol \ref{prot:mfamp}, $n$ qubits are sent from the prover to the verifier. The \QPIP\ protocol is as follows: 
\begin{protocol}{\textbf{Measurement Protocol \QPIP}}\label{prot:meastoqpip}
This protocol is used to verify that an instance $x\in L$ for a language $L\in $ \BQP. 
\begin{enumerate}
\item The verifier performs step 1 of Protocol \ref{prot:mfamp}, which partially defines a basis choice $h\in \{0,1\}^n$. For all undefined $h_i$, the verifier sets $h_i = 0$.  
\item The prover and verifier run Protocol \ref{prot:measprotocol} on basis choice $h$. The verifier accepts or rejects as specified in Protocol \ref{prot:measprotocol}.
\item In the case of a Hadamard round of Protocol \ref{prot:measprotocol}, the verifier performs step 4 of Protocol \ref{prot:mfamp} using the measurement results obtained from Protocol \ref{prot:measprotocol}. 
\end{enumerate}
\end{protocol}
We now prove the following theorem, which is the main result of this paper (and was stated earlier as Theorem \ref{thm:oneprover}): 
\begin{thm}\label{thm:lwetoqpip}
Protocol \ref{prot:meastoqpip} is a \QPIP$_0$ for all languages in \BQP\ with completeness negligibly close to $1$ and soundness negligibly close to $\frac{3}{4}$. 
\end{thm}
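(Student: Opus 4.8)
The plan is to treat \Prot{prot:meastoqpip} as the composition of the amplified Morimae--Fitzsimons energy test (\Prot{prot:mfamp}) with the measurement protocol (\Prot{prot:measprotocol}), invoking only completeness of the measurement protocol (\Cl{cl:measprotocolcorrectness}), its soundness (\Cl{cl:measprotocolsoundness}, together with its ingredients \Cl{cl:generaltoperfect}, \Cl{cl:perfecttotrivialprover}, \Cl{cl:trivialprover}), and \Th{thm:mfsoundnessamp}. That the protocol is a \QPIP$_0$ is immediate, since all communication is classical. Note that the verifier's public choice $\vec S=(S_1,\dots,S_{k'})$ in step~1 of \Prot{prot:mfamp} determines the basis string $h=h(\vec S)\in\{0,1\}^n$, and that, given a Hadamard-round outcome $m\in\{0,1\}^n$, the verifier's energy decision in step~3 of \Prot{prot:meastoqpip} is a fixed polynomial-time predicate $\phi_{\vec S}(m)$ (parities of the two relevant bits of $m$ for each $S_i$, then a majority check). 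Two facts will be used repeatedly. (i) For any $n$-qubit state $\sigma$, $\mathbb{E}_{\vec S}\big[\Pr_{m\leftarrow D_{\sigma,h(\vec S)}}[\phi_{\vec S}(m)]\big]$ equals the acceptance probability of \Prot{prot:mfamp} with a prover that sends $\sigma$ (measuring the untouched qubits is irrelevant), hence is $\ge 1-\mathrm{negl}$ if $x\in L$ and $\le\mathrm{negl}$ if $x\notin L$ by \Th{thm:mfsoundnessamp}. (ii) Since $\phi_{\vec S}$ is polynomial-time, computationally indistinguishable outcome distributions over $m$ have $\phi_{\vec S}$-acceptance probabilities differing by $\mathrm{negl}$, uniformly in $\vec S$: one treats $\vec S$ (of polynomial size) as non-uniform advice, so that the per-$h$ guarantee of \Cl{cl:measprotocolsoundness} applies.

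For completeness, suppose $x\in L$. The honest prover prepares the $n$-qubit witness state $\xi$ on which the honest prover of \Prot{prot:mfamp} is accepted with probability $\ge1-\mathrm{negl}$ ($k'$ copies of the low-energy history state for the circuit-to-Hamiltonian reduction, which is \BQP-preparable since $L\in\BQP$), and runs the honest measurement-protocol strategy of \Prot{prot:measprotocolprover} on $\xi$ with basis choice $h$. By \Cl{cl:measprotocolcorrectness} this prover is perfect, so a test round accepts with certainty; in a Hadamard round the measurement protocol accepts with probability $1-\mu$, and conditioned on that the reported outcome is within negligible total-variation distance of $D_{\xi,h}$. A union bound together with fact~(i) applied to $\xi$ shows that, averaging over $\vec S$, a Hadamard round (measurement checks and energy test) accepts with probability $\ge1-\mathrm{negl}$, hence the verifier accepts overall with probability $\ge1-\mathrm{negl}$.

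For soundness, suppose $x\notin L$ and fix an arbitrary \BQP\ prover $\Prov$; write $1-p_{h,T}$ and $1-p_{h,H}$ for its test- and Hadamard-round acceptance probabilities in \Prot{prot:measprotocol} on basis choice $h$. Applying \Cl{cl:generaltoperfect}, \Cl{cl:perfecttotrivialprover} and \Cl{cl:trivialprover} exactly as in the proof of \Cl{cl:measprotocolsoundness} yields a single $n$-qubit state $\rho$, a perfect prover $\Prov'$, and negligible $\mu$ with $\TV{D_{\Prov,h}^C}{D_{\Prov',h}}\le p_{h,H}+\sqrt{p_{h,T}}+\mu$ and $D_{\Prov',h}$ computationally indistinguishable from $D_{\rho,h}$, for every $h$. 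Fix $\vec S$, put $h=h(\vec S)$, and let $q=\Pr_{m\leftarrow D_{\Prov,h}^C}[\phi_{\vec S}(m)]$. By the triangle inequality and fact~(ii), $q\le\Pr_{m\leftarrow D_{\rho,h}}[\phi_{\vec S}(m)]+p_{h,H}+\sqrt{p_{h,T}}+\mathrm{negl}$. A Hadamard round of \Prot{prot:meastoqpip} is accepted with probability $(1-p_{h,H})\,q$, and since $(1-p_{h,H})\,p_{h,H}\le\tfrac14$ this is at most $\Pr_{m\leftarrow D_{\rho,h}}[\phi_{\vec S}(m)]+\tfrac14+\sqrt{p_{h,T}}+\mathrm{negl}$; hence the acceptance probability conditioned on $\vec S$ is at most $\tfrac58-\tfrac12 p_{h,T}+\tfrac12\sqrt{p_{h,T}}+\tfrac12\Pr_{m\leftarrow D_{\rho,h}}[\phi_{\vec S}(m)]+\mathrm{negl}$. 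Taking $\mathbb{E}_{\vec S}$, fact~(i) applied to $\rho$ bounds the energy term by $\tfrac12\mathrm{negl}$, Jensen's inequality gives $\mathbb{E}_{\vec S}[\sqrt{p_{h,T}}]\le\sqrt{P}$ for $P:=\mathbb{E}_{\vec S}[p_{h,T}]$, and $\tfrac12(\sqrt P-P)\le\tfrac18$ on $[0,1]$; altogether the verifier accepts with probability $\le\tfrac58+\tfrac18+\mathrm{negl}=\tfrac34+\mathrm{negl}$. Thus soundness is negligibly close to $\tfrac34$, and $c-s\ge\tfrac14-\mathrm{negl}$ is a constant as required by \Def{def:QPIP}.

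The main obstacle is pinning the soundness constant to $3/4$: one must not discard the measurement-protocol rejection probability $p_{h,H}$, but rather keep the factor $(1-p_{h,H})$ from the Hadamard round and cancel it against the $p_{h,H}$ term in the measurement-protocol soundness error through $(1-p)p\le\tfrac14$, while simultaneously balancing the $-\tfrac12 p_{h,T}$ gained from the test round against the $\sqrt{p_{h,T}}$ loss in \Cl{cl:measprotocolsoundness} (which is what forces the Jensen step and the final $\tfrac58+\tfrac18$). A subtler point, worth stating explicitly, is that reducing the composed protocol to the per-basis guarantees of \Cl{cl:measprotocolsoundness} requires treating the public string $\vec S$ as polynomial-size advice for the distinguisher, which is harmless under the standard non-uniform hardness of LWE.
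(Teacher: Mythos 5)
Your proof is correct and follows essentially the same route as the paper: decompose into completeness and soundness, feed \Cl{cl:measprotocolcorrectness} and \Cl{cl:measprotocolsoundness} into the guarantees of \Th{thm:mfsoundnessamp}, and obtain the constant $\tfrac34$ from $(1-p)p\le\tfrac14$ together with $\sqrt{p}-p\le\tfrac14$. The only differences are cosmetic — you invoke Jensen after averaging over $\vec S$ where the paper bounds each $h$ pointwise (both give $\tfrac34$), and you separate $\vec S$ from $h(\vec S)$ and make explicit the standard ``$\vec S$ as advice'' observation that the paper leaves implicit.
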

We require the completeness and soundness guarantees of Protocol \ref{prot:measprotocol}, copied below for reference. Both claims use notation from Section \ref{sec:notation}.
\\~\\
\begin{statement}{\textbf{Lemma \ref{lem:measprotocolcorrectness} \textit{Completeness of Protocol \ref{prot:measprotocol}}}}
\textit{For all $n$ qubit states $\rho$ and for all basis choices $h\in \{0,1\}^n$, the prover $\Prov$ described in Protocol \ref{prot:measprotocolprover} is a perfect prover ($\Prov$ is accepted by the verifier in a test run for basis choice $h$ with perfect probability). There exists a negligible function $\mu$ such that in the Hadamard round for basis choice $h$, the verifier accepts $\Prov$ with probability $\geq 1 - \mu$ and $\TV{D_{\Prov,h}^C}{D_{\rho,h}}\leq \mu$. }
\end{statement}

\begin{statement}{\textbf{Lemma \ref{lem:measprotocolsoundness} \textit{Soundness of Protocol \ref{prot:measprotocol}}}}
\textit{For a prover $\Prov$ in Protocol \ref{prot:measprotocol}, let $1 - p_{h,H}$ be the probability that the verifier accepts $\Prov$ on basis choice $h$ in the Hadamard round and $1 - p_{h,T}$ be the probability that the verifier accepts $\Prov$ in the test round. There exists a state $\rho$, a prover $\Prov'$ and a negligible function $\mu$ such that for all $h$, $\TV{D^C_{\Prov,h}}{ D_{\Prov',h}}\leq p_{h,H} + \sqrt{p_{h,T}} + \mu$ and $D_{\Prov',h}$ is computationally indistinguishable from the distribution $D_{\rho,h}$ which results from measuring $\rho$ in the basis determined by $h$.}
\end{statement}
\newline
\begin{proofof}{\textbf{\Th{thm:lwetoqpip}}}
We will require some notation. For all provers $\Prov$, let $E_{\Prov,h}^H$ be the event that the verifier accepts in a Hadamard round of Protocol \ref{prot:measprotocol} with basis choice $h$ while interacting with $\Prov$, let $E_{\Prov,h}^T$ be the same event in a test round, and let $E_{\Prov,h}$ be the event that the verifier accepts in step 3 of Protocol \ref{prot:meastoqpip}. Let $v_h$ be the probability that the verifier chooses basis choice $h$ in step 1 of Protocol \ref{prot:meastoqpip}. As a reminder, in step 3 of Protocol \ref{prot:meastoqpip} (which is step 4 of Protocol \ref{prot:mfamp}), the verifier determines whether or not to accept by computing the product of relevant measurement results; the verifier's decision is a function of the basis choice $h$, the measurement result and the \BQP\ instance, but we leave off the dependence on the \BQP\ instance for convenience. For a distribution $D$ over $n$ bit strings and a basis choice $h$, let $\tilde{p}_{h}(D)$ be the probability that the verifier rejects an $n$ bit string drawn from $D$ for basis choice $h$ in step 3 of Protocol \ref{prot:meastoqpip}. 
\\~\\
\textbf{Completeness }
Recall that in Protocol \ref{prot:mfamp}, an honest prover sends polynomially many copies of the ground state for the Hamiltonian corresponding to an instance $x \in L$ (where $L\in $ \BQP). Let the entire state sent by the prover be $\rho$, and assume it contains $n$ qubits. To compute the completeness parameter of Protocol \ref{prot:meastoqpip}, we will consider the prover $\Prov$ for the state $\rho$, as described in Protocol \ref{prot:measprotocolprover}, and upper bound the probability that the verifier rejects in Protocol \ref{prot:meastoqpip}. To do this, we need to upper bound the probability that the verifier rejects in step 2 of Protocol \ref{prot:meastoqpip} (i.e. in Protocol \ref{prot:measprotocol}) or the verifier rejects in step 3 of Protocol \ref{prot:meastoqpip}:  
\begin{eqnarray}
1 - c &=& \frac{1}{2}\sum_{h\in\{0,1\}^n} v_h(\Pr[\bar{E}^T_{\Prov,h}] + \Pr[\bar{E}^H_{\Prov,h}\cup \bar{E}_{\Prov,h}]) \\
&\leq& \frac{1}{2}\sum_{h\in\{0,1\}^n} v_h(\Pr[\bar{E}^H_{\Prov,h}] +  \Pr[\bar{E}_{\Prov,h}]) \\
&\leq& \frac{1}{2}\mu +  \frac{1}{2}\sum_{h\in\{0,1\}^n} v_h \Pr[\bar{E}_{\Prov,h}] \label{eq:measqpipcompleteness}
\end{eqnarray}
The last two expressions follow due to Lemma \ref{lem:measprotocolcorrectness}: we know that for all basis choices $h\in \{0,1\}^n$, the prover $\Prov$ described in Protocol \ref{prot:measprotocolprover} is accepted by the verifier in a test round with perfect probability and is accepted in a Hadamard round with probability $\geq 1 - \mu$ for a negligible function $\mu$. $\Pr[\bar{E}_{\Prov,h}]$ is the probability that the verifier rejects the distribution $D^C_{\Prov,h}$ for basis choice $h$ in step 3 of Protocol \ref{prot:meastoqpip}:
\begin{equation}\label{eq:completenessprobplugin}
\Pr[\bar{E}_{\Prov,h}] = \tilde{p}_h(D^C_{\Prov,h})
\end{equation}
Recall from Section \ref{sec:notation} that for a density matrix on $n$ qubits and basis choice $h$, we let $D_{\rho,h}$ be the distribution obtained by measuring $\rho$ in the basis corresponding to $h$. It follows by Lemma \ref{lem:projectionprob} and Lemma \ref{lem:measprotocolcorrectness} that
\begin{equation}\label{eq:completenessprobdifference}
\tilde{p}_h(D^C_{\Prov,h}) - \tilde{p}_h(D_{\rho,h})  \leq \TV{D^C_{\Prov,h}}{D_{\rho,h}} \leq \mu
\end{equation}
Due to the completeness parameter of Protocol \ref{prot:mfamp} (see Section \ref{sec:mf2016}), there exists a negligible function $\mu_C$ such that:
\begin{equation}\label{eq:correctcompleteness}
\sum_h v_h\tilde{p}_h(D_{\rho,h}) \leq \mu_C
\end{equation}
We now use \eqref{eq:completenessprobplugin}, \eqref{eq:completenessprobdifference} and \eqref{eq:correctcompleteness} to wrap up the calculation of the completeness parameter $c$ (continuing from \eqref{eq:measqpipcompleteness}): 
\begin{eqnarray}
1 - c &\leq& \frac{1}{2} \mu +  \frac{1}{2}\sum_{h\in\{0,1\}^n} v_h \tilde{p}_h(D_{\Prov,h}^C)\\
&\leq& \mu + \frac{1}{2}\mu_C \label{eq:secondtolastcompleteness}
\end{eqnarray}
Therefore, the completeness parameter $c$ is negligibly close to $1$. 
\\~\\
\textbf{Soundness }
To compute the soundness parameter, we will fix an arbitrary prover $\Prov$ and upper bound the probability that the verifier accepts in Protocol \ref{prot:meastoqpip} for an instance $x\notin L$. To do so, we need to upper bound the probability that the verifier accepts in step 2 of Protocol \ref{prot:meastoqpip} (i.e. in Protocol \ref{prot:measprotocol}) and the verifier accepts in step 3 of Protocol \ref{prot:meastoqpip}. The intuition here is that as long as there exists a state $\rho$ such that for all $h$, $D_{\Prov,h}^C$ is close (computationally) to $D_{\rho,h}$, the soundness parameter should be close to the soundness parameter of Protocol \ref{prot:mfamp}, which is negligible. We will rely on Lemma \ref{lem:measprotocolsoundness} (we also use the same notation used in Lemma \ref{lem:measprotocolsoundness}):
\begin{eqnarray}
s &=& \sum_{h\in\{0,1\}^n} v_h(\frac{1}{2}\Pr[E_{\Prov,h}^T] + \frac{1}{2}\Pr[E_{\Prov,h}^H\cap E_{\Prov,h}])\\
&=& \sum_{h\in\{0,1\}^n} v_h(\frac{1}{2}(1-p_{h,T}) + \frac{1}{2}\Pr[E_{\Prov,h}^H]\Pr[E_{\Prov,h}|E_{\Prov,h}^H])\\
&=& \sum_{h\in\{0,1\}^n} v_h(\frac{1}{2}(1-p_{h,T}) + \frac{1}{2}(1-p_{h,H})(1-\tilde{p}_h(D_{\Prov,h}^C)))\label{eq:measqpipsoundness}
\end{eqnarray}
where the last equality follows because $\Pr[E_{\Prov,h}|E_{\Prov,h}^H]$ is the probability that the verifier accepts a string drawn from the distribution $D^C_{\Prov,h}$ for basis choice $h$ in step 3 of Protocol \ref{prot:meastoqpip}. 

\Le{lem:measprotocolsoundness} guarantees the existence of a state $\rho$, a prover $\Prov'$ and a negligible function $\mu$ such that for all $h$, $\TV{D^C_{\Prov,h}}{ D_{\Prov',h}}\leq p_{h,H} + \sqrt{p_{h,T}} + \mu$ and $D_{\Prov',h}$ is computationally indistinguishable from $D_{\rho,h}$. By Lemma \ref{lem:projectionprob} and \Le{lem:measprotocolsoundness}:
\begin{equation}
\tilde{p}_h(D_{\Prov',h}) - \tilde{p}_h(D_{\Prov,h}^C) \leq \TV{D_{\Prov,h}^C}{D_{\Prov',h}} \leq p_{h,H} + \sqrt{p_{h,T}} + \mu
\end{equation}
We now return to the calculation of the soundness parameter of Protocol \ref{prot:meastoqpip} in \eqref{eq:measqpipsoundness}: 
\begin{eqnarray}
s &\leq&  \sum_{h\in \{0,1\}^n} v_h(\frac{1}{2}(1-p_{h,T}) + \frac{1}{2}(1-p_{h,H})(1-\tilde{p}_h(D_{\Prov',h}) + p_{h,H} + \sqrt{p_{h,T}} + \mu))\\
&\leq&  \frac{1}{2}\mu + \frac{1}{2}\sum_{h\in \{0,1\}^n} v_h(1-p_{h,T} + (1-p_{h,H})(p_{h,H} + \sqrt{p_{h,T}})) + \frac{1}{2}\sum_{h\in\{0,1\}^n} v_h(1-\tilde{p}_h(D_{\Prov',h}))\nonumber\\
&\leq& \frac{1}{2}\mu  + \frac{3}{4} + \frac{1}{2}\sum_{h\in\{0,1\}^n} v_h(1-\tilde{p}_h(D_{\Prov',h}))\label{eq:secondtolastsoundness}
\end{eqnarray}
where the last inequality follows since for all $p\in[0,1], \sqrt{p} - p \leq \frac{1}{4}$. 

Next, Lemma \ref{lem:measprotocolsoundness} guarantees that for all $h$, $D_{\Prov',h}$ and $D_{\rho,h}$ are computationally indistinguishable. It follows that for all $h$:  
\begin{equation}\label{eq:compindtosound}
\tilde{p}_h(D_{\rho,h}) - \tilde{p}_h(D_{\Prov',h}) \leq \mu_h
\end{equation}
To see this implication, assume there did exist an $h\in \{0,1\}^n$ such that the difference in \eqref{eq:compindtosound} was non negligible. Then $D_{\Prov',h}$ and $D_{\rho,h}$ could be distinguished by computing whether or not the verifier would reject for basis choice $h$ in step 3 of Protocol \ref{prot:meastoqpip}, which is step 4 of Protocol \ref{prot:mfamp}. This is because the computational indistinguishability of $D_{\Prov',h}$ and $D_{\rho,h}$ holds even if $h$ is known; the indistinguishability is due to the hardcore bit and injective invariance properties of the extended trapdoor claw-free family (Definition \ref{def:extendedtrapdoorclawfree}). 

Due to the soundness parameter of Protocol \ref{prot:mfamp}, we also know that there exists a negligible function $\mu_S$ such that: 
\begin{equation}
\sum_{h\in\{0,1\}^n} v_h (1-\tilde{p}_h(D_{\rho,h})) \leq \mu_S
\end{equation}
We return to calculating the soundness parameter of the \QPIP\ (continuing from \eqref{eq:secondtolastsoundness}):
\begin{eqnarray}
s &\leq&  \frac{1}{2}\mu + \frac{3}{4} +  \frac{1}{2}\sum_{h\in \{0,1\}^n} v_h (\mu_h + 1 - \tilde{p}_h(D_{\rho,h}))\\
&\leq& \frac{1}{2}\mu + \frac{3}{4} + \frac{1}{2}\max_{h\in\{0,1\}^n} \mu_h + \frac{1}{2}\mu_S
\end{eqnarray}
Therefore, the soundness parameter of Protocol \ref{prot:meastoqpip} is negligibly close to $\frac{3}{4}$.  
\end{proofof}

\section{Extended Trapdoor Claw-Free Family from LWE}\label{sec:lweextended}

\subsection{Parameters}\label{sec:lweparam}
We will use the same parameters used in \cite{oneproverrandomness}. Let $\lambda$ be the security parameter. All other parameters are functions of $\lambda$. Let $q\geq 2$ be a prime integer. 
Let $\ell,n,m,w\geq 1$ be polynomially bounded functions of $\lambda$ and $B_L, B_V, B_P$ be positive integers such that the following conditions hold:
\begin{equation}\label{eq:assumptionscopy}
    \begin{minipage}{0.9\textwidth}
\begin{enumerate}
\item  $n = \Omega(\ell \log q)$ and $m = \Omega(n\log q)$,
\item $w=n\lceil \log q\rceil$,
\item $B_P = \frac{q}{2C_T\sqrt{mn\log q}}$, for $C_T$ the universal constant in Theorem~\ref{thm:trapdoor},
\item $ 2\sqrt{n} \leq B_L < B_V < B_P$,
\item The ratios $\frac{B_P}{B_V}$ and $\frac{B_V}{B_L}$ are both super-polynomial  in $\lambda$.
\end{enumerate}
    \end{minipage}
  \end{equation}
Given a choice of parameters satisfying all conditions above, we describe the function family $\mathcal{F}_{\lwe}$ (taken from \cite{oneproverrandomness}). Let $\sX = \mZ_q^n$ and $\sY = \mZ_q^m$. 
The key space is $\mathcal{K}_{\mathcal{F}_{\lwe}} = \mZ_q^{m\times n} \times \mZ_q^m$. For $b\in \{0,1\}$, $x\in \sX$ and key $k = (\*A,\*A\*s + \*e)$,  the density $f_{k,b}(x) $ is defined as
\begin{equation}\label{eq:defFLWE}
  \forall y \in \sY,\quad   (f_{k,b}(x))(y) = D_{\mZ_q^m,B_P}(y - \*Ax - b\cdot \*A\*s)\;,
\end{equation}
where the definition of $D_{\mZ_q^m,B_P}$ is given in~\eqref{eq:d-bounded-def}. Note that $f_{k,b}$ is well-defined given $k$, as for our choice of parameters $k$ uniquely specifies $s$. The following theorem was proven in \cite{oneproverrandomness} (for an intuitive description of the proof of this theorem, see page 4 of \cite{oneproverrandomness}):
\\~\\
\begin{statement}{\textbf{Theorem 22 } \cite{oneproverrandomness}}
\textit{
For any choice of parameters satisfying the conditions~\eqref{eq:assumptionscopy}, the function family $\mathcal{F}_{\lwe}$ is a noisy trapdoor claw-free family under the hardness assumption $\lwe_{\ell,q,D_{\mZ_q,B_L}}$.}
\end{statement}
\newline
We will prove the following theorem:
\begin{thm}\label{thm:lwetcfextended}
For any choice of parameters satisfying the conditions~\eqref{eq:assumptionscopy}, the function family $\mathcal{F}_{\lwe}$ is an extended trapdoor claw-free family under the hardness assumption $\lwe_{\ell,q,D_{\mZ_q,B_L}}$. 
\end{thm}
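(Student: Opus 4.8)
The plan is to invoke Theorem~22 of~\cite{oneproverrandomness}, which already gives that $\mathcal{F}_{\lwe}$ is a noisy trapdoor claw-free family (in particular the trapdoor/injective-pair clause, the efficient-range-superposition clause, and the adaptive hardcore bit clause of Definition~\ref{def:trapdoorclawfree} all hold, together with the structural facts that $\mathbf{A}\leftarrow\GenTrap$ is primitive and statistically close to uniform). So only the two additional requirements of Definition~\ref{def:extendedtrapdoorclawfree} remain: injective invariance, and the averaged hardcore bit condition on the sets $H'_{k,d}$.

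For injective invariance I would introduce the companion family $\mathcal{G}_{\lwe}$ obtained from $\mathcal{F}_{\lwe}$ by keeping $\mathbf{A}\leftarrow\GenTrap(1^n,1^m,q)$ but replacing the second component of the key by a uniform $\mathbf{u}\leftarrow_U\mZ_q^m$, and setting $(g_{k,b}(x))(y)=D_{\mZ_q^m,B_P}(y-\mathbf{A}x-b\cdot\mathbf{u})$; crucially $\textrm{CHK}_{\mathcal{G}}$ and $\textrm{SAMP}_{\mathcal{G}}$ are \emph{literally} $\textrm{CHK}_{\mathcal{F}}$ and $\textrm{SAMP}_{\mathcal{F}}$ (neither algorithm uses that the second component has LWE structure, and here the sampled density equals the target density exactly since no $\mathbf{A}\mathbf{s}$-versus-$\mathbf{A}\mathbf{s}+\mathbf{e}$ discrepancy arises), giving clause~1 of Definition~\ref{def:injectiveinvariant} for free. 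Then I would check the trapdoor-injective axioms of Definition~\ref{def:trapdoorinjective}: disjointness of $\supp(g_{k,b}(x))$ and $\supp(g_{k,b'}(x'))$ for $(b,x)\neq(b',x')$ holds with overwhelming probability over the key by a union bound over the at most $(2q^n)^2$ pairs---when $b=b'$ it reduces to $\mathbf{A}$ being primitive, and when $b\neq b'$ it uses that a uniform $\mathbf{u}$ avoids $\mathbf{A}(\mZ_q^n)$ enlarged by an $O(B_P\sqrt m)$-ball except with probability $\leq q^n(O(B_P)/q)^m$, negligible by $m=\Omega(n\log q)$ and $B_P\ll q$; inversion is performed by running $\Invert(\mathbf{A},t_{\mathbf{A}},\cdot)$ on both $y$ and $y-\mathbf{u}$ and keeping whichever yields a short error, which recovers $(b,x)$ by Theorem~\ref{thm:trapdoor} since $B_P$ lies below the inversion radius. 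Clause~2 of Definition~\ref{def:injectiveinvariant}---indistinguishability of $\textrm{GEN}_{\mathcal{F}}$ and $\textrm{GEN}_{\mathcal{G}}$---is exactly the statement that $(\mathbf{A},\mathbf{A}\mathbf{s}+\mathbf{e})$ and $(\mathbf{A},\mathbf{u})$ are indistinguishable; I would derive it from $\lwe_{\ell,q,D_{\mZ_q,B_L}}$ by first replacing $\mathbf{A}$ by a uniform matrix (statistical closeness of $\GenTrap$) and then applying standard secret-dimension padding (from $\ell$ up to $n$) and noise-modulus (from $B_L$ up to the key's parameter) reductions---or, equivalently, by routing through the lossy sampler of Definition~\ref{def:lossy} and Theorem~\ref{thm:lossy}, as in the adaptive hardcore bit argument of~\cite{oneproverrandomness}.

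For the averaged hardcore bit I must exhibit, for each quantum polynomial-time $\mathcal{A}$, a string $d^*$ with $\big|\Pr_{k}[\mathcal{A}(k)\in H'_{k,d^*}]-\tfrac12\big|$ negligible, which forces two sub-goals: that $H'_{k,d^*}$ is a singleton $\{c_k\}$ for all but a negligible fraction of keys, and that $c_k$ is unpredictable from $k$. The starting observation is structural: since $\mathbf{A}$ is primitive, $f_{k,0}(x_0)=f_{k,1}(x_1)$ forces $\mathbf{A}(x_0-x_1-\mathbf{s})=0$ and hence $x_0-x_1=\mathbf{s}$ in $\mZ_q^n$, so every claw in $\mathcal{R}_k$ has the \emph{same} difference, the LWE secret $\mathbf{s}$. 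I would then choose $d^*$ from inside the blockwise ``good'' structure already used to define $\dset_{k,b,x}$ and $\inj$ in Definition~\ref{def:trapdoorclawfree}, arranged so that $d^*\cdot(\inj(x_0)\oplus\inj(x_1))$ is a function of $\mathbf{s}$ alone (not of the representative $x_1$), making $H'_{k,d^*}=\{c_k\}$ with $c_k$ a fixed nonzero linear functional of $\mathbf{s}$ reduced mod $2$. Unpredictability of $c_k$ then follows from the lossy-mode argument once more: replacing $\mathbf{A}$ by a rank-$\ell$ lossy matrix $\mathbf{B}\mathbf{C}+\mathbf{F}$ (indistinguishable under $\lwe_{\ell,q,D_{\mZ_q,B_L}}$ by Theorem~\ref{thm:lossy}), the key leaks only an $\ell$-dimensional compression of $\mathbf{s}$, so by $n=\Omega(\ell\log q)$ and the gaps between $B_L,B_V,B_P$ the secret retains enough conditional min-entropy that $c_k$ is within negligible statistical distance of a uniform bit---hence unpredictable even for an unbounded algorithm in lossy mode---and transferring back to the real key via indistinguishability completes the proof.

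The step I expect to be the main obstacle is pinning down $d^*$ and proving that $d^*\cdot(\inj(x_0)\oplus\inj(x_1))$ depends only on $\mathbf{s}$: because $\inj$ is a bit-level encoding of $\mZ_q$-vectors and $x_0=x_1+\mathbf{s}$ involves reduction modulo the odd prime $q$, the XOR $\inj(x_0)\oplus\inj(x_1)$ genuinely varies with $x_1$ through carries and wrap-around, and one must exploit precisely the blockwise constraints baked into $\dset_{k,b,x}$ (and very likely restrict attention to the overwhelming-probability event that $\mathbf{s}$ is ``generic'') to kill that dependence; the accompanying min-entropy bookkeeping in lossy mode, which has to go through with the specific super-polynomial ratios $B_P/B_V$ and $B_V/B_L$, is the other place where the calculation is delicate.
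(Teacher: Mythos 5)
Your skeleton matches the paper's proof closely: invoke Theorem~22 of \cite{oneproverrandomness} for the NTCF properties, introduce $\mathcal{G}_{\lwe}$ with a uniform second key component, argue trapdoor-injectivity and injective invariance via LWE, and establish the new hardcore-bit condition by fixing a structured $d^*$, showing the inner product depends only on $\*s$, and then appealing to lossy-mode unpredictability. Two small differences in the bookkeeping of $\mathcal{G}_{\lwe}$: the paper ensures disjointness of supports \emph{deterministically} (for every key) by rejection-sampling $\*u$ to be far from the image of $\*A$, rather than by a union-bound-over-keys argument as you suggest; and for injective invariance it simply cites existing binary-secret LWE results (Lemma~\ref{lem:lwebinaryindist}) rather than re-deriving them from the lossy sampler, though your derivation is equivalent in substance.

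Where you diverge most from the paper is the step you flag as the ``main obstacle,'' pinning down $d^*$ and showing $d^*\cdot(\inj(x_0)\oplus\inj(x_1))$ is a function of $\*s$ alone. You propose to route this through the blockwise good sets $\dset_{k,b,x}$ and to restrict to generic $\*s$. The paper does neither. It instead sets $d^*=J(\hat{\*d})$ for a bit vector $\hat{\*d}\in\{0,1\}^n$, i.e.\ the $w$-bit string supported only on the least-significant bit position of each $\lceil\log q\rceil$-bit block. With that choice the inner product $J(\hat{\*d})\cdot(\inj(x_0)\oplus\inj(x_1))$ just reads off, coordinate by coordinate, whether the parity of $(x_0)_i$ and $(x_1)_i$ differ, and since $(x_0)_i-(x_1)_i=s_i\in\{0,1\}$, that is exactly $\hat{d}\cdot s$ (Claim~\ref{cl:dependenceonsecretG}). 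This sidesteps the intra-block carry issue you worried about entirely: restricting to the LSB position means no carries can affect the relevant bit, so there is nothing for $\dset_{k,b,x}$ to control, and no genericity condition on $\*s$ is imposed. Unpredictability of $\hat{d}\cdot s$ is then Lemma~\ref{lem:lweadaptiveleakageG}, proved via the lossy sampler exactly as you suggest. Your instinct that mod-$q$ reduction can break the parity reasoning is not misplaced --- the paper's Claim~\ref{cl:dependenceonsecretG} is asserted for all $x$, but the parity argument actually misses the single wrap-around point per coordinate ($x_i=q-1$, where $x_i$ and $x_i+1\bmod q$ are both even since $q$ is odd), a $1/q$ fraction that is negligible given the super-polynomial $q$ --- so your caution is warranted, but the fix in the paper is simply that the exceptional set is negligible, not that one must invoke $\dset$ or restrict $\*s$.
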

We begin by providing a trapdoor injective family and then use this family to show that $\mathcal{F}_{\lwe}$ is an extended trapdoor claw-free function family.

\subsection{Trapdoor Injective Family from LWE}
\label{sec:lweinjective}
We now describe the trapdoor injective family $\mathcal{G}_{\lwe}$. Let $\sX = \mZ_q^n$ and $\sY = \mZ_q^m$. 
The key space is $\mathcal{K}_{\mathcal{G}_{\lwe}} = \mZ_q^{m\times n} \times \mZ_q^m$. For $b\in \{0,1\}$, $x\in \sX$ and key $k = (\*A,\*u)$,  the density $g_{k,b}(x) $ is defined as
\begin{equation}\label{eq:defGLWE}
  \forall y \in \sY,\quad   (g_{k,b}(x))(y) = D_{\mZ_q^m,B_P}(y - \*Ax - b\cdot \*u)\;,
\end{equation}
where the definition of $D_{\mZ_q^m,B_P}$ is given in~\eqref{eq:d-bounded-def}. The three properties required for a trapdoor injective family, as specified in Definition~\ref{def:trapdoorinjective}, are verified in the following subsections, providing a proof of the following theorem.

\begin{thm}\label{thm:lwetif}
For any choice of parameters satisfying the conditions~\eqref{eq:assumptionscopy}, the function family $\mathcal{G}_{\lwe}$ is a trapdoor injective family under the hardness assumption $\lwe_{\ell,q,D_{\mZ_q,B_L}}$. 
\end{thm}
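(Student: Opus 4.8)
The plan is to verify, one by one, the three defining conditions of a trapdoor injective family (Definition~\ref{def:trapdoorinjective}) for $\mathcal{G}_{\lwe}$, the main tools being the trapdoor generator $\GenTrap$ of Theorem~\ref{thm:trapdoor} together with the tail bound on truncated discrete Gaussians (these Gaussians are supported on vectors of norm at most $B_P\sqrt{m}$). The one parameter inequality used repeatedly is $B_P\sqrt{m} = q/(2C_T\sqrt{n\log q})$, which follows from condition~3 of~\eqref{eq:assumptionscopy}; note in particular $2B_P\sqrt{m} \le q/(C_T\sqrt{n\log q})$. The $\lwe_{\ell,q,D_{\mZ_q,B_L}}$ assumption itself plays essentially no role in these three conditions — it is present because the uniform choice of the second key component $\*u$ below is exactly what will make $\mathcal{G}_{\lwe}$ injective-invariant with respect to $\mathcal{F}_{\lwe}$ under $\lwe$.

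First I would specify $\textrm{GEN}_{\mathcal{G}}(1^\lambda)$: run $(\*A,t_{\*A})\leftarrow \GenTrap(1^n,1^m,q)$ and sample $\*u\leftarrow_U \mZ_q^m$, output $k=(\*A,\*u)$ with trapdoor $t_k=t_{\*A}$. By Theorem~\ref{thm:trapdoor} the marginal of $\*A$ is within negligible statistical distance of uniform, so with overwhelming probability $\*A$ is ``good'' in the sense that no nonzero $v\in\mZ_q^n$ has $\norm{\*A v}\le 2B_P\sqrt{m}$ (and $\*A$ is injective as a map $\mZ_q^n\to\mZ_q^m$), and moreover, over the extra randomness of $\*u$, no vector of the subgroup $\*A\,\mZ_q^n$ lies within distance $2B_P\sqrt{m}$ of $\pm\*u$. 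One takes $\mathcal{K}_{\mathcal{G}_{\lwe}}$ to be exactly these good keys; the negligible probability that $\textrm{GEN}_{\mathcal{G}}$ falls outside this set is absorbed into the negligible error terms throughout the protocol analysis. This ``for all keys'' quantifier is in fact the only delicate point: the separation properties genuinely fail for a negligible fraction of $(\*A,\*u)$, so the argument must restrict $\mathcal{K}_{\mathcal{G}_{\lwe}}$ to the good keys and check that $\textrm{GEN}_{\mathcal{G}}$ lands there with overwhelming probability — everything else is routine.

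Next I would verify the \textbf{Disjoint Trapdoor Injective Pair} condition. Because $D_{\mZ_q^m,B_P}$ is supported on vectors of norm $\le B_P\sqrt{m}$, we have $\supp(g_{k,b}(x)) = \{\,y : \norm{y-\*Ax-b\cdot\*u}\le B_P\sqrt{m}\,\}$. If some $y$ lay in both $\supp(g_{k,b}(x))$ and $\supp(g_{k,b'}(x'))$ with $(b,x)\neq(b',x')$, subtracting would yield $\*A(x-x')+(b-b')\*u$ of norm at most $2B_P\sqrt{m}$: for $b=b'$ (so $x\neq x'$) this contradicts goodness of $\*A$, and for $b\neq b'$ it contradicts goodness of $\*u$. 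For the inverter $\textrm{INV}_{\mathcal{G}}(t_k,y)$, I would run $\Invert(\*A,t_{\*A},\cdot)$ on both $y$ and $y-\*u$; whenever $y\in\supp(g_{k,b}(x))$ we may write $y-b\cdot\*u=\*Ax+e$ with $\norm{e}\le B_P\sqrt{m} < q/(C_T\sqrt{n\log q})$, so Theorem~\ref{thm:trapdoor} recovers $x$ from the corresponding call, and by the disjointness just shown exactly one of the two calls returns a valid short-error decoding, yielding the unique $(b,x)$.

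Finally, for \textbf{Efficient Range Superposition}: the check $\textrm{CHK}_{\mathcal{G}}(k,b,x,y)$ merely tests whether $\norm{y-\*Ax-b\cdot\*u}\le B_P\sqrt{m}$, which is the support-membership condition and needs no trapdoor; and $\textrm{SAMP}_{\mathcal{G}}(k,b)$ prepares the uniform superposition over $x\in\sX$, appends a discrete-Gaussian superposition $\sum_{e}\sqrt{D_{\mZ_q^m,B_P}(e)}\ket{e}$ produced by the standard state-preparation routine, and coherently adds $\*Ax+b\cdot\*u$ into that register. This is literally the same procedure as $\textrm{SAMP}_{\mathcal{F}}$ for $\mathcal{F}_{\lwe}$ — a coincidence that is required for injective invariance — and its correctness is exactly as established in \cite{oneproverrandomness}. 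Assembling the three verified conditions gives Theorem~\ref{thm:lwetif}.
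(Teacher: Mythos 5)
Your verification of the three conditions matches the paper's proof in structure, and the disjoint-support and inversion arguments are substantively identical: both turn on $B_P\sqrt{m} = q/(2C_T\sqrt{n\log q})$ so that Gaussian balls of radius $B_P\sqrt{m}$ around distinct lattice-plus-shift points stay disjoint and $\Invert$ applies, and both reuse $\textrm{CHK}_{\mathcal{F}_{\lwe}}$ and $\textrm{SAMP}_{\mathcal{F}_{\lwe}}$ verbatim. The one place you genuinely diverge from the paper is in how the negligible fraction of bad $(\*A,\*u)$ pairs is handled. You propose restricting $\mathcal{K}_{\mathcal{G}_{\lwe}}$ to ``good'' keys and observing that $\textrm{GEN}_{\mathcal{G}_{\lwe}}$ lands there with overwhelming probability, absorbing the remainder into the protocol's error terms. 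The paper instead modifies $\textrm{GEN}_{\mathcal{G}_{\lwe}}$ itself to \emph{resample} $\*u$: after drawing $\*u$ uniformly, it uses the trapdoor $t_{\*A}$ to test whether there exist $\*s,\*e$ with $\|\*e\|\le q/(C_T\sqrt{n\log q})$ and $\*A\*s+\*e=\*u$, discarding and redrawing $\*u$ if so. Rejection sampling makes every key that $\textrm{GEN}_{\mathcal{G}_{\lwe}}$ actually outputs satisfy the disjointness \emph{by construction} (so the ``for all keys'' quantifier in Definition~\ref{def:trapdoorinjective} is met on the nose), at the cost of shifting the marginal of $\*u$ negligibly away from uniform — which the paper notes explicitly and is harmless for the later injective-invariance argument. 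Your key-space restriction keeps $\*u$ exactly uniform but tolerates a negligible probability that $\textrm{GEN}$ produces a key outside the good set. The two versions differ only by a negligible statistical distance and either supports the remainder of the construction, so this is a stylistic rather than substantive difference; you are also right that the $\lwe$ assumption is not actually invoked in checking these three conditions.
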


\subsubsection{Efficient Function Generation}

GEN$_{\mathcal{G}_{\lwe}}$ is defined as follows. First, the procedure samples a random $\*A\in \mZ_q^{m\times n}$, together with trapdoor information $t_{\*A}$. This is done using the procedure $\GenTrap(1^n,1^m,q)$ from Theorem~\ref{thm:trapdoor}. The trapdoor allows the evaluation of an inversion algorithm $\Invert$  that, on input $\*A$, $t_{\*A}$ and $b=\*A\*s + \*e$ returns $\*s$ and $\*e$ as long as $\|\*e\|\leq \frac{q}{C_T\sqrt{n\log q}}$. Moreover, the distribution on matrices $\*A$ returned by $\GenTrap$ is negligibly close to the uniform distribution on $\mZ_q^{m\times n}$.

Next, the sampling procedure selects $\*u\in \mZ_q^m$ uniformly at random. By using the trapdoor $t_{\*A}$, the sampling procedures checks if there exist $\*s,\*e$ such that $\|\*e\|\leq \frac{q}{C_T\sqrt{n\log q}}$ and $\*A\*s + \*e = \*u$. If so, the sampling algorithm discards $\*u$ and samples $\*u$ again. Since $\*u$ is discarded with negligible probability, the distribution over $\*u$ is negligibly close to the uniform distribution. GEN$_{\mathcal{G}_{\lwe}}$ returns $k = (\*A,\*u)$ and $t_k = t_{\*A}$.

\subsubsection{Trapdoor Injective Functions}\label{sec:trapdoorinjectivereq}

It follows from~\eqref{eq:defGLWE} and the definition of the distribution $D_{\mZ_q^m,B_P}$ in~\eqref{eq:d-bounded-def} that for any key $k=(\*A,\*u)\in \mathcal{K}_{\mathcal{G}_{\lwe}}$ and for all $x\in \sX$,
\begin{eqnarray}
\supp(g_{k,0}(x)) &=& \big\{y = \*Ax + \*e_0\,| \; \|\*e_0\|\leq B_P\sqrt{m}\big\}\;,\\
\supp(g_{k,1}(x)) &=& \big\{y = \*Ax + \*e_0 + \*u \,| \;\|\*e_0\|\leq B_P\sqrt{m}\big\}\;.
\end{eqnarray}
Since there do not exist $\*s,\*e$ such that $\|\*e\|  \,\leq\, \frac{q}{C_T\sqrt{n\log q}}$ and $\*u = A\*s + \*e$, the intersection $\supp(g_{k,0}(x))\cap \supp(g_{k,1}(x))$ is empty as long as 
\begin{equation}\label{eq:trapdoorinjectiverequirement}
		 B_P \leq \frac{q}{2C_T\sqrt{mn\log q}}\;.
\end{equation}
The procedure $\textrm{INV}_{\mathcal{G}_{\lwe}}$ takes as input the trapdoor $t_{\*A}$ and $y\in \sY$. It first runs the algorithm $\Invert$ on input $y$. If $\Invert$ outputs $\*s_0,\*e_0$ such that $y = \*A\*s_0 + \*e_0$ and $  \|\*e_0\|  \,\leq\, B_P\sqrt{m}$, the procedure $\textrm{INV}_{\mathcal{F}_{\lwe}}$ outputs $(0,\*s_0)$. Otherwise, it runs the algorithm $\Invert$ on input $y - \*u$ to obtain $\*s_0,\*e_0$ and outputs $(1,\*s_0)$ if $y - \*u = \*A\*s_0 + \*e_0$. Using Theorem~\ref{thm:trapdoor}, this procedure returns the unique correct outcome provided $y - b\cdot\*u= \*A\*s_0+\*e_0$ for some $\*e_0$ such that $  \|\*e_0\|  \,\leq\, \frac{q}{C_T\sqrt{n\log q}}$. Due to \eqref{eq:trapdoorinjectiverequirement}, this condition is satisfied for all $y\in \supp(f_{k,b}(x))$.

\subsubsection{Efficient Range Superposition} 
The procedures CHK$_{\mathcal{G}_{\lwe}}$ and SAMP$_{\mathcal{G}_{\lwe}}$ are the same as the procedures CHK$_{\mathcal{F}_{\lwe}}$ and SAMP$_{\mathcal{F}_{\lwe}}$.

\subsection{Injective Invariance}
We now show that $\mathcal{F}_{\textrm{LWE}}$ as given in \eqref{eq:defFLWE} is injective invariant with respect to $\mathcal{G}_{\textrm{LWE}}$ (see Definition \ref{def:injectiveinvariant}). To show this, we need to show that for all \BQP\ attackers $\mathcal{A}$, the distributions produced by $\textrm{GEN}_{\mathcal{F}_{\textrm{LWE}}}$ and $\textrm{GEN}_{\mathcal{G}_{\textrm{LWE}}}$ are computationally indistinguishable. This is equivalent to proving the hardness of LWE (as defined in Definition \ref{def:lwehardness}) with a binary secret, as seen in the following lemma:  

\begin{lem}\label{lem:lwebinaryindist}
Assume a choice of parameters satisfying the conditions \eqref{eq:assumptionscopy}. Assume the hardness assumption $\lwe_{\ell,q,D_{\mZ_q,B_L}}$ holds. Then the distributions
\begin{equation}\label{eq:initialbinarylwe}
\mathcal{D}_{0} = ((\*A, \*A\*s + \*e)\leftarrow  \textrm{GEN}_{\mathcal{F}_{\textrm{LWE}}}(1^{\lambda}))
\end{equation}
and
\begin{equation}\label{eq:finalbinarylwedist}
\mathcal{D}_{1} = ((\*A, \*u)\leftarrow \textrm{GEN}_{\mathcal{G}_{\textrm{LWE}}}(1^{\lambda}))
\end{equation}
are computationally indistinguishable. 

\end{lem}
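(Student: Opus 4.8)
The plan is to prove Lemma \ref{lem:lwebinaryindist} — which is precisely the pseudorandomness of LWE with a binary secret of dimension $n$ — by a short hybrid argument whose only computational steps are the lossy mode of LWE (Theorem \ref{thm:lossy}, resting on $\lwe_{\ell,q,D_{\mZ_q,B_L}}$) and the LWE assumption itself; every remaining step is statistical. First I would replace the $\GenTrap$-generated matrix $\*A$ in both $\mathcal{D}_0$ and $\mathcal{D}_1$ by a uniform $\*A\leftarrow_U\mZ_q^{m\times n}$: Theorem \ref{thm:trapdoor} gives that the output of $\GenTrap$ is within negligible statistical distance of uniform, and in $\textrm{GEN}_{\mathcal{G}_{\textrm{LWE}}}$ the vector $\*u$ is resampled only with negligible probability and hence is within negligible distance of uniform on $\mZ_q^m$. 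Recalling that $\textrm{GEN}_{\mathcal{F}_{\textrm{LWE}}}$ samples $\*A$ via $\GenTrap$, a binary $\*s\leftarrow_U\{0,1\}^n$, and a discrete Gaussian $\*e$ of width $B_V$ inside the key $(\*A,\*A\*s+\*e)$, it therefore suffices to show $(\*A,\*A\*s+\*e)\approx_c(\*A,\*u)$ for $\*A\leftarrow_U\mZ_q^{m\times n}$, $\*s\leftarrow_U\{0,1\}^n$, $\*e$ of width $B_V$, and $\*u\leftarrow_U\mZ_q^m$.

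Next, using Definition \ref{def:lossy} and Theorem \ref{thm:lossy} with $\chi=D_{\mZ_q,B_L}$, I would swap $\*A$ for a lossy sample $\tilde{\*A}=\*B\*C+\*F$ with $\*B\leftarrow_U\mZ_q^{m\times\ell}$, $\*C\leftarrow_U\mZ_q^{\ell\times n}$, $\*F\leftarrow D_{\mZ_q,B_L}^{m\times n}$: since $\*s,\*e$ are sampled independently of the matrix, a distinguisher between $(\*A,\*A\*s+\*e)$ and $(\tilde{\*A},\tilde{\*A}\*s+\*e)$ would distinguish $\*A$ from $\tilde{\*A}$, contradicting $\lwe_{\ell,q,D_{\mZ_q,B_L}}$, and the same holds for the $\*u$-branch. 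It then remains to prove the statistical claim $(\tilde{\*A},\tilde{\*A}\*s+\*e)\approx_s(\tilde{\*A},\*u)$; in fact I would prove the stronger claim in which the distinguisher is additionally handed $(\*B,\*C,\*F)$, which can only help it. Writing $\tilde{\*A}\*s+\*e=\*B(\*C\*s)+(\*F\*s+\*e)$, note that since $\*s\in\{0,1\}^n$ and each entry of $\*F$ has magnitude at most $B_L$, the vector $\*F\*s$ has norm at most $\sqrt m\,nB_L$, which by condition 5 of \eqref{eq:assumptionscopy} is superpolynomially smaller than $B_V$ ($m,n$ polynomially bounded, $B_V/B_L$ superpolynomial). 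Hence Lemma \ref{lem:distributiondistance} gives, for every fixed $\*F,\*s$, that $\*e+\*F\*s$ is within negligible statistical distance of a fresh width-$B_V$ Gaussian $\*e'$ independent of $\*s,\*F$, so $(\*B,\*C,\*F,\*B\*C\*s+\*F\*s+\*e)\approx_s(\*B,\*C,\*F,\*B\*C\*s+\*e')$. Now $\*s$ appears only through $\*C\*s$; since $\{\*s\mapsto\*C\*s\}$ is a universal hash family $\{0,1\}^n\to\mZ_q^\ell$ and \eqref{eq:assumptionscopy} makes $n$ sufficiently larger than $\ell\lceil\log q\rceil$, the leftover hash lemma yields $(\*C,\*C\*s)\approx_s(\*C,\*t)$ for $\*t\leftarrow_U\mZ_q^\ell$, whence $(\*B,\*C,\*F,\*B\*C\*s+\*e')\approx_s(\*B,\*C,\*F,\*B\*t+\*e')$ by data processing. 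Finally, $(\*B,\*B\*t+\*e')\approx_c(\*B,\*u)$ is exactly an LWE instance of dimension $\ell$ with uniform matrix, uniform secret, and Gaussian error, which follows from $\lwe_{\ell,q,D_{\mZ_q,B_L}}$ after a routine noise-rerandomization step raising the error width from $B_L$ to $B_V$; carrying along the independent $\*C,\*F$ and collapsing $(\*B,\*C,\*F)$ back to $\tilde{\*A}=\*B\*C+\*F$ recovers $(\tilde{\*A},\*B\*t+\*e')\approx_c(\tilde{\*A},\*u)$. Chaining all the hybrids through the triangle inequality for statistical and computational distance completes the proof.

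I expect the main obstacle to be the bookkeeping that makes the statistical middle section compose cleanly with the lossy switch: the lossy step must be stated purely in terms of $\tilde{\*A}$ (only then is it implied by Theorem \ref{thm:lossy}), whereas the smudging and leftover-hash steps need to expose the decomposition $(\*B,\*C,\*F)$, and the two are reconciled by observing that revealing $(\*B,\*C,\*F)$ strengthens rather than weakens the statistical claim (and, dually, that a distinguisher seeing only $(\tilde{\*A},\*v)$ is a special case of one seeing $(\*B,\*C,\*F,\*v)$). The secondary point to check is that the single parameter regime \eqref{eq:assumptionscopy} simultaneously makes $\sqrt m\,nB_L/B_V$ negligible (for Lemma \ref{lem:distributiondistance}) and $q^{\ell}2^{-n}$ negligible (for the leftover hash lemma), which holds because $B_V/B_L$ is superpolynomial, $\ell,m,n$ are polynomially bounded, and $n=\Omega(\ell\log q)$ with $w=n\lceil\log q\rceil$; these verifications are routine but must be written out explicitly.
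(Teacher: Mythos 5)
Your proof reconstructs a result that the paper itself does \emph{not} prove: the published text disposes of Lemma \ref{lem:lwebinaryindist} by appealing to prior work on LWE with binary secrets (\cite{robustnesslwe}, \cite{BLPRS13}, and Theorem B.5 of \cite{lwr}), so there is no in-paper argument to compare against step by step. That said, your hybrid sequence --- GenTrap to uniform, uniform to lossy $\tilde{\*A}=\*B\*C+\*F$ via Theorem \ref{thm:lossy}, smudge $\*F\*s$ into the width-$B_V$ error via Lemma \ref{lem:distributiondistance}, leftover-hash $\*C\*s\mapsto\*t$ uniform in $\mZ_q^\ell$, and then one more invocation of $\lwe_{\ell,q,\cdot}$ to turn $(\*B,\*B\*t+\*e')$ into $(\*B,\*u)$ --- is exactly the argument underlying those references, instantiated with the paper's own tools; so up to the cosmetic matter of whether one cites or reproves the result, the approach is the expected one, and it is correct.

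Two remarks. First, a genuine (if small) virtue of your write-up is that you make the final computational step explicit: after the leftover hash lemma, $(\*B,\*B\*t+\*e')$ is \emph{not} statistically close to $(\*B,\*u)$ --- given $\*B$, the first component is concentrated near an $\ell$-dimensional subspace --- and one really does need a second LWE invocation in dimension $\ell$ (with the routine $B_L\to B_V$ noise-widening via smudging). Any sketch that tries to finish by ``just swapping the lossy matrix back for a uniform one'' is papering over this, because $\*B\*t+\*e'$ and $\tilde{\*A}$ are correlated through $\*B$; you handle this correctly by exposing $(\*B,\*C,\*F)$, invoking LWE with $\*C,\*F$ as independent side information, and then collapsing back to $\tilde{\*A}$ by data processing. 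Second, a minor inconsistency in your exposition: mid-proof you call the remaining task ``the statistical claim $(\tilde{\*A},\tilde{\*A}\*s+\*e)\approx_s(\tilde{\*A},\*u)$'', but as your own argument shows, this step is computational (it uses LWE), and the subscript should be $\approx_c$; your opening sentence already lists LWE itself among the computational steps, so this is just a slip in wording rather than a gap in the reasoning. Finally, it would be worth stating the leftover-hash quantitative requirement $n-\ell\lceil\log q\rceil=\omega(\log\lambda)$ explicitly and tying it to condition 1 of \eqref{eq:assumptionscopy}, since $n=\Omega(\ell\log q)$ with an unspecified constant is, read literally, not quite enough --- you flag this yourself at the end, and it should be resolved by fixing the constant implicit in the $\Omega$.
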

The hardness of LWE with a binary secret is well studied, and the above lemma is implied by several results, starting with \cite{robustnesslwe} and improved in \cite{BLPRS13}. To be precise, it is also immediately implied by Theorem B.5 of \cite{lwr}. 

\subsection{Extended Trapdoor Claw-Free Family}
We have already shown that $\mathcal{F}_{\textrm{LWE}}$ is injective invariant. To show that $\mathcal{F}_{\lwe}$ is an extended trapdoor claw-free family, we now prove the second condition (the hardcore bit condition) of Definition \ref{def:extendedtrapdoorclawfree}. The two proofs in this section use the same ideas as the proofs of Lemmas 4.3 and 4.4 in \cite{oneproverrandomness}; for this reason, it is best to read those proofs first. We will be using the same notation as \cite{oneproverrandomness}: let $\sX=\mZ_q^n$, $w=n\lceil \log q \rceil$ and $\inj:\sX\to\{0,1\}^w$ be the map such that $\inj(x)$ returns the binary representation of $x\in\sX$. 

The key point, which we prove in Lemma \ref{lem:lweadaptivehardcoreG}, is that the inner product appearing in the definition of $H'_{k,d}$ (in condition 2 of Definition \ref{def:extendedtrapdoorclawfree}) is equal to the inner product $\hat{d}\cdot s$ (for $\hat{d}\in\{0,1\}^n$) if $d = J(\hat{\*d})$. We first show in Lemma \ref{lem:lweadaptiveleakageG} that producing an inner product $\hat{d}\cdot s$ is computationally difficult given $\*A,\*A\*s + \*e$. Next, in Lemma \ref{lem:lweadaptivehardcoreG}, we use Lemma \ref{lem:lweadaptiveleakageG} to prove condition 2 of Definition \ref{def:extendedtrapdoorclawfree}. 

We will be using Lemma 4.6 (\cite{oneproverrandomness}) in the following proof, so it is included here for reference:
\begin{lem}[\cite{oneproverrandomness}]\label{lem:hardcore-1}
Let $q$ be a prime, $\ell,n\geq 1$ integers, and $\*C\in \mZ_q^{\ell\times n}$ a uniformly random matrix. With probability at least $1-q^\ell\cdot 2^{-\frac{n}{8}}$ over the choice of $\*C$ the following holds. For a fixed $\*C$, all $\*v\in\mZ_q^\ell$ and $\hat{d}\in \{0,1\}^n\setminus\{0^n\}$, the distribution of $(\hat{d}\cdot s \bmod 2)$, where $s$ is uniform in $\{0,1\}^n$ conditioned on $\*C\*s = \*v$, is within  statistical distance $O(q^{\frac{3\ell}{2}} \cdot 2^{-\frac{n}{40}})$ of the uniform distribution over $\{0,1\}$. 
\end{lem}

\begin{lem}\label{lem:lweadaptiveleakageG}
Assume a choice of parameters satisfying the conditions~\eqref{eq:assumptionscopy}. Assume the hardness assumption $\lwe_{\ell,q,D_{\mZ_q,B_L}}$ holds. For all $\hat{d}\in\{0,1\}^n\setminus\{0^n\}$, the distributions
\begin{eqnarray}\label{eq:l14-1G}
\mathcal{D}_0 \,=\, \big( (\*A,\*A\*s+\*e) \leftarrow \textrm{GEN}_{\mathcal{F}_{\lwe}}(1^{\lambda}) ,\; \hat{d} \cdot s \mod 2\big) 
\end{eqnarray}
and
\begin{eqnarray}\label{eq:l14-2G}
\mathcal{D}_1 \,=\, \big( (\*A,\*A\*s+\*e) \leftarrow \textrm{GEN}_{\mathcal{F}_{\lwe}}(1^{\lambda}) , \; r \big)\;, 
\end{eqnarray}
where $r\leftarrow_U \{0,1\}$, are computationally indistinguishable. 
\end{lem}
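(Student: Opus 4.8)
The plan is to reduce the statement to the hardness of standard LWE with a \emph{binary secret}, which is precisely what Lemma~\ref{lem:lwebinaryindist} (the injective invariance argument) gives us. The key point is that the bit $\hat d\cdot s \bmod 2$ is a hardcore bit of the LWE secret $s$ under $\lwe_{\ell,q,D_{\mZ_q,B_L}}$. First I would recall that in the definition of $\textrm{GEN}_{\mathcal{F}_{\lwe}}$ the key is $(\*A,\*A\*s+\*e)$, where $\*A$ is (statistically close to) uniform and $s$ is a uniformly random vector in $\mZ_q^n$ (the trapdoor-rejection step in $\textrm{GEN}$ removes only a negligible fraction of $s$, so I may treat $s$ as uniform up to negligible error). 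So the task is: given $(\*A,\*A\*s+\*e)$, it is computationally infeasible to distinguish $\hat d\cdot s \bmod 2$ from a uniform bit, for any fixed nonzero $\hat d\in\{0,1\}^n$.

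The cleanest route is the standard ``reroll the last coordinate'' trick for Goldreich--Levin-style hardcore bits over $\mZ_q$ with a fixed mask. Fix $\hat d\neq 0^n$; WLOG assume $\hat d_1 = 1$. Given a challenge sample $(\*A,\*b)$ (which is either $\*A\*s+\*e$ for random $s$ or uniform $\*u$) together with a bit $\beta$ (either $\hat d\cdot s\bmod 2$ or uniform), I would build a new instance by adding to the first column of $\*A$ a random vector and correspondingly shifting $\*b$: more precisely, sample $s'\in\mZ_q^n$ and $\delta\in\{0,1\}$ and form $\*A' = \*A + (\text{encoding of }s')$, $\*b' = \*b + \*A s' + (\text{adjustment})$, and output $((\*A',\*b'),\beta\oplus(\hat d\cdot s'\bmod 2))$. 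When the challenge is a genuine LWE sample with secret $s$, the new pair is an LWE sample with secret $s+s'$ and the attached bit is $\hat d\cdot(s+s')\bmod 2$ — a correct $\mathcal{D}_0$-sample with fresh uniform secret; when the challenge is uniform ($\*u$), the new pair is still uniform and the attached bit is uniform — a $\mathcal{D}_1$-sample. Hence a distinguisher for $\mathcal{D}_0$ vs.\ $\mathcal{D}_1$ would distinguish genuine LWE samples from uniform, contradicting $\lwe_{\ell,q,D_{\mZ_q,B_L}}$. Since the standard $\lwe_{\ell,q,\cdot}$ hardness with Gaussian secret is equivalent (by the secret-swapping reduction, e.g.\ via the trapdoor-sampling / short-dual-vector argument) to hardness with uniform $s\in\mZ_q^n$, and all the relevant parameters satisfy the conditions~\eqref{eq:assumptionscopy}, this suffices.

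Alternatively — and this is probably what the paper will actually do — I would reduce directly to Lemma~\ref{lem:lwebinaryindist}: that lemma already establishes that $(\*A,\*A\*s+\*e)$ is computationally indistinguishable from $(\*A,\*u)$ with $\*u$ uniform. In the latter distribution there is no secret $s$ at all, so ``$\hat d\cdot s\bmod 2$'' must be replaced by some independent value; formally, I would introduce a hybrid in which the key is $(\*A,\*u)$ for uniform $\*u$ and the appended bit is an independent uniform bit $r$, and argue: (i) $\mathcal{D}_0$ is computationally indistinguishable from this hybrid, because the only change is $(\*A,\*A\*s+\*e,\hat d\cdot s)\mapsto(\*A,\*u,\hat d\cdot s')$ for a freshly sampled dummy $s'$ — and the joint indistinguishability of key-plus-hardcore-bit is exactly the content of the binary/uniform-secret leakage lemma from \cite{oneproverrandomness} that underlies Lemma~\ref{lem:lwebinaryindist}; (ii) in the hybrid, $\hat d\cdot s'$ is statistically a uniform bit since $\hat d\neq 0^n$ and $s'$ is uniform over $\mZ_q^n$ (for $q$ odd, $\hat d\cdot s'\bmod 2$ is within negligible distance of uniform), so the hybrid equals $\mathcal{D}_1$ up to negligible statistical distance.

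The main obstacle is making the ``appended inner product'' reduction respect the exact form of $\textrm{GEN}_{\mathcal{F}_{\lwe}}$: the rerandomization of the secret must be done so that the output distribution is \emph{exactly} $\textrm{GEN}_{\mathcal{F}_{\lwe}}(1^\lambda)$ (including that $\*A$ comes with a trapdoor and $\*u=\*A\*s+\*e$ avoids the rejected set), not merely a close cousin; handling the trapdoor and the negligible rejection probability carefully is the fiddly part. A secondary subtlety is the mod-$2$ versus mod-$q$ bookkeeping when rerandomizing $s$, which is why one wants $q$ odd (or at least coprime to $2$) and should cite that $\hat d\cdot s\bmod2$ is nearly uniform for uniform $s$. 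Neither of these is deep, but both need to be spelled out to keep the reduction tight.
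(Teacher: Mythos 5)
Your first approach (the rerandomization/``reroll'' trick) does not work here, and the obstacle you flag as a ``secondary subtlety'' is in fact fatal. The secret $s$ in $\textrm{GEN}_{\mathcal{F}_{\lwe}}$ is a \emph{binary} vector in $\{0,1\}^n$ (this is exactly what makes the identity $J(\hat{\*d})\cdot(\inj(x)\oplus\inj(x-(-1)^b\*s)) = \hat d\cdot s$ in Claim~\ref{cl:dependenceonsecretG} hold), so shifting $\*b\mapsto\*b+\*A\*s'$ produces an LWE sample with secret $s+s'$, which is not binary --- neither as an element of $\mZ_q^n$ nor as an integer vector --- and hence is not distributed as a fresh output of $\textrm{GEN}_{\mathcal{F}_{\lwe}}$. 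Replacing addition with XOR to keep the secret binary breaks linearity, so $\*b'$ is no longer an LWE sample. There is also a circularity in the setup of that reduction: you describe the challenge as $(\*A,\*b)$ ``together with a bit $\beta$ (either $\hat d\cdot s$ or uniform),'' but the LWE assumption hands you only $(\*A,\*b)$; getting $\beta$ consistent with the hidden secret is precisely the content of the lemma you are trying to prove.

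Your second approach points in the right direction but leans on the wrong crutch. Lemma~\ref{lem:lwebinaryindist} only says $(\*A,\*A\*s+\*e)$ is indistinguishable from $(\*A,\*u)$; it gives no leakage resilience, so it cannot by itself justify the move from $(\*A,\*A\*s+\*e,\hat d\cdot s)$ to $(\*A,\*u,r)$ --- a distinguisher is free to correlate the appended bit against the key, and ruling that out is exactly the lemma's claim. What the paper actually does is the lossy-sampler argument (a simplified version of Lemma~27 of \cite{oneproverrandomness}): replace $\*A$ with $\tilde{\*A}=\*B\*C+\*F\leftarrow\lossy$ (computationally indistinguishable by Theorem~\ref{thm:lossy}), drop the small term $\*F\*s$ statistically, so that the adversary's view of $s$ is only the lossy compression $\*C\*s\in\mZ_q^\ell$ with $\ell\ll n$; then invoke a statistical lemma (Lemma~23 of \cite{oneproverrandomness}) showing that $\hat d\cdot s\bmod 2$ is close to uniform conditioned on $\*C\*s$; and finally reverse the lossy swap to land on $\mathcal{D}_1$. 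That inner statistical step is what you would have to supply in place of ``$\hat d\cdot s'$ is uniform because $s'$ is uniform'' --- the point is uniformity \emph{conditioned on the leaked compression of $s$}, not unconditional uniformity of a dummy $s'$, and the binary secret makes even the unconditional claim require an argument (it is a balanced XOR of $\hat d$'s support, not a $\bmod\ q$ computation). So your high-level plan (reduce to a leakage-hardcore-bit lemma from \cite{oneproverrandomness}) is aligned with the paper, but neither the rerandomization nor the appeal to Lemma~\ref{lem:lwebinaryindist} actually supplies the needed step.
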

\begin{proof}
This proof is a simpler version of the proof of Lemma 4.4 in \cite{oneproverrandomness}. We first transition from $\mathcal{D}_{0}$ to the following computationally indistinguishable distribution:
\begin{equation}
\mathcal{D}^{(1)} = (\*B\*C+ \*F, \*B\*C\*s + \*e, \hat{d}\cdot s\mod 2) 
\end{equation}
where $\tilde{\*A} = \*B\*C + \*F \leftarrow \lossy(1^n,1^m,1^\ell,q,D_{\mZ_q,B_L})$ is sampled from a lossy sampler. The transition from $\mathcal{D}_0$ to $\mathcal{D}^{(1)}$ requires two steps: the first step is to replace $\*A$ with a lossy matrix $\tilde{\*A} = \*B\*C + \*F$ to obtain a computationally indistinguishable distribution, and the second step is to remove the term $\*F\*s$ from the lossy LWE sample $\tilde{\*A}\*s + \*e$. To justify the second step, note that $\*s$ is binary and the entries of $\*F$ are taken from a $B_L$-bounded distribution, implying that $\|\*F\*s\|\leq n\sqrt{m}B_L$. Lemma \ref{lem:distributiondistance} then implies that the removal of the term $\*F\*s$ results in a distribution within negligible statistical distance. For a more detailed description of this transition, see the proof of Lemma 4.4 in \cite{oneproverrandomness} (the transition from $\mathcal{D}_0$ to $\mathcal{D}^{(1)}$ is the same as the transition from (31) to (43) in the proof of Lemma 4.4 in \cite{oneproverrandomness}).

Next, we apply Lemma \ref{lem:hardcore-1} (Lemma 4.6 from \cite{oneproverrandomness}) to $\mathcal{D}^{(1)}$ to replace $\hat{d}\cdot s\mod 2$ with a uniformly random bit $r$, resulting in the following statistically indistinguishable distribution:
\begin{equation}
\mathcal{D}^{(2)} = (\*B\*C + \*F, \*B\*C\*s + \*e, r) 
\end{equation}
Computational indistinguishability between $\mathcal{D}^{(2)}$ and $\mathcal{D}_1$ follows similarly to between $\mathcal{D}^{(1)}$ and $\mathcal{D}_0$.
\end{proof}

The following lemma implies the second condition of Definition \ref{def:extendedtrapdoorclawfree}, by letting $d$ in Definition \ref{def:extendedtrapdoorclawfree} be $J(\hat{\*d})$, for any $\hat{d}\in\{0,1\}^n\setminus\{0^n\}$.
\begin{lem}\label{lem:lweadaptivehardcoreG} 
Assume a choice of parameters satisfying the conditions~\eqref{eq:assumptionscopy}. Assume the hardness assumption $\lwe_{\ell,q,D_{\mZ_q,B_L}}$ holds. Let $s\in\{0,1\}^n$ and for $d\in\{0,1\}^w$ let \footnote{We write the set as $H'_{s,d}$ instead of $H'_{k,d}$ to emphasize the dependence on $s$.}
\begin{eqnarray}
H'_{s,d} &=& \big\{d\cdot(\inj(x)\oplus \inj( x - \*s ))\,|\; x\in \sX  \big\}\;,\label{eq:def-h-prime}\;.
\end{eqnarray}
Then for all $\hat{d}\in\{0,1\}^n\setminus\{0^n\}$ and for any quantum polynomial-time procedure 
$$\mathcal{A}:\, \mZ_q^{m\times n} \times \mZ_q^m \,\to\,\{0,1\}$$
 there exists a negligible function $\mu(\cdot)$ such that 
\begin{equation}\label{eq:lwebitattackeradvantageG}
\Big|\Pr_{(\*A,\*A\*s+\*e)\leftarrow \textrm{GEN}_{\mathcal{F}_{\lwe}}(1^{\lambda})}\big[\mathcal{A}(\*A,\*A\*s+\*e) \in H'_{s,J(\hat{\*d})}\big] - \frac{1}{2}\Big| \,\leq\, \mu(\lambda)\;.
\end{equation} 
\end{lem}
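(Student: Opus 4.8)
The plan is to reduce the statement about the adaptive hardcore bit set $H'_{s,d}$ to Lemma \ref{lem:lweadaptiveleakageG}, which already establishes computational indistinguishability between $(\*A,\*A\*s+\*e,\hat d\cdot s \bmod 2)$ and $(\*A,\*A\*s+\*e,r)$ for a uniform bit $r$. The bridge between the two is a purely arithmetic identity: for $x\in\sX=\mZ_q^n$, when $d = J(\hat{\*d})$ for some $\hat{\*d}\in\{0,1\}^n$, the inner product $d\cdot(\inj(x)\oplus \inj(x-(-1)^b\*s))$ over $\mZ_2$ equals $\hat d\cdot s \bmod 2$. So the first step is to prove this identity.

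To establish the identity, recall that $w = n\lceil \log q\rceil$ and $\inj$ writes each coordinate of $x\in\mZ_q^n$ in its $\lceil \log q\rceil$-bit binary representation. If $\hat{\*d}\in\{0,1\}^n$, then $J(\hat{\*d})$ is the string which, in the block corresponding to coordinate $i$, is all zeros except possibly in the least significant bit (this depends on exactly how $J$ embeds $\{0,1\}^n$, but the point established in \cite{oneproverrandomness} and restated just before the lemma is precisely that $d\cdot(\inj(x_0)\oplus\inj(x_1)) = \hat d\cdot s$ when $d = J(\hat{\*d})$). Concretely, $\inj(x)\oplus\inj(x-(-1)^b\*s)$, restricted to the bits selected by $J(\hat{\*d})$, picks out the parity of the difference between $x$ and $x-(-1)^b\*s$ in each selected coordinate; since $s\in\{0,1\}^n$, the $i$-th coordinate difference is $\pm s_i$, whose parity is $s_i$. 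Summing over the coordinates selected by $\hat{\*d}$ gives $\sum_i \hat d_i s_i = \hat d\cdot s \bmod 2$. I would spell out this coordinate-wise computation, being careful about the $(-1)^b$ sign (the parity is insensitive to it) and about carries (the least-significant-bit parity of $a$ and $a+c$ differ by the parity of $c$, independent of carries). Thus $H'_{s,J(\hat{\*d})} = \{\hat d\cdot s \bmod 2\}$ is a singleton.

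Given the identity, the reduction is immediate. Suppose some quantum polynomial-time $\mathcal{A}$ satisfies $\big|\Pr[\mathcal{A}(\*A,\*A\*s+\*e)\in H'_{s,J(\hat{\*d})}] - \tfrac12\big| \geq \varepsilon$ for non-negligible $\varepsilon$. Since $H'_{s,J(\hat{\*d})}=\{\hat d\cdot s \bmod 2\}$, this says $\mathcal{A}$ outputs the bit $\hat d\cdot s \bmod 2$ with probability bounded away from $\tfrac12$ by $\varepsilon$. Build a distinguisher $\mathcal{D}$ for Lemma \ref{lem:lweadaptiveleakageG}: on input $(\*A,\*A\*s+\*e, c)$, run $\mathcal{A}(\*A,\*A\*s+\*e)$ to get a bit $b'$ and output $1$ iff $b' = c$. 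Under $\mathcal{D}_0$ of Lemma \ref{lem:lweadaptiveleakageG} we have $c = \hat d\cdot s\bmod 2$, so $\Pr[\mathcal{D}\text{ outputs }1] = \Pr[\mathcal{A} = \hat d\cdot s\bmod 2]$, which differs from $\tfrac12$ by at least $\varepsilon$; under $\mathcal{D}_1$, $c$ is uniform and independent of $\mathcal{A}$'s output, so $\Pr[\mathcal{D}\text{ outputs }1] = \tfrac12$ exactly. Hence $\mathcal{D}$ distinguishes with advantage $\geq\varepsilon$, contradicting Lemma \ref{lem:lweadaptiveleakageG}. This gives the negligible bound $\mu$ in \eqref{eq:lwebitattackeradvantageG}.

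The main obstacle is the arithmetic identity in the second step: one must argue carefully that the XOR of the binary representations of $x$ and $x-(-1)^b\*s$, when contracted against $J(\hat{\*d})$, reduces exactly to $\hat d\cdot s\bmod 2$, with no spurious contributions from higher-order bits or carries. This is where the precise definition of the injective map $J$ from \cite{oneproverrandomness} matters — $J$ must be chosen so that $J(\hat{\*d})$ for $\hat{\*d}\in\{0,1\}^n$ selects exactly one (low-order) bit per $\mZ_q$-coordinate, which is exactly the structural property asserted in the paragraph preceding the lemma. Everything else (the reduction to Lemma \ref{lem:lweadaptiveleakageG}, the triangle-inequality bookkeeping) is routine. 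I would also note the edge case $\hat{\*d}=0^n$, where $H'_{s,J(0^n)}=\{0\}$ is trivially predictable but the statement only requires \emph{existence} of a good $d$ (here $d=J(\hat{\*d})$ for any fixed nonzero $\hat{\*d}$ works), so the lemma as phrased — quantifying over all $\hat{\*d}$ — holds vacuously in that degenerate case since $\Pr[\mathcal{A}\in\{0\}]$ can legitimately be $1$; reading the statement of Definition \ref{def:extendedtrapdoorclawfree}, only one good $d$ is needed, so this causes no problem.
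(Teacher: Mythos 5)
Your proposal is correct and takes essentially the same route as the paper: reduce to Lemma~\ref{lem:lweadaptiveleakageG} via the arithmetic identity $J(\hat{\*d})\cdot(\inj(x)\oplus\inj(x-(-1)^b\*s)) = \hat d\cdot s \bmod 2$ (the paper isolates this as Claim~\ref{cl:dependenceonsecretG}, proved for $n=1$ by observing that only the least significant bit of each $\lceil\log q\rceil$-block survives the contraction with $J(\hat{\*d})$ and that the LSB parity of $x$ and $x\mp s_i$ differs by $s_i$, independent of carries — exactly your argument). The one difference is cosmetic: you build a single distinguisher $\mathcal{D}$ that outputs ``equal/not equal,'' whereas the paper constructs two distinguishers $\mathcal{A}'_0,\mathcal{A}'_1$ and argues that the sum of their advantages is at least $2\eta$ so one of them has advantage $\geq\eta$; your version is a bit cleaner since the two are in fact complements with identical advantage. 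Your remark about $\hat{\*d}=0^n$ is a fair catch — the lemma as printed quantifies over all $\hat{\*d}\in\{0,1\}^n$, but Lemma~\ref{lem:lweadaptiveleakageG} requires $\hat{\*d}\neq 0^n$ and indeed $H'_{s,J(0^n)}=\{0\}$ is trivially predictable, so the printed statement should exclude $0^n$; however your wording ``holds vacuously'' is off — for $\hat{\*d}=0^n$ the conclusion simply \emph{fails}, it just doesn't matter because, as you say, Definition~\ref{def:extendedtrapdoorclawfree} only asks for the existence of one good $d$.
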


\begin{proof}
This proof is very similar to the proof of Lemma 4.3 in \cite{oneproverrandomness}. The proof is by contradiction. Assume that there exists $\hat{d}\in\{0,1\}^n$ and a quantum polynomial-time procedure $\mathcal{A}$ such that the left-hand side of~\eqref{eq:lwebitattackeradvantageG} is at least some non-negligible function $\eta(\lambda)$. We derive a contradiction by showing that for $\hat{d}$, the two distributions $\mathcal{D}_0$ and $\mathcal{D}_1$ in Lemma~\ref{lem:lweadaptiveleakageG} are computationally distinguishable, giving a contradiction. 

Let $(\*A,\*A\*s+\*e) \leftarrow \textrm{GEN}_{\mathcal{F}_{\lwe}}(1^{\lambda})$. To link $\mathcal{A}$ to the distributions in Lemma~\ref{lem:lweadaptiveleakageG} we relate the inner product condition in~\eqref{eq:def-h-prime} to the inner product $\hat{d} \cdot s$ appearing in~\eqref{eq:l14-1G}. This is based on the following claim.

\begin{claim}\label{cl:dependenceonsecretG}
For all $x\in \sX, \hat{d}\in\{0,1\}^n$ and $s\in\{0,1\}^n$ the following equality holds:
\begin{equation}\label{eq:def-hatdG}
J(\hat{\*d})\cdot(\inj(x)\oplus \inj( x - \*s )) \,=\,  \hat{d}\cdot s \;.
\end{equation}
\end{claim}

\begin{proof}
We do the proof in case $n=1$ and $w=\lceil\log q\rceil$, as the case of general $n$ follows by linearity. In this case $s$ is a single bit. If $s=0$ then both sides of~\eqref{eq:def-hatdG} evaluate to zero, so the equality holds trivially. If $s = 1$, then the least significant bit of $\inj(x)\oplus \inj( x - \*s )$ is $s$ and the least significant bit of $J(\hat{\*d}) = \hat{d}$. Since the remaining $w - 1$ bits of $J(\hat{\*d})$ are 0, the claim follows. 
\end{proof}\\

To conclude we construct a distinguisher $\mathcal{A}'$ between the distributions $\mathcal{D}_0$ and $\mathcal{D}_1$ in Lemma \ref{lem:lweadaptiveleakageG}. Consider two possible distinguishers, $\mathcal{A}'_u$ for $u\in \{0,1\}$. Given a sample $((\*A,\*A\*s+\*e),t)$, $\mathcal{A}'_u$ computes $c = \mathcal{A}(\*A,\*A\*s + \*e)$ and returns $0$ if $c=t\oplus u$, and $1$ otherwise. The sum of the advantages of $\mathcal{A}'_0$ and $\mathcal{A}'_1$ is:
\begin{eqnarray}
\sum_{u\in \{0,1\}} \Big|\Pr_{((\*A,\*A\*s+\*e),\hat{d}\cdot s)\leftarrow \mathcal{D}_0}\big[\mathcal{A}_u'((\*A,\*A\*s+\*e),\hat{d}\cdot s)=0\big]-\Pr_{((\*A,\*A\*s+\*e),r)\leftarrow \mathcal{D}_1}\big[\mathcal{A}_u'((\*A,\*A\*s+\*e),r)=0\big]\Big|\nonumber
\end{eqnarray}
\begin{eqnarray}
&=& \sum_{u\in \{0,1\}} \Big|\Pr_{(\*A,\*A\*s+\*e)\leftarrow \textrm{GEN}_{\mathcal{F}_{\lwe}}(1^{\lambda})}\big[\mathcal{A}(\*A,\*A\*s+\*e)=\hat{d}\cdot s \oplus u\big]-\Pr_{((\*A,\*A\*s+\*e),r)\leftarrow \mathcal{D}_1}\big[\mathcal{A}(\*A,\*A\*s+\*e)=r\oplus u\big]\Big|\nonumber\\
&=& \sum_{u\in \{0,1\}} \Big|\Pr_{(\*A,\*A\*s+\*e)\leftarrow \textrm{GEN}_{\mathcal{F}_{\lwe}}(1^{\lambda})}\big[\mathcal{A}(\*A,\*A\*s+\*e)=\hat{d}\cdot s \oplus u\big]-\frac{1}{2}\Big|\\
&\geq&  \Big|\Pr_{(\*A,\*A\*s+\*e)\leftarrow \textrm{GEN}_{\mathcal{F}_{\lwe}}(1^{\lambda})}\big[\mathcal{A}(\*A,\*A\*s+\*e)=\hat{d}\cdot s\big]-\Pr_{(\*A,\*A\*s+\*e)\leftarrow \textrm{GEN}_{\mathcal{F}_{\lwe}}(1^{\lambda})}\big[\mathcal{A}(\*A,\*A\*s+\*e)=\hat{d}\cdot s \oplus 1\big]\Big|\nonumber\\
&\geq&  2\Big|\Pr_{(\*A,\*A\*s+\*e)\leftarrow \textrm{GEN}_{\mathcal{F}_{\lwe}}(1^{\lambda})}\big[\mathcal{A}(\*A,\*A\*s+\*e)=\hat{d}\cdot s\big]-\frac{1}{2}\Big|
\end{eqnarray}
By Claim \ref{cl:dependenceonsecretG}, we can replace $\hat{d}\cdot s$ with $J(\hat{\*d})\cdot(\inj(x)\oplus \inj( x - \*s ))$ to obtain:
\begin{eqnarray}
 &=& 2\Big|\Pr_{(\*A,\*A\*s+\*e)\leftarrow \textrm{GEN}_{\mathcal{F}_{\lwe}}(1^{\lambda})}\big[\mathcal{A}(\*A,\*A\*s+\*e) \in H'_{s,J(\hat{\*d})}\big]-\frac{1}{2}\Big|\\
 &\geq& 2\eta(\lambda)
\end{eqnarray}
Therefore, at least one of $\mathcal{A}'_0$ or $\mathcal{A}'_1$ must successfully distinguish between $\mathcal{D}_0$ and $\mathcal{D}_1$ with advantage at least $\eta$, a contradiction with the statement of Lemma~\ref{lem:lweadaptiveleakageG}. 
\end{proof}

\section{Acknowledgments}
Thanks to Dorit Aharonov, Zvika Brakerski, Zeph Landau, Umesh Vazirani and Thomas Vidick for many useful discussions.
\bibliographystyle{alpha}
\bibliography{qpip}

\newcommand{\etalchar}[1]{$^{#1}$}
\begin{thebibliography}{ABOEM17}

\bibitem[ABOE10]{abe2008}
Dorit Aharonov, Michael Ben-Or, and Elad Eban.
\newblock Interactive proofs for quantum computations.
\newblock In {\em Proc. Innovations in Computer Science (ICS)}, pages 453--469,
  2010.

\bibitem[ABOEM17]{abem}
Dorit Aharonov, Michael Ben-Or, Elad Eban, and Urmila Mahadev.
\newblock Interactive proofs for quantum computations.
\newblock {\em Arxiv preprint 1704.04487}, 2017.

\bibitem[AKPW13]{lwr}
Joel Alwen, Stephan Krenn, Krzysztof Pietrzak, and Daniel Wichs.
\newblock Learning with rounding, revisited: New reduction, properties and
  applications.
\newblock Cryptology ePrint Archive, Report 2013/098, 2013.
\newblock \url{https://eprint.iacr.org/2013/098}.

\bibitem[Ban93]{banaszczyk1993new}
Wojciech Banaszczyk.
\newblock New bounds in some transference theorems in the geometry of numbers.
\newblock {\em Mathematische Annalen}, 296(1):625--635, 1993.

\bibitem[BCM{\etalchar{+}}21]{oneproverrandomness}
Zvika Brakerski, Paul Christiano, Urmila Mahadev, Umesh Vazirani, and Thomas
  Vidick.
\newblock A cryptographic test of quantumness and certifiable randomness from a
  single quantum device.
\newblock {\em J. ACM}, 68(5), August 2021.

\bibitem[BFK09]{broadbent2008ubq}
Anne Broadbent, Joseph Fitzsimons, and Elham Kashefi.
\newblock Universal blind quantum computation.
\newblock In {\em Proceedings of the 2009 50th Annual IEEE Symposium on
  Foundations of Computer Science}, FOCS '09, page 517–526, USA, 2009. IEEE
  Computer Society.

\bibitem[BL08]{xzhamiltonian}
Jacob~D. Biamonte and Peter~J. Love.
\newblock Realizable hamiltonians for universal adiabatic quantum computers.
\newblock {\em Phys. Rev. A}, 78:012352, 07 2008.

\bibitem[BLP{\etalchar{+}}13]{BLPRS13}
Zvika Brakerski, Adeline Langlois, Chris Peikert, Oded Regev, and Damien Stehl.
\newblock Classical hardness of learning with errors.
\newblock In {\em Symposium on Theory of Computing Conference, STOC'13, Palo
  Alto, CA, USA, June 1-4, 2013}, pages 575--584, 2013.

\bibitem[DCEL09]{paulitwirl}
Christoph Dankert, Richard Cleve, Joseph Emerson, and Etera Livine.
\newblock Exact and approximate unitary 2-designs and their application to
  fidelity estimation.
\newblock {\em Phys. Rev. A}, 80:012304, 07 2009.

\bibitem[FK17]{fk2012}
Joseph~F. Fitzsimons and Elham Kashefi.
\newblock Unconditionally verifiable blind quantum computation.
\newblock {\em Phys. Rev. A}, 96:012303, 07 2017.

\bibitem[GKPV10]{robustnesslwe}
Shafi Goldwasser, Yael Kalai, Chris Peikert, and Vinod Vaikuntanathan.
\newblock Robustness of the learning with errors assumption.
\newblock {\em ICS}, pages 230--240, 2010.

\bibitem[Got04]{aaronsonpost}
Daniel Gottesman, 2004.
\newblock As referenced in [\url{http://www.scottaaronson.com/blog/?p=284};
  accessed 13-Apr-2017].

\bibitem[KSV02]{kitaev2002caq}
A.Y. Kitaev, A.~Shen, and M.N. Vyalyi.
\newblock {\em Classical and Quantum Computation}.
\newblock American Mathematical Society, 2002.

\bibitem[Mah18]{mahadev2017}
Urmila Mahadev.
\newblock Classical homomorphic encryption for quantum circuits.
\newblock {\em SIAM Journal on Computing}, 0(0):FOCS18--189--FOCS18--215, 2018.

\bibitem[MF16]{mf2016}
Tomoyuki Morimae and Joseph~F. Fitzsimons.
\newblock Post hoc verification with a single prover.
\newblock {\em Arxiv preprint arXiv:1603.06046}, 2016.

\bibitem[MNS15]{mns2015}
Tomoyuki Morimae, Daniel Nagaj, and Norbert Schuch.
\newblock Quantum proofs can be verified using only single qubit measurements,
  2015.

\bibitem[MP11]{miccancio2012}
Daniele Micciancio and Chris Peikert.
\newblock Trapdoors for lattices: Simpler, tighter, faster, smaller.
\newblock Cryptology ePrint Archive, Report 2011/501, 2011.
\newblock \url{http://eprint.iacr.org/2011/501}.

\bibitem[Reg05]{regev2005}
Oded Regev.
\newblock On lattices, learning with errors, random linear codes, and
  cryptography.
\newblock In {\em Proceedings of the Thirty-seventh Annual ACM Symposium on
  Theory of Computing}, STOC '05, pages 84--93, New York, NY, USA, 2005. ACM.

\bibitem[RUV13]{ruv2012}
Ben~W. Reichardt, Falk Unger, and Umesh Vazirani.
\newblock A classical leash for a quantum system: Command of quantum systems
  via rigidity of chsh games.
\newblock In {\em Proceedings of the 4th Conference on Innovations in
  Theoretical Computer Science}, ITCS '13, page 321–322, New York, NY, USA,
  2013. Association for Computing Machinery.

\bibitem[Shi03]{shiuniversal}
Yaoyun Shi.
\newblock Both toffoli and controlled-not need little help to do universal
  quantum computing.
\newblock {\em Quantum Info. Comput.}, 3(1):84--92, January 2003.

\bibitem[Wik18]{tracedistance}
Wikipedia.
\newblock Trace distance --- {W}ikipedia{,} the free encyclopedia, 2018.
\newblock [\url{https://en.wikipedia.org/wiki/Trace_distance}; accessed
  7-Jan-2018].

\end{thebibliography}

\end{document}